\theoremstyle{plain}
\newtheorem{theorem}{Theorem}[section]
\newtheorem{lemma}[theorem]{Lemma}
\theoremstyle{definition}
\newtheorem{definition}[theorem]{Definition}
\theoremstyle{remark}
\newtheorem{remark}[theorem]{Remark}
\newtheorem{example}[theorem]{Example}
\newtheorem{problem}{Problem}
\newcommand{\white}[1]{\textcolor{white}{#1}}
\newcommand{\blue}[1]{\textcolor{black}{#1}}
\crefname{section}{Section}{Sections}
\crefname{theorem}{Theorem}{Theorems}
\crefname{assumption}{Assumption}{Assumptions}
\crefname{lemma}{Lemma}{Lemmas}
\crefname{definition}{Definition}{Definitions}
\crefname{conjecture}{Conjecture}{Conjectures}
\crefname{corollary}{Corollary}{Corollaries}
\crefname{construction}{Construction}{Constructions}
\crefname{claim}{Claim}{Claims}
\crefname{observation}{Observation}{Observations}
\crefname{proposition}{Proposition}{Propositions}
\crefname{fact}{Fact}{Facts}
\crefname{question}{Question}{Questions}
\crefname{problem}{Problem}{Problems}
\crefname{remark}{Remark}{Remarks}
\crefname{example}{Example}{Examples}
\crefname{equation}{Equation}{Equations}
\crefname{appendix}{Section}{Sections}
\crefname{algorithm}{Algorithm}{Algorithms}
\crefname{model}{Model}{Models}
\crefname{figure}{Figure}{Figures}
\newcommand{\Ex}{\mathbb{E}}
\newcommand{\R}{\mathbb{R}}
\newcommand{\N}{\mathbb{N}}
\newcommand{\Exp}{\mathbb{E}}
\newcommand{\OPT}{\mathrm{OPT}}
\newcommand{\pos}{\mathrm{pos}}
\newcommand{\score}{F}
\newcommand{\cs}{\mathrm{CS}}
\newcommand{\app}{\mathrm{App}}
\newcommand{\poly}{\mathrm{poly}}
\newcommand{\prefs}[1]{\mathcal{L}(#1)}
\def\abs#1{\left| #1 \right|}
\def\sabs#1{| #1 |}
\newcommand{\sinsquare}[1]{[#1]}
\newcommand{\inbrace}[1]{\left\{#1\right\}}
\newcommand{\inparen}[1]{\left(#1\right)}
\newcommand{\insquare}[1]{\left[#1\right]}
\newcommand{\eps}{\varepsilon}
\renewcommand{\epsilon}{\varepsilon}
\newcommand{\cdf}{\ensuremath{{\rm cdf}}}
\newcommand{\yesnum}{\addtocounter{equation}{1}\tag{\theequation}}
\newcommand{\tagnum}[1]{\addtocounter{equation}{1}{\tag{#1)\ \ (\theequation}}}
\newcommand{\customlabel}[2]{%
\protected@write \@auxout {}{\string \newlabel {#1}{{#2}{\thepage}{#2}{#1}{}} }%
\hypertarget{#1}{}
}
\newcommand{\Stackrel}[2]{\stackrel{\mathmakebox[\widthof{\ensuremath{#2}}]{#1}}{#2}}
\newcommand{\eat}[1]{}
\newcommand{\evF}{\ensuremath{\mathscr{F}}}
\newcommand{\evE}{\ensuremath{\mathscr{E}}}
\newcommand{\wh}[1]{\widehat{#1}}
\newcommand{\hF}{\wh{F}}
\newcommand{\hR}{\wh{R}}
\newcommand{\hS}{\wh{S}}
\newcommand{\wt}[1]{\widetilde{#1}}
\newcommand{\cK}{\mathcal{K}}
\newcommand{\calK}{\mathcal{K}}
\newcommand{\calO}{\mathcal{O}}
\newcommand{\cW}{\mathcal{W}}
\newcommand{\err}{\ensuremath{{\tau \sqrt{nk\log{\frac{m}{\delta}} }}}}
\newcommand{\errm}{\ensuremath{{\tau \sqrt{nm\log{\frac{m}{\delta}} }}}}
\newcommand{\errmult}{\ensuremath{{\frac{1}{\alpha}\sqrt{\frac{k}{n}\log{\frac{m}{\delta}} }}}}
\def\moverlay{\mathpalette\mov@rlay}
\def\mov@rlay#1#2{\leavevmode\vtop{%
   \baselineskip\z@skip \lineskiplimit-\maxdimen
   \ialign{\hfil$\m@th#1##$\hfil\cr#2\crcr}}}
\newcommand{\charfusion}[3][\mathord]{
    #1{\ifx#1\mathop\vphantom{#2}\fi
        \mathpalette\mov@rlay{#2\cr#3}
      }
    \ifx#1\mathop\expandafter\displaylimits\fi}
\newcommand{\sfrac}[2]{{#1/#2}}
\title{\mbox{\hspace{-2.5mm}Subset Selection Based On Multiple Rankings in the Presence of Bias:}\\ \mbox{\hspace{-12.5mm} Effectiveness of Fairness Constraints for Multiwinner Voting Score Functions}}
\author{Niclas Boehmer \\ TU Berlin \and L. Elisa Celis \\ Yale University  \and Lingxiao Huang\\ Nanjing University \and Anay Mehrotra \\ Yale University \and Nisheeth K. Vishnoi \\ Yale University}
\begin{document}

\maketitle

\begin{abstract}
We consider the problem of subset selection  where one is given multiple rankings of items and the goal is to select the highest ``quality'' subset. Score functions from the multiwinner voting literature have been used to aggregate rankings into quality scores for subsets. We study this setting of subset selection problems when, in addition, rankings may contain systemic or unconscious biases toward a  group of items. For a general model of input rankings and biases, we show that requiring the selected subset to satisfy group fairness constraints can improve the quality of the selection with respect to  unbiased rankings. Importantly, we show that for  fairness constraints to be effective, different multiwinner score functions may require a drastically different number of rankings: While for some functions, fairness constraints need an exponential number of rankings  to recover a close-to-optimal solution, for others, this dependency is only polynomial. This result relies on a novel notion of ``smoothness'' of  submodular functions in this setting that quantifies how well a function can ``correctly'' assess the quality of items in the presence of bias. The results in this paper can be used to guide the choice of multiwinner score functions for the subset selection setting considered here; we additionally provide a tool to empirically enable this.  

\end{abstract}

\newpage
\tableofcontents
\newpage

\section{Introduction}
\label{sec:intro}

The task of selecting a size-$k$ subset from a set of $m$ items is a basic problem in machine learning and computer science with applications arising in recommender systems \cite{mcsherry2002diversity,DBLP:conf/kdd/El-AriniVSG09}, feature selection \cite{DBLP:journals/jmlr/GuyonE03}, search engines \cite{DBLP:conf/wsdm/AgrawalGHI09}, data summarization \cite{DBLP:conf/nips/ElhamifarK17,DBLP:conf/icml/AngelidakisKSZ22}, and algorithmic hiring \cite{DBLP:conf/atal/SchumannCFD19,Raghavan2020Hiring,hiring}. 
The simplest formulation of this problem assumes that every item has some reported utility, which leads to the selection of  $k$ items with the highest utility. 
In several applications, however, items are evaluated from multiple points of view, e.g.,  different users evaluating items in a group recommendation or shortlisting setting \cite{jameson2007recommendation,Raghavan2020Hiring,DBLP:conf/prima/StreviniotisC22},  or a single user evaluating items from different perspectives (e.g., in personalized recommendations, a user might be interested in being recommended items from different categories, and separate orderings for categories are produced \cite{DBLP:conf/kdd/El-AriniVSG09,DBLP:conf/recsys/StreviniotisC22,DBLP:conf/eumas/GawronF22}).
Moreover, these different evaluations oftentimes come in the form of {\em rankings} of the $m$ items instead of numerical scores \cite{DBLP:conf/fat/ChakrabortyPGGL19,DBLP:conf/recsys/StreviniotisC22,DBLP:conf/prima/StreviniotisC22}.
Asking for rankings instead of numerical scores  for items is a popular approach in training large language models with human feedback \cite{chatGPT}, and in reinforcement learning \cite{NIPS2017_d5e2c0ad}) because numerical scores have a higher elicitation cost and can lead to  aggregation and calibration issues \cite{griffin2004perspectives,DBLP:conf/allerton/MitliagkasGCV11,DBLP:conf/atal/WangS19}.

A principled method to aggregate  $n$ rankings of $m$ items into a single quality score function for subsets of $\{1,\ldots,m\}$ is the usage of ``score functions'' studied in the multiwinner voting literature within social choice theory \cite{faliszewski2017multiwinner,DBLP:conf/fat/ChakrabortyPGGL19,DBLP:conf/aaai/MondalBSP21,DBLP:conf/recsys/StreviniotisC22,DBLP:conf/eumas/GawronF22,DBLP:conf/prima/StreviniotisC22,DBLP:series/sbis/LacknerS23}.
Here, many score functions have been proposed and their properties and merits have been extensively discussed \cite{DBLP:journals/scw/ElkindFSS17,DBLP:conf/aaai/ElkindFLSST17,DBLP:conf/ijcai/LacknerS19}. 
These functions, for instance, allow for specifying which part of each ranking to take into account, and how many of the selected items are relevant from the perspective of each ranking.
For instance, a score function may only take into account the top choice of each ranking, and the quality of a subset is then the number of rankings whose top choice it contains (this score function is referred to as single non-transferable vote, or SNTV).
More generally, a score function may take into account the entire ranking.
For instance, each ranking may add $m-1$ points to the quality of a subset if its top choice is present in it, another $m-2$ points if its second most preferred choice is present, and so on (the resulting score function is called the Borda count). 
While SNTV and Borda lead to modular set functions, more general score functions from the multiwinner voting literature give rise to general submodular functions (\Cref{def:score}), and a multitude of techniques have been developed to optimize the quality of the selection with respect to the given rankings \cite{DBLP:journals/scw/ProcacciaRZ08,DBLP:journals/ai/SkowronFS15,DBLP:conf/atal/AzizGGMMW15}.

However, there is growing evidence that rankings of items -- whether generated by humans or ML algorithms -- may contain implicit and/or other forms of systemic biases against socially-salient groups (e.g., those defined by race, gender, opinions) \cite{rooth2010automatic,kite2016psychology,regner2019committees}.
For instance, in the context of subset selection,  \citet{capers2017implicit} detected that, despite identical qualifications, members of admission committees rank White medical students above African Americans, and \citet{moss2012science} found that faculty members evaluate male applicants for a job as more qualified than female ones. 
The latter bias has also been observed in Amazon's former resume screening tool \cite{amazon}, as human biases can find their way into the output of machine learning algorithms in case they are trained on biased or systematically skewed data \cite{amazon,upturn_help_wanted,dressel2018accuracy,Raghavan2020Hiring,DBLP:conf/aistats/JiangN20}.
A question arises: \blue{in which situations can bias be mitigated in subset selection based on aggregating  multiple (biased) rankings using multiwinner score functions, and in these situations, how can the selection of a high-quality committee with respect to the unbiased rankings be guaranteed?}

A  model to incorporate such biases for the simplest (numerical) formulation of subset selection was considered by
\citet{kleinberg2018selection}:
Items are partitioned into an advantaged and a disadvantaged group and item $i$ has latent utility $W_i$. 
For members of the advantaged group, the observed utility is the same as the latent utility, whereas, for members of the disadvantaged group, the observed utility is $\theta\cdot W_i$ for some global bias parameter $\theta \in [0,1]$. 
\citet{kleinberg2018selection} show that when optimizing for the summed observed utility, the  latent utility of the selected subset drops sharply in the presence of implicit bias. 
These suboptimal selections as well as strong negative consequences for the affected groups have also been observed in practice \cite{rooth2010automatic,kang2011implicit,chapman2013physicians,kite2016psychology,greenwald2020implicit}.

To mitigate suboptimal selections in the presence of bias, one popular intervention is the use of  representational constraints, which are a type of group fairness constraint requiring that a certain fraction of the selected items need to be part of the disadvantaged group  (used, e.g., in the form of the Rooney Rule in shortlisting job applicants \cite{collins2007tackling,nfl}, or in primary elections \cite{DBLP:conf/eaamo/EvequozRKC22}).
\citet{kleinberg2018selection} studied the effectiveness of such  representational constraints in their bias model and  demonstrated that, while for worst-case $W_i$'s these constraints may not improve the latent utility of the selection, under certain conditions, e.g., assuming that the $W_i$'s are all drawn from the same distribution, representational constraints can increase the expected latent quality of the returned selection.
Subsequent works have confirmed the power of representational constraints to reduce the effect of bias in more involved settings such as online subset selection \cite{DBLP:conf/wine/SalemG20}, with intersectional groups \cite{mehrotra2022selection}, producing a ranking of items \cite{celis2020interventions}, and subset selection if biases come in the form of evaluation uncertainty  \cite{DBLP:journals/ai/EmelianovGGL22,DBLP:conf/sigecom/EmelianovGL22}.

Representational constraints have also already been considered in the context of subset selection based on score functions from multiwinner voting, albeit with a focus on the computational complexity of selecting subsets maximizing some goal function while respecting such constraints \cite{DBLP:conf/aaai/BredereckFILS18,DBLP:conf/ijcai/CelisHV18}.
In contrast to these works, following the work of \citet{kleinberg2018selection},  we analyze whether and when representational constraints can debias subset selection based on multiple input rankings and help to select high-quality subsets.
Notably, previous results on selection in the presence of bias do not apply to our setting, as we assume that the input comes from multiple sources, in the form of rankings (and not numerical values of items), which are aggregated using multiwinner score functions.

\paragraph{Our Contributions.}
We study the power of representational constraints for the following problem: 
Given $n$ potentially biased rankings over $m$ items, an integer $k$, and a score function $F$, select a size-$k$ subset of the items with a high score according to $F$ with respect to the latent rankings. 
As in the work of \citet{kleinberg2018selection}, without  distributional assumptions on the rankings and an explicit model of bias (that quantifies the relation between latent and observed rankings), no algorithm can guarantee that the returned subset has even a small fraction of optimum latent quality even with representational constraints  (see, e.g., \Cref{sec:smooth}). 
We consider a model where each latent ranking in the input is generated i.i.d. from a given distribution and then bias is introduced into each of these rankings in a ``controlled'' manner, inspired by the work of \citet{kleinberg2018selection}; our model is more general and is presented in Section \ref{sec:model}.  
We then give an algorithm and prove that there is a representational constraint such that, if the number of input rankings is large enough, the algorithm outputs an  ``approximately'' optimal solution with respect to the latent rankings; see \Cref{thm:main_algorithmic}.
We note that while the algorithm takes as input biased rankings and aims to find the subset with the maximum (biased) score subject to the representational constraint, its guarantee is with respect to the (unconstrained) optimal solution when the input rankings are unbiased.
Importantly, the {\em number of input rankings} the above result requires to hold depends on the chosen multiwinner score function.
To mathematically capture these differences between (submodular) multiwinner score functions, we introduce a new notion of ``smoothness'' (\Cref{def:smooth}), which quantifies how well the score function can ``correctly distinguish'' the strength of candidates among the same group under the latent and biased preferences.
A particular challenge here is that by considering submodular instead of modular functions, assessing the strength of a candidate becomes more difficult as a candidate's ``quality'' depends on the other candidates present in the subset, and that which of two candidates has a higher marginal contribution to some subset might change by  biasing the rankings.  
We complement our algorithmic result with an impossibility result that establishes the need for a large number of input rankings for certain multiwinner score functions; see \Cref{thm:main_impossibility}.

As a simple corollary of our results, we show a stark contrast between the two popular multiwinner score functions SNTV and Borda: Under a utility-based model generalizing the model of \citet{kleinberg2018selection} falling into our general class of generative models, for SNTV an exponential number of input rankings (in the number of items) is needed for representational constraints to recover a close-to-optimal solution, whereas for Borda the dependency is only polynomial (\cref{cor:algorithmic,thm:main_impossibility}). 
This difference is also intuitive, as under SNTV only a local snapshot of each ranking, i.e., who is ranked in the first position, is taken into account making fine-grained distinctions of candidates' quality harder, whereas for Borda the full ranking matters.

In summary, we contribute to the growing literature on designing algorithms in the presence of bias by showing the effectiveness of representational constraints in ranking-based subset selection. 
Distinguishing the effectiveness of representational constraints for different multiwinner score functions based on a new notion of smoothness, we contribute to a better understanding of multiwinner score functions, thereby providing further guidance for the selection of such a rule in the desired context.
Allowing for a convenient extension of our theoretical results, we give code that, given a bias model and scoring function as input, can be used as a tool to study the smoothness of the score function and the effectiveness of representational constraints with respect to the given bias model via simulations (\cref{sec:simulations_synthetic}).

\section{Other Related Work}
\label{sec:related}

    {A recent work \cite{mehrotra2023submodular} studies a variant of subset selection where the goal is to select a size-$k$ subset that maximizes the value of a submodular function.
  As in the current work, \cite{mehrotra2023submodular} also study the effectiveness of representational constraints for mitigating the adverse effects of bias.
    They, however, consider a family of submodular functions that is relevant to recommendation and web search, which is different from score functions considered in this work (\cref{def:score}).}
    {Specifically, in their setting, each item (such as websites, products, movies, or candidates) has $m$ attributes (such as topics, product-category, genres, or skills).
    Items and attributes in their setting map to candidates and voters respectively in the context of this work.
    Accordingly, we refer to items and attributes as candidates and voters respectively in the following.
    \cite{mehrotra2023submodular} specify a submodular function $F\colon 2^C\to \R_{\geq 0}$ by $n$ non-decreasing functions $g_1,g_2,\dots,g_n\colon \R\to \R_{\geq 0}$ (which measure the utility for each voter) and an $n\times m$ utility matrix $w$ (capturing the utility of candidates to each voter) as follows: $F(S) = g_1\inparen{\sum_{c\in S}W_{1,c}} + g_2\inparen{\sum_{c\in S}W_{2,c}} + \dots + g_n\inparen{\sum_{c\in S}W_{n,c}}$.}

    {The primary difference between the two families is that in their family each voter provides us with a cardinal utility for each candidate (scores $W_{1,c},W_{2,c},\dots,W_{n,c}$ for each candidate $c$) whereas we consider ordinal utilities specified by $n$ rankings of the $m$ candidates.
    Numerical scores are more accurate but can lead to serious aggregation and calibration issues \cite{griffin2004perspectives,DBLP:conf/allerton/MitliagkasGCV11,steck2018calibrated,DBLP:conf/atal/WangS19}.
    Moreover, while numerical scores are available in recommendation and web-search contexts, in contexts most relevant to this work, such as elections, numerical scores have a high elicitation cost \cite{griffin2004perspectives,DBLP:conf/allerton/MitliagkasGCV11,DBLP:conf/atal/WangS19}. 
    Due to this difference, the family of submodular functions considered by \cite{mehrotra2023submodular} is incomparable to the multiwinner scoring functions considered in this work (\cref{def:score}).}
    \begin{itemize}[itemsep=1pt]
        \item {On the one hand, in \cite{mehrotra2023submodular}'s model, voters can have the same utility for two or more candidates and can also have an arbitrarily large difference between the utilities of two candidates.
        However, because the functions we consider are defined by strict rankings over candidates, voters cannot be indifferent between two candidates.
        Moreover, as preferences are captured by rankings, we do not capture the magnitude of the difference in the utilities of two candidates in, say, adjacent positions in a ranking.} 
        \item {On the other hand, since in \cite{mehrotra2023submodular}'s model, the utility of a committee $S$ to a  voter $v$ is a function of the sum $\sum_{i\in S} W_{v,c}$, it cannot capture other utility functions such as the $\max$ utility a voter derives for a selected candidate or the sum of utilities for the best $t$ candidates included for some $1<t<k$ which are required, e.g., to capture the $\ell_1$-CC rule and its extensions.}
    \end{itemize}
    {That said, the SNTV rule, the Borda rule, and other committee scoring rules (\cref{sec:example}), can be captured by both the family of submodular functions in \cite{mehrotra2023submodular} and \cref{def:score}. 
    Here, \cite{mehrotra2023submodular}'s and our results complement each other.
    The bounds established by  \cite{mehrotra2023submodular} on the number of voters that are needed for representational constraints to recover a close-to-optimal utility degrades with $\poly(n/k)$\footnote{Concretely, their positive result degrades with $O\inparen{n^2 k^{-1/4}}$} and, since the number of voters, $n$, is much higher than the number of selected candidates, $k$, in real-world election contexts, \cite{mehrotra2023submodular}'s results lead to vacuous bounds in these contexts. 
    In contrast, our main algorithmic result (\cref{thm:main_algorithmic}) degrades with $\frac{1}{\poly(n/k)}$ and, hence, it gives a meaningful bound in contexts where $n\gg k$ but is vacuous if $k\gg n$.}   
    
    \noindent The reason why we are able to obtain results that degrade as $\poly\inparen{\frac{1}{n}}$ in our setting is because we require preference lists to be i.i.d., because of which a \textit{fixed} representational constraint is sufficient to recover high latent utility in our setting.
    In contrast, \cite{mehrotra2023submodular} allow the utility matrix (that captures preferences in their model) to be non-i.i.d. and show that, because of this, no fixed representational constraint is sufficient to recover any constant fraction of the optimal latent utility.
    Instead, they give an algorithm that, given functions $g_1,\dots,g_n$ and biased or observed utilities, determines the relevant representational constraint.
    Such algorithms may be reasonable in certain contexts (e.g., recommendation and web search) where the selection does not require direct human feedback but is not suitable in contexts that require direct human feedback (e.g.,  elections) and where \mbox{representational constraints must be fixed before human feedback (e.g. votes) are collected.}

    Beyond the study of the ability of representational constraints to increase the utility of selection, their effect on the decision-maker's bias over the long term has also been studied \cite{celis2021effect,hoda2021allocating}.
    Moreover, apart from representational constraints, the power of various other interventions to debias decisions based on biased inputs  has also been studied, e.g., by \citet{DBLP:journals/corr/abs-2004-10846,DBLP:conf/fat/GargLM21} in the context of school and college admission and by \citet{DBLP:conf/forc/BlumS20} in the context of classification.
    
    The problem of selecting a subset maximizing a multiwinner score function is a special case of submodular maximization.
    There is rich literature on submodular maximization.
    In this literature, optimization in the presence of cardinality constraints has been extensively studied \cite{fujishige2005submodular,krause2014submodular} and there is a standard $(1-\frac{1}{e})$-approximation algorithm for finding a size-$k$ subset maximizing a submodular function \cite{nemhauser1978analysis}.

    Closer to our work from a methodological perspective, there are many works that want to learn  user preferences, by e.g., fitting some parameterized generative model to the data \cite{DBLP:conf/icml/GuiverS09,DBLP:journals/jmlr/LuB14,DBLP:journals/jmlr/VitelliSCFA17,DBLP:conf/uai/AlloucheLY22}.
    Particularly closely related to the present paper are works on the sample complexity of such learning algorithms   \cite{DBLP:conf/nips/AwasthiBSV14,DBLP:conf/icml/Busa-FeketeHS14,DBLP:conf/ijcai/CaragiannisM17,DBLP:conf/focs/LiuM18,DBLP:conf/icml/CollasI21}, and the capabilities of different rules to recover a ground truth \cite{DBLP:conf/uai/ProcacciaRS12,DBLP:journals/aamas/CaragiannisKKK22}. 
    Notably, the problem that we study also captures some of these learning problems by interpreting latent rankings as the underlying ground truth and biased rankings as voters' noisy estimates of the ground truth. 
    Thereby, we recover settings studied, e.g., by \citet{DBLP:conf/uai/ProcacciaRS12} and \citet{DBLP:journals/aamas/CaragiannisKKK22}. %
    
    Finally, there is also a large body of work studying the generalization of the simplest (numerical) formulation of subset selection from outputting a subset to outputting a ranking \cite{INR-016,burges2010ranknet}.
    Within this body of works, the problem of biases in the output rankings as well as different types of interventions, including representational constraints, to mitigate these biases have been studied \cite{KayMM15,overviewFairRanking,fair_ranking_survey2,criticalReviewFairRanking22}.
    This problem is different from the variant of subset selection we study in two ways: (1) the input is a single numerical score for each item, and (2) the output is a ranking instead of a subset.

\section{Models of Score Functions, Bias, and Representational Constraints}
\label{sec:model}

In this section, we formally introduce our setting and the studied problem. 
For the sake of consistency with the literature, we use terminologies from the multiwinner voting literature. 
We start by introducing the family of score functions in \Cref{sub:score-functions}, then our bias model in \Cref{sec:bias}, and lastly representational constraints in \Cref{sec:intervention}.  
All used notations are summarized in \Cref{tab:notation} in \cref{sec:proof}.

\subsection{A Family of Score Functions} \label{sub:score-functions}

Given a set $C$ of candidates (also called items), let $\prefs{C}$ be the set of all strict and complete orders over $C$. 
We refer to elements of $\prefs{C}$ as preference lists  (or rankings) and to subsets of $C$ as committees.
Given a preference list $\succ\ \in\prefs{C}$, we use $\pos_{\succ}(c)$ to denote the position of $c\in C$ in $\succ$ and use $\pos_{\succ}(S) \coloneqq (\pos_{\succ}(c))_{c\in S}$ to denote the vector of positions of each $c\in S$.

In the classic multiwinner voting problem, there is \begin{enumerate*}[label=(\roman*)] 
\item a set $C$ of $m$ candidates, 
\item a set $V$ of $n$ voters together with a  preference profile $R= \left\{\succ_v \in \prefs{C}: v\in V\right\}$ where $\succ_v$ is the preference list of voter $v$,
\item a function $\score: 2^{C}\rightarrow \R_{\geq 0}$ with an evaluation oracle that depends on $R$, and 
\item a desired committee size $k\in [m]$.
\end{enumerate*}
The goal is then to select a committee $S$ that maximizes $\score(S)$ subject to having size-$k$.
Our results apply to a general class of submodular functions $F$.
For its definition, 
we use the following notion to compare the quality of committees.
\begin{definition}[\bf{Position-wise dominance}]\label{def:dominate}
	Given $\succ\ \in \prefs{C}$ and two committees $S = \left\{c_1,\ldots,c_k\right\}$ and $S' = \left\{c'_1,\ldots,c'_k\right\}$ of equal size, satisfying  $\pos_{\succ}(c_1)  > \cdots > \pos_{\succ}(c_k)$ and $\pos_{\succ}(c'_1) > \cdots > \pos_{\succ}(c'_k)$,
	we say that \emph{$S$ position-wise dominates $S'$ with respect to $\succ$} if for every $\ell\in [k]$, $\pos_{\succ}(c_\ell) \leq \pos_{\succ}(c'_\ell)$.
\end{definition}
We consider the following general family of score functions.
\begin{definition}[\bf{Multiwinner score function}]
    \label{def:score}
    Given a set $C$ of candidates and the voters' preference profile $R= \left\{\succ_v \in \prefs{C}: v\in V\right\}$, we say a function $\score: 2^C\rightarrow \R_{\geq 0}$ is a \emph{multiwinner score function} if $F$ satisfies the following: %
    \begin{itemize} %
        \item (Separability) there is a function $f\colon 2^C\times \prefs{C} \to \R_{\geq 0}$ mapping committees and preference lists to scores such that for every $S\subseteq C$, $\score(S) = \sum_{v\in V} f(S, \succ_v)$. Here, $f_v(S)\coloneqq f(S, \succ_v) = f(\pos_{\succ_v}(S))$\footnote{We only consider functions $f$ of the positions $\pos_{\succ_v}(S)$ of $S$ in $\succ_v$, imposing that $f$ is symmetric with respect to  candidate indices. Such functions are known as neutral functions in social choice.} is the score voter $v$ awards to $S$ that depends on their preference list $\succ_v$;
        \item (Monotone submodular) for every voter $v\in V$, $f_v$
        is a monotone submodular function;\footnote{Here, ``monotone'' means that for any $S\subsetneq C$ and candidate $c\in C\setminus S$, $f_v(S\cup \{c\}) \geq f_v(S)$; ``submodular'' means that for any $S\subsetneq T\subsetneq C$ and candidate $c\in C\setminus T$, $f_v(T\cup \{c\}) - f_v(T) \leq f_v(S\cup \{c\}) - f_v(S)$.} 
        \item (Domination sensitive) for every $v\in V$ and committees $S,S'\subseteq C$ with $|S| = |S'|$ and set $T\subseteq C\setminus S\cup S'$, if $S$ position-wise dominates $S'$ with respect to  $\succ_v$, we have $f_v(S) - f_v(S')\geq f_v(S\cup T) - f_v(S'\cup T)\geq 0$.\footnote{Voters award a higher score to the better committee $S$ than to the worse committee $S'$. Adding the same candidates to $S$ and $S'$ decreases the difference between their awarded scores (as the two become ``more similar'').}
    \end{itemize}
\end{definition}%
Among others, \Cref{def:score} captures the following score functions $\score(S) = \sum_{v\in V} f(S, \succ_v)$ relevant to our work:\footnote{As discussed in \Cref{sec:example}, \Cref{def:score} also covers the larger classes of committee-scoring rules and approval-based rules.} 
	\begin{itemize} %
		\item If $f(S,\succ)=\sum_{c\in S} \mathbbm{1}_{\pos_{\succ}(c)=1}$, $\score$ is the SNTV rule. %
		\item  If $f(S,\succ)= \sum_{c\in S} \inparen{m-\pos_{\succ}(c)}$,  $\score$ is the Borda rule.  
    \item  If $f(S,\succ)= \max_{c\in S} \inbrace{m-\pos_{\succ}(c)}$,  $\score$ is {the $\ell_1$-Chamberlin-Courant rule (or $\ell_1$-CC~rule) introduced by \citet{chamberlin_courant_1983}.}
	\end{itemize}
We conclude by defining the marginal contribution of a candidate:
Given a function $f: 2^C\times \prefs{C} \rightarrow\R_{\geq 0}$, preference list $\succ\ \in \prefs{C}$, committee $S\subsetneq C$ and candidate $c\in C\setminus S$, we define $f_S(c,\succ) \coloneqq f(S\cup \{c\}, \succ) - f(S, \succ)$ to be the marginal contribution of $c$ to $S$ with respect to $f$ and $\succ$.

\subsection{Multiwinner Voting in the Presence of Bias}\label{sec:bias}

In this section, we define the general problem of subset selection based on multiple input rankings in the presence of implicit bias. 
Our general approach is in line with previous works on subset selection \cite{kleinberg2018selection,celis2020interventions}, and assumes that the candidates are partitioned into an advantaged group $G_1$ and a disadvantaged group $G_2$, where disadvantaged candidates $c\in G_2$ face systematic biases.
Henceforth, we denote by $\succ_v$ the ``true'' or \textit{latent} preference of voter $v$, and by $\nsucc_v$ their \textit{biased} or \textit{observed} preference. 
We consider the following problem.

\begin{problem}[\bf{Multiwinner voting in the presence of bias}]
    \label{def:observed_list}
    Let $R= \left\{\succ_v\in \prefs{C}\colon v\in V\right\}$ be the voter's latent preference lists and $\widehat{R} = \left\{\nsucc_v\in \prefs{C}\colon v\in V\right\}$ be the voter's biased or observed preference lists.
    Further, let $\score\colon 2^C\rightarrow \R_{\geq 0}$ be a multiwinner score function, and $k\geq 1$ be the desired committee size.
    The goal of the \emph{multiwinner voting problem in the presence of bias} is to select a size-$k$ committee $S$ that (approximately) maximizes $\score(S)$, while only taking $\widehat{R}$ as input (and not knowing $R$).
\end{problem}

\subsubsection{Models for Latent and Biased Preferences} \label{sec:Models-latent-biased}
We consider a general setting of how the latent and biased preferences of voters are generated. 
Our main assumption is that there are no differences between voters, implying that all voters sample their latent and biased preferences from the same distribution. 
This is a common assumption both in the line of work on selection in the presence of bias \cite{kleinberg2018selection,celis2020interventions,emelianov2020on} and in many mechanisms to sample preferences in social choice theory \cite{DBLP:conf/atal/SzufaFSST20}.
We first define our generative model for latent preferences:
\begin{definition}[\bf{Generative model for latent preferences}]
\label{def:latent_preference}
    {A generative model $\mu$ for latent preferences is defined by a distribution $\pi$ over $\prefs{C}$: under $\mu$, each voter $v\in V$ draws its latent preference list $\succ_v$ from $\pi$ independently.}
\end{definition}
Similarly, in our generative model for biased preferences, we only require that all voters with the same latent preferences sample their biased preferences from the same distribution. Formally, our generative model for biased preferences takes a latent preference list $\succ$ as input and draws a biased preference list $\nsucc$ from a certain conditional distribution on $\succ$.
For instance, this would allow for swapping down a candidate from the disadvantaged group  uniformly at random in the given, latent preferences.

\begin{definition}[\bf{Generative model for biased preferences}]
\label{def:biased_preference}
    {A generative model $\wh{\mu}$ for biased preferences is defined by a distribution $\wh{\pi}$ over $\prefs{C}\times \prefs{C}$: under $\wh{\mu}$, each voter $v\in V$ with latent preference $\succ_v$ draws its biased preference list $\nsucc_v$ from $\widehat{\pi}\mid \succ_v$ independently.}\footnote{\cref{def:biased_preference} can be extended to condition on additional information beyond $\succ_v$. For instance, \Cref{def:utility_bias}, considers the conditional distribution with respect to a latent utility vector $w_v$ (which specifies $\succ_v$) instead of $\succ_v$.}
\end{definition}
    {Note that our generative model for biased preferences also allows that some voters are biased and others unbiased (or even that different voters have opposing biases). 
    For instance, $\wh{\mu}$ could be defined by a mixture of distributions, where each voter $v$ draws $\nsucc_v$ according to $\wh{\pi}_1\mid \succ_v$ with probability $\frac{1}{2}$ and according to $\wh{\pi}_2\mid \succ_v$ with probability $\frac{1}{2}$. 
    In this case, roughly half of the voters would generate their biased preferences according to $\wh{\pi}_1$, and the remaining voters would generate them using $\wh{\pi}_2$ and, hence, display no bias (or a potentially opposite) bias than the other voters.}

\subsubsection{Utility-Based Model}
\label{sec:utility}

In this section, we give a specific example of what a generative model for latent and biased preferences could look like. 
For this, we present a natural adaption of the bias model considered by \citet{kleinberg2018selection} to our setting. 
In the resulting utility-based model, we assume that each voter $v\in V$ has a non-zero latent utility $w_{v,c}$ (respectively biased utility $\wh{w}_{v,c}$) for each candidate $c\in C$,  and that in their latent (respectively biased) preferences voters rank candidates sorted in the non-increasing order of $w_{v,c}$ (respectively $\wh{w}_{v,c}$) with breaking ties arbitrarily. 
We start by explaining how the latent utilities are generated:
\begin{definition}[\bf{Utility-based latent generative model}]
	\label{ex:utility_generative} 
	Every candidate $c\in C$ has an intrinsic utility $\omega_c \geq 0$.
	For each voter $v\in V$ and candidate $c\in C$, the utility $w_{v,c}$ {is generated independently} by $w_{v,c} = \eta \cdot\omega_c$, where 
	$\eta$ {is a random variable drawn uniformly from} $[0,1]$.\footnote{More generally, one can consider other distributions and, we do so, in \cref{def:desired_properties_of_eta,def:desired_properties_of_eta2,def:eta_for_alpha}.}
\end{definition}
As in the work of \citet{kleinberg2018selection}, applying bias then boils down to scaling down the utilities of disadvantaged candidates.
\begin{definition}[\bf{Utility-based biased generative model}]
	\label{def:utility_bias}
	{Given a bias parameter $\theta\in [0,1]$,} for every $v\in V$, $\widehat{w}_{v,c} = w_{v,c}$ for all $c\in G_1$, and $\widehat{w}_{v,c} = \theta \cdot w_{v,c}$ for all $c\in G_2$. 
\end{definition}
The definition of the utility-based bias model implies that the intrinsic utilities of all disadvantaged candidates reduce by a factor of $\theta$. 
However, note that this seemingly uniform reduction can have different effects on different candidates:
As an extreme example suppose that $G_1 = \inbrace{c_3,c_4,\dots,c_m}$ and $G_2 = \inbrace{c_1,c_2}$ with intrinsic utility values $\omega_1 = 10$ and $\omega_2 = \omega_3 = \cdots = \omega_m = 1$. 
For any $\theta$, the probability that $c_1$'s position in $\nsucc$ is worse than their position in $\succ$ is $1 - \Theta(\theta)$. 
Whereas the probability that the other disadvantaged candidate $c_2$ is placed at a worse position in $\nsucc$ compared to $\succ$ is with $1 - \Theta(\theta^m)$ much higher.

\subsubsection{Order Preservation}\label{sub:general_model}
To analyze the capabilities of representational constraints to mitigate bias, different ways in which the used multiwinner score function interacts with the used generative distributions for latent and biased preferences are relevant. 
Capturing this interplay, in this section we introduce two order-preservation properties of multiwinner score functions. %
{Both of these properties hold for the utility-based model, but they may not hold for all generative models (\cref{def:latent_preference,def:biased_preference}).
In the remainder of this section and in \cref{sec:smooth,sec:main_algorithmic}, we present our results for general generative models. 
In \cref{sec:applications_of_result} we describe the implications of our general results for the utility-based model and in \cref{sec:case_study}, we describe simplified variants of our general arguments and properties tailored to the utility-based model.}

In order to be able to compute a committee with close-to-optimal utility in our algorithmic analysis, it will be crucial that if the number $n$ of voters is large enough, then the optimal committee consists of the $k$ candidates with the highest expected individual scores $\Exp_{\mu}\left[f(c,\succ) \right]$ for $c\in C$, as otherwise even if we would have access to the latent generative model computing an optimal committee might be computationally intractable.
To ensure this, it will turn out that it is sufficient to assume that each candidate has an intrinsic quality (i.e., $\Ex_\mu\insquare{f(c, \succ)}$) and that this quality determines the ordering of candidates by relative marginal contributions to committees.
More formally, we require that if $c$ has a higher intrinsic quality than $c'$, then $c$'s marginal contribution to a committee $S$ (for any $S$) is higher than that of $c'$ (Item 1 in \cref{def:latent}).
However, this still does not restrict how the difference between the intrinsic quality of candidates influences the difference between their marginal contributions to some committee. 
To address this, we also require that the difference in the marginal contributions of $c$ and $c'$ to committees is a non-increasing function with respect to adding candidates to the committee (Item 2 in \cref{def:latent}).

\begin{definition}[\bf{Order-preservation with respect to latent preferences}]
    \label{def:latent}
    Given a generative model $\mu$ for latent preference and a multiwinner score function $F = \sum_{v\in V} f(\cdot ,\succ_v)$, we say \emph{$F$ is order-preserving with respect to $\mu$} if for any two candidates $c \neq c'\in C$ belonging to the same group ($G_1$ or $G_2$) \textbf{if} $\Exp_{\mu}\left[f(c,\succ) \right] \leq \Exp_{\mu}\left[f(c',\succ)\right]$, \textbf{then} for any subsets $T\subseteq S\subseteq C\setminus \left\{c,c'\right\}$
        \begin{enumerate} %
            \item { $\Exp_{\mu}\left[f_S(c, \succ)\right] \leq \Exp_{\mu}\left[f_S(c', \succ)\right];$}
            \item $\Exp_\mu\insquare{f_S(c',\succ)} {-} \Ex_\mu\insquare{f_S(c,\succ)} \leq \Ex_\mu\insquare{f_T(c',\succ)}  {-} \Ex_\mu\insquare{f_T(c,\succ)}.$
        \end{enumerate} 
\end{definition}
We prove in \Cref{lem:utility_order_preserve} that any multiwinner score function $F$ is order-preserving with respect to the utility-based latent generative model (\Cref{ex:utility_generative}). 
The reason for this is that in the utility-based model the intrinsic quality of a candidate $c$ is an increasing function of $\omega_c$, as  $\omega_c\leq \omega_{c'}$ implies that $\pos(c',\succ) < \pos(c,\succ)$  with probability at least $\frac{1}{2}$.
The order preservation can then be shown using the domination sensitivity of $F$ (\cref{def:score}).

As an additional ingredient, we need a second (order-preservation) property controlling the relation between the latent $\mu$ and biased $\wh{\mu}$ generative models: 
Otherwise, if we would allow $\wh{\mu}$ to be arbitrarily different from $\mu$, then one cannot hope to identify a committee with high latent utility by just observing biased preferences (generated according to $\wh{\mu}$).
In particular, our second order-preservation property restricts by how much the ratio between the marginal contribution of two candidates $c$ and $c'$ to some set $S$ is allowed to change when bias is applied. 
\begin{definition}[\bf{Order-preservation between latent and biased preferences}]
    \label{def:bias}
    Given a generative model $\mu$ for latent preference lists and a generative model $\widehat{\mu}$ for biased preference lists, a multiwinner score function $F = \sum_{v\in V} f(\cdot ,\succ_v)$, and numbers $\beta, \gamma \in [0,1]$,
    we say $F$ is \emph{$(\beta,\gamma)$ order preserving between $\mu$ and $\wh{\mu}$} if for any two candidates $c \neq c'\in C$ belonging to the same group ($G_1$ or $G_2$) and any $S\subseteq C\setminus \{c,c'\}$, \textbf{if} $\beta \cdot \Exp_{\mu}\left[f_S(c',\succ)\right] \geq \Exp_{\mu}\left[f_S(c,\succ) \right] > 0$, \textbf{then} %
        $\gamma\cdot {\Exp_{\widehat{\mu}}\left[f_S(c', \nsucc)\right]}\geq   {\Exp_{\widehat{\mu}}\left[f_S(c, \nsucc)\right]}.$
\end{definition}
In this definition, $\beta$ specifies the range of candidates affected by the property and $1-\gamma$ is the allowed gap between their relative marginal contributions that can emerge by applying bias: 
When $\beta$ is close to 1 then we consider candidate pairs $c,c'$ and sets $S$ with a large range of $\frac{\Exp_{\mu}\left[f_S(c,\succ) \right]}{\Exp_{\mu}\left[f_S(c',\succ) \right]}$.
In contrast, as $\gamma$ goes to one,  expected marginal contributions  are allowed to become more and more similar even in case there is a substantial gap under $\mu$ (and, hence, candidates get harder to distinguish).

In case no bias is applied to the preferences, for each candidate $c$ and set $S$ the expected marginal contribution of $c$ to $S$ is the same in both $\mu$ and $\hat{\mu}$ (notably, this can also happen in other cases, for instance, if both the latent and biased preferences of a voter are independently drawn from the uniform distribution on $\prefs{C}$). 
If all expected marginal contributions remain unchanged, every multiwinner score function is $(\beta,\beta)$ order preserving between $\mu$ and $\wh{\mu}$ for any $\beta\in (0,1]$.
As shown in \Cref{lem:utility_order_preserve}, many multiwinner score functions (including the examples in \cref{sub:score-functions}) are $(\beta,\gamma)$ order preserving between the utility-based latent and biased generative model for any $0\leq \beta \leq 1-\lambda$ and $1-m^{-\Theta(1)}\cdot \lambda \leq \gamma\leq 1$ where $\lambda=m^{-\Theta(1)}$.
Intuitively, this is because we apply the same multiplicative bias to all members of the disadvantaged group.

{Recall that we have observed in \Cref{sec:Models-latent-biased} that  \cref{def:biased_preference} also allows the biased preference lists to be drawn from a mixture of two or more distributions. 
In the case of mixtures of utility-based models, certain order-preservation properties translate from the individual components to the mixture.  
Concretely, assume that $(\mu,\wh{\mu}_1)$ and $(\mu,\wh{\mu}_2)$ are two utility-based generative models with the same intrinsic values $\omega$ but heterogeneous bias parameters $\theta_1$ and $\theta_2$ respectively. Let us consider mixtures of $(\mu,\wh{\mu}_1)$ and $(\mu,\wh{\mu}_2)$. 
Using linearity of expectation, it can be shown that if $F$ is $(\beta_i,\gamma_i)$ order-preserving between $\mu$ and $\wh{\mu}_i$ for each $i{\in} \inbrace{1,2}$, then $F$ is $(\min\inbrace{\beta_1,\beta_2},\max\inbrace{\gamma_1,\gamma_2})$ order-preserving between $\mu$ and $\delta\wh{\mu}_1{+}(1{-}\delta)\wh{\mu}_2$ for any $\delta \in (0,1)$.}

\paragraph{Swapping-based bias model.}
In \Cref{sec:swapping-based}, we present a swapping-based bias model inspired by the popular Mallows model \cite{mallows1957non}, which is also captured by the general bias model. 
Given a latent preference list $\succ\ \in \prefs{C}$, for $t\geq 1$ iterations, this model  selects two candidates $c\in G_1$ and $c'\in G_2$ with $c'$ being ranked above $c$ and swaps their position, where the average difference in the position of swapped candidates can be controlled by a parameter $\phi\in [0,1]$.
We prove in \cref{lem:swap_order_preservation} that for $\lambda=\Theta(t\phi)$,
all multiwinner score functions are $(1-\lambda,1-\frac{\lambda}{2})$ order preserving between the utility-based latent generative model $\mu$ and swapping-based biased generative model $\wh{\mu}$.
Intuitively, $\beta$ is bounded away from $1$ because, unlike in the utility-based bias model, in the swapping-based model bias acts non-uniformly across members of the disadvantaged groups (depending on their positions relative to advantaged candidates in voters' preference lists).

\subsection{Representational Constraints}\label{sec:intervention}

Let $S^\star \coloneqq \arg\max_{S\subseteq C: |S|=k} \score(S)$ denote the optimal solution of the underlying multiwinner voting problem, and let $\OPT \coloneqq \score(S^\star)$.
Furthermore, let $M\subseteq C$ denote the collection of $k$ candidates $c$ with the highest value $\Exp_{\mu}\left[f(c,\succ)\right]$ (as discussed above if $F$ satisfies order-preservation with respect to the latent generative model $\mu$ and $n$ is high enough, $S^\star$ is likely to be equal to $M$).

A natural approach to solve the multiwinner voting problem in the presence of bias is to directly solve $\widehat{S}\coloneqq\arg \max_{S\subseteq C\colon |S|=C} \widehat{\score}(S)$.
However, due to biases, this may lead to committees with poor latent quality. %
In fact, we argue in \Cref{remark:rl} that one can lose significant value by not selecting candidates from $G_2$, i.e.,  $\score(\widehat{S}) \leq (1-\frac{|G_2\cap S^\star|}{2k})\cdot \OPT$.
To mitigate such adverse effects of bias, a popular intervention are \emph{representational constraints} that require the output subset to include at least a specified number of candidates from the disadvantaged group.\footnote{Note that enforcing representational constraints requires knowledge of the sets of advantaged and disadvantaged candidates. While they may not be available or costly to obtain in some contexts, e.g., web-search (see \cite{mehrotra2022noisy} and the references therein), they are available in relevant contexts such as elections when groups are based on (combinations of) socially-salient attributes (such as race, gender, age, and disability) \cite{evequoz2022equity}.}
\begin{definition}[\bf{Representational constraints}]
    \label{def:representational}
    {Given integer $0\leq \ell\leq k$,} the $\ell$-representational constraint requires $
    |S\cap G_2|\geq \ell$ for the selected committee $S\subseteq C$.
    Let $\calK(\ell)$ denote the collection of subsets that satisfy the $\ell$-representational constraint.
\end{definition} 
In the presence of the $\ell$-representational constraint, the straightforward output subset, say $\hS_\ell$, is a solution to the following optimization problem:$\max_{S\in \calK(\ell): |S| = k} \widehat{\score}(S).$
This paper centers around the following problem:
\begin{problem}[\bf{Effectiveness of representational constraints}]    \label{problem:representational_optimal}
    Is there an integer $0\leq \ell\leq k$ such that $\score(\widehat{S}_\ell)\approx \OPT$ (with high probability), where the inputs are as specified in \Cref{def:observed_list}?
    Moreover, is there a polynomial-time algorithm that outputs a set $S\in K(\ell)$ with $F(S)\approx \OPT$?
\end{problem}

\section{Algorithmic Results for \Cref{problem:representational_optimal}}
\label{sec:algorithmic}  

In this section, we show that representational constraints can help mitigate the adverse effects of bias in ranking-based subset selection with multiwinner score functions. %
Specifically, in \Cref{thm:main_algorithmic}, we provide a sufficient condition and an efficient algorithm for \Cref{problem:representational_optimal}.
Our algorithmic result is based on a new notion of smoothness (\Cref{def:smooth}), which distinguishes the capabilities of multiwinner score functions to be debiased by representational constraints.
Afterward, in \Cref{sec:applications_of_result}, we discuss specific implications of \Cref{thm:main_algorithmic}, among others, highlighting that under the utility-based model the sufficient number of voters to recover a close-to-optimal utility for SNTV or Chamberlin-Courant is exponential in $m$, whereas this dependence is only polynomial for Borda.

\subsection{Smoothness} \label{sec:smooth}

In a preliminary theoretical and empirical analysis, we observed that SNTV requires a significantly higher number of voters to recover the same solution quality as Borda when using representational constraints. 
We provide a theoretical demonstration of this contrast in \Cref{cor:algorithmic}. 
To quantitatively measure such differences between functions, we introduce the following notion of smoothness of a multiwinner score function $F$, which quantifies the ability of representational constraints to recover a latent utility that is close to optimal under $F$. %

To build some intuition, let us focus on a case where $F$ is order-preserving with respect to $\mu$ and $(1,0.01)$ order-preserving between $(\mu,\hat{\mu})$. 
Then, the set $M$ of candidates with the highest expected value  $\Exp_{\mu}\left[f(c,\succ)\right]$ is likely to be an optimal solution and those candidates are also likely to have the highest expected value $\Exp_{\widehat{\mu}}\left[f(c,\succ)\right]$. 
In this case, our algorithm only needs to identify all and, in particular, the weakest candidate $c$ from $M$ using samples in $\hR=\inbrace{\nsucc_v\in \mathcal{L}(C)\colon v\in V}$.
The marginal contribution of this candidate can be as low as $\Ex_{{\widehat{\mu}}}\insquare{f_{C\setminus \{c\}}(c,\nsucc)}$. 
Let $\tau_1(f)$ be the maximum possible expected score $\Exp_{\mu}\left[f(c,\succ)\right]$ of a candidate. 
To capture how many samples are needed to identify the weakest candidate $c$ from $M$, we introduce a new parameter $\alpha$ which needs to be at least $\alpha \geq \frac{1}{\tau_1(f)}\min_{c\in M} \Ex_{\wh{\mu}}\insquare{f_{C\setminus \{c\}}(c,\not\succ)}$ (note that $\tau_1(f)$ acts as a normalization). 
When $\alpha$ is close to 0, it is ``more difficult'' to distinguish candidates in $M$ from candidates in $C \setminus M$ since margin values of some candidates from $M$ are more likely to be close to 0, and hence, a larger number of voters is required to distinguish them.

This observation can be related to our initial finding regarding SNTV and Borda. 
SNTV requires more voters to identify the strength of a candidate, as only the top choice of a voter is taken into account. 
For instance, in the presence of a strong candidate $c^\star$, who is ranked first with a high probability, it may take many samples to observe the first vote where 
$c^\star$ is not ranked first. 
In contrast, for Borda, candidate strength can be distinguished much more easily, as every sampled vote provides new information on all candidates.

Note that in a different direction, we also need a sufficient number of voters to ensure that candidates from $M$ are likely to constitute an optimal solution.
For this, we again need that the sampled votes are enough so that candidates from $M$ have a higher contribution than candidates from $C\setminus M$ with respect to the latent votes; accordingly, similar to the above we also require that $\alpha\geq \frac{1}{\tau_1(f)}\min_{c\in M} \Ex_{{\mu}}\insquare{f_{C\setminus \{c\}}(c,\succ)}$.

Combining these observations with the order-preservation properties, we propose the following definition of smoothness. 
On an intuitive level, our smoothness definition boils down to how well the multiwinner score function $F$ can ``correctly distinguish'' the strength of candidates among the same group under the latent and biased preferences.

\begin{definition}[\bf{Smoothness}]
    \label{def:smooth}
        Let $F = \sum_{v\in V} f(\cdot ,\succ_v)$ be a multiwinner score function.
        Let $(\mu,\widehat{\mu})$ be a generative model defined in \Cref{def:bias,def:latent}.
        Given $\alpha,\beta, \gamma \in [0,1]$, we say $F$ is $(\alpha, \beta, \gamma)$-smooth with respect to $(\mu,\wh{\mu})$ if the following holds
        \begin{itemize} %
            \item $\alpha \geq \frac{1}{\tau_1(f)}\min_{c\in M} \Ex_{\wh{\mu}}\insquare{f_{C\setminus \{c\}}(c,\not\succ)}$ and $\alpha \geq \frac{1}{\tau_1(f)}\min_{c\in M} \Ex_{{\mu}}\insquare{f_{C\setminus \{c\}}(c,\succ)}$,
            \item $F$ is order-preserving with respect to $\mu$; and 
            \item $F$ is $(\beta, \gamma)$ order preserving between $\mu$ and $\widehat{\mu}$.
        \end{itemize}
\end{definition}
We have already argued before that $\alpha$ influences the number of voters required to identify a close-to-optimal committee. 
The parameter $\gamma$ also influences the number of voters needed but in this case, the larger $\gamma$ gets the more votes are needed:
If $\gamma$ is close to 1, then two candidates $c,c'$ with $\Exp_{\mu}\left[f(c, \succ)\right]\ll \Exp_{\mu}\left[f(c', \succ)\right]$, may have a similar quality in the presence of bias, $\Exp_{\wh{\mu}}\left[f(c, \nsucc)\right]\approx \Exp_{\wh{\mu}}\left[f(c', \nsucc)\right]$, making it harder to identify the stronger candidate needed for a close-to-optimal solution. 
In contrast to $\alpha$ and $\gamma$, factor $\beta$ bounds how close to the optimum one can get when observing biased voters: 
Intuitively, a value of  $\beta$ close to 1 implies that for two candidates the order of their marginal contributions remains unchanged when applying the bias. 
In contrast to this, for $\beta<1$, for two candidates $c,c'$ with $\Exp_{\mu}\left[f_S(c', \succ)\right]>\Exp_{\mu}\left[f_S(c, \succ)\right]$ and $\frac{\Exp_{\mu}\left[f_S(c, \succ)\right]}{\Exp_{\mu}\left[f_S(c', \succ)\right]}\geq \beta$ for some set $S$, we allow the ratio of their marginal contributions to $S$ to change arbitrarily in the presence of bias. 
Consequently, for such pairs of candidates $c$ and $c'$, even if the number of observed biased voters goes to infinity, we will  not be able to distinguish which of the two is the stronger candidate under the latent distribution, leading to a potential multiplicative loss of $\beta$ in the latent quality of the output solution due to wrongfully {including $c$ instead of $c'$ in the returned solution.}

\subsection{Main Theorem}
\label{sec:main_algorithmic}

As discussed above, for an $(\alpha, \beta, \gamma)$-smooth function, the values of $\alpha$ and $\gamma$ determine the number of samples required for representational constraints to return an approximately optimal solution, while the value of $\beta$ bounds the achievable multiplicative approximation factor.
Our main algorithmic result matches these intuitions, and we provide a sufficient condition on $n$ under which representational constraints recover a solution $S$ that is close to optimal. 
The proof of this result appears in \cref{sec:proofof:thm:main_algorithmic}.

\begin{theorem}[\bf{Algorithmic result for \Cref{problem:representational_optimal}}]
    \label{thm:main_algorithmic}
        Let $F\colon 2^C\to \R_{\geq 0}$ be a multiwinner score function.
        Let $\mu$ and $\widehat{\mu}$ be generative models of latent and biased preference lists respectively. 
        Suppose $F$ is $(\alpha, \beta, \gamma)$-smooth with respect to $(\mu,\wh{\mu})$ for some $\alpha, \beta, \gamma\in [0,1]$.
        For any $0<\eps,\delta<1$,
        if 
        \[  n \geq \frac{16k}{\inparen{\alpha\min\inbrace{\eps,1-\gamma}}^{2}} \cdot  \log{\frac{m}{\delta}},\]
        there is an algorithm that given groups $G_1, G_2$, numbers $k$, $\ell=\abs{M\cap G_2}$, and a value oracle $\calO$ to $\hF(\cdot)$ as input, outputs a subset $S\in \calK(\ell)$ of size $k$ such that 
        \[ \Pr_{\mu,\wh{\mu}}\insquare{F(S) \geq \inparen{\beta-\eps} \cdot  \OPT} \geq 1-\delta.\]
        The algorithm makes $O(mk)$ calls to oracle $\calO$ and performs $O\inparen{m\log{m}}$ arithmetic operations.
\end{theorem}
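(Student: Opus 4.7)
The plan is to analyze a natural greedy algorithm under the partition-matroid (representational) constraint. Maintain nested committees $\emptyset = S_0 \subset S_1 \subset \cdots \subset S_k$; at iteration $i$, determine which group the next addition must come from (so that $|S_k \cap G_2| = \ell$) and pick the eligible $c_i$ maximizing $\wh{F}(S_{i-1} \cup \{c\}) - \wh{F}(S_{i-1})$ via the oracle $\calO$. This uses $O(mk)$ oracle calls and $O(m \log m)$ arithmetic operations.

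Step one is concentration. Each $f(T, \nsucc_v)$ is a bounded random variable, so Hoeffding's inequality combined with a union bound over the $O(mk)$ queried marginals yields, with probability at least $1-\delta$, that every queried empirical marginal $\wh{F}_S(c)/n$ lies within $\eta := \tfrac{1}{4}\tau_1(f)\,\alpha\,\min\{\eps, 1-\gamma\}$ of $\Ex_{\wh{\mu}}[f_S(c,\nsucc)]$; the hypothesized lower bound on $n$ is calibrated to make this hold. I condition on this event $\mathcal{E}$ throughout.

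Step two uses smoothness to lift biased choices into latent-quality guarantees. On $\mathcal{E}$, each greedy pick $c_i$ satisfies $\Ex_{\wh{\mu}}[f_{S_{i-1}}(c_i,\nsucc)] \geq \Ex_{\wh{\mu}}[f_{S_{i-1}}(c_i^\star,\nsucc)] - 2\eta$, where $c_i^\star$ is the same-group candidate in $C \setminus S_{i-1}$ with maximal $\Ex_\mu[f_{S_{i-1}}(\cdot,\succ)]$. The contrapositive of the $(\beta,\gamma)$-order preservation (\cref{def:bias}) then yields
\[
\Ex_\mu[f_{S_{i-1}}(c_i,\succ)] \;\geq\; \beta \cdot \Ex_\mu[f_{S_{i-1}}(c_i^\star,\succ)] \;-\; O(\eps)\,\tau_1(f),
\]
where the $\alpha$-based lower bound on marginal contributions of $M$-candidates (the first clause of \cref{def:smooth}) is what converts the additive slack into a genuine multiplicative $\eps$ loss. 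Summing over $i$, using submodularity of $f$ together with Item 2 of \cref{def:latent} to telescope the gaps, gives $\Ex_\mu[f(S,\succ)] \geq (\beta - \eps/2)\,\Ex_\mu[f(M,\succ)]$. A within-group exchange argument using Item 1 of \cref{def:latent} identifies $M$ as the $\Ex_\mu[f(\cdot)]$-maximizer over $\{T: |T|=k,\, |T\cap G_2|=\ell\}$, and a second round of Hoeffding on the realized (unbiased) $F$-values—with the choice $\ell = |M\cap G_2|$ ensuring $M \in \calK(\ell)$—upgrades the expectation-level inequality to $F(S) \geq (\beta - \eps)\,\OPT$.

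The main obstacle is this chaining step: turning per-iteration multiplicative $\beta$-approximations—obtained under a shifting greedy frame $S_{i-1}$ that deviates from the ideal $M \cap S_{i-1}$—into a global $(\beta-\eps)\,\OPT$ bound. Item 2 of \cref{def:latent}, which forces the expected-marginal gap between same-group candidates to shrink as the frame grows, is essential for the telescoping to close without an unwanted $k$-fold blow-up. One must also verify that the cumulative additive loss $O(\eps \cdot n \cdot \tau_1(f))$ is at most $\eps \cdot \OPT$ via the lower bound $\OPT = \Omega(n \cdot \tau_1(f))$ implied by the smoothness parameter $\alpha$, and that the constrained maximum $\max_{T \in \calK(\ell)} F(T)$ is comparable (up to concentration error) to the unconstrained $\OPT$—a consequence of $\ell = |M\cap G_2|$ together with the fact that, on $\mathcal{E}$, the realized $F$-argmax concentrates on $M$.
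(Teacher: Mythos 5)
Your proposal is correct and follows essentially the same strategy as the paper's proof: a constrained greedy on $\widehat{F}$, Hoeffding concentration over all relevant marginals, the contrapositive of $(\beta,\gamma)$-order preservation to lift each biased greedy pick into a multiplicative-$\beta$ guarantee on latent marginals, Item~2 of \cref{def:latent} to change the base from the greedy prefix to the $M$-prefix and telescope, \cref{lem:score_m} to show $F(M)\approx\OPT$, and the $\alpha$-lower bound on marginals of $M$-candidates (giving $\OPT=\Omega(\alpha kn\tau)$) to absorb the cumulative additive concentration loss into the multiplicative $\eps$. The only cosmetic divergence is that the paper fixes a specific greedy order (all $\ell$ picks from $G_2$ before any from $G_1$) which simplifies the prefix bookkeeping, whereas your algorithm description leaves the group-interleaving rule unspecified; this does not change the substance of the argument.
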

The algorithm underlying \cref{thm:main_algorithmic} is a standard greedy algorithm (\Cref{algo:main_algorithm} in \Cref{sec:proofof:thm:main_algorithmic}) that maximizes $\widehat{F}$ subjected to representational constraint $\ell=\abs{M\cap G_2}$.
The key idea used in the proof of \cref{thm:main_algorithmic} is that, due to the smoothness of $F$, when \Cref{algo:main_algorithm} adds the $i$th candidate to the committee, the incurred marginal contribution with respect to the latent preferences is at least  a $\beta$ fraction compared to when building $M$ and adding the $i$th highest scoring candidate to a set of the $(i-1)$ highest scoring candidates (\Cref{lem:lower_bound_on_marginal}). 
Note that due to the greedy nature of our algorithm, the output solution $S$ may not be identical to $\widehat{S}_{\ell}$ for some $\ell$ (recall $\widehat{S}_\ell=\arg \max_{S\in \cK(\ell): |S|=k} \widehat{\score}(S)$).
However, for modular score functions such as SNTV and Borda, the algorithm always outputs $S = \widehat{S}_{\abs{M\cap G_2}}$.
Hence, for some multiwinner score functions, \Cref{thm:main_algorithmic} also implies $F(\widehat{S}_{\ell}) \geq \inparen{\beta-\eps} \cdot  \OPT$ for $\ell=\abs{M\cap G_2}$, which partially addresses the first question of \Cref{problem:representational_optimal}.

Note that the value $\abs{M\cap G_2}$ is unknown in advance.
While in general, this value depends on $F$ and the generative models $(\mu,\wh{\mu})$, there are also natural special cases where it is independent. 
For instance, if we assume that preference lists drawn from $\mu$ are not systematically skewed toward candidates in either group, as may be the case in the real world \cite{evequoz2022equity}, then $\abs{M\cap G_2} \approx k\cdot\frac{\abs{G_2}}{m}$ with probability $1-o_k(1)$.
{
Moreover, in applications such as recommendation systems that use multiwinner scoring functions \cite{DBLP:conf/prima/StreviniotisC22}, $\ell$ may be tuned via A/B testing by trying different $\ell$, estimating latent quality from user engagement, and selecting the value of $\ell$ that maximizes the latent quality.}

    {It is worth noting that although} $\alpha$ has a similar form as the curvature of submodular functions, there are some  differences between them {that render} the curvature ineffective in measuring the effectiveness of representational constraints.
    For instance, the curvature is unable to distinguish modular functions.
    We discuss this and the relevance of \cref{def:smooth} to research on the effect of noise on multiwinner voting in \cref{sec:additional_remarks}. %

    \begin{table}[b!]
        \centering 
         %
        \begin{tabular}{lccc}
            \toprule 
            \textbf{Multiwinner scoring function} & $\alpha$ (\cref{lem:utility_alpha}) & $\beta$ (\cref{lem:utility_order_preserve2}) & $\gamma$ (\cref{lem:utility_order_preserve2})\\
            \midrule{}\hspace{-1mm}
            SNTV & $\Theta\inparen{\theta^{-2(m-1)}}$ & $1-\Theta(m^{-\sfrac12})$ & $1-\Theta(m^{-\sfrac32})$  \\
            $\ell_1$-CC & $\Theta\inparen{\theta^{-2(m-1)}}$ & $1-\Theta(m^{-\sfrac12})$ & $1-\Theta(m^{-\sfrac32})$ \\
            Borda & $\Theta\inparen{\theta^{-2}}$ & $1-\Theta(m^{-\sfrac12})$ & $1-\Theta(m^{-\sfrac52})$ \\
            \bottomrule{}
        \end{tabular} 
        \caption{
            Smoothness parameters for the utility-based model \cref{def:utility_bias}.
            The formal statements of the results appear as \cref{lem:utility_alpha,lem:utility_order_preserve2}.
            Note that these results hold for the utility-based model in \cref{def:utility_bias} as well as its variants where $\eta$ is not uniformly distributed on $[0,1]$ but is instead drawn from, say, an exponential distribution.
        }
        \label{tab:smoothness_results1}
    \end{table}

  \paragraph{Proof overview of \cref{thm:main_algorithmic}.}       
        {The smoothness condition is defined with respect to expectations over the generative models but in \cref{thm:main_algorithmic} we have only access to $n$ samples.
        The proof's first component is a concentration inequality showing that expectations over samples are close to the true expectations: 
            for any candidate $c\in C$ and committee $T\subseteq C$ ``of interest,'' $\Ex_\mu\sinsquare{f_T(c,\succ)} \approx \frac{1}{n} F_T(c)$ and $\Ex_{\wh{\mu}}\sinsquare{f_T(c,\nsucc)} \approx \frac{1}{n} \hF_T(c)$ (\cref{lem:concentration}), where $F_T(c)\coloneqq F(T\cup \inbrace{c})-F(T)$ and $\hF_T(x)\coloneqq \hF(T\cup \inbrace{c})-\hF(T)$.
        Let $S \in \cK(\ell)$ be the subset output by the algorithm.
        Recall that $M \subseteq C$ is the set of $k$ candidates $c$ with the highest value $\Exp_{\mu}\left[f(c,\succ)\right]$ and $\OPT {\coloneqq} \max_{S\subseteq C: |S|=k} \score(S)$.
        The proof strategy is to show that $F(M) \approx \OPT$ (\cref{lem:score_m}) and to compare the utility of ``prefixes'' of $S$ to ``prefixes'' of $M$ (\cref{lem:lower_bound_on_marginal}).
        For this, for large enough $n$, we show that our algorithm has the following property (\cref{eq:lb_ratio_of_biased_scores}):    
            there is an ordering $m_1,\dots,m_k$ of elements in $M$ such that
            \[
                \forall_{i \in [\ell]},\ \  
                \hF_{\inbrace{s_1,\dots,s_{i-1}}}(s_i)\geq \gamma \cdot \hF_{\inbrace{s_1,\dots,s_{i-1}}}(m_i),
                \yesnum\label{eq:invariant}
            \] 
            where $s_j$ is the $j$-th item added to $S$ for any $j$.
        If $F$ is modular, the remainder of the proof is straightforward:
            Due to order preservation between $\mu$ and $\wh{\mu}$, \cref{eq:invariant} implies that $F_{\inbrace{s_1,\dots,s_{i-1}}}(s_i) {\geq} \beta \cdot F_{\inbrace{s_1,\dots,s_{i-1}}}(m_i)$ for each $i$, and, since $F$ is modular, $F(S){=}\sum_i F(s_i) \geq \beta \cdot  \sum_i F(m_i) {=} \beta \cdot F(M)$.
        When $F$ is not modular, we need to show that \cref{eq:invariant} implies that $F_{\inbrace{s_1,\dots,s_{i-1}}}(s_i)\geq \beta \cdot F_{\inbrace{m_1,\dots,m_{i-1}}}(m_i)$ (note the change in the base).
        We do so in \cref{lem:lower_bound_on_marginal,eq:lb_marginal_1,eq:lb_marginal_2} using order preservation with respect to $\mu$.}

\subsection{Applications of \cref{thm:main_algorithmic}}
\label{sec:applications_of_result}
    In this section, we focus on the utility-based generative models $(\mu,\widehat{\mu})$ (\cref{def:utility_bias}) and derive bounds for some specific multiwinner score functions; \blue{see \cref{tab:smoothness_results1} for a summary of all results on the utility-based model and \cref{tab:smoothness_results2} for a summary of all results on the swapping-based model.}
\Cref{lem:utility_order_preserve,lem:utility_order_preserve2} give bounds on $\beta$ and $\gamma$ for this case. 
    We provide the missing values of $\alpha$ for some multiwinner score functions and the resulting sufficient numbers of voters using \cref{thm:main_algorithmic} in the following result, whose proof appears in  \cref{sec:proofof:lem:utility_alpha}.

    \begin{table}[b!]
        \centering
         %
        \begin{tabular}{lccc}
            \toprule 
            \textbf{Multiwinner scoring function} & $\alpha$ (\cref{lem:swap_alpha}) & $\beta$ (\cref{lem:swap_order_preservation}) & $\gamma$ (\cref{lem:swap_order_preservation})\\
            \midrule{}\hspace{-1mm}
            SNTV & $1-\Theta\inparen{\phi t}$ & $1-\Theta\inparen{\phi t}$ & $1-\Theta\inparen{\phi t}$  \\
            $\ell_1$-CC & $1-\Theta\inparen{\phi t}$ & $1-\Theta\inparen{\phi t}$ & $1-\Theta\inparen{\phi t}$ \\
            Borda & $1-\Theta\inparen{\phi t}$ & $1-\Theta\inparen{\phi t}$ & $1-\Theta\inparen{\phi t}$ \\
            \bottomrule{}
        \end{tabular} 
        \caption{
            Smoothness parameters for the swapping-based model (\cref{def:swapping}). 
            The formal statements of the results appear as \cref{lem:swap_alpha,lem:swap_order_preservation}.
        }
        \label{tab:smoothness_results2}
    \end{table}

    \begin{theorem}[\textbf{Algorithmic result for the utility-based generative model; Informal}]
    \label{cor:algorithmic}
        Let $(\mu,\wh{\mu})$ be the utility-based generative models from \cref{ex:utility_generative,def:utility_bias} with bias parameter $\theta\in (0,1]$.
        For $\ell_1$-CC and SNTV it holds that $\alpha\geq \Theta\inparen{\theta^{-2(m-1)}}$, and for Borda that $\alpha\geq \Theta\inparen{\theta^{-2}}$.
        
        Using this and \Cref{lem:utility_alpha}, \Cref{thm:main_algorithmic} applies for 
        \begin{enumerate} %
            \item SNTV and $\ell_1$-CC with $n\geq {\theta^{-2(m-1)}\cdot m^{\Theta(1)}\eps^{-2}}$ and $\beta=1-m^{-\Theta(1)}$; and
            \item Borda with $n\geq {\theta^{-2}\cdot m^{\Theta(1)}\eps^{-2}}$ and $\beta=1-m^{-\Theta(1)}$.
        \end{enumerate}
    \end{theorem}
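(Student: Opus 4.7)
The plan is to verify the smoothness parameters $(\alpha,\beta,\gamma)$ for each of the three scoring rules under the utility-based generative models $(\mu,\widehat{\mu})$ and then invoke \cref{thm:main_algorithmic}. Since the values of $\beta = 1-\Theta(m^{-\sfrac12})$ and the corresponding values of $\gamma$ listed in \cref{tab:smoothness_results1} are already established by \cref{lem:utility_order_preserve,lem:utility_order_preserve2}, the central task is to estimate the normalized minimum expected marginal $\frac{1}{\tau_1(f)}\min_{c\in M}\Ex_{\wh{\mu}}[f_{C\setminus\{c\}}(c,\not\succ)]$ (and its unbiased counterpart) for each scoring rule; the resulting bounds plug directly into the sample-complexity expression $n\geq 16k\,(\alpha\min\{\eps,1-\gamma\})^{-2}\log(m/\delta)$.

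For SNTV, one observes that $f_{C\setminus\{c\}}(c,\not\succ) = \mathbbm{1}[\pos_{\not\succ}(c)=1]$, so the expected biased marginal equals the probability that $c$ is biased-ranked first. If $c\in M\cap G_2$, this event requires $\theta\,\omega_c\,\eta_c > \omega_{c'}\,\eta_{c'}$ for every other candidate $c'$; conditioning on $\eta_c\sim U[0,1]$, each pairwise comparison contributes a factor linear in $\theta$, so the product over $m-1$ comparisons together with the outer integral in $\eta_c$ yields a bound of order $\theta^{m-1}\cdot m^{-O(1)}$ (with constants depending on $(\omega_c)_c$). Since $\tau_1(f)\leq 1$ for SNTV, this gives $\alpha=\Theta(\theta^{m-1})$ up to polynomial factors in $m$, contributing $\theta^{-2(m-1)}\cdot m^{\Theta(1)}$ to the sample bound. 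The same argument applies to $\ell_1$-CC: its marginal $f_{C\setminus\{c\}}(c,\not\succ)$ is nonzero only when $c$ is biased-ranked first and equals $1$ in that case, and the normalization $\tau_1(f)=m-1$ does not change the $\theta^{m-1}$ scaling.

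For Borda, the marginal decomposes as $m-\pos_{\not\succ}(c) = \sum_{c'\neq c}\mathbbm{1}[c\,\not\succ\,c']$. For $c\in M\cap G_2$ and $c'\in G_1$, the probability $\Pr[\theta\,\omega_c\,\eta_c>\omega_{c'}\,\eta_{c'}]$ over independent uniforms evaluates in closed form and is of order $\theta\cdot\omega_c/\omega_{c'}$; for $c'\in G_2$ it is a positive constant independent of $\theta$. Summing the $m-1$ pairwise terms and dividing by $\tau_1(f)=m-1$ yields $\alpha=\Theta(\theta)$ up to polynomial factors in $m$ and in $(\omega_c)_c$, which plugs into the sample bound as $\theta^{-2}\cdot m^{\Theta(1)}$, a dramatic improvement over the exponential-in-$m$ dependence for SNTV and $\ell_1$-CC. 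In all three cases the unbiased minimum $\frac{1}{\tau_1(f)}\min_{c\in M}\Ex_\mu[f_{C\setminus\{c\}}(c,\succ)]$ reduces to the same computation at $\theta=1$ and therefore dominates the biased one, so the biased values above are the binding ones in the definition of $\alpha$.

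Combining these $\alpha$ values with $\beta=1-\Theta(m^{-\sfrac12})$ and the $\gamma$'s from \cref{tab:smoothness_results1} in \cref{thm:main_algorithmic} gives the claimed $n\geq \theta^{-2(m-1)}\,m^{\Theta(1)}\,\eps^{-2}$ for SNTV and $\ell_1$-CC and $n\geq \theta^{-2}\,m^{\Theta(1)}\,\eps^{-2}$ for Borda, with utility guarantee $F(S)\geq (\beta-\eps)\OPT = (1-m^{-\Theta(1)}-\eps)\OPT$; the $\min\{\eps,1-\gamma\}$ term contributes only polynomial-in-$m$ factors that are absorbed into $m^{\Theta(1)}$. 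The main technical obstacle is the SNTV and $\ell_1$-CC estimate: one must show that the ``first-ranked'' probability remains of order $\theta^{m-1}$ uniformly over the entire profile $(\omega_c)_c$ and over the choice of the weakest $c\in M\cap G_2$, rather than relying on a single worst pairwise ratio. The Borda estimate is comparatively routine once the pairwise-indicator decomposition is in place.
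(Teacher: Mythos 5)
Your proposal is conceptually on track and matches the paper's overall plan — compute $(\alpha,\beta,\gamma)$ for each rule via \cref{lem:utility_order_preserve,lem:utility_order_preserve2} and a new bound on $\alpha$, then plug into \cref{thm:main_algorithmic} — but your route to the $\alpha$ bound is different from, and less careful than, the paper's. The paper's \cref{lem:utility_alpha} rewrites $\Ex_{\wh{\mu}}\insquare{f_{C\setminus\{c\}}(c,\not\succ)}$ via an Abel-type expansion over positions, $\sum_\ell \Pr\insquare{\pos_{\nsucc}(c)\leq \ell}(\sigma_\ell-\sigma_{\ell+1})$, so that a single machinery handles SNTV, Bloc, $\ell_1$-CC, and Borda simultaneously; the factors of $\theta$ then enter only through $\cdf_\eta$ evaluated at arguments scaled by $\theta$, and the ``linear scaling'' and ``exponential tail'' properties of \cref{def:eta_for_alpha} are used to convert each such factor into $\Theta(\theta)$ \emph{uniformly over the intrinsic-utility profile $\omega$}. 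Your approach instead specializes to each rule (first-ranked event for SNTV/$\ell_1$-CC, pairwise indicator decomposition for Borda). For Borda these two routes are equivalent after unrolling — $m-\pos_{\nsucc}(c)=\sum_{c'\neq c}\mathbbm{1}\insquare{c\not\succ c'}$ is exactly the position expansion with constant weights — but the position view buys uniformity across rules, which is how the paper derives the crucial distinction $j=1$ vs.\ $j=m-1$ controlling the exponent.

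Two places where your reasoning is imprecise, though the resulting bounds are still valid lower bounds and therefore suffice for the stated informal theorem. First, for SNTV you claim ``each pairwise comparison contributes a factor linear in $\theta$, so the product over $m-1$ comparisons yields $\theta^{m-1}$.'' This overcounts: for a comparison $c,c'$ with $c,c'\in G_2$, both observed utilities are scaled by $\theta$, so $\theta$ cancels and that factor is $\theta$-independent. The power is $|G_1|$, and \cref{lem:utility_alpha} refines this further to $|G_1|-1$ (the position-cutoff $j=1$ for SNTV removes one factor via the $|S|\leq\ell$ term). Since $|G_1|\leq m-1$, your $\theta^{m-1}$ is a strict underestimate of $\alpha$ and matches the informal $\theta^{-2(m-1)}$ in the statement, but it is weaker than what the paper actually proves. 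Second, for Borda, summing the pairwise probabilities gives $\Theta(\theta|G_1|+|G_2|)$; after dividing by $\tau_1(f)=m-1$ the term $|G_2|/m$ can dominate $\theta$ when $\theta$ is small, so $\alpha=\Theta(\theta)$ is again a valid lower bound rather than a tight estimate — the paper's bound is phrased as $\alpha\geq \Theta(\alpha_0\cdot\theta)$ with the profile-dependent $\alpha_0$ appearing explicitly. Finally, you correctly flag the main unfinished step — controlling the first-ranked probability uniformly over $\omega$ and over the weakest $c\in M\cap G_2$ — but you leave it open; the paper resolves it by fixing constants $\rho_0$, $\lambda_0$, $\pi$, $\theta_0$ from the CDF properties so that the scaling $\cdf_\eta(\theta x)\in[\pi\theta\cdf_\eta(x),\ \pi^{-1}\theta\cdf_\eta(x)]$ holds uniformly in the relevant range, which is exactly the technical content you would need to supply.
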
  
    Note that it is quite intuitive that the above computed $\alpha$ values depend on $\theta$ as in the utility-based model $\theta$ controls the multiplicative gap between the scores awarded to candidates from $G_2$ compared to $G_1$ and, thus, the value of $\frac{1}{\tau_1(f)}\min_{c\in M} \Ex_{\wh{\mu}}\insquare{f_{C\setminus \{c\}}(c,\not\succ)}$.

    The sufficient number of voters in the above result varies significantly depending on the multiwinner score function: on the one hand, for  $\ell_1$-CC rule and SNTV the dependence is $\theta^{-O(m)}$, on the other hand, the dependence is only $\theta^{-2}$ for the Borda rule.
    We can also prove that these dependencies are not only sufficient but also necessary by providing an impossibility result (\Cref{thm:main_impossibility}) that shows that representational constraints cannot recover an (approximately) optimal solution if $n$ is  ``substantially''  smaller than these bounds; see \Cref{sec:impossible} for more discussions.
    Combined with our impossibility result, \Cref{cor:algorithmic} shows a stark contrast between different score functions, e.g., SNTV and Borda, under the utility-based model, implying that the latter function is advantageous in the presence of implicit bias.

    {The above results extend to certain mixtures of generative models. For instance, if $F$ is $(\alpha_1,\beta_1,\gamma_1)$-smooth with respect to $(\mu,\wh{\mu}_1)$ and $(\alpha_2,\beta_2,\gamma_2)$-smooth with respect to $(\mu,\wh{\mu}_2)$, then for any $\delta\in (0,1)$, it can be shown that $F$ is $(\delta\alpha_1+(1-\delta)\alpha_1,\min\inbrace{\beta_1,\beta_2},\max\inbrace{\gamma_1,\gamma_2})$-smooth with respect to the mixture $(\mu,\delta\wh{\mu}_1+(1-\delta)\wh{\mu}_2)$; this follows from \cref{def:smooth} and linearity of expectation.}

    \paragraph{A tool for analyzing multiwinner score functions
    (\cref{sec:simulations_synthetic}).}
    In addition to the above computations for \textit{existing} bias models and multiwinner score functions, we also provide code to estimate the smoothness of \textit{new} multiwinner score functions with respect to \textit{new} generative models (\cref{sec:simulations_synthetic}).
    The code takes as input oracles that (1) evaluate the multiwinner score function $F$ and (2) sample from generative models $(\mu,\wh{\mu})$. 
    First, for specified $m$ and $k$, it outputs estimates $(\wt{\alpha},\wt{\beta},\wt{\gamma})$ of the smoothness of $F$ with respect to  $(\mu,\wh{\mu})$, along with corresponding confidence intervals (implied by a concentration inequality; \cref{lem:concentration}).
    This allows for theoretical estimates of the capabilities of representational constraints using our main result (\cref{thm:main_algorithmic}).
    
    Second, given values of $n$, $m$, and $k$, it estimates the fraction of the optimal score recovered by representational constraints for $F$ with respect to the given $(\mu,\wh{\mu})$.
    In \cref{sec:simulations_synthetic}, we illustrate the code using a set of latent generative models provided by \citet{DBLP:conf/atal/SzufaFSST20} in combination with the swapping-based bias model (\cref{def:swapping}).
    In line with \cref{cor:algorithmic}, our observations in these simulations show a stark contrast between SNTV and Borda: for all values of $n$ and generative models we consider, representational constraints recover a significantly larger fraction of the maximum achievable latent score for Borda than for SNTV.

    \subsection{More Discussion on \cref{thm:main_algorithmic}}\label{sec:additional_remarks}

    Our definition of smoothness (\cref{def:smooth}) is also relevant to works that study the robustness of scoring functions to noise in the ``ground truth'' preference list. These works assess the capabilities of scoring functions to uncover some ground truth from noisy signals, which is a popular question in the field of epistemic social choice \cite{DBLP:conf/uai/ProcacciaRS12,caragiannis2016noisy,Caragiannis2017Learning,DBLP:journals/aamas/CaragiannisKKK22}.
    Our results and smoothness definitions extend to this setting as follows. 
    Suppose there is a ground truth preference list $\succ_{\rm truth}$ of all candidates. The generative model $\mu$ of latent preferences returns $\succ_{\rm truth}$ with probability one, and all candidates are part of the disadvantaged group (i.e., $G_1$ is empty). Now, the noise model one is interested in simply becomes our generative model $\hat{\mu}$ for biased preference lists.
        
    The smoothness of a function then gives insights into its robustness to such noise generalizing some concepts from the literature: For instance, in a closely related setting, \citet{DBLP:journals/aamas/CaragiannisKKK22}  study the robustness of approval-based multiwinner scoring functions and introduce the notion of ``accurate in the limit,'' which roughly speaking means that if the number of voters is high enough, then the underlying best committee maximizes the score function with respect to the noisy votes with high probability. This can be captured by our smoothness definition. If for some noise model, the scoring function $F$ is $(\alpha,\beta,\gamma)$-smooth with $\beta=1$, then our results imply that $F$ is accurate in the limit (\cref{thm:main_algorithmic}). 
    Moreover, unlike existing work, \cref{thm:main_algorithmic} also gives a bound on how many noisy votes are required to achieve a $(1-\varepsilon)$-``approximately'' optimal committee \cite{DBLP:journals/aamas/CaragiannisKKK22}.
        
    \blue{Further, our work is related to a line of works on predicting the outcome of an election by sampling some of the voters \cite{DBLP:journals/ai/BhattacharyyaD21,DBLP:journals/corr/abs-2203-00083}.} 
    \begin{remark}[\bf{Comparing $\alpha$ with other notations}]
        \blue{Several works \cite{DBLP:journals/ai/BhattacharyyaD21,DBLP:journals/corr/abs-2203-00083} bound the number of voters required to accurately predict the outcome of the election.}
        Such sampling  bounds are often parameterized by the margin of victory \cite{DBLP:conf/uss/MagrinoRS11,DBLP:conf/sigecom/Xia12}, i.e., the lead of the election winner in the full election. 
        Conceptually, the margin of victory is related to $\alpha$ in the definition of smoothness (\cref{def:smooth}), as $\alpha$ captures the quality of the weakest candidate of the winning committee and thereby in some sense its ``lead'' against the remaining ones.
    \end{remark}

    \noindent Also, note that $\alpha$ has a similar form as the curvature of submodular functions. 
    However, there are some significant differences between them implying that the curvature does not measure the effectiveness of representational constraints, as, e.g., the curvature is unable to distinguish modular functions--as explained next.

\begin{remark}[\textbf{Comparison of $\alpha$ and the curvature of submodular functions}]\label{remark:curvature}
    The curvature of a submodular function often shows up in approximation-ratios of (constrained) monotone submodular maximization algorithms \cite{conforti1984submodular,vondrak2010submodularity}. 
    The curvature $\lambda(f)\in [0,1]$ of $\Ex_\mu\insquare{f(\cdot,\succ)}$ is defined as $1-\min_{S\subseteq C, c\not\in S} \frac{\Ex_\mu[f_S(c,\succ)]}{\Ex_\mu[f(c,\succ)]}$.\footnote{Note that the curvature of $\Ex_\mu\insquare{f(\cdot,\succ)}$ is well-defined as it is a submodular function.}
    Hence, $\frac{1}{1-\lambda(f)}=\min_{S\subseteq C, c\not\in S} \frac{\Ex_\mu[f_S(c,\succ)]}{\Ex_\mu[f(c,\succ)]}$.
    If $\mu=\wh{\mu}$, then this has a similar form as $\alpha$ in \cref{def:smooth}, but there are a few important differences:
    the main difference is the denominator in $\frac{1}{1-\lambda(f)}$ depends both on $c$ and $f$, whereas the denominator in $\alpha$ depends $f$ but not $c$.
    Because of this difference, while $\lambda(f)$ measures the ``closeness'' of $\Ex_\mu\insquare{f(\cdot)}$ to being modular, $\alpha$ is not related to the ``closeness'' to being modular. 
    Many common multiwinner voting functions (including the SNTV and Borda rules) are modular.
    Hence, the curvature $\lambda(f)$ of all of these functions (including, both the SNTV and the Borda rule) are the same (equal to 0). %
    However, as our results show, in some cases, representational constraints have significantly higher effectiveness for the Borda rule compared to the SNTV rule (\cref{cor:algorithmic}).
    Thus, the curvature is not the right parameter to measure the effectiveness of lower-bound constraints.
    In contrast to the curvature, $\alpha$ varies across modular functions (e.g., \cref{cor:algorithmic}).
\end{remark}

\section{Proofs of Algorithmic Results in Section~\ref{sec:algorithmic}}
\label{sec:proof}
    In this section, we prove our algorithmic results.
    In Section~\ref{sec:proofof:thm:main_algorithmic}, we propose our algorithm (Algorithm~\ref{algo:main_algorithm}) and prove Theorem~\ref{thm:main_algorithmic}.
    In Section~\ref{sec:proofof:lem:utility_alpha}, we prove Theorem~\ref{cor:algorithmic} by first showing the order preservation of the utility-based models (Lemmas~\ref{lem:utility_order_preserve} and~\ref{lem:utility_order_preserve2}) and then bounding term $\alpha$ in the smoothness notion (Lemma~\ref{lem:utility_alpha}).
    In Section~\ref{sec:swapping-based}, we investigate another common bias generative model, called the swapping-based model (Definition~\ref{def:swapping}), and similarly analyze its order-preservation (Lemma~\ref{lem:swap_order_preservation}) and smoothness (Lemma~\ref{lem:swap_alpha}).
    Table~\ref{tab:notation} summarizes the used symbols.
    
    \begin{table}[ht!]
        \centering
        \footnotesize
        \subfigure[Basic notation]{
            \footnotesize
            \hspace{-10mm}\begin{tabular}{p{1cm}p{6cm}}
            \toprule 
            \textbf{Symbol} & \textbf{Meaning}\\
            \midrule{}\hspace{-1mm}
            $n$ & Number of voters\\
            $m$ & Number of candidates\\
            $k$ & Number of selected candidates\\
            $V$ & Set of voters\\
            $C$ & Set of candidates \\
            $\succ$ & A ``latent'' ranking of all candidates in $C$\\
            $\nsucc$ & A ``biased'' ranking of all candidates in $C$\\
            $\mu$ & Generative model of latent preferences (\cref{def:latent_preference})\\
            $\wh{\mu}$ & Generative model of biased preferences (\cref{def:biased_preference})\\
             $\mathcal{L}(C)$ & Set of all strict and complete orders over $C$ \\ 
             $G_1,G_2$ & Advantaged and disadvantaged groups, respectively. Disjoint subsets of $C$.\\ 
             $F$ & Multiwinner score function $\score(S) = \sum_{v\in V} f(S, \succ_v)$ (\cref{def:score})\\ 
            \bottomrule{}
        \end{tabular}
        } 
        \footnotesize
        \subfigure[Notation specific to multiwinner scoring functions]{
            \footnotesize
            \begin{tabular}{lp{6cm}}
            \toprule 
            \textbf{Symbol} & \textbf{Meaning}\\
            \midrule{}\hspace{-1mm}
             $f$ & A function $f\colon 2^C\times \prefs{C} \to \R_{\geq 0}$ such that $\score(S) = \sum_{v\in V} f(S, \succ_v)$; see \cref{def:score}\\[5mm]
             $f_S(c,\succ)$ & The marginal contribution of $c$ to $S$ with respect to $f$ and $\succ$: $f_S(c,\succ) \coloneqq f(S\cup \{c\}, \succ) - f(S, \succ)$.\\[5mm]
             $\pos_\succ(c)$ & Position of $c$ in the preference list $\succ$\\[5mm]
             $\tau_1(f)$ & The maximum possible expected score $\Exp_{\mu}\left[f(c,\succ)\right]$ of a candidate (in the case of multiwinner score functions this is the score which function $f(\cdot ,\succ)$ awards to the set consisting of the candidate ranked in the first position of $\succ$)\\
            \bottomrule{}
        \end{tabular}
        }\par
        \subfigure[Notation specific to the utility-based model]{
            \hspace{-10mm}
            \begin{tabular}{lp{6cm}}
            \toprule 
            \textbf{Symbol} & \textbf{Meaning}\\
            \midrule{}\hspace{-1mm}
             $\theta$ & Bias parameter (\cref{def:utility_bias})\\[2.15mm]
             $\omega_c$ & Intrinsic utility of $c\in C$ (\cref{ex:utility_generative})\\[2.15mm]
             $w_{v,c}$ & Latent utility of $c\in C$ observed by voter $v$ (\cref{ex:utility_generative})\\[2.15mm]
             $\hat{w}_{v,c}$ & Biased utility of $c\in C$ observed by voter $v$  (\cref{def:utility_bias})\\[2.15mm]
            \bottomrule{}
        \end{tabular}
        }
        \subfigure[Notation specific to the swapping-based model]{
            \footnotesize
            \begin{tabular}{lp{6cm}}
            \toprule 
            \textbf{Symbol} & \textbf{Meaning}\\
            \midrule{}\hspace{-1mm}
             $\phi$ & Bias parameter (\cref{def:swapping}) \\
             $t$ & Number of swaps (\cref{def:swapping})\\
             $A(\succ)$ & The collection of all pairs $(i,j)$ such that there exist $c \in G_1$ and $c' \in G_2$ with $\pos_{\succ}(c) = i > j = \pos_{\succ}(c')$ (\cref{def:swapping})\\ 
             $Z(\succ)$ & Normalization factor: $\sum_{(i',j')\in A(\succ)} \phi^{i'-j'}$ (\cref{def:swapping})\\
            \bottomrule{}
        \end{tabular}
        }\par
        \subfigure[Voting rules and smoothness parameters]{ 
            \hspace{-10mm}\begin{tabular}{lp{6cm}}
            \toprule 
            \textbf{Symbol} & \textbf{Meaning}\\
            \midrule{}\hspace{-1mm}
            SNTV & Single non-transferable vote; defined by  $f(S,\succ)=\sum_{c\in S} \mathbbm{1}_{\pos_{\succ}(c)=1}$\\[0.9mm]
            Borda & A multiwinner score functions defined by $f(S,\succ)= \sum_{c\in S} \inparen{m-\pos_{\succ}(c)}$\\[0.9mm]
            $\ell_1$-CC & $\ell_1$ Chamberlin-Courant rule; defined by $f(S,\succ)= \max_{c\in S} \inbrace{m-\pos_{\succ}(c)}$\\[0.9mm]
            $\alpha$ & \cref{def:smooth}\\[0.9mm]
            $\beta$ & \cref{def:bias}\\[0.9mm]
            $\gamma$ & \cref{def:bias}\\ 
            \bottomrule{}
        \end{tabular}
        }
        \subfigure[Special subsets]{ 
            \begin{tabular}{lp{6cm}}
            \toprule  
            \textbf{Symbol} & \textbf{Meaning}\\
            \midrule{}\hspace{-1mm}
            $R$ & Voters' latent preference profile $\left\{\succ_v \in \prefs{C}: v\in V\right\}$\\
            $\wh{R}$ & Voters' biased preference profile $ \left\{\nsucc_v \in \prefs{C}: v\in V\right\}$\\
             $S^\star$ & $\arg\max_{S\subseteq C: |S|=k} \score(S)$\\
             $\wh{S}$ & $\arg\max_{S\in \calK(\ell): |S| = k} \widehat{\score}(S)$\\
             $M$ & The collection of $k$ candidates $c$ with the highest value $\Exp_{\mu}\left[f(c,\succ)\right]$ \\
             $\cK(\ell)$ & The collection of subsets of $C$ that have at least $\ell$ candidates from $G_2$\\
            \bottomrule{}
        \end{tabular}
        }
        \caption{ 
            Table of notations.
        }
        \label{tab:notation}
    \end{table}

\subsection{Proof of \cref{thm:main_algorithmic}: Main Algorithmic Result}\label{sec:proofof:thm:main_algorithmic}
    We first provide the algorithm for \cref{thm:main_algorithmic}, say \cref{algo:main_algorithm}, which is a simple greedy algorithm that first selects $\ell$ candidates from $G_2$ in Line 1, then selects $k-\ell$ candidates from $G_1$ in Line 2, and finally outputs their union $S$.

    \begin{algorithm}[ht!] %
            \caption{A greedy algorithm with an intervention constraint}\label{algo:main_algorithm}
            \begin{algorithmic}[1]
                \STATE {\bfseries Input:} Numbers $k,\ell\in \N$, sets $G_1,G_2\subseteq C$, and a value oracle $\calO$ for $\hF(\cdot)$
                \STATE {\bfseries Output:} A subset $S\subseteq \calK(\ell)$ of size $k$
                \vspace{2mm}
                \STATE Select $S_2\coloneqq \hyperref[alg:greedy]{\textsc{Greedy}}(\ell, G_2, \calO, B=\emptyset)$
                \STATE Select $S_1\coloneqq \hyperref[alg:greedy]{\textsc{Greedy}}(k, G_1, \calO, B=S_2)$%
                \STATE \textbf{return} $S \coloneqq S_1\cup S_2$
            \end{algorithmic}
    \end{algorithm}

    \begin{algorithm}[ht!] %
            \caption{\textsc{Greedy}(Oracle for $F$, $C$, $k$) (\cite{nemhauser1978analysis}) }\label{alg:greedy}
            \begin{algorithmic}[1]
                \STATE {\bfseries Input:} A number $\ell\in \N$, two sets $B$ and $G$, and a value oracle $\calO$ for $\hF(\cdot)$
                \STATE {\bfseries Output:} A subset $S\subseteq G\cup B$ with $\abs{S} = k$
                \STATE Initialize $S = B$
                \WHILE{$\abs{S} < k$}
                    \STATE Set $S=S\cup \arg\max_{i\in G}F_S(i)$
                \ENDWHILE
                \STATE \textbf{return} $S$
            \end{algorithmic}
        \end{algorithm}
    To prove \cref{thm:main_algorithmic}, we need to show that for any multiwinner score voting $\score: 2^C\rightarrow \R_{\geq 0}$ that is $(\alpha,\beta,\gamma)$-smooth with respect to generative models $(\mu,\wh{\mu})$ the following holds.
    If the number of voters is at least 
    $n \geq \Omega\inparen{k(\alpha\eps_0)^{-2}\cdot \log{\frac{2}{\delta_0}}}$ 
    (for any $0<\eps_0,\delta_0<1$),
    then there exists an integer $0\leq \ell\leq m$ specifying the lower bound constraint such that 
    \[
        \Pr_{\mu,\wh{\mu}}\insquare{F(S) \geq \inparen{\beta-\eps_0} \cdot  \OPT} \geq 1-\delta_0.
    \]
    Where $S$ is the subset output by \cref{algo:main_algorithm}, given the number $\ell$, an oracle $\calO$ for $\hF(\cdot)$, and other parameters (namely, $k$, $G_1$, and $G_2$) as input.
    Moreover, \cref{algo:main_algorithm}  makes $O(mk)$ calls to $\calO$ and performs $O(m\log{m})$ arithmetic operations.

    Fix any $\alpha,\beta,\gamma \in (0,1]$.
    Let $\score(\cdot) = \sum_{v\in V} f(\cdot ,\succ_v)$ be any multiwinner score function  that is $(\alpha,\beta,\gamma)$-smooth with respect to generative models $(\mu,\wh{\mu})$.
    Recall that $M\subseteq C$ is the set of $k$ candidates $c$ with the highest value $\Exp_{\mu}\left[f(c,\succ)\right]$.
    Define $\ell$ as the following value
    \[
        \ell \coloneqq \abs{M\cap G_2}.
    \] 
    We use $\tau$ to represent $\tau_1(f)$ for simplicity.
    We claim that this $\ell$ satisfies the claim in \cref{thm:main_algorithmic}.
    To simplify the notation, we define the following parameters 
    \[
        \eps \coloneqq \frac{\min\inbrace{\eps_0,1-\gamma}}{{4}\beta},\quad
        \delta \coloneqq \frac{\delta_0}{{3}},\quad\text{and}\quad
        n_0(\eps_0,\delta_0) \coloneqq \frac{51 k}{\inparen{\alpha\min\inbrace{\eps_0,1-\gamma}}^2}\cdot \log{\frac{2}{\delta_0}}.
    \]

    \noindent We divide the proof of \cref{thm:main_algorithmic} into the following two lemmas.
    \begin{lemma}[\bf{$S$ approximates $M$}]
    \label{lem:score_sell}
        For any $0<\eps_0,\delta_0<1$, if
        $n \geq n_0(\eps_0,\delta_0)$,
        then it holds that 
        \[
            \Pr_{\mu,\wh{\mu}}\insquare{
                F(S) \geq \beta \cdot (1-\eps)\cdot F(M)
            }
            \geq 1 - \delta.
            \yesnum\label{eq:lem_score_sell}
        \]
    \end{lemma}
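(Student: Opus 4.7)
The plan is to match each greedy pick to an element of $M$ in the same group, translate the resulting empirical biased comparisons into latent ones via order preservation, and telescope. First, I would invoke \cref{lem:concentration} with error parameter $\eta = \Theta(\alpha\min\{\eps, 1-\gamma\})$ so that, with probability at least $1-\delta$, for every candidate $c\in C$ and every base set $T\subseteq C$ of size at most $k$ that arises in the analysis, both $\tfrac1n\hF_T(c)$ and $\tfrac1n F_T(c)$ are within additive error $\eta\tau$ of $\Ex_{\wh{\mu}}[f_T(c,\nsucc)]$ and $\Ex_\mu[f_T(c,\succ)]$, respectively; the sample-size hypothesis $n \geq n_0(\eps_0,\delta_0)$ is calibrated precisely so this holds. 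Working on this good event, I would construct an ordering $m_1,\dots,m_k$ of $M$ consistent with the two-phase structure of \cref{algo:main_algorithm} (so $m_i\in G_2$ for $i\leq\ell$ and $m_i\in G_1$ for $i>\ell$) together with a bijection to the greedy picks $s_1,\dots,s_k$, using the counting identity $\abs{M\cap G_2}=\ell=\abs{S\cap G_2}$: set $m_i=s_i$ whenever $s_i\in M$, and otherwise assign $m_i$ to any yet-unused element of $M$ in the same group as $s_i$. A short bookkeeping check shows such an $m_i\in M\setminus S_{i-1}$ always exists, where $S_{i-1}\coloneqq\{s_1,\dots,s_{i-1}\}$, and greedy optimality of $s_i$ with respect to $\hF$ within its group then yields the biased invariant $\hF_{S_{i-1}}(s_i)\geq \hF_{S_{i-1}}(m_i)$ for every $i\in[k]$.

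Next, I would translate this biased invariant into a latent marginal inequality. Combining it with concentration gives $\Ex_{\wh{\mu}}[f_{S_{i-1}}(s_i,\nsucc)]\geq \Ex_{\wh{\mu}}[f_{S_{i-1}}(m_i,\nsucc)]-2\eta\tau$. Since $m_i\in M$, submodularity of $f$ and the $\alpha$-part of \cref{def:smooth} force $\Ex_{\wh{\mu}}[f_{S_{i-1}}(m_i,\nsucc)]\geq \alpha\tau$, and the choice $\eta\ll \alpha(1-\gamma)$ upgrades this to $\Ex_{\wh{\mu}}[f_{S_{i-1}}(s_i,\nsucc)] > \gamma\,\Ex_{\wh{\mu}}[f_{S_{i-1}}(m_i,\nsucc)]$. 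Applying the contrapositive of $(\beta,\gamma)$-order preservation between $\mu$ and $\wh{\mu}$ (\cref{def:bias}) with $c=s_i$ and $c'=m_i$ then yields $\Ex_\mu[f_{S_{i-1}}(s_i,\succ)]\geq \beta\,\Ex_\mu[f_{S_{i-1}}(m_i,\succ)]$ (the degenerate case where the latent marginal of $s_i$ vanishes can be absorbed using monotonicity). Passing back through concentration gives the per-step empirical bound $F_{S_{i-1}}(s_i)\geq \beta\,F_{S_{i-1}}(m_i)-O(\eta\tau n)$.

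The main obstacle is the final step, the submodular change of base. Since $F(M)=\sum_i F_{M_{i-1}}(m_i)$ with $M_{i-1}\coloneqq\{m_1,\dots,m_{i-1}\}$, while the previous step controls $F_{S_{i-1}}(m_i)$, and since $S_{i-1}$ and $M_{i-1}$ are typically incomparable, submodularity alone cannot relate one to the other. I would handle this by invoking condition~(2) of order preservation with respect to $\mu$ (\cref{def:latent}), which controls how the expected-latent gap between two candidates' marginal contributions shrinks as common elements are added to the base. Applied inductively along the symmetric difference of $S_{i-1}$ and $M_{i-1}$, combined with condition~(1) of \cref{def:latent} and the fact that $M$ consists of the top-$k$ candidates by $\Ex_\mu[f(\cdot,\succ)]$, this yields $\Ex_\mu[f_{S_{i-1}}(m_i,\succ)]\geq (1-O(\eps/\beta))\,\Ex_\mu[f_{M_{i-1}}(m_i,\succ)]$. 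Summing the per-step bounds over $i\in[k]$, dividing by $n$, and absorbing the accumulated $O(\eta\tau n)$ slack into the final $(1-\eps)$ factor delivers $F(S)\geq \beta(1-\eps) F(M)$ with probability at least $1-\delta$, which is \eqref{eq:lem_score_sell}.
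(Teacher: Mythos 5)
Your proposal follows the same route as the paper's proof of \cref{lem:score_sell}: condition on the concentration event of \cref{lem:concentration}, construct a group-respecting bijection between greedy picks and the elements of $M$ (the paper's $b_j \leftrightarrow y_{(j)}$ and $a_i \leftrightarrow x_{(i)}$), derive the greedy invariant $\hF_{S_{i-1}}(s_i) \geq \hF_{S_{i-1}}(m_i)$, translate it to the latent side by combining the $\alpha$-lower bound with $(\beta,\gamma)$ order preservation (the paper's Steps~1--2 of \cref{lem:lower_bound_on_marginal}, which are exactly the contrapositive argument you describe), and finally change the base from $S_{i-1}$ to a prefix of $M$ using condition~(2) of \cref{def:latent} and telescope (the paper's \cref{eq:lb_marginal_1,eq:lb_marginal_2}). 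The only noteworthy deviation is that your change-of-base step introduces a $(1-O(\eps/\beta))$ loss, whereas the swap argument via condition~(2) is actually lossless in expectation — but since you plan to absorb this into the $(1-\eps)$ budget anyway, it is harmless.
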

    
    \begin{lemma}[\bf{$M$ is near optimal}]
    \label{lem:score_m}
        For any $0<\eps_0,\delta_0<1$, if
        $n \geq n_0(\eps_0,\delta_0)$,
        then it holds that 
        \[
            \Pr_{\mu,\wh{\mu}}\insquare{
                F(M) \geq (1-\eps)\cdot F(S^\star)
            }
            \geq 1 - 2\delta.
            \yesnum\label{eq:lem_score_m}
        \]
    \end{lemma}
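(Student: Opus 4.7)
The plan is to pass from the random function $F$ to its expectation $g(S)\coloneqq n\cdot\Ex_\mu[f(S,\succ)]$ via a concentration inequality, and then use order-preservation with respect to $\mu$ (\cref{def:latent}) together with the smoothness parameter $\alpha$ (\cref{def:smooth}) to show that $M$ is approximately optimal for $g$.

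First, I would apply the same concentration inequality used in the proof of \cref{lem:score_sell} to obtain that, for $n\geq n_0(\eps_0,\delta_0)$, with probability at least $1-2\delta$, $|F_T(c)-g_T(c)|\leq \eps_1\tau_1(f)n$ for every $c\in C$ and every $T\subseteq C$ with $|T|\leq k$, where $\eps_1$ is tuned so that the $k$-fold telescoping sum of errors is much smaller than $\eps\cdot g(M)$. In particular, $F(M)\geq g(M)-O(k\eps_1\tau_1(f)n)$ and $F(S^\star)\leq g(S^\star)+O(k\eps_1\tau_1(f)n)$.

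The crux is then to show $g(M)\geq g(S^\star)-O(k\alpha\tau_1(f)n)$ by an exchange argument that transforms $S^\star$ into $M$ using at most $k$ swaps. A \emph{within-group} swap---replacing some $c'\in (S^\star\setminus M)\cap G_i$ by some $c\in (M\setminus S^\star)\cap G_i$ with $\Ex_\mu[f(c,\succ)]\geq \Ex_\mu[f(c',\succ)]$---never decreases $g$, by Part~1 of \cref{def:latent} applied with $T=S^\star\setminus\{c'\}$. A \emph{cross-group} swap arises when $|S^\star\cap G_i|\neq|M\cap G_i|$ and is not covered by \cref{def:latent}; for it, I would combine monotone submodularity of $f$ with the bound $\min_{c\in M}\Ex_\mu[f_{C\setminus\{c\}}(c,\succ)]\leq \alpha\tau_1(f)$ from \cref{def:smooth} to cap the decrease in $g$ per swap by $\alpha\tau_1(f)n$, using Part~2 of \cref{def:latent} to argue that these per-swap deficits do not compound over the $k$ swaps. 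Chaining $F(M)/n\geq g(M)/n-O(\cdots)\geq g(S^\star)/n-O(\cdots)\geq F(S^\star)/n-O(\cdots)$ and using the crude lower bound $F(S^\star)\geq F(M)=\Omega(n\tau_1(f))$ (since $M$ contains the overall top-scoring candidate), the additive slack converts into the desired multiplicative factor $1-\eps$ with $\eps=\min\{\eps_0,1-\gamma\}/(4\beta)$.

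The main obstacle is the cross-group swap: \cref{def:latent} only compares candidates within a single group, so bounding the cost of exchanging a member of $G_1$ for a member of $G_2$---which is necessary whenever $S^\star$ and $M$ split $k$ differently across the two groups---is not handled by order-preservation alone. It must be controlled via the smoothness parameter $\alpha$, together with a careful accounting of how many such cross-group swaps can arise and how their losses aggregate under the shrinking-gap property of Part~2 of \cref{def:latent}.
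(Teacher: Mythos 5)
Your high-level structure matches the paper's: both pass to expectations via concentration (\cref{lem:concentration}), then transform $S^\star$ into $M$ by single-element exchanges, each of which loses at most a small additive term, and finally convert the additive loss to a multiplicative $(1-\eps)$ factor using the bound $F(S^\star)\geq \alpha kn\tau_1(f)/2$ that follows from the $\alpha$-condition in \cref{def:smooth}.

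Where you diverge is the cross-group swaps, and there is a genuine gap in your handling of them. You are right that \cref{def:latent} as written quantifies only over pairs in the same group. The paper's own proof, however, simply invokes order-preservation to conclude $\Ex_\mu[F_S(d)]\geq \Ex_\mu[F_S(c)]$ for \emph{every} $d\in M$ and $c\notin M$, without a group restriction, so that each exchange loses \emph{zero} in expectation and the only per-swap cost is the concentration error $2\err$; the total additive loss $2k\err$ then shrinks as $n$ grows, which is what makes the $(1-\eps)$ factor achievable for any $\eps$. Your workaround instead charges each cross-group swap an $\alpha\tau_1(f)n$ loss. That bound does not actually follow from the definitions (removing $c'\notin M$ can lose up to $\tau_1(f)n$, while adding $c\in M$ recovers only $\geq \alpha\tau_1(f)n$, so the natural worst-case per-swap loss is $(1-\alpha)\tau_1(f)n$, not $\alpha\tau_1(f)n$). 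More importantly, even taking your bound at face value: with up to $k$ such swaps the total loss is $\Theta(k\alpha\tau_1(f)n)$, which is of the same order as the lower bound $F(S^\star)=\Omega(\alpha kn\tau_1(f))$; even a single such swap gives a loss of order $\alpha\tau_1(f)n$, i.e.\ an $\Omega(1/k)$ multiplicative deficit. Neither of these shrinks with $n$, so you cannot recover $(1-\eps)$ for arbitrary $\eps>0$ as the lemma requires. The fix is not to charge cross-group swaps against $\alpha$; it is to use order-preservation across groups (which is the intended reading — the latent model $\mu$ does not distinguish $G_1$ from $G_2$, and indeed for the utility-based model the proof of \cref{lem:utility_order_preserve} carries over verbatim for mixed-group pairs), so that \emph{every} exchange costs only the concentration error.
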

    \noindent \cref{thm:main_algorithmic} follows from the above lemmas as follows.
    \begin{proof}[Proof of \cref{thm:main_algorithmic} assuming \cref{lem:score_m,lem:score_sell}]
        Due to the lower bound on $n$ in \cref{thm:main_algorithmic}, it holds that $n \geq n_0(\eps_0,\delta_0)$.
        Hence, from  \cref{lem:score_m,lem:score_sell} the \cref{eq:lem_score_sell,eq:lem_score_m} hold.
        Since $3\delta \leq \delta_0$, taking a union bound over \cref{eq:lem_score_sell,eq:lem_score_m}, it follows that 
        \[
            \Pr_{\mu,\wh{\mu}}\insquare{
                F(S) \geq \beta \cdot (1-\eps)^2 \cdot F(S^\star) 
            }
            \geq 1 - \delta.
        \]
        Since $(1-\eps)^2\geq 1-2\eps$ (for any $\eps\in\R$) and $2\eps\cdot \beta\leq \eps_0$, it follows, as required, that
        \[
            \Pr_{\mu,\wh{\mu}}\insquare{
                F(S) \geq \inparen{\beta-\eps_0} \cdot F(S^\star) 
            }
            \geq 1 - \delta.
        \]
        It remains to bound the number of calls and the number of arithmetic operations in \cref{algo:main_algorithm}.
        Note that \cref{alg:greedy} is called as a subroutine from \cref{algo:main_algorithm} in Steps 1 and 2.
        Each run of \cref{alg:greedy} makes exactly $k\abs{G}$ calls to $\calO$ and does $O(\abs{G}\log{\abs{G}})$ arithmetic operations (to sort the marginal scores).
        \cref{alg:greedy} is called twice in \cref{algo:main_algorithm}, once with parameters $(k=\ell,\abs{B}=0,G=G_1)$, and once with $(k=k,\abs{B}=\ell,G=G_2)$.
        Since $\abs{G_1},\abs{G_2}\leq m$, the total number oracle calls is $O(mk)$ and the total number of arithmetic operations are $O(m\log{m})$.
    \end{proof}
    In the remainder of this section, we prove \cref{lem:score_sell,lem:score_m}.
    The proof of both \cref{lem:score_sell,lem:score_m} uses the following concentration result.
    \begin{lemma}[\textbf{Concentration of marginal utilities}]\label{lem:concentration}
        For any generative models of preference lists $\mu,\wh{\mu}$ and any score function $\score(\cdot) = \sum_{v\in V} f(\cdot ,\succ_v)$ from \cref{def:score} the following holds.
        For any $\delta>0$
        \begin{align*}
            \Pr_{\mu}\insquare{
                    \exists_{T=C \text{ or } (T\subseteq C\colon \abs{T}\leq k)},\ \ \exists_{c\in C}, \quad 
                    \abs{F_T(c) - \Ex_\mu\insquare{F_T(c)}} \geq \err{}
            }
            &\leq 
            \delta,\\
            \Pr_{\wh{\mu}}\insquare{
                    \exists_{T=C \text{ or } (T\subseteq C\colon \abs{T}\leq k)},\ \ \exists_{c\in C}, \quad 
                    \abs{\hF_T(c) - \Ex_{\wh{\mu}}\insquare{\hF_T(c)}} \geq \err{}
            }
            &\leq 
            \delta.  
        \end{align*} 
    \end{lemma}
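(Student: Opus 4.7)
The plan is to apply Hoeffding's inequality to $F_T(c)$ and $\hF_T(c)$ for each fixed pair $(T,c)$ and then union bound over all such pairs. I will argue this for $F_T(c)$ under $\mu$; the argument for $\hF_T(c)$ under $\wh{\mu}$ is identical, since by \cref{def:biased_preference} the biased lists $\nsucc_v$ are (marginally) i.i.d.\ draws from the mixture $\wh{\pi} = \Ex_{\succ \sim \pi}[\wh{\pi}\mid \succ]$.

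First I would record the boundedness: for any $T\subseteq C$, any $c\in C$, and any $\succ\in \prefs{C}$, the random variable $f_T(c,\succ)=f(T\cup\{c\},\succ)-f(T,\succ)$ takes values in $[0,\tau]$ with $\tau=\tau_1(f)$. The lower bound is immediate from monotonicity of $f_v$. For the upper bound, submodularity of $f_v$ gives $f_T(c,\succ)\leq f_\emptyset(c,\succ)=f(\{c\},\succ)$, and by domination sensitivity (\cref{def:score}), $f(\{c\},\succ)$ is maximized when $c$ is ranked first in $\succ$; by neutrality this common maximum value equals $\tau_1(f)=\tau$.

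Next, fix $T$ (either $T=C$ or $T\subseteq C$ with $|T|\leq k$) and $c\in C$. Under $\mu$ the lists $\succ_v$ are i.i.d., so the summands $f_T(c,\succ_v)$ are i.i.d.\ random variables in $[0,\tau]$, and $F_T(c)=\sum_{v\in V}f_T(c,\succ_v)$. Hoeffding's inequality then yields
\[
    \Pr_\mu\!\left[\,\abs{F_T(c)-\Ex_\mu[F_T(c)]}\geq t\,\right]\ \leq\ 2\exp\!\left(-\frac{2t^2}{n\tau^2}\right).
\]
Setting $t=\err=\tau\sqrt{nk\log(m/\delta)}$ makes this probability at most $2\exp(-2k\log(m/\delta))=2(\delta/m)^{2k}$.

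Finally, I would union bound over all relevant pairs $(T,c)$. The number of subsets $T\subseteq C$ with $|T|\leq k$ is at most $m^k+1$ (including $T=C$ adds one more), and there are $m$ choices of $c$, giving at most $O(m^{k+1})$ pairs. Multiplying the per-pair failure probability $2(\delta/m)^{2k}$ by this count yields an overall failure probability of at most $\delta$ (after absorbing universal constants into the definition of $\err$, which is how the stated bound is calibrated). Applying the same argument verbatim with $\succ_v$ replaced by $\nsucc_v$ and $\mu$ by $\wh{\mu}$ gives the second inequality.

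There is no real obstacle here; the only mild subtlety is justifying the uniform bound $f_T(c,\succ)\leq \tau$ by invoking monotonicity, submodularity, domination sensitivity, and neutrality in sequence, and verifying that the biased preferences are indeed i.i.d.\ marginally so that Hoeffding applies to $\hF_T(c)$ as well.
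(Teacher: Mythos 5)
Your proof is correct and follows essentially the same route as the paper's: Hoeffding's inequality applied to the i.i.d.\ bounded summands $f_T(c,\succ_v)\in[0,\tau]$ for each fixed pair $(T,c)$, followed by a union bound over the at most $m^{O(k)}$ choices of $(T,c)$, with the constant $\sqrt{nk\log(m/\delta)}$ in $\err$ calibrated so that the product of the per-pair failure probability and the number of pairs is at most $\delta$. Your proof is slightly more careful than the paper's in two respects that are worth noting but immaterial: you explicitly justify $f_T(c,\succ)\leq\tau_1(f)$ via monotonicity, submodularity, domination sensitivity, and neutrality, and you explicitly observe that the $\nsucc_v$ are marginally i.i.d.\ so that Hoeffding applies to $\hF_T(c)$; the paper treats both points as immediate.
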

    Here, we slightly abuse the notation and denote singleton sets $\inbrace{c}$ by $c$.
    We also note in passing that concentration inequality holds for any generative models of preference lists $(\mu,\wh{\mu})$ and not just the generative models that satisfy \cref{def:bias}.
    \begin{proof}
        We first prove the first inequality.
        Since $F$ is a separable function (\cref{def:score}) and for each voter $v\in V$, their preference list $\succ_v$ is drawn iid from $\mu$, for any $c\in C$, $\inbrace{f(c,\succ_v)}_{v\in V}$ is a set of iid and bounded random variables.
        The concentration inequality follows from Hoeffding's inequality  and the union bound \cite{motwani1995randomized} as shown next.

        Fix any $T\subseteq C$ and any $c\in C$.
        For each $v\in V$, define the random variable $Z_v\coloneqq f_T(c,\succ_v)$.
        As discussed, $Z_u$ and $Z_v$ are independent for any $u\neq v$.
        Moreover, for all $v\in V$, $0\leq Z_v\leq \tau$ with probability 1 (by the non-negativity of $f$ and the definition of $\tau$).
        Hence, Hoeffding's inequality is applicable on $F(c)=\sum_{v\in V}f(c,\succ_v)$ \cite{motwani1995randomized}.
        From the Hoeffding's inequality \cite{motwani1995randomized}, it holds that
        \begin{align*}
            \Pr_{\mu}\insquare{
                    \abs{F_T(c) - \Ex_\mu\insquare{F_T(c)}} \geq \err{}
            }
            &\leq 
            \exp\inparen{-\frac{2}{n\tau^2}\cdot \tau^2 \cdot n\cdot k\cdot \log\inparen{\frac{m}{\delta}}}\\
            &\leq 
            \frac{\delta^{2k}}{m^{2k}}\\
            &\leq 
            \frac{\delta}{m^{2k}}. \tag{Using that $0\leq \delta \leq 1$}
        \end{align*} 
        The first concentration inequality in \cref{lem:concentration} follows by taking the union bound over all choices of $T\subseteq C$ of either (1) size at most $k$ or (2) $T=C$, and any $c\in C$,  as there are at most $2^k \cdot m+1\leq m^{2k}$ choices of $(T,c)$.
        The proof of the second inequality follows by replacing $\mu$ and $F$ by $\wh{\mu}$ and $\hF$ in the above proof.
    \end{proof}

    \subsubsection{Proof of \cref{lem:score_sell}: Relation between $S$ and $M$}
        
        \eat{Let $R$ be the set of candidates $c$ whose marginal scores with respect to  the biased preference lists $\hF_S(c)$ are not ``too small.''
        Concretely, we define $R$ as follows.
        \[
            R \coloneqq \inbrace{c\in C\mid \forall_{S\subseteq C\setminus \inbrace{c}\colon \abs{S}=k-1},\quad F_{S\setminus \inbrace{c}}(c),\hF_{S\setminus \inbrace{c}}(c)\geq \frac{1}{2}\cdot n\alpha\tau}.
            \yesnum\label{def:R}
        \]
        We will show that pairs of candidates in $R$, who are in the same group are not swapped with high probability.
        \begin{lemma}\label{lem:no_swap}
            For any $0<\eps_0,\delta_0<1$, if
            $n \geq n_0(\eps_0,\delta_0)$,
            then with probability at least $1-2\delta$ the following holds:
            no two pairs of candidates $c,c'\in R$ that satisfy the following conditions are swapped.
            \begin{enumerate}
                \item $c,c'$ are in the same group ($G_1$ or $G_2$); and 
                \item for all subsets $S\subseteq C\setminus\inbrace{c,c'}$ of size at most $k-1$,  $\max\inbrace{\frac{\Ex_\mu\insquare{f_S(c,\succ)}}{\Ex_\mu\insquare{f_S(c',\succ)}}, \frac{\Ex_\mu\insquare{f_S(c',\succ)}}{\Ex_\mu\insquare{f_S(c,\succ)}}}\geq \frac{1+\eps}{\beta}$.\label{item:scores_are_far}
            \end{enumerate}
        \end{lemma}
        The factor of $\frac{1}{2}$ in the definition of $R$ is chosen so that with high probability $M\subseteq R$ (\cref{lem:m_subset_of_r}). %
        This combined with \cref{lem:no_swap} guarantees that no pairs of candidates $c,c'\in M$ that (1) are in the same group and (2) whose expected scores  are ``far'' from each other (see \cref{item:scores_are_far} in \cref{lem:no_swap}) are swapped.
        \begin{lemma}\label{lem:m_subset_of_r}
            For any $0<\eps_0,\delta_0<1$, if
            $n \geq n_0(\eps_0,\delta_0)$,
            then $\Pr\insquare{M\subseteq R}\geq 1-2\delta$.
        \end{lemma}
        \begin{proof}
            From the definition of the parameter $\alpha$ in the smoothness notion, it follows that for any $c\in M$ and any $S\subseteq C\setminus \inbrace{c}$ the following hold 
            \begin{align*}
                \Ex_{\wh{\mu}}\insquare{\hF_S(c)} 
                &\qquad\Stackrel{\text{Def~\ref{def:score},~\ref{def:bias}}}{=}\qquad 
                n\cdot \Ex_{\wh{\mu}}\insquare{f_S(c,\nsucc)}
                \geq 
                n\cdot \min_{d\in M,\ S\subseteq C\setminus \inbrace{d}\colon \abs{S} = k-1} \Ex_{\wh{\mu}}\insquare{f_S(d,\nsucc)}
                \geq n\cdot \alpha\tau,\yesnum\label{eq:lb_on_biased_util_in_m}\\
            \end{align*}
            Where in the first inequality we use the separability of $\hF$ (\cref{def:score}) and the fact that $\succ_v$ is drawn iid from $\wh{\mu}$ for all $v\in V$ (\cref{def:bias}).
            Due to \cref{lem:concentration} and the fact that $M\subseteq C$, it holds that 
            \begin{align*}
                \Pr_{\wh{\mu}}\insquare{
                    \forall_{c\in M},\ \ \forall_{S\subseteq C\setminus \inbrace{c}\colon \abs{S}=k-1}, \quad 
                    \abs{\hF_S(c) - \Ex_{\wh{\mu}}\insquare{\hF_S(c)}} \geq 
                    \Ex_{\wh{\mu}}\insquare{\hF_S(c)}
                    \cdot \frac{1}{\Ex_{\wh{\mu}}\insquare{\hF_S(c)}}\cdot 
                    \err{}
                }
                &\leq 
                \delta.  
            \end{align*}
            Substituting the lower bound on $\Ex_{\wh{\mu}}\insquare{\hF_S(c)}$ from \cref{eq:lb_on_biased_util_in_m} in the above inequality, it follows that 
            \begin{align*}
                \Pr_{\wh{\mu}}\insquare{
                    \forall_{c\in M},\ \ \forall_{S\subseteq C\setminus \inbrace{c}\colon \abs{S}=k-1}, \quad 
                    \abs{\hF_S(c) - \Ex_{\wh{\mu}}\insquare{\hF_S(c)}} \geq 
                    \Ex_{\wh{\mu}}\insquare{\hF_S(c)}
                    \cdot 
                    \errmult{}
                }
                &\leq 
                \delta.  
            \end{align*}
            Using that $n\geq n(\eps_0,\delta_0)$, it follows that $\errmult{}\leq \eps\leq \frac{1}{2}$.
            Substituting this in the above concentration inequality and using that $\frac{\Ex_{\wh{\mu}}\insquare{\hF_S(c)}}{2}\geq \frac{n\cdot \alpha \tau}{2}$ (\cref{eq:lb_on_biased_util_in_m})
            it follows that
            \[
                \Pr_{\wh{\mu}}\insquare{
                    \forall_{c\in M},\ \ \forall_{S\subseteq C\setminus \inbrace{c}\colon \abs{S}=k-1}, \quad 
                    \hF_S(c) \geq \frac{n\cdot \alpha \tau}{2}
                }
                \leq 
                \delta.  
            \]
            Replacing $\wh{\mu},\hF,\nsucc$ in the above proof by $\mu,F,\succ$, implies that 
            \[
                \Pr_{{\mu}}\insquare{
                    \forall_{c\in M},\ \ \forall_{S\subseteq C\setminus \inbrace{c}\colon \abs{S}=k-1}, \quad 
                    F_S(c) \geq \frac{n\cdot \alpha \tau}{2}
                }
                \leq 
                \delta.  
            \]
            Consequently, the definition of $R$ (\cref{def:R}) along with the union bound implies that $\Pr\insquare{M\subseteq R}\geq 1-2\delta$, as desired.
        \end{proof}

        Now, we are ready to prove \cref{lem:no_swap}.
        \begin{proof}[Proof of \cref{lem:no_swap}]
            Let $\evE$ be the following event 
                \[
                    \forall_{c\in M},\ \ \forall_{S\subseteq C\setminus \inbrace{c}}, \quad 
                    \abs{F_S(c) - \Ex_\mu\insquare{F_S(c)}}  \leq \err{}
                    \quad\text{and}\quad 
                    \abs{\hF_S(c) - \Ex_{\wh{\mu}}\insquare{\hF_S(c)}} \leq \err{}.
                \]
                From \cref{lem:concentration}, it follows that $\Pr[\evE]\geq 1-2\delta$.
            Fix any $c,c'\in R$ that satisfy the conditions in \cref{lem:no_swap}.
            Without loss of generality, assume that, for all $S\subseteq C\setminus \inbrace{c}$ of size $k-1$, the following holds
            \[
                 \Ex_\mu\insquare{F_S(c)}\geq \frac{1+\eps}{\beta}\cdot \Ex_\mu\insquare{F_S(c')}.
                 \yesnum\label{eq:no_swap:order_prev_lb}
            \]
            Since $F$ is $(\beta,\gamma)$ order preserving between $\mu$ and $\wh{\mu}$, by the down-closed property (\cref{ob:imply}), $F$ is also $(\frac{\beta}{1+\eps},\gamma)$ order preserving between $\mu$ and $\wh{\mu}$.
            From the definition of $(\frac{\beta}{1+\eps},\gamma)$ order preserving and \cref{eq:no_swap:order_prev_lb}, it follows that
            \[
                 \Ex_{\wh{\mu}}\insquare{\hF_S(c)}\geq \frac{\gamma(1+\eps)}{\beta}\cdot \Ex_{\wh{\mu}}\insquare{\hF_S(c')}.
                 \yesnum\label{eq:no_swap:order_prev_lb_2}
            \]
            Consequently, conditioned on the event $\evE$, \cref{eq:no_swap:order_prev_lb,eq:no_swap:order_prev_lb_2} imply that 
            \begin{align*}
                \frac{F_S(c)+\err{}}{F_S(c')-\err{}}
                \geq \frac{1+\eps}{\beta}
                \quad\text{and}\quad
                \frac{\hF_S(c)+\err{}}{\hF_S(c')-\err{}}
                \geq \frac{\gamma(1+\eps)}{\beta}.
            \end{align*}
            Since $c,c'\in R$, it holds that $F_S(c),F_S(c'),\hF_S(c),\hF_S(c')\geq \frac{n\cdot \alpha \tau}{2}$ and, hence, conditioned on the event $\evE$, it holds that 
            \begin{align*}
                \frac{\hF_S(c)\cdot \inparen{1+2\cdot \errmult{}}}{\hF_S(c')\cdot \inparen{1-2\cdot \errmult{}}}
                \geq \frac{1+\eps}{\beta}
                \quad\text{and}\quad
                \frac{\hF_S(c)\cdot \inparen{1+2\cdot \errmult{}}}{\hF_S(c')\cdot \inparen{1-2\cdot \errmult{}}}
                \geq \frac{\gamma(1+\eps)}{\beta}.
            \end{align*}
            Since $n\geq n(\eps_0,\delta_0)$, $2\cdot \errmult{}\leq \frac{\eps}{4}$.
            Using that $0\leq \eps\leq 1$ and that $\frac{(1+x)(1-x/4)}{1+x/4}\geq 1 + \frac{x}{8} > 1$ for all $0< x\leq 1$, it follows that conditioned on the event $\evE$, 
            \begin{align*}
                \frac{\hF_S(c)}{\hF_S(c')}
                > \frac{1}{\beta} 
                \quad\Stackrel{(\beta \leq 1)}{\geq}\quad
                1
                \quad\text{and}\quad
                \frac{\hF_S(c)}{\hF_S(c')}
                > \frac{\gamma}{\beta} 
                \quad\Stackrel{(\gamma > \beta)}{>}\quad
                1.
            \end{align*}
            This implies that conditioned on the event $\evE$, $c$ and $c'$ are not swapped.
            Since this holds for any pair $c,c'\in R$ that satisfy the conditions in \cref{lem:no_swap} and $\Pr[\evE]\geq 1-2\delta$ \cref{lem:no_swap} follows.
        \end{proof}}
        
            The proof is divided into multiple steps. We begin by defining the additional notation used in this proof.

                \paragraph{Notation.}
                Recall that $\ell=\abs{M\cap G_2}$.
                Define the following sets 
                \begin{align*}
                    S\cap G_1&\coloneqq \inbrace{a_1,a_2,\dots,a_{k-\ell}}
                    \hspace{0.5mm} \quad\text{and}\quad 
                    S\cap G_2 \coloneqq  \inbrace{b_1,b_2,\dots,b_{\ell}},\\
                    M\cap G_1 &\coloneqq  \inbrace{x_1,x_2,\dots,x_{k-\ell}}
                    \quad\text{and}\quad 
                    M\cap G_2  \coloneqq  \inbrace{y_1,y_2,\dots,y_{\ell}}.
                \end{align*}
                Where the elements in $S\cap G_1$ and $S\cap G_2$ are in the order they are selected in \cref{algo:main_algorithm}.
                The elements in $M\cap G_1$ and $M\cap G_2$ are ordered in non-increasing order by $\Ex_\mu\insquare{F(\cdot)}$:
                for all $i\in [k-\ell]$ and $j\in [\ell]$, 
                \begin{align*}
                    \Ex_\mu\insquare{F(x_i)}&\geq \Ex_\mu\insquare{F(x_{i+1})}
                    \quad\text{and}\quad 
                    \Ex_\mu\insquare{F(y_j)}\geq \Ex_\mu\insquare{F(y_{j+1})}.
                \end{align*}
                To simplify the notation, for each $i\in [k-\ell]$ and $j\in [\ell]$, define the following prefixes:
                \begin{align*}
                    A(i) &\coloneqq \inbrace{a_1,a_2,\dots,a_{i}}
                    \hspace{0.5mm} \quad\text{and}\quad
                    B(j) \coloneqq  \inbrace{b_1,b_2,\dots,b_{j}},\\
                    X(i) &\coloneqq  \inbrace{x_1,x_2,\dots,x_{i}}
                    \quad\text{and}\quad 
                    Y(j) \coloneqq  \inbrace{y_1,y_2,\dots,y_{j}}.
                \end{align*}
                Define $A(0)$, $B(0)$, $X(0)$, and $Y(0)$ as empty sets.
                Since $B(j-1)$ has $j-1$ elements by the Pigeonhole principle (for any $j\in [\ell]$), there exists a $y\in Y(j)$ such that $y\not\in B(j-1)$.
                Label this $y$ as $y_{(j)}.$
                Similarly, there exists an $x_{(i)}\in X(i)$ (for any $i\in [k-\ell]$) such that $x_{(i)}\not\in A(i-1)$. 
                Let $\evE$ be the following event 
                \begin{align*}
                    \forall_{T=C \text{ or } (T\subseteq C\colon \abs{T}\leq k)},\ \ \forall_{c\in C},\qquad 
                    &\abs{F_T(c) - \Ex_\mu\insquare{F_T(c)}}  \ \hspace{0.5mm}  \leq \  \err{},\\
                    \forall_{T=C \text{ or } (T\subseteq C\colon \abs{T}\leq k)},\ \ \forall_{c\in C},\qquad 
                    &\abs{\hF_T(c) - \Ex_{\wh{\mu}}\insquare{\hF_T(c)}} \leq \err{}.
                \end{align*}
                    
                \cref{lem:concentration} shows that $\Pr[\evE]\geq 1-2\delta$.

                \begin{lemma}\label{lem:lower_bound_on_marginal}
                    Fix any $i\in [k-\ell]$ and $j\in [\ell]$.
                    Conditioned on the event $\evE$, the following inequalities hold
                    \begin{align*}
                        {F_{B(\ell)\cup A(i-1)}(a_i)} \geq (1-\eps)\cdot \beta\cdot {F_{B(\ell)\cup  A(i-1)}(x_{(i)})}
                        \quad\text{and}\quad 
                        {F_{B(j-1)}(b_j)} \geq (1-\eps)\cdot \beta\cdot {F_{B(j-1)}(y_{(j)})}.
                    \end{align*}
                \end{lemma}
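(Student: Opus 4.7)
I will prove the two inequalities in parallel; they are symmetric, as the second reduces to the argument for the first by replacing $(a_i,x_{(i)},T{=}B(\ell)\cup A(i-1),G_1)$ with $(b_j,y_{(j)},T{=}B(j-1),G_2)$ and noting that during the first \textsc{Greedy} call of \cref{algo:main_algorithm}, $b_j$ maximizes $\hF_{B(j-1)}(\cdot)$ over $G_2\setminus B(j-1)$, with $y_{(j)}$ a valid competitor by the Pigeonhole labeling. I focus on the first inequality, setting $T\coloneqq B(\ell)\cup A(i-1)$. The starting observation is that during the second \textsc{Greedy} call, $a_i$ is the maximizer of $\hF_T(\cdot)$ over $G_1\setminus A(i-1)$, and $x_{(i)}$ belongs to this set by construction; hence $\hF_T(a_i)\geq \hF_T(x_{(i)})$. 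Invoking the concentration event $\evE$ (which has probability $\geq 1-2\delta$ by \cref{lem:concentration}) on both marginals yields
\[
    n\Ex_{\wh{\mu}}\insquare{f_T(a_i,\nsucc)} \geq n\Ex_{\wh{\mu}}\insquare{f_T(x_{(i)},\nsucc)} - 2\err.
\]

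Next I will normalize the additive slack via the $\alpha$-bullet of smoothness. Because $x_{(i)}\in M$ and $T\subseteq C\setminus\{x_{(i)}\}$, monotone submodularity of $\Ex_{\wh{\mu}}\insquare{f(\cdot,\nsucc)}$ together with the $\alpha$-condition yields $\Ex_{\wh{\mu}}\insquare{f_T(x_{(i)},\nsucc)}\geq \alpha\tau$. Dividing through by $n\Ex_{\wh{\mu}}\insquare{f_T(x_{(i)},\nsucc)}$ will produce
\[
    \Ex_{\wh{\mu}}\insquare{f_T(a_i,\nsucc)} \geq \inparen{1-\tfrac{2}{\alpha}\sqrt{\tfrac{k}{n}\log\tfrac{m}{\delta}}}\Ex_{\wh{\mu}}\insquare{f_T(x_{(i)},\nsucc)} > \gamma\Ex_{\wh{\mu}}\insquare{f_T(x_{(i)},\nsucc)},
\]
where the strict inequality uses that the hypothesis $n\geq n_0(\eps_0,\delta_0)$ makes the parenthetical factor strictly exceed $\gamma$. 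I will then invoke the contrapositive of $(\beta,\gamma)$-order preservation (\cref{def:bias}) with $c=a_i$ and $c'=x_{(i)}$, which lie in the same group $G_1$: the assumption $\beta\Ex_\mu\insquare{f_T(x_{(i)},\succ)}\geq \Ex_\mu\insquare{f_T(a_i,\succ)} > 0$ would force $\gamma\Ex_{\wh{\mu}}\insquare{f_T(x_{(i)},\nsucc)}\geq \Ex_{\wh{\mu}}\insquare{f_T(a_i,\nsucc)}$, contradicting the previous display. Hence $\Ex_\mu\insquare{f_T(a_i,\succ)} > \beta\Ex_\mu\insquare{f_T(x_{(i)},\succ)}$; the boundary case $\Ex_\mu\insquare{f_T(a_i,\succ)}=0$ will be handled by a routine perturbation argument using $\Ex_\mu\insquare{f_T(x_{(i)},\succ)}\geq \alpha\tau$ (the $\mu$-side $\alpha$-bullet).

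Finally, I will transfer the latent-expectation bound back to $F_T$ by applying $\evE$ a second time on the $\mu$ side and chaining
\[
    F_T(a_i) \geq n\Ex_\mu\insquare{f_T(a_i,\succ)} - \err \geq \beta\inparen{F_T(x_{(i)})-\err} - \err \geq \beta F_T(x_{(i)}) - 2\err,
\]
combined with $F_T(x_{(i)})\geq n\alpha\tau-\err \geq \tfrac{1}{2}n\alpha\tau$ (valid for the chosen $n$). The additive $2\err$ then absorbs into a multiplicative $\eps\beta F_T(x_{(i)})$ loss, completing the first inequality. The main obstacle is the careful bookkeeping that converts the additive slack $\err$ into multiplicative $(1-\eps)$ factors in two places—once when comparing biased marginals through $\gamma$ and once when passing from latent expectations to empirical quantities—and it is exactly the two $\alpha$-normalizations (one per side, $\mu$ and $\wh{\mu}$) in \cref{def:smooth} that make both conversions possible.
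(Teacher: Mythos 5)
Your proof is correct and follows essentially the same route as the paper's: (i) use the greedy choice to get $\hF_T(a_i)\geq \hF_T(x_{(i)})$, (ii) normalize the additive concentration error by the $\alpha$-lower bound $\Ex_{\wh\mu}[f_T(x_{(i)},\nsucc)]\geq \alpha\tau$ to conclude that the biased expected marginals cannot fall below the $\gamma$-threshold, (iii) invoke $(\beta,\gamma)$-order preservation (you via the contrapositive, the paper via an explicit proof by contradiction on the $\hF$ values) to transfer a $\beta$-ratio guarantee back to the latent side, and (iv) use $\evE$ once more plus the $\mu$-side $\alpha$-bound $\Ex_\mu[f_T(x_{(i)},\succ)]\geq\alpha\tau$ to convert the additive $2\err$ slack into the multiplicative $(1-\eps)$ factor. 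The only substantive difference is presentational: you translate the empirical greedy inequality into expected biased marginals first, then apply the contrapositive; the paper conditionally assumes the latent ratio is $\leq\beta$, pushes forward to the biased $\hF$ quantities, and derives a contradiction to the greedy inequality using $\eps<1-\gamma$. The two conditions ($1-\tfrac{2}{\alpha}\sqrt{\tfrac{k}{n}\log\tfrac{m}{\delta}}>\gamma$ in your version vs.\ $\gamma^{-1}(1-\eps)>1$ in the paper's) amount to the same requirement that the concentration error be small relative to $1-\gamma$, which $n\geq n_0(\eps_0,\delta_0)$ ensures.

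One caveat: the ``routine perturbation argument'' you propose for the boundary case $\Ex_\mu[f_T(a_i,\succ)]=0$ does not actually close the gap. In that case, $\evE$ forces $F_T(a_i)\leq\err$ while $F_T(x_{(i)})\geq\tfrac{1}{2}\alpha n\tau$, so the target inequality $F_T(a_i)\geq(1-\eps)\beta F_T(x_{(i)})$ would \emph{fail} for large $n$—a perturbation cannot rescue a false conclusion. What actually has to happen is that this boundary case never arises under the standing hypotheses (e.g., order preservation together with $\Ex_{\wh\mu}[f_T(a_i,\nsucc)]>\gamma\alpha\tau>0$ should force $\Ex_\mu[f_T(a_i,\succ)]>0$); but note that the paper's own proof has exactly the same unhandled corner, since its invocation of $(\beta,\gamma)$-order preservation also silently presumes $\Ex_\mu[F_{B(j-1)}(b_j)]>0$. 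So this is not a gap you introduced, but you should not promise a perturbation fix that cannot work—either rule the case out from the model assumptions or leave it as an explicit (shared) hypothesis.
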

                \begin{proof}
                    Fix any $i\in [k-\ell]$ and $j\in [\ell]$.
                    Suppose the event $\evE$ holds.

                \paragraph{Step 1 (Lower bound on $\hF_{B(j-1)(b_j)}$ and $\hF_{B(j-1)(y_{(j)})}$):}
                    Consider the step where the set of items selected so far is $B(j-1)$.
                    At this step, \cref{algo:main_algorithm} selects the item with the largest value with respect to  $\hF_{B(j-1)}(\cdot)$.
                    Since \cref{algo:main_algorithm} selects $b_j$ instead of $y_{(j)}$.
                    it must hold that 
                    \[
                        \hF_{B(j-1)}(b_j) \geq \hF_{B(j-1)}(y_{(j)}).
                        \yesnum\label{eq:lem:score_sell:lb_1}
                    \]
                    Since $\evE$ holds, it follows that
                    \[
                        \hF_{B(j-1)}(y_{(j)})
                        \geq 
                        \Ex_{\wh{\mu}}\insquare{\hF_{B(j-1)}(y_{(j)})} - \err{}.
                        \yesnum\label{eq:lem:score_sell:lb_2}
                    \]
                    Since $y_{(j)}\in M$ and $y_{(j)}\not\in B(j-1)$, one can use the definition of $\alpha$ (\cref{def:smooth}) to get the following lower bound
                    \begin{align}
                        \label{eq:lb_on_bj}
                        \begin{aligned}
                        & \quad \Ex_{\wh{\mu}}\insquare{\hF_{B(j-1)}(y_{(j)})} & \\
                        \geq & \quad                        \Ex_{\wh{\mu}}\insquare{\hF_{C\setminus\inbrace{y_{(j)}}}(y_{(j)})} &
                        (\text{$F_R(c)\geq F_{R\cup T}(c)$ holds for any sets $R$ and $T$ and element $c$})\\
                        \geq & \quad
                        \alpha\tau n. & (\text{Using the definition of $\alpha$ and that $y_{(j)}\in M$})
                        \end{aligned}
                    \end{align}
                    Using Inequalities~\eqref{eq:lem:score_sell:lb_1}, \eqref{eq:lem:score_sell:lb_2}, and \eqref{eq:lb_on_bj} and the fact that $n\geq n(\eps_0,\delta_0)\geq 4\alpha^{-2}k\log{\frac{m}{\delta}}$, it follows that 
                    \begin{align*}
                        \hF_{B(j-1)}(b_j),\ \ \hF_{B(j-1)}(y_{(j)}) \geq \frac{\alpha\tau n}{2}.
                        \yesnum\label{eq:lb_on_y}
                    \end{align*}

                    \paragraph{Step 2 (Lower bound on $\frac{\Ex_{{\mu}}\insquare{F_{B(j-1)}(b_j)}}{\Ex_{{\mu}}\insquare{F_{B(j-1)}(y_{(j)})}}$):}
                    From Inequalities~\eqref{eq:lem:score_sell:lb_2} and \eqref{eq:lb_on_bj}, and the fact that  $n\geq n(\eps_0,\delta_0)\geq 4\eps^{-2}\alpha^{-2}k\log{\frac{m}{\delta}}$, it follows that 
                    \begin{align*}
                        \hF_{B(j-1)}(y_{(j)})
                        \geq 
                        \inparen{1-\eps}\cdot \Ex_{\wh{\mu}}\insquare{\hF_{B(j-1)}(y_{(j)})}.
                        \yesnum\label{eq:lb_3}
                    \end{align*}
                    Since the event $\evE$ holds, it also follows that 
                    \begin{align*}
                        \Ex_{\wh{\mu}}\insquare{\hF_{B(j-1)}(b_j)}
                        \ \ &\Stackrel{}{\geq}\ \  \hF_{B(j-1)}(b_j)-\err{}\\ 
                        \ \ &\Stackrel{\eqref{eq:lem:score_sell:lb_1}}{\geq}\ \ \hF_{B(j-1)}(y_{(j)})-\err{}\\ 
                        \ \ &\Stackrel{}{\geq}\ \ (1-\eps)\cdot \hF_{B(j-1)}(y_{(j)}). \tagnum{Using \cref{eq:lb_on_y} and the fact that $n\geq n(\eps_0,\delta_0)\geq 4\eps^{-2}\alpha^{-2}k\log{\frac{m}{\delta}}$}
                        \customlabel{eq:lb_on_exp_bj}{\theequation}
                    \end{align*}
                    Chaining Inequalities~\eqref{eq:lb_3} and  \eqref{eq:lb_on_exp_bj}, it follows that
                    \begin{align*}
                        \Ex_{\wh{\mu}}\insquare{\hF_{B(j-1)}(b_j)}
                        \ \ \Stackrel{\eqref{eq:lb_on_exp_bj}}{\geq}\ \  (1-\eps)\cdot \hF_{B(j-1)}(y_{(j)})
                        \ \ \Stackrel{\eqref{eq:lb_3}}{\geq}\ \   
                        \inparen{1-\eps}^2 \cdot \Ex_{\wh{\mu}}\insquare{\hF_{B(j-1)}(y_{(j)})}.
                        \yesnum\label{eq:lb_ratio_of_biased_scores}
                    \end{align*}
                    If $\frac{\Ex_{{\mu}}\insquare{F_{B(j-1)}(b_j)}}{\Ex_{{\mu}}\insquare{F_{B(j-1)}(y_{(j)})}}\leq \beta$, then 
                    using $(\beta,\gamma)$ order-preservation between $\mu$ and $\wh{\mu}$, it follows that 
                    \begin{align*}
                        \frac{\Ex_{\wh{\mu}}\insquare{\hF_{B(j-1)}(b_j)}}{\Ex_{\wh{\mu}}\insquare{\hF_{B(j-1)}(y_{(j)})}} \ \ \leq \ \ \gamma. 
                        \yesnum\label{eq:using_beta_gamma_order_preservation}
                    \end{align*}
                    Since $\evE$ holds, the above inequality implies that 
                    \begin{align*}
                        \hF_{B(j-1)}(y_{(j)})
                        &\geq \gamma^{-1} \cdot\hF_{B(j-1)}(b_j)  - 2\err{} \tag{Using that $\gamma\leq 1$}\\
                        &\geq \gamma^{-1}\cdot (1-\eps) \cdot \hF_{B(j-1)}(b_j)
                        \tag{Using \cref{eq:lb_on_y} and the fact that $n\geq n(\eps_0,\delta_0)\geq 4\gamma^2(\eps\alpha)^{-2}k\log{\frac{m}{\delta}}$}\\ 
                        &\geq \gamma^{-1}\cdot (1-\eps) \cdot \hF_{B(j-1)}(b_j).
                    \end{align*}
                    Since $\eps < 1-\gamma$, the above equation is a contradiction to \cref{eq:lem:score_sell:lb_1}.
                    Hence, 
                    \begin{align*}
                        \frac{\Ex_{{\mu}}\insquare{F_{B(j-1)}(b_j)}}{\Ex_{{\mu}}\insquare{F_{B(j-1)}(y_{(j)})}}
                        \geq \beta.
                        \yesnum\label{eq:ratio_lb}
                    \end{align*}

                \paragraph{Step 3 (Completing the proof of the claim):}
                    Since $y_{(j)}\in M$ and $y_{(j)}\not\in B(j-1)$, the definition of $\alpha$ (\cref{def:smooth}) implies that 
                    \[
                        \Ex_{{\mu}}\insquare{F_{B(j-1)}(y_{(j)})} \geq \alpha\tau n.
                        \yesnum\label{eq:lb:prefix_for_y}
                    \]
                    Substituting this in \cref{eq:ratio_lb} gives the following inequality
                    \begin{align*}
                        \Ex_{{\mu}}\insquare{F_{B(j-1)}(b_j)} 
                        &\geq  \beta\cdot \Ex_{{\mu}}\insquare{F_{B(j-1)}(y_{(j)})}.
                        \yesnum\label{eq:lb:prefix_for_y2}
                    \end{align*}
                    The above, as $\evE$ holds and $\beta,(1-\eps)^2\leq 1$, implies the following inequality
                    \begin{align*}
                        F_{B(j-1)}(b_j) 
                        &\ \ \geq\ \   \beta \cdot F_{B(j-1)}(y_{(j)}) - 2\err{}.
                        \yesnum\label{eq:lb:margin_with_b}
                    \end{align*}
                    \cref{eq:lb:prefix_for_y} and the fact that $\evE$ holds, also gives us that
                    \begin{align*}
                        {F_{B(j-1)}(y_{(j)})} %
                        &\geq \alpha\tau n - \err{}\\
                        &\geq \alpha\tau n \cdot \inparen{1 - \frac{1}{\alpha\tau n}\cdot\err{}}\\
                        &\geq \frac{\alpha n\tau}{2}.
                        \tagnum{Using that $n\geq n_0(\eps_0,\delta_0)\geq 4\cdot \alpha^{-2}\cdot k\log{\frac{m}{\delta}}$}
                        \customlabel{eq:lem:score_sell:lb4}{\theequation}
                    \end{align*}
                    Substituting this lower bound in Equation~\eqref{eq:lb:margin_with_b} gives us the following multiplicative lower bound on $F_{B(j-1)}(b_j)$
                    \begin{align*}
                    F_{B(j-1)}(b_j) 
                        &\ \ \Stackrel{\eqref{eq:lem:score_sell:lb4}}{\geq}\ \  \beta\cdot F_{B(j-1)}(y_{(j)})\cdot\inparen{1 - \frac{4}{\beta\alpha n\tau}\cdot \err{}}\\ 
                        &\ \ \Stackrel{}{\geq}\ \  \beta\cdot (1-\eps)\cdot F_{B(j-1)}(y_{(j)}).
                        \tag{Using that $n\geq n(\eps_0,\delta_0)\geq 16\cdot (\alpha\eps\cdot\beta)^{-2}\cdot k\log{\frac{m}{\delta}}$}
                    \end{align*}
                    Replacing $B(j-1), b_j$ and $y_{(j)}$ by $B(\ell)\cup A(i-1)$, $a_i$, and $x_{(i)}$  in Steps 1, 2, and 3 shows that
                    \[
                        F_{B(\ell)\cup A(i-1)}(a_i)\geq \beta\cdot (1-\eps)\cdot F_{B(\ell)\cup A(i-1)}(x_{(i)}).
                    \]
                
                \end{proof}

                \paragraph{Completing the proof of \cref{lem:score_sell}.}
                    Now we are ready to complete the proof of \cref{lem:score_sell}.
                \begin{proof}[Proof of \cref{lem:score_sell}]
                    Suppose the event $\evE$ holds.
                    $F(S)$ can be lower bounded as follows 
                    \begin{align*}
                        F(S) 
                        &=  \sum_{j=1}^{\ell} F_{B(j-1)}(b_j) + \sum_{i=1}^{k-\ell} F_{B(\ell)\cup A(i-1)}(a_i)\\
                        &\geq  \beta\cdot(1-\eps) \cdot\inparen{\sum_{j=1}^{\ell} F_{B(j-1)}(y_{(j)}) + \sum_{i=1}^{k-\ell} F_{B(\ell)\cup A(i-1)}(x_{(i)})}. 
                        \tagnum{Using \cref{lem:lower_bound_on_marginal} and the fact that $\evE$ holds}
                        \customlabel{eq:itnermediate}{\theequation}
                    \end{align*}
                    Next, we lower bound the expected value of each term in the parenthesis in the RHS.
                    Fix any $j\in [\ell]$.
                    Let $c_1,c_2,\dots,c_{j-1}\subseteq B(j-1)$ be a rearrangement of the elements of $B(j-1)$ such that 
                    $\Ex_\mu\insquare{F(c_r)}\geq \Ex_\mu\insquare{F(c_{r+1})}$ for each $r\in [j-2]$.
                    Since $Y(j-1)$ is the set $j-1$ candidates in $G_2$ corresponding to the $j-1$ largest values in $\inbrace{\Ex_\mu\insquare{F(c)}\mid c\in G_2}$, it follows that 
                    \[
                        \forall_{r\in [j-1]},\quad 
                        \Ex_\mu\insquare{F(y_r)}\geq \Ex_\mu\insquare{F(c_r)}.
                        \yesnum\label{eq:score_dominance_y}
                    \]
                    We consider two cases depending on the value of $y_{(j)}\in Y(j)$ to compute a lower bound.

                    \noindent \textit{Case A ($y_{(j)}\not\in Y(j-1)$):}
                        Since $y_{(j)}\in Y(j)$ and $y_{(j)}\not\in Y(j-1)$, in this case $y_{(j)}=y_{j}$.
                        The following lower bound holds
                        
                    \begin{align*}
                        \Ex_\mu\insquare{F_{\inbrace{c_1,c_2,\dots,c_{j-1}}}(y_{(j)})}
                        &= \Ex_\mu\insquare{F_{\inbrace{c_1,c_2,\dots,c_{j-1}}}(y_{j})}
                        \tagnum{Using that, in this case, $y_{(j)}=y_j$}.
                        \customlabel{eq:lowerbound_y_j}{\theequation}
                    \end{align*}
                    Let $R\subseteq T\subseteq C$ be the following sets 
                    \[
                        R\coloneqq \inbrace{c_2,\dots,c_{j-1}}
                        \quad\text{and}\quad 
                        T\coloneqq \inbrace{c_2,\dots,c_{j-1},y_{j}}.
                    \]
                    Substituting these in the above equation implies the following equation 
                    \begin{align*}
                        \Ex_\mu\insquare{F_{\inbrace{c_1,c_2,\dots,c_{j-1}}}(y_{j})}
                        &=
                        \Ex_\mu\insquare{
                            F\inparen{c_1\cup T}
                        }- 
                        \Ex_\mu\insquare{
                            F\inparen{c_1\cup R}
                        }\\ 
                        &=
                        \Ex_\mu\insquare{ F_T\inparen{c_1} }
                        - 
                        \Ex_\mu\insquare{ F_R\inparen{c_1} } 
                        + 
                        \Ex_\mu\insquare{F(T)} - \Ex_\mu\insquare{F(R)}\\ 
                        &\geq 
                        \Ex_\mu\insquare{ F_T\inparen{y_1} }
                        - 
                        \Ex_\mu\insquare{ F_R\inparen{y_1} } 
                        + 
                        \Ex_\mu\insquare{F(T)} - \Ex_\mu\insquare{F(R)}\\
                        \tag{Using $\Ex_\mu\insquare{ F\inparen{y_1} }\geq \Ex_\mu\insquare{ F\inparen{c_1} }$ and the fact that $F$ is order-preserving with respect to $\mu$}\\ 
                        &= 
                        \Ex_\mu\insquare{ F\inparen{y_1\cup T} }
                        - 
                        \Ex_\mu\insquare{ F\inparen{y_1\cup R} }\\
                        &= 
                        \Ex_\mu\insquare{F_{\inbrace{y_1,c_2,\dots,c_{j-1}}}(y_{j})}.
                    \end{align*}
                Similarly, using that $\Ex_\mu\insquare{ F\inparen{y_2} }\geq \Ex_\mu\insquare{ F\inparen{c_2} }$, it follows that 
                    \[
                        \Ex_\mu\insquare{F_{\inbrace{y_1,c_2,\dots,c_{j-1}}}(y_{j})}
                        \geq 
                        \Ex_\mu\insquare{F_{\inbrace{y_1,y_2,\dots,c_{j-1}}}(y_{j})}.
                    \]
                More generally, it holds that for each $r\in [j-2]$
                \[
                        \Ex_\mu\insquare{F_{\inbrace{y_1,\dots,y_{r-1},c_r,c_{r+1},\dots,c_{j-1}}}(y_{j})}
                        \geq 
                        \Ex_\mu\insquare{F_{\inbrace{y_1,\dots,y_{r-1},y_r,c_{r+1},\dots,c_{j-1}}}(y_{j})}.
                \]
                Chaining these $j-2$ inequalities and Equality~\eqref{eq:lowerbound_y_j}, it follows that 
                \[
                    \Ex_\mu\insquare{F_{\inbrace{c_1,c_2,\dots,c_{j-1}}}(y_{(j)})}
                    \ \ \Stackrel{\eqref{eq:lowerbound_y_j}}{\geq}\ \ 
                    \Ex_\mu\insquare{F_{\inbrace{c_1,c_2,\dots,c_{j-1}}}(y_{j})}
                    \geq 
                    \Ex_\mu\insquare{F_{\inbrace{y_1,y_2,\dots,y_{j-1}}}(y_{j})}.
                    \yesnum\label{eq:lower_bound_marginal}
                \]

                    \noindent \textit{Case B ($y_{(j)}\in Y(j-1)$):}
                    Suppose $y_{(j)} = y_s\in Y(j-1)$.
                    In this case, the following lower bound holds 
                    \begin{align*}
                        \Ex_\mu\insquare{F_{\inbrace{c_1,c_2,\dots,c_{j-1}}}(y_{(j)})}
                        &= \Ex_\mu\insquare{F_{\inbrace{c_1,c_2,\dots,c_{j-1}}}(y_s)}\tag{Using that, in this case, $y_{(j)}=y_s$}\\
                        &\geq 
                        \Ex_\mu\insquare{F_{\inbrace{c_1,c_2,\dots,c_{j-1}}}(y_j)} \tag{Using $\Ex_\mu\insquare{ F\inparen{y_s} }\geq \Ex_\mu\insquare{ F\inparen{y_j} }$ and the fact that $F$ is order-preserving with respect to $\mu$}\\ 
                        &
                        \ \ \Stackrel{\geq}{}\ \ 
                        \Ex_\mu\insquare{
                            F_{\inbrace{y_1,y_2,y_3,\dots,y_{j-1}}}\inparen{y_{j}}
                        }. \tag{Using \cref{eq:lower_bound_marginal}}
                    \end{align*}
                    Hence, in either case, the following holds
                    \begin{align*}
                        \Ex_\mu\insquare{F_{\inbrace{c_1,c_2,\dots,c_{j-1}}}(y_{(j)})}
                        \qquad\qquad 
                        \Stackrel{\inbrace{c_1,c_2,\dots,c_{j-1}} = B(j-1)}{=}
                        \qquad\qquad 
                        \Ex_\mu\insquare{F_{B(j-1)}(y_{(j)})}
                        &\ \ \Stackrel{\geq}{}\ \ 
                        \Ex_\mu\insquare{
                            F_{{Y(j-1)}}\inparen{y_{j}}
                        }.
                        \yesnum
                        \label{eq:lb_marginal_1}
                    \end{align*}
                    Replacing $B(j-1)$, $Y(j-1)$, $y_{(j)}$, $G_2$, and $j$ by $B(\ell)\cup A(i-1)$, $X(i-1)$, $x_{(i)}$, $G_1$, and $i$ gives the following lower bound
                    \begin{align*}
                        \Ex_\mu\insquare{F_{B(\ell)\cup A(i-1)}(x_{(i)})}
                        &\ \ \Stackrel{\geq}{}\ \ 
                        \Ex_\mu\insquare{
                            F_{Y(\ell)\cup X(i-1)}\inparen{x_{i}}
                        }.\yesnum
                        \label{eq:lb_marginal_2}
                    \end{align*}  
                    Substituting \cref{eq:lb_marginal_1,eq:lb_marginal_2} in Equation~\eqref{eq:itnermediate}, we get that 
                    \begin{align*}
                        F(S) 
                        &\geq  \beta\cdot (1-\eps) \cdot\inparen{\sum_{j=1}^{\ell} F_{Y(j-1)}(y_{j}) + \sum_{i=1}^{k-\ell} F_{Y(\ell)\cup X(i-1)}(x_{i})}\\ 
                        &=     \beta\cdot (1-\eps) \cdot\inparen{F\inparen{Y(\ell)\cup X(k-\ell)}}\\ 
                        &=  \beta\cdot (1-\eps) \cdot{F\inparen{M}}
                        \tag{Using that $Y(\ell)\cup X(k-\ell)=M$}.
                    \end{align*}
                \end{proof}

    \subsubsection{Proof of \cref{lem:score_m}: Performance Analysis of $M$}
        \begin{proof}
            Let $\evE$ be the following event 
                \begin{align*}
                    \forall_{T=C \text{ or } (T\subseteq C\colon \abs{T}\leq k)},\ \ \forall_{c\in C},\qquad 
                    &\abs{F_T(c) - \Ex_\mu\insquare{F_T(c)}}  \ \hspace{0.5mm}  \leq \  \err{},\\
                    \forall_{T=C \text{ or } (T\subseteq C\colon \abs{T}\leq k)},\ \ \forall_{c\in C},\qquad 
                    &\abs{\hF_T(c) - \Ex_{\wh{\mu}}\insquare{\hF_T(c)}} \leq \err{}.
                \end{align*}
                From \cref{lem:concentration}, it follows that $\Pr[\evE]\geq 1-2\delta$.
                Suppose the event $\evE$ holds.
                Since $M$ is the set of $k$ candidates with the largest values of $\Ex_\mu\insquare{F(\cdot)}$ and $F$ is order preserving with respect to $\mu$, it holds that 
                \[
                    \forall\ {d\in M},\ \ \forall\ {c\not\in M},\ 
                    \quad \Ex_\mu\insquare{F_S(d)}\geq \Ex_\mu\insquare{F_S(c)}.
                \]
                Further, as $\evE$ holds, the following inequality also holds
                \begin{align*}
                    \forall\ {d\in M},\ \ \forall\ {c\not\in M},\ 
                    \quad F_S(d)\geq F_S(c) - 2\err{}.
                    \yesnum\label{eq:lem:score_m}
                \end{align*}
                Suppose $\abs{M\cap S^\star}=a$.
                Since both $M$ and $S^\star$ have size $k$, it holds that $\abs{M\backslash S^\star}=\abs{S^\star\backslash M}=k-a.$
                Let 
                \[
                    M\backslash S^\star\coloneqq \inbrace{d_1,d_2,\dots, d_{k-a}}
                    \quad\text{and}\quad
                    S^\star\backslash M \coloneqq \inbrace{c_1,c_2,\dots, c_{k-a}}.
                \]
                For each $i\in [k-a]$, define 
                \[
                    D(i)\coloneqq\inbrace{d_1,d_2,\dots,d_i}
                    \quad\text{and}\quad
                    C(i)\coloneqq\inbrace{c_1,c_2,\dots,c_i}.
                \]
                For $i=0$, define $D(0)$ and $C(0)$ to be the emptyset.
                Conditioned on $\evE$, it holds that 
                \begin{align*}
                    F(M)  
                    &= F(M\cap S^\star) + F_{M\cap S^\star}(M\backslash S^\star)
                    = F(M\cap S^\star) + F_{M\cap S^\star}\inparen{\inbrace{d_1,d_2,\dots,d_{k-a}}}.
                    \yesnum\label{eq:expression_for_fm}
                \end{align*}
                Next, we lower bound $F_{M\cap S^\star}\inbrace{d_1,d_2,\dots,d_{k-a}})$.
                The following inequality holds
                \begin{align*}
                    & \quad F_{M\cap S^\star}\inparen{\inbrace{d_1,d_2,\dots,d_{k-a}}}\quad \\
                    =&\quad F_{M\cap S^\star}\inparen{\inbrace{d_2,\dots,d_{k-a}}} + F_{(M\cap S^\star)\cup \inbrace{d_2,\dots,d_{k-a}}}\inparen{d_1}\\
                    \Stackrel{\eqref{eq:lem:score_m}}{\geq}&\quad F_{M\cap S^\star}\inparen{\inbrace{d_2,\dots,d_{k-a}}} + F_{(M\cap S^\star)\cup \inbrace{d_2,\dots,d_{k-a}}}\inparen{c_1} -2\err{}\\
                    =&\quad F_{M\cap S^\star}\inparen{\inbrace{c_1,d_2,\dots,d_{k-a}}} -2\err{}.
                \end{align*}
                Similarly, for any $r\in [k-a]$, it holds that
                \begin{align*}
                    & \quad F_{M\cap S^\star}\inparen{\inbrace{c_1,\dots,c_{r-1},d_r,d_{r+1}\dots,d_{k-a}}} \\
                    \geq & \quad
                    F_{M\cap S^\star}\inparen{\inbrace{c_1,\dots,c_{r-1},c_r,d_{r+1}\dots,d_{k-a}}} - 2\err{}.
                \end{align*}
                Chaining these $k-a$ inequalities, we get that 
                \[
                    F_{M\cap S^\star}\inparen{\inbrace{d_1,d_2,\dots,d_{k-a}}} 
                    \geq 
                    F_{M\cap S^\star}\inparen{\inbrace{c_1,c_2,\dots,c_{k-a}}}
                    - 2(k-a)\cdot \err{}.
                \]
                Substituting this in \cref{eq:expression_for_fm} and upper bounding the coefficient of the second term, $2(k-a)$, by $2k$ implies that 
                \begin{align*}
                    F(M) 
                    &\geq F\inparen{M\cap S^\star} + F_{M\cap S^\star}\inparen{\inbrace{c_1,c_2,\dots,c_{k-a}}}
                    - 2k\cdot \err{}\\ 
                    &\geq F\inparen{S^\star}
                    - 2k\cdot \err{}.
                    \yesnum\label{eq:lem:score_m:final}
                \end{align*}
                To convert this to a multiplicative guarantee, we need to lower bound $F(S^\star)$.
                Since $S^\star$ maximizes $F$ among all sets of size at most $k$ and $M$ has size $k$, a lower bound on $F(S^\star)$ is as follows 
                \begin{align*}
                    F(S^\star)
                    &\geq F(M)\\
                    &\geq \sum_{c\in M} F_{C\setminus \inbrace{c}}(c)\\
                    &\geq k\cdot \min_{c\in M} F_{C\setminus \inbrace{c}}(c)\\
                    &\geq k\cdot \min_{c\in M} \Ex_\mu\insquare{F_{C\setminus \inbrace{c}}(c)} - k\err{} \tag{Using that the event $\evE$ holds}\\
                    &\geq kn\cdot \min_{c\in M} \Ex_\mu\insquare{f_{C\setminus \inbrace{c}}(c,\succ)} - k\err{}  \tag{Using the separability of $F$; see \cref{def:score}}\\
                    &\geq \alpha kn\tau - k\err{} \tag{Using the definition of $\alpha$; see \cref{def:smooth}}\\ 
                    &\geq \frac{\alpha kn\tau}{2}  \tagnum{Using that $n\geq n_0(\eps_0,\delta_0)\geq \alpha^{-2}k\log{\frac{m}{\delta}}$}.                    \customlabel{eq:lb_on_fstar}{\theequation}
                \end{align*}
                Using this inequality we can get a multiplicative lower bound on $F(M)$ as follows 
                \begin{align*}
                    F(M) 
                    \ \ 
                    &\Stackrel{\eqref{eq:lem:score_m:final}}{\geq} \ \ 
                    F\inparen{S^\star}\inparen{1 - \frac{2k}{F\inparen{S^\star}}\cdot \err{}}\\
                    &\Stackrel{\eqref{eq:lb_on_fstar}}{\geq} \ \ 
                    F\inparen{S^\star}\inparen{1 - \frac{4k}{\alpha k n\tau}\cdot \err{}}\\
                    &\Stackrel{}{\geq} \ \ 
                    \inparen{1 - \eps}\cdot F\inparen{S^\star}.
                    \tagnum{Using that $n\geq n_0(\eps_0,\delta_0)\geq 16\alpha^{-2}k\log{\frac{m}{\delta}}$}
                \end{align*}
                it follows that $F(S^\star)\geq \min_{c\in M}F(c)$.
                From the definition of $\alpha$, we have that $\min_{c\in M}\Ex_{\mu}\insquare{F(c)}\geq n\alpha\tau$.
                Conditioned on $\evE$, it holds that $\min_{c\in M} {F(c)}\geq \min_{c\in M}\Ex_{\mu}\insquare{F(c)}-\err{}$.
                Chaining the last three inequalities shows that conditioned on $\evE$
                \[
                    F(S^\star) %
                    \geq n\alpha\tau - \err{}
                    \geq \frac{n\alpha\tau}{2}
                    \tagnum{Since $n\geq n(\eps_0,\delta_0)\geq \alpha^{-2}\cdot k\cdot \log{\frac{m}{\delta}}$}.
                    \customlabel{eq:lem:score_m:lb_on_opt}{\theequation}
                \]
                Finally, as $\Pr[\evE]\geq 1-2\delta$, the above inequality implies the desired result:
                \begin{align*}
                    \Pr_{\mu,\wh{\mu}}\insquare{F(M) \geq F(S^\star) \cdot \inparen{1-\eps}} \geq 1-2\delta.
                \end{align*}
        \end{proof}

\subsection{Proof of \cref{cor:algorithmic}: Algorithmic results for the Utility-Based Model}         
\label{sec:proofof:lem:utility_alpha}

We then prove \cref{cor:algorithmic}.
    
\subsubsection{Order-Preservation of a Utility-Based Model}
\label{sec:utility_order_preserve}
    We first show that the utility-based generative models (\cref{ex:utility_generative,def:utility_bias}) are order-preserving with respect to $\mu$ and between $\mu$ and $\wh{\mu}$.

        \paragraph{Order Preservation With Respect to $\mu$.}
            We prove the following lemma.
            \begin{lemma}[\textbf{Order-preserving properties of the latent utility-based model}]\label{lem:utility_order_preserve}
        		Let $F = \sum_{v\in V} f(\cdot ,\succ_v)$ be a multiwinner score function. 
        		Let $\mu$ be a utility-based generative model defined in \cref{ex:utility_generative}. 
        		$F$ is order-preserving with respect to $\mu$. 
        \end{lemma}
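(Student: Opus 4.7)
The plan is to combine a coupling with an antisymmetric integral identity. Fix two candidates $c,c'\in C$ in the same group; since the lemma is invariant under swapping $c$ and $c'$, assume without loss of generality $\omega_c\leq \omega_{c'}$. Under this convention I will verify both items of \cref{def:latent} directly, and the hypothesis $\Ex_\mu[f(c,\succ)]\leq \Ex_\mu[f(c',\succ)]$ will fall out as the $S=\emptyset$ case of Item~1.

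The key observation is a neutrality-driven antisymmetry. Writing $w_{v,d}=\omega_d U_d$ with $U_d\sim\mathrm{Unif}[0,1]$ i.i.d., condition on $\mathbf{U}_{-}\coloneqq (U_d)_{d\neq c,c'}$; then $X\coloneqq w_{v,c}$ and $Y\coloneqq w_{v,c'}$ are independent uniforms on $[0,\omega_c]$ and $[0,\omega_{c'}]$. Let $\succ(x,y)$ denote the preference list when $(X,Y)=(x,y)$. A short check shows that $\succ(y,x)$ is just $\succ(x,y)$ with the positions of $c$ and $c'$ transposed and every other position unchanged: since the multiset of scores $\{x,y\}$ assigned to $\{c,c'\}$ is the same in both cases, the ranks of all other candidates are unaffected. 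Combined with neutrality of $f$ (see the footnote in \cref{def:score}), this gives, for every $A\subseteq C\setminus\{c,c'\}$,
\[
    f(A\cup\{c'\},\succ(y,x))=f(A\cup\{c\},\succ(x,y))\quad\text{and}\quad f(A\cup\{c\},\succ(y,x))=f(A\cup\{c'\},\succ(x,y)).
\]

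For Item~1, set $h(x,y)\coloneqq f(S\cup\{c'\},\succ(x,y))-f(S\cup\{c\},\succ(x,y))$; the identities above give $h(y,x)=-h(x,y)$. Splitting the conditional expectation along $[0,\omega_{c'}]=[0,\omega_c]\cup[\omega_c,\omega_{c'}]$,
\[
    \Ex[h(X,Y)\mid\mathbf{U}_{-}]=\frac{1}{\omega_c\,\omega_{c'}}\left[\int_0^{\omega_c}\!\!\int_0^{\omega_c}\! h(x,y)\,dy\,dx\;+\;\int_0^{\omega_c}\!\!\int_{\omega_c}^{\omega_{c'}}\! h(x,y)\,dy\,dx\right].
\]
The first double integral vanishes by antisymmetry of $h$. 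In the second double integral $x\leq \omega_c\leq y$ forces $y>x$, so $c'$ lies above $c$ in $\succ(x,y)$; hence $\{c'\}$ position-wise dominates $\{c\}$ and domination-sensitivity (applied with extension $S$) gives $h(x,y)\geq 0$. Averaging over $\mathbf{U}_{-}$ completes Item~1.

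For Item~2, apply the same trick to $\tilde h(x,y)\coloneqq [f(S\cup\{c'\})-f(T\cup\{c'\})]-[f(S\cup\{c\})-f(T\cup\{c\})]$ (all evaluated at $\succ(x,y)$); the same swap identities give $\tilde h(y,x)=-\tilde h(x,y)$. On $\{y>x\}$, a short combinatorial check (replacing $c$ by the better-positioned $c'$ can only weakly improve each coordinate of the sorted-position vector) shows that $T\cup\{c'\}$ position-wise dominates $T\cup\{c\}$, so the chain inequality in domination-sensitivity applied with extension $S\setminus T$ yields $f(T\cup\{c'\})-f(T\cup\{c\})\geq f(S\cup\{c'\})-f(S\cup\{c\})$, i.e., $\tilde h(x,y)\leq 0$. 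Reusing the antisymmetric integral decomposition then gives $\Ex[\tilde h]\leq 0$, which is Item~2. The main obstacle I anticipate is the combinatorial step identifying $\succ(y,x)$ with the $c\leftrightarrow c'$ transposition of $\succ(x,y)$ together with the analogous dominance of $T\cup\{c'\}$ over $T\cup\{c\}$; once these are in hand, the rest is a routine invocation of neutrality, domination-sensitivity, and the antisymmetric integral identity.
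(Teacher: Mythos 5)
Your proof is correct and takes a genuinely different route from the paper's. The paper conditions on the order statistics of the remaining candidates' utilities, partitions the sample space into interval events $\evE_{\ell,k}$ and $\evF$, first reduces Item~2 of \cref{def:latent} to a one-element marginal via an auxiliary equivalence (\cref{lem:simplify}), and then compares probabilities using two distributional properties of $\eta$ (\cref{def:desired_properties_of_eta}). Your coupling argument bypasses all of those intermediaries: after conditioning on $\mathbf{U}_{-}$, the swap-and-neutrality identity $f(A\cup\{c'\},\succ(y,x))=f(A\cup\{c\},\succ(x,y))$ makes both $h$ and $\tilde h$ odd under $(x,y)\mapsto(y,x)$, so the integral over the symmetric square $[0,\omega_c]^2$ cancels outright, and the remaining strip $\{x\leq\omega_c\leq y\}$ is controlled by a single application of domination-sensitivity (for Item~2, together with the correct combinatorial observation that replacing $c$ by the better-placed $c'$ in a fixed $T$ makes $T\cup\{c'\}$ position-wise dominate $T\cup\{c\}$). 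The result is a shorter and more transparent proof that needs no auxiliary lemma and no explicit distributional hypotheses beyond continuity.

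Two caveats. First, the compactness costs generality: the vanishing of the integral over $[0,\omega_c]^2$ relies on the joint density of $(X,Y)$ being exchangeable on that square, which is automatic for the uniform $\eta$ of \cref{ex:utility_generative} (both marginals are flat and the joint is constant) but fails for, e.g., exponential $\eta$, where $p_X(x)p_Y(y)\neq p_X(y)p_Y(x)$ whenever $x\neq y$ and $\omega_c\neq\omega_{c'}$. The paper's lengthier event-based argument is what buys it the broader class in \cref{def:desired_properties_of_eta}; for the lemma as stated your proof suffices. Second, ``WLOG $\omega_c\leq\omega_{c'}$'' deserves one extra line. If $\omega_c>\omega_{c'}$ but the hypothesis $\Ex_\mu[f(c,\succ)]\leq\Ex_\mu[f(c',\succ)]$ still holds, then your $S=\emptyset$ conclusion applied to the reversed pair forces $\Ex_\mu[f(c,\succ)]=\Ex_\mu[f(c',\succ)]$, and you still owe Items~1,~2 for the ordered pair $(c,c')$. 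This is rescued by the $T=\emptyset$ case of Item~2 for $(c',c)$, which gives $\Ex_\mu[f_S(c,\succ)]-\Ex_\mu[f_S(c',\succ)]\leq 0$ for all $S$; combined with Item~1 for $(c',c)$ this yields equality for all $S$, after which Items~1,~2 for $(c,c')$ hold with equality. A one-line fix --- and, in fairness, the paper's own proof of Item~1 only treats the strict inequality and has the same implicit corner case.
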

        Recall that in the utility-based model (\cref{ex:utility_generative}), the variable $\eta$ is drawn from the uniform distribution on $[0,1]$.
        We will, in fact, prove a more general version of the above lemma that holds for any distribution of $\eta$ that satisfies certain properties (\cref{def:desired_properties_of_eta}).
        With some abuse of notation, we use $\eta$ to denote both the distribution and a value drawn from the distribution $\eta$ (independent of all other randomness).
        \begin{definition}[\textbf{A family of distributions $\eta$}]\label{def:desired_properties_of_eta}
            Let $\eta$ be the distribution on $\R_{\geq 0}$ from \cref{def:utility_bias} that parameterizes the generative model $\mu$.
            Let $\cdf_{\eta}\colon \R\to [0,1]$ be the cumulative distribution function of $\eta$.
            We define the following properties of $\eta$.
            \begin{itemize} %
                \item (Order preserving A) $\eta$ is order-preserving if
                for all $\eps \hspace{-0.5mm}\in \hspace{-0.5mm} [0,1]$ and $a_2\geq a_1\geq 0$, 
                \[
                    \Pr_{X,Y\sim \eta}\insquare{X\hspace{-0.5mm}>\hspace{-0.5mm} Y(1-\eps)\mid  X,Y\hspace{-0.5mm}\in\hspace{-0.5mm} [a_1,a_2]} \geq \frac{1}{2}.
                \]
                \item (Order preserving B) We say $\eta$ is order preserving if %
                for all $0\leq a_1 < a_2 \leq a_3 < a_4$ and all $\eps\in [0,1]$, the following holds:
                \begin{itemize}
                    \sloppy
                    \item Let $\evE$ be the event that either $X\in \insquare{a_3,a_4} \text{ and } Y(1-\eps)\in \insquare{a_1,a_2}$ or $X\in \insquare{a_1,a_2} \text{ and } Y(1-\eps)\in \insquare{a_3,a_4}$
                    \item 
                    $\Pr_{X,Y\sim \eta}\insquare{X\in \insquare{a_3,a_4} \text{ and } Y\in \insquare{a_1,a_2}  \mid \evE} \geq $ \\ $  \Pr_{X,Y\sim \eta}\insquare{X\in \insquare{a_1, a_2} \text{ and } Y(1-\eps)\in \insquare{a_3, a_4}  \mid \evE}.$
                \end{itemize}
            \end{itemize}
        \end{definition} 
            
            \noindent Several distributions on $\R_{\geq 0}$ including the uniform distribution on $[0,1]$ and the exponential distributions satisfy the properties in \cref{def:desired_properties_of_eta}.
            Order preservation properties A and B are used to ensure that if $\omega_c > \omega_{c'}$ (for any $c,c'\in C$ in the same group), then conditioned on the event, say $\evF$, that $\inbrace{c,c'}$ appear in positions $\inbrace{\ell_1,\ell_2}$ (for any $1\leq \ell_1 < \ell_2\leq m$), then $c'$ is more likely to appear in position $\ell_1$ than in $\ell_2$.
            (Note that conditioned on $\evF$, $c$ and $c'$ may appear in positions $\ell_1$ and $\ell_2$ respectively or in positions $\ell_2$ and $\ell_1$ respectively)
        
            Fix the following parameters.
            \begin{enumerate} %
                \item Any multiwinner score function $F$;
                \item Any distribution $\eta$ satisfying the properties in \cref{def:desired_properties_of_eta};
                \item Any pair of candidates $c,c'\in C$ in the same group ($G_1$ or $G_2$); and 
                \item Any subset $S\subseteq C\setminus\inbrace{c,c'}$.
            \end{enumerate}
            To prove that $F$ is order-preserving with respect to $\mu$, we need to show that 
            if $\Exp_{\mu}\left[f(c,\succ) \right] < \Exp_{\mu}\left[f(c',\succ)\right]$, then the following two conditions hold 
            \begin{enumerate} %
                \item (Property A) for any subsets $R\subseteq S\subseteq C\setminus \left\{c,c'\right\}$, \ \ $\Exp_{\mu}\left[f_S(c, \succ)\right] \leq \Exp_{\mu}\left[f_S(c', \succ)\right];$
                \sloppy
                \item (Property B) for any subsets $R\subseteq S\subseteq C\setminus \left\{c,c'\right\}$, \ \ $\Exp_\mu\insquare{f_S(c',\succ)} - \Ex_\mu\insquare{f_S(c,\succ)} \leq \Ex_\mu\insquare{f_R(c',\succ)} - \Ex_\mu\insquare{f_R(c,\succ)}$.
            \end{enumerate}
            \paragraph{Proof of Property A.}
            The following is the main lemma used to prove Property A.
            \begin{lemma}\label{claim:order_preserv_utility}
                For any $d,d'\in C$ in the same group ($G_1$ or $G_2$) and any set $T\subseteq C\setminus\inbrace{c,c'}$, if $\omega_{d'}\geq \omega_{d}$, then the following holds 
                \begin{align*}
                    \Exp_{\mu}\left[f_T(c',\succ)\right]
                        \geq  \Exp_{\mu}\left[f_T(c,\succ)\right]
                    \quad\text{and}\quad 
                    \Exp_{\wh{\mu}}\left[f_T(c',\nsucc)\right]
                        \geq 
                        \Exp_{\wh{\mu}}\left[f_T(c,\nsucc)\right].
                \end{align*}
            \end{lemma}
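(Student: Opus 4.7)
The plan is to prove both statements by a single pairing argument. Define the involution $\sigma\colon\prefs{C}\to\prefs{C}$ that transposes the positions of $c$ and $c'$ and fixes every other candidate. Because $T\subseteq C\setminus\{c,c'\}$ and $f$ is neutral (depends on $S,\succ$ only through the multiset $\pos_{\succ}(S)$), we have the identities $f_T(c',\sigma(\pi))=f_T(c,\pi)$ and $f_T(c,\sigma(\pi))=f_T(c',\pi)$ for every $\pi\in\prefs{C}$. I will first handle the latent case; the biased case then follows by the same coupling because $c$ and $c'$ lie in the same group and hence share a common multiplicative bias factor, so the joint law of $(\nwh{w}_c,\wh{w}_{c'})$ has the same structure as $(w_c,w_{c'})$ up to an overall rescaling that cancels in the probability ratios considered below.

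Grouping $\Ex_\mu[f_T(c',\succ)-f_T(c,\succ)]$ into $\sigma$-orbits, the contribution of an orbit $\{\pi,\sigma(\pi)\}$ collapses, using the two neutrality identities, to
$(\Pr_\mu[\succ{=}\sigma(\pi)] - \Pr_\mu[\succ{=}\pi])\cdot(f_T(c',\pi) - f_T(c,\pi))$. Pick the representative $\pi$ with $\pos_\pi(c)<\pos_\pi(c')$, so $c$ sits strictly above $c'$. Since $T\cup\{c\}$ and $T\cup\{c'\}$ share $T$ and differ only by exchanging the position of $c$ with that of $c'$, with $c$ at the better of the two positions, $T\cup\{c\}$ position-wise dominates $T\cup\{c'\}$ in $\pi$ (\cref{def:dominate}); domination sensitivity of $f$ (\cref{def:score}) then gives $f_T(c,\pi)\geq f_T(c',\pi)$, so the second factor is non-positive. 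It therefore suffices to prove $\Pr_\mu[\succ{=}\sigma(\pi)]\geq\Pr_\mu[\succ{=}\pi]$, which makes every orbit's contribution non-negative and yields the claim.

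For the likelihood comparison, I condition on the independent draws of $\eta_d$ for $d\in C\setminus\{c,c'\}$. These fix the utilities $w_d=\eta_d\omega_d$ of all other candidates and partition the positive real line into intervals; the events $\{\succ{=}\pi\}$ and $\{\succ{=}\sigma(\pi)\}$ correspond to placing the pair $(\eta_c\omega_c,\eta_{c'}\omega_{c'})$ into the \emph{same} two intervals $[a_1,a_2]$ and $[a_3,a_4]$ (with $a_2\leq a_3$), but with the roles of $c$ and $c'$ swapped. Writing $\omega_c=\omega_{c'}(1-\eps)$ for some $\eps\in[0,1]$ and rescaling by $\omega_{c'}$, the Order preservation B property of $\eta$ (\cref{def:desired_properties_of_eta}) yields the desired inequality when $c$ and $c'$ fall into distinct intervals, while Order preservation A handles the degenerate case in which the two target positions correspond to two candidates consecutive within the same interval (so the comparison reduces to the order of $X$ and $Y(1-\eps)$ given $X,Y\in[a_1,a_2]$). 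Integrating over the conditioning variables gives $\Pr_\mu[\succ{=}\sigma(\pi)]\geq\Pr_\mu[\succ{=}\pi]$.

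The main technical obstacle is the rescaling step, i.e., converting the hypothesis $\omega_{c'}\geq\omega_c$ into exactly the form demanded by the Order preservation axioms on $\eta$, and verifying that the case split (different intervals vs.\ same interval) is exhaustive. Once this is checked, summing the orbit contributions gives the statement for $\mu$, and the same argument applied to $\wh{w}$ in place of $w$, using that the common bias factor cancels inside the ratios that appear, gives the statement for $\wh{\mu}$.
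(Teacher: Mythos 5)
The paper states this lemma and invokes it in the proof of Property A, but does not actually supply a proof; the ingredients it would need are exactly what the paper develops for Property B (conditioning on the other candidates' utilities, the interval events $\evE_{\ell,k}$, $\evF$, and the order-preservation properties of $\eta$ from \cref{def:desired_properties_of_eta}). Your orbit/pairing argument is a correct and cleanly packaged way to fill that gap, and it rests on precisely those same ingredients, so it is in the same spirit as the paper's surrounding proofs.

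Two small corrections. First, your displayed orbit contribution has the wrong sign: using the neutrality identities $f_T(c',\sigma(\pi))=f_T(c,\pi)$ and $f_T(c,\sigma(\pi))=f_T(c',\pi)$, the contribution of the orbit $\{\pi,\sigma(\pi)\}$ to $\Ex_\mu[f_T(c',\succ)-f_T(c,\succ)]$ is $\inparen{\Pr_\mu[\succ{=}\pi]-\Pr_\mu[\succ{=}\sigma(\pi)]}\cdot\inparen{f_T(c',\pi)-f_T(c,\pi)}$, not the reverse. With the representative chosen so that $\pos_\pi(c)<\pos_\pi(c')$, domination sensitivity gives $f_T(c',\pi)\leq f_T(c,\pi)$, and the likelihood comparison you then establish, $\Pr_\mu[\succ{=}\sigma(\pi)]\geq\Pr_\mu[\succ{=}\pi]$, makes both factors nonpositive, so the product is nonnegative; your stated conclusion is right, and only the displayed expression is off. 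Second, for the biased case, the justification should be a bit more careful than "a common factor cancels in ratios": after conditioning on $\{\eta_d\}_{d\neq c,c'}$, the observed utilities $\widehat{w}_d$ of the other candidates carry mixed factors of $\theta$ and $1$, but that only relocates the interval endpoints. What actually matters is that $c$ and $c'$ are in the same group, so $(\widehat{w}_c,\widehat{w}_{c'})$ equals $(w_c,w_{c'})$ or $(\theta w_c,\theta w_{c'})$, a single scalar multiple of the latent pair; rescaling by that scalar times $\omega_{c'}$ reduces the comparison of $(\eta_{c'},\eta_c(1-\eps))$ against fixed intervals to exactly the latent form, after which properties A and B of \cref{def:desired_properties_of_eta} apply verbatim.
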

            Property A straightforwardly follows from the above lemma.
            \begin{proof}[Proof of Property A]
                We claim that $\Exp_{\mu}\left[f(c,\succ) \right] < \Exp_{\mu}\left[f(c',\succ)\right]$, implies that $\omega_c \leq \omega_{c'}$.
                To see this, suppose $\omega_c > \omega_{c'}$, then from \cref{claim:order_preserv_utility} (invoked with $T=\emptyset$) it follows that $\Exp_{\mu}\left[f(c,\succ) \right]\geq \Exp_{\mu}\left[f(c',\succ)\right]$, which is a contradiction.
                Hence, $\omega_c \leq \omega_{c'}$.
                Now, using \cref{claim:order_preserv_utility} with the set $T=S$ and the fact that $\omega_c \leq \omega_{c'}$, Property A follows.
            \end{proof}

            \paragraph{Proof of Property B.}
            We first derive an equivalent version of Property B that is easier to prove.

            \noindent \textit{Step 1 (An alternate version of Property B):}
            By rearranging the terms in the inequality in Property B and using the definition of the marginal score, we get the following equivalent version of Property B
            \[
                \forall_{R\subseteq S\subseteq C\setminus \left\{c,c'\right\}}, \quad 
                \Exp_\mu\insquare{f(S\cup c',\succ)} - \Ex_\mu\insquare{f(R\cup c',\succ)} 
                    \leq 
                    \Ex_\mu\insquare{f(S\cup c,\succ)} - \Ex_\mu\insquare{f(R\cup c,\succ)}.
            \]
            Using the definition of the marginal score again, we get the following equivalent version of Property B
            \[
                \forall_{R\subseteq S\subseteq C\setminus \left\{c,c'\right\}}, \quad 
                \Exp_\mu\insquare{f_{R\cup c'}(S\setminus R,\succ)}
                    \leq 
                    \Ex_\mu\insquare{f_{R\cup c}(S\setminus R,\succ)}.
            \]
            Observe that the set $S\setminus R$ is disjoint from $R\cup c'$ and $R\cup c$.
            Defining $A\coloneqq R$ and $B\coloneqq S\setminus R$, we get the following equivalent version
            \[
                \forall_{A,B\subseteq C\setminus \left\{c,c'\right\}\colon A\cap B = \emptyset}, \quad 
                \Exp_\mu\insquare{f_{A\cup c'}(B,\succ)}
                    \leq 
                    \Ex_\mu\insquare{f_{A\cup c}(B,\succ)}.
                \yesnum\label{eq:equiv_version}
            \]
            We can prove the above inequality, however, the fact that $B$ can have multiple candidates makes the proof unnecessarily technical.
            Instead, we will simplify the above version of Property B using the following lemma.
            \begin{lemma}\label{lem:simplify}
                For any $d,d'\in C$, the following holds 
                \begin{align*}
                    \forall_{X,Y\subseteq C\setminus \left\{c,c'\right\}\colon X\cap Y = \emptyset}, \quad 
                    &\Exp_\mu\insquare{f_{X\cup c'}(Y,\succ)}
                        \leq 
                        \Ex_\mu\insquare{f_{X\cup c}(Y,\succ)},\\ 
                    \iff\quad 
                    \forall_{Y\subseteq C\setminus \left\{c,c'\right\}}\ \ 
                    \forall_{y\in C\setminus X\setminus\inbrace{c,c'}}, \quad 
                    &\Exp_\mu\insquare{f_{X\cup c'}(y,\succ)}
                        \leq 
                        \Ex_\mu\insquare{f_{X\cup c}(y,\succ)}.
                \end{align*}
            \end{lemma}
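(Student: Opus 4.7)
\textbf{Proof proposal for \cref{lem:simplify}.} Reading the statement literally, one direction (set version $\Rightarrow$ singleton version) is immediate by specializing $Y\coloneqq\{y\}$, so the real content is the converse. My plan is to prove this converse by induction on $|Y|$, using the fact that the marginal contribution of a set telescopes into marginal contributions of singletons along any ordering. The quantifier ``$\forall_Y\subseteq C\setminus\{c,c'\}$'' in the singleton version is best read as ``$\forall_X\subseteq C\setminus\{c,c'\}$'' (consistent with the set version), which is what lets the induction go through with an enlarged base.

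The base case $|Y|=0$ is trivial since both marginals are $0$, and $|Y|=1$ is precisely the singleton hypothesis. For the inductive step with $|Y|\geq 2$, pick any $y\in Y$ and let $Y'\coloneqq Y\setminus\{y\}$. Using the identity $f_{A}(B) = f_A(B\setminus\{y\}) + f_{A\cup (B\setminus\{y\})}(y)$ applied to $A=X\cup c'$ and then to $A=X\cup c$, I would write
\begin{align*}
\Ex_\mu\insquare{f_{X\cup c'}(Y,\succ)} &= \Ex_\mu\insquare{f_{X\cup c'}(Y',\succ)} + \Ex_\mu\insquare{f_{X\cup c'\cup Y'}(y,\succ)},\\
\Ex_\mu\insquare{f_{X\cup c}(Y,\succ)} &= \Ex_\mu\insquare{f_{X\cup c}(Y',\succ)} + \Ex_\mu\insquare{f_{X\cup c\cup Y'}(y,\succ)}.
\end{align*}
The first terms on the right-hand sides compare correctly by the inductive hypothesis applied to the same $X$ and the smaller set $Y'$. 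For the second terms, I apply the singleton hypothesis with the set $X$ replaced by the larger set $X\cup Y'$; this is legal because $X$ and $Y\supseteq Y'$ are each disjoint from $\{c,c'\}$, so $X\cup Y'\subseteq C\setminus\{c,c'\}$, and $y\in C\setminus(X\cup Y')\setminus\{c,c'\}$ since $Y\cap X=\emptyset$ and $y\notin Y'$. Adding the two inequalities yields the desired bound for $Y$.

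The only subtlety is ensuring that enlarging the conditioning set $X\mapsto X\cup Y'$ keeps the singleton hypothesis applicable, which amounts to the disjointness bookkeeping above. There is no analytic difficulty and no appeal to properties of $\mu$ beyond linearity of expectation, since the telescoping identity holds pointwise for $f$ and both sides are taken in expectation over $\succ\sim\mu$. Hence the reduction is purely combinatorial, and the main work in the surrounding argument (proving Property B) can henceforth be restricted to the case $|Y|=1$, which is what \cref{claim:order_preserv_utility}-style arguments handle.
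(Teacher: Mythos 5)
Your proof is correct and takes essentially the same route as the paper: the paper expands $f_{X\cup c'}(Y,\succ)$ directly as a telescoping sum $\sum_i f_{X\cup c'\cup\{y_1,\dots,y_{i-1}\}}(y_i,\succ)$ and applies the singleton hypothesis term by term, which is exactly what your induction on $|Y|$ does one step at a time. You also correctly read the quantifier typo ($\forall_Y$ should be $\forall_X$) in the singleton version, matching the paper's intent.
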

            \begin{proof}
                We are required to prove $(1)$ and equivalent $(2)$, where (1) and (2) are the following conditions respectively
                \begin{align*}
                    \forall_{X,Y\subseteq C\setminus \left\{c,c'\right\}\colon X\cap Y = \emptyset}, \quad 
                    &\Exp_\mu\insquare{f_{X\cup c'}(Y,\succ)}
                        \leq 
                        \Ex_\mu\insquare{f_{X\cup c}(Y,\succ)},\yesnum\label{eq:version1}\\ 
                    \forall_{Y\subseteq C\setminus \left\{c,c'\right\}}\ \ 
                    \forall_{y\in C\setminus X\setminus\inbrace{c,c'}}, \quad 
                    &\Exp_\mu\insquare{f_{X\cup c'}(y,\succ)}
                        \leq 
                        \Ex_\mu\insquare{f_{X\cup c}(y,\succ)}.\yesnum\label{eq:version2}
                \end{align*}

                \paragraph{Step A ($2\implies 1$):}
                    Fix any disjoint $X,Y\subseteq C\setminus \left\{c,c'\right\}$.
                    Let $Y\coloneqq \inbrace{y_1,y_2,\dots,y_t}$.
                    It holds that 
                    \begin{align*}
                        \Exp_\mu\insquare{f_{X\cup c'}(Y,\succ)}
                        &= \sum_{i\in [t]}\Exp_\mu\insquare{f_{X\cup c'\cup y_1\cup y_2\cup\dots\cup y_{i-1}}(y_i,\succ)}\\ 
                        &\leq \sum_{i\in [t]}\Exp_\mu\insquare{f_{X\cup c\cup y_1\cup y_2\cup\dots\cup y_{i-1}}(y_i,\succ)} \tag{Using \cref{eq:version1} to switch $c'$ with $c$}\\
                        &\leq \Exp_\mu\insquare{f_{X\cup c}(Y,\succ)}. \tag{Using that $F_S(T\cup R)=F_S(T)+F_{S\cup T}(R)$ for all sets $S,R,T$ and any submodular function $F$}
                    \end{align*}
                
                \paragraph{Step B ($1\implies 2$):}
                    \cref{eq:version2} follows by setting $Y=\inbrace{y}$ in \cref{eq:version1}.
            \end{proof}
            \noindent \textit{Step 2 (Proving alternate version of Property B):} Thus, from \cref{eq:equiv_version} and \cref{lem:simplify}, it follows that the following condition is equivalent to Property B.
            \[
                \forall_{A\subseteq C\setminus \left\{c,c'\right\}}\ \ 
                    \forall_{b\in C\setminus X\setminus\inbrace{c,c'}}, \quad 
                    \Exp_\mu\insquare{f_{A\cup c'}(b,\succ)}
                        \leq 
                        \Ex_\mu\insquare{f_{A\cup c}(b,\succ)}.
                    \yesnum\label{eq:final_equival_version}
            \]
            \paragraph{Notation.}
            We first define some notation that is used to express $\Exp_\mu\insquare{f_{A\cup c'}(b,\succ)}$ and $\Ex_\mu\insquare{f_{A\cup c}(b,\succ)}$.
            Fix any voter $v$ and consider $\succ_v\coloneqq \succ$ where $\succ\ \sim \mu$.
            Define $w_{(i)}$ as the $i$-th largest order statistic among the utilities of candidates in $C\setminus\inbrace{c,c'}$ for each $i\in [m-1]$.
            In other words, $w_{(i)}$ is the $i$-th largest value in the set 
            \[
                \cW\coloneqq \inbrace{ w_{v,d} \mid d\in C\setminus\inbrace{c,c'}}.
            \]
            By this definition, we have the inequalities
            \[
                w_{(1)} \geq w_{(2)} \geq \dots \geq w_{(m)}.
            \]
            Define the interval $I_\ell\coloneqq \insquare{w_{(\ell)}, w_{(\ell+1)}}$ for each $\ell\in [m-2]$.
            Define $\evE_{\ell,k}$, for each $\ell,k\in [m-2]$ as the event that 
            \[
                w_{v,c'} \in I_\ell \quad \text{and} \quad w_{v,c} \in I_k. \tag{Event $\evE_{\ell,k}$}
            \]
            Define $\evF$ as the event that 
            \[
                w_{v,c'}>w_{v,c}. \tag{Event $\evF$}
            \] 
            Note that $\evF$ is equivalent to the event $\pos_{\succ}(c') < \pos_{\succ}(c)$.
            Finally, for each $\ell\in [m]$, define the random variable 
            \[
                \tau_{\pos_\succ(b)}(\pos_\succ(A),\ell) \coloneqq f_{A\cup i(\ell)}(b,\succ).
            \]
            Where $i(\ell)$ is the candidate at the $\ell$-th position in $\succ$.
            The randomness in $\tau_S(\ell)$ due to the randomness in $\pos_\succ(A)$ and in $\pos_\succ(b)$.
            Conditioned on any value of $\pos_\succ(A)$ and $\pos_\succ(b)$, due to domination sensitivity (\cref{def:score}), it holds that 
            \[
                \forall 1\leq \ell < k \leq m,\quad \tau_{A\cup i(\ell)}(b) \geq \tau_{A\cup i(k)}(b).
                \yesnum\label{eq:inequality2}
            \]
            We can express $\Exp_\mu\insquare{f_{A\cup c'}(b,\succ)}$ and $\Ex_\mu\insquare{f_{A\cup c}(b,\succ)}$ as follows
            \begin{align*}
                \Exp_\mu\insquare{f_{A\cup c'}(b,\succ)}
                    &= \Ex_{\cW}\insquare{ \sum_{\ell,m\in [m-2]:\ell < k}\inparen{ \Pr\insquare{\evE_{\ell, k}} \cdot \tau_{A\cup i(\ell)}(b)  + \Pr\insquare{\evE_{k,\ell}}\cdot \tau_{A\cup i(k)}(b)  } }\\ 
                    &\quad + \Ex_{\cW}\insquare{ \sum_{\ell\in [m-2]}\Pr\insquare{\evE_{\ell, \ell}} \inparen{ \Pr\insquare{\evF\mid \evE_{\ell,\ell} }\cdot \tau_{A\cup i(\ell)}(b) + \Pr\insquare{\lnot\evF\mid \evE_{\ell,\ell} }\cdot \tau_{A\cup i(\ell+1)}(b) } },\\
                \Ex_\mu\insquare{f_{A\cup c}(b,\succ)}
                    &= \Ex_{\cW}\insquare{ \sum_{\ell,m\in [m-2]:\ell < k}\inparen{ \Pr\insquare{\evE_{\ell, k}} \cdot \tau_{A\cup i(k)}(b)  + \Pr\insquare{\evE_{k,\ell}} \cdot \tau_{A\cup i(\ell)}(b) } }\\ 
                    &\quad + \Ex_{\cW}\insquare{ \sum_{\ell\in [m-2]}\Pr\insquare{\evE_{\ell, \ell}} \inparen{ \Pr\insquare{\evF\mid \evE_{\ell,\ell} }\cdot \tau_{A\cup i(\ell+1)}(b)  + \Pr\insquare{\lnot\evF\mid \evE_{\ell,\ell} } \cdot \tau_{A\cup i(\ell)}(b) } }.
            \end{align*}
            Hence, it follows that 
            \begin{align*}
                &\Exp_\mu\insquare{f_{A\cup c'}(b,\succ)}  - \Ex_\mu\insquare{f_{A\cup c}(b,\succ)}\\
                &\qquad\geq 
                \Ex_{\cW}\insquare{ 
                        \sum_{\ell,m\in [m-2]:\ell < k}\inparen{ 
                            \Pr\insquare{\evE_{\ell, k}} - \Pr\insquare{\evE_{k, \ell}}
                        }
                        \cdot \inparen{
                            \tau_{A\cup i(\ell)}(b) - \tau_{A\cup i(k)}(b)
                        }
                }\\ 
                &\qquad + \Ex_{\cW}\insquare{ 
                        \sum_{\ell\in [m-2]}
                        \Pr\insquare{\evE_{\ell, \ell}}\cdot \inparen{ 
                            \Pr\insquare{\evF\mid \evE_{\ell,\ell} } - \Pr\insquare{\lnot\evF\mid \evE_{\ell,\ell} }}
                            \cdot \inparen{
                                \tau_{A\cup i(\ell)}(b) - \tau_{A\cup i(\ell+1)}(b)
                            }
                }.
                \yesnum\label{eq:diff2}
            \end{align*}
            Note that $\pos(A)$ and $\pos(b)$ are deterministic conditioned on any specific value of $\cW$ and either (1) the event $\evE_{\ell,k}$ for $\ell\neq k$ or (2)  the event $\evE_{\ell,\ell}$ and $\evF$.
            Hence, \cref{eq:inequality2} holds.
            Further, we have the following claims using the two properties in \cref{def:desired_properties_of_eta}.
            \begin{align*}
                \forall_{1\leq \ell < k \leq m},\qquad \qquad 
                  \white{.}\quad  \Pr\insquare{\evE_{\ell, k}} - \Pr\insquare{\evE_{k, \ell}} &\geq 0,\yesnum\label{eq:claim3}\\ 
                \forall_{1\leq \ell \leq m},\quad 
                    \Pr\insquare{\evF\mid \evE_{\ell,\ell} } - \Pr\insquare{\lnot\evF\mid \evE_{\ell,\ell} } &\geq 0.\yesnum\label{eq:claim4}
            \end{align*}
            Substituting \cref{eq:inequality2,eq:claim3,eq:claim4} in \cref{eq:diff2}, it follows that 
            \begin{align*}
                \Exp_\mu\insquare{f_{A\cup c'}(b,\succ)}  - \Ex_\mu\insquare{f_{A\cup c}(b,\succ)} \leq 0.
            \end{align*}
            The above is equivalent to \cref{eq:final_equival_version}.
            Since \cref{eq:final_equival_version}, itself, is equivalent to Property B, Property B also follows.

        \paragraph{Order preservation between $\mu$ and $\wh{\mu}$.}
            We define a parameter $\sigma(f)$ used in \Cref{lem:utility_order_preserve2}.
            
            \begin{definition}
                \label{def:sigma_f}
                Let $F = \sum_{v\in V} f(\cdot ,\succ_v)$ be a multiwinner score function.
                For each $j\in [m]$, let $i_{\succ,j}$ be the $j$-th candidate in $\succ$.
                For each $j\in [m]$ and $S\subseteq S$, define $\tau_{j,S}(f)$ to be the marginal score of $i_{\succ,j}$ with respect to $S$, i.e., 
                $$\tau_{j, S}(f)\coloneqq f_S(\inbrace{i_{\succ,j}},\succ)$$ 
                which is independent of $\succ\ \in \prefs{C}$.
                Let $\tau_{m+1,S}=0$ for any $S\subseteq C$ and $\tau_{\min}\coloneqq \min_{S\subseteq C}\tau_{m-1,S}.$
                $\sigma(f)$ is defined as follows.
                \begin{align*}
                    \sigma(f)\coloneqq \min_{S\subseteq C}\min_{\ell\in [m-1]} \frac{\tau_{\ell, S}(f)+\tau_{\ell+2,S}(f)-2\tau_{\ell+1,S}(f)}{\tau_{\ell, S}(f) - \tau_{\ell+2,S}(f)},
                    \yesnum
                \end{align*}
                where we use the convention $\frac{0}{0}=1$.
            \end{definition}
            Intuitively, $\sigma(f)$ measures a certain notion of convexity measure of $f$.
            Concretely, $\sigma(f)\geq 0$ if and only if there is a convex function $g$ that takes value $s_\ell$ at $\ell$ (for each $\ell\in [m+1]$). 
            Moreover, strict inequality holds (i.e., $\sigma(f)>0$) if and only if $g$ is strictly convex.
            The normalization in $\sigma(f)$ ensures that it is invariant to a multiplicative scaling of $F$.
            As for examples: $\sigma(f) = 1$ for SNTV and $\sigma(f)=0$ for Borda rule or $\ell_1$-CC.

            \begin{lemma}[\textbf{Order-preservation between latent and biased utility-based models}]\label{lem:utility_order_preserve2}
        		Let $F = \sum_{v\in V} f(\cdot ,\succ_v)$ be a multiwinner score function. 
        		Let $(\mu,\wh{\mu})$ be a utility-based generative model defined in \cref{ex:utility_generative,def:utility_bias}. 
        		The following holds for any $0\leq \lambda \leq m^{-1/2}$.
        		\begin{enumerate} %
        		    \item If $\sigma(f)>0$, then $F$ is $\inparen{1-\lambda, 1 - \sigma(f) \cdot m^{-1} \cdot\Omega\inparen{\lambda}}$ order preserving between $\mu$ and $\wh{\mu}$; and 
        		    \item If $\sigma(f)=0$ and $\tau_{\min}>0$ and $\tau_{m,\emptyset}=0$, then $F$ is $\inparen{1-\lambda,1- m^{-2}\cdot \frac{\tau_{\min}}{\tau}\cdot \Omega\inparen{\lambda}}$ order preserving between $\mu$ and $\wh{\mu}$.
        		\end{enumerate}
        	\end{lemma}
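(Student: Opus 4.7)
The plan is to prove \cref{lem:utility_order_preserve2} by a coupling argument. Fix $c, c' \in C$ in the same group with $(1-\lambda)\Exp_\mu[f_S(c',\succ)] \geq \Exp_\mu[f_S(c,\succ)] > 0$. Using the first property in \cref{lem:utility_order_preserve}, I first observe that this inequality forces $\omega_{c'} \geq \omega_c$ (otherwise monotonicity in $\omega$ would contradict $\lambda > 0$). I handle the case $c, c' \in G_1$; the $c, c' \in G_2$ case is analogous since the common factor $\theta$ scales both utilities and preserves intra-group comparisons. For the coupling, I sample one vector $\eta = (\eta_d)_{d\in C}$ of i.i.d.\ uniform random variables, form $\succ$ by sorting $\omega_d \eta_d$ in decreasing order, and form $\nsucc$ by sorting $\omega_d\eta_d$ for $d\in G_1$ and $\theta\omega_d\eta_d$ for $d\in G_2$. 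Under this coupling, the relative order of $G_1$ candidates is identical in $\succ$ and $\nsucc$; only the interleaving with $G_2$ candidates differs.

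Next, I write $\pos_\succ(c) - \pos_\nsucc(c) = N_c$ where $N_c \coloneqq |\{d \in G_2 : \omega_d\eta_d > \omega_c\eta_c \geq \theta\omega_d\eta_d\}|$, and define $N_{c'}$ analogously. Because $\omega_{c'} \geq \omega_c$, the variable $N_{c'}$ is stochastically dominated by $N_c$, so $c$'s position shifts upward by at least as much as $c'$'s in expectation. Similarly, positions of candidates in $S$ shift upward, and I condition on the positions of $S$ in $\succ$ to disentangle this dependence. Decomposing $\Exp_{\wh\mu}[f_S(c,\nsucc)] = \sum_{j,T} P_{j,T} \cdot \tau_{j,T}(f)$ over joint configurations of $(\pos(c), \pos(S))$, and comparing against the corresponding decomposition of $\Exp_\mu[f_S(c,\succ)]$, the change is captured in terms of expected position shifts weighted by differences of $\tau_{j,T}$ values.

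For Case 1 ($\sigma(f) > 0$), the function $j \mapsto \tau_{j,S}(f)$ is strictly convex with second-order differences at least $\sigma(f) \cdot (\tau_{j,S} - \tau_{j+2,S})$. Consequently, a one-position upward shift of $c$ (which starts at a lower position than $c'$) increases $\tau_{\pos(c),S}$ by a controlled amount more than the corresponding shift for $c'$. Combining this asymmetry with the hypothesis that $\Exp_\mu[f_S(c,\succ)]/\Exp_\mu[f_S(c',\succ)] \leq 1-\lambda$, I deduce that $\Exp_{\wh\mu}[f_S(c,\nsucc)]/\Exp_{\wh\mu}[f_S(c',\nsucc)]$ exceeds the latent ratio by at most $\sigma(f)\cdot m^{-1}\cdot \Omega(\lambda)$, giving the required $\gamma$. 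For Case 2 ($\sigma(f) = 0$), convexity vanishes and the bound relies on the boundary conditions: $\tau_{m,\emptyset} = 0$ implies that the marginal contribution at the worst position disappears, while $\tau_{\min} > 0$ supplies a definite positive value at position $m-1$; together these produce the weaker bound $\gamma = 1 - m^{-2}\cdot (\tau_{\min}/\tau)\cdot \Omega(\lambda)$, where the extra $m^{-1}$ factor reflects the loss of convexity. The main obstacle will be the precise joint-position decomposition in the second step: the positions of $S$ in $\succ$ and $\nsucc$ are correlated with those of $c, c'$ through the shared $G_2$ interleaving, and the domination sensitivity of $F$ (\cref{def:score}) must be invoked carefully to compare the relevant $\tau_{j,S}$ values across configurations without losing the $\sigma(f)/m$ scaling.
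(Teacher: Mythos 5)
Your coupling idea (same $\eta$ vector generating both $\succ$ and $\nsucc$, with $N_c := \pos_\succ(c) - \pos_\nsucc(c)$ counting the $G_2$ candidates that cross $c$) is a genuinely different mechanism from the paper's proof, which works by rewriting $\Ex_{\wh\mu}[f_S(c',\nsucc)]/\Ex_{\wh\mu}[f_S(c,\nsucc)]$ as a weighted average over the conditional events $\evE_\ell$ (defined by the order statistics of the utilities of $C\setminus\{c,c'\}$) and reading off the $\sigma_\ell(f)$ quantities from that expression. However, there is a gap in your proposal that prevents it from producing the quantitative $\gamma$ claimed.

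The hypothesis only gives you a multiplicative gap in the \emph{expected latent scores}: $(1-\lambda)\Ex_\mu[f_S(c',\succ)] \geq \Ex_\mu[f_S(c,\succ)]$. From this you deduce only the qualitative fact $\omega_{c'} \geq \omega_c$. But your coupling and stochastic-dominance argument lives entirely in the world of $\omega$ and $\eta$ — it compares $N_c$ and $N_{c'}$ based on their intrinsic utilities — so a merely qualitative $\omega_{c'} \geq \omega_c$ gives no control over \emph{how much} $N_{c'}$ is dominated by $N_c$, and hence no control over how much the biased ratio can shrink. The paper bridges this exact gap with \cref{lem:ratio_of_utils}, which uses Log-Lipschitzness of $\cdf_\eta$ to convert the score-ratio hypothesis into the quantitative statement $\omega_{c'}(1 - m^{-1}\Theta(\lambda)) > \omega_c$; this factor of $m^{-1}\Theta(\lambda)$ is precisely where the $m^{-1}$ in the first part and one of the $m^{-1}$'s in the second part of the lemma come from. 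Without this conversion step, the ``combining this asymmetry with the hypothesis'' phrase in your proposal does not have a mechanism behind it.

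Two further difficulties you should be aware of even after fixing the above. First, the direction of your convexity argument in Case~1 looks reversed: with $\sigma(f) > 0$ the map $\ell \mapsto \tau_{\ell,S}$ is convex and decreasing, so the per-step gain $\tau_{\ell,S} - \tau_{\ell+1,S}$ is \emph{smaller} at worse (higher-index) positions, where $c$ starts — so a one-step shift actually helps $c$ \emph{less} per step than it helps $c'$. The paper sidesteps this by never reasoning about per-step shifts; instead it writes the ratio of expected marginal contributions in closed form in terms of $\frac{\tau_{\ell,S}+\tau_{\ell+2,S}-2\tau_{\ell+1,S}}{\tau_{\ell,S}-\tau_{\ell+2,S}}$ and then bounds it by $\sigma(f)$ directly. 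Second, you name the position correlation between $S$, $c$, and $c'$ as the ``main obstacle'' but do not propose a resolution; the paper's resolution is to condition on the order statistics of the utilities of $C\setminus\{c,c'\}$ (not of $S$), treat $\tau_S(\ell)$ as a random variable that only depends on those order statistics, and observe that domination sensitivity gives the needed monotonicity $\tau_S(\ell) \geq \tau_S(\ell+1)$ for every realization of the conditioning — this is what lets the $\Pr[\evF\mid\evE_\ell] - \Pr[\lnot\evF\mid\evE_\ell] \geq \Omega(\rho)$ bound (which comes from the order-preservation property of $\eta$ in \cref{def:desired_properties_of_eta2}) be pulled outside the sum.
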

            In particular, this result implies the following: %
            \begin{enumerate}
                \item The SNTV rule is $\inparen{1-O(m^{-1/2}), 1-\Omega\inparen{m^{-3/2}}}$ order preserving between $\mu$ and $\wh{\mu}$;
                \item The $\ell_1$-CC rule is $\inparen{1-O(m^{-1/2}), 1-\Omega\inparen{m^{-3/2}}}$ order preserving between $\mu$ and $\wh{\mu}$; and
                \item The Borda rule is $\inparen{1-O(m^{-1/2}), 1-\Omega\inparen{m^{-3.5}}}$ order preserving between $\mu$ and $\wh{\mu}$.
            \end{enumerate}
            Recall that in the utility-based model (\cref{ex:utility_generative}), the variable $\eta$ is drawn from the uniform distribution on $[0,1]$.
            We will, in fact, prove a more general version of the above lemma that holds for any distribution of $\eta$ that satisfies certain properties (\cref{def:desired_properties_of_eta2}).
            With some abuse of notation, we use $\eta$ to denote both the distribution and a value drawn from the distribution $\eta$ (independent of all other randomness).

            We will bound the parameters $(\beta,\gamma)$ for any $\eta$ from \cref{def:utility_bias} that satisfies the following properties.
            \begin{definition}[\textbf{Properties of the utility distribution $\eta$ from \cref{def:utility_bias}}]\label{def:desired_properties_of_eta2}
                Let $\eta$ be the distribution on $\R_{\geq 0}$ from \cref{def:utility_bias} that parameterizes the generative model $\mu$.
                Let $\cdf_{\eta}\colon \R\to [0,1]$ be the cumulative distribution function of $\eta$.
                We define the following properties of $\eta$.
                \begin{itemize}
                    \item (Log-Lipshictzness) We say that $\cdf_{\eta}$ is Log-Lipschitz there exists a constant $\pi>0$ such that, 
                    for all $0 < x < y$, 
                        $\frac{\cdf_{\eta}(x)}{\cdf_{\eta}(y)}\geq 1-\frac{\pi x}{y}$; and
                    \item (Order preservation) We say $\eta$ is order-preserving if there exists a constant $\pi>0$ such that,  
                    for all $\eps\in [0,1]$ and $t\geq 0$, if $\Pr_{X,Y\sim \eta}\insquare{X>Y(1-\eps)\mid X,Y\geq t} \geq \frac{1}{2}\cdot \inparen{1+\frac{\eps}{\pi}}$.
                \end{itemize}
            \end{definition}
            \noindent Several distributions on $\R_{\geq 0}$ including the uniform distribution on $[0,1]$ and the exponential distributions satisfy the properties in \cref{def:desired_properties_of_eta2}.
            Roughly, log-lipshitzness requires that the $\log\inparen{\cdf_\eta(\cdot)}$ is Lipshictz.
            It guarantees that if $x$ and $y$ are multiplicatively close to each other, then $\cdf_\eta(x)$ and $\cdf_\eta(y)$ are multiplicatively close to each other.
            Order preservation guarantees that if $\omega_c > (1+\eps)\cdot\omega_{c'}$ (for any $c,c'\in C$ and $\eps\in [0,1]$), then $\Pr\left[w_{v,c} > w_{v,c'}\right] > 0.5\cdot (1+\Omega(\eps))$.
            Both of these guarantees are required to establish the multiplicative guarantees in $(\beta,\gamma)$ order preservation.

            To prove that $F$ is $(\beta,\gamma)$ order preserving between $\mu$ and $\wh{\mu}$, we need to show that the following implication holds
                \[
                    \beta\cdot \Exp_{\mu}\left[f_S(c',\succ)\right] >  \Exp_{\mu}\left[f_S(c,\succ) \right] > 0.
                    \implies 
                    \gamma\cdot \Exp_{\widehat{\mu}}\left[f_S(c', \nsucc)\right] \geq \Exp_{\widehat{\mu}}\left[f_S(c, \nsucc)\right].
                    \yesnum\label{eq:precondition_order}
                \]
            We divide the proof of \cref{lem:utility_order_preserve2} into two parts, corresponding to the two conditions in \cref{lem:utility_order_preserve2}.
            Both parts rely on the following lemma, whose proof appears later.
            \begin{lemma}\label{lem:ratio_of_utils}
                Let $F = \sum_{v\in V} f(\cdot ,\succ_v)$ be a multiwinner score function. %
        		Let $(\mu,\wh{\mu})$ be a utility-based generative model defined in \cref{ex:utility_generative,def:utility_bias}. 
                For any $\beta\in [1-m^{-1/2},1]$, any candidates $c,c'\in C$ in the same group ($G_1$ or $G_2$), and any set $S\subseteq C\setminus\inbrace{c,c'}$, the following implication holds
                \[
                    \beta\cdot \Ex_\mu\insquare{f_S(c',\succ)} \geq  \Ex_\mu\insquare{f_S(c,\succ)} 
                    \implies 
                    \omega_{c'} \inparen{1-m^{-1}\cdot \Theta\inparen{1-\beta}} > \omega_c.
                \]
            \end{lemma}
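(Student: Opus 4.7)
The plan is to prove the contrapositive. I will assume the conclusion fails, i.e.\ $\omega_c \geq \omega_{c'}(1-\delta_0)$ where $\delta_0 \coloneqq c_1(1-\beta)/m$ for a small constant $c_1>0$ to be fixed, and derive the contradiction $\Ex_\mu\insquare{f_S(c,\succ)} > \beta\cdot \Ex_\mu\insquare{f_S(c',\succ)}$ to the hypothesis. The analysis will split based on whether $\omega_c \geq \omega_{c'}$ or $\omega_{c'}(1-\delta_0)\leq \omega_c < \omega_{c'}$.

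In the first (easy) case, I will apply \cref{claim:order_preserv_utility}, which relies on $c,c'$ being in the same group, to conclude $\Ex_\mu\insquare{f_S(c,\succ)} \geq \Ex_\mu\insquare{f_S(c',\succ)}$. The hypothesis furnishes $\Ex_\mu\insquare{f_S(c,\succ)} > 0$ and, in the regime of interest, $\beta<1$, so the desired strict inequality $\Ex_\mu\insquare{f_S(c,\succ)} > \beta\cdot\Ex_\mu\insquare{f_S(c',\succ)}$ follows directly.

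The second case is substantive: the intrinsic utilities are close but in the ``wrong'' order. My aim here is a multiplicative perturbation bound of the form $\Ex_\mu\insquare{f_S(c',\succ)}/\Ex_\mu\insquare{f_S(c,\succ)} \leq 1 + C\cdot m\cdot \delta_0$ for some absolute constant $C$, since with $c_1$ small enough the right-hand side is strictly less than $1/\beta \approx 1+(1-\beta)$, which would contradict the hypothesis. To establish this I plan to adapt the conditioning argument used in the proof of \cref{claim:order_preserv_utility}: fix the utility realizations $w_{v,d}$ for all $d\notin \inbrace{c,c'}$, and condition on the unordered pair of positions $\inbrace{j,k}$ in $\succ$ that $\inbrace{c,c'}$ jointly occupy, which is determined by where $\eta_c\omega_c$ and $\eta_{c'}\omega_{c'}$ fall in the intervals induced by the fixed realizations. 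On each such conditioning event $f_S(c,\succ)$ depends only on which of the two slots $c$ takes, so $\Ex_\mu\insquare{f_S(c',\succ)}/\Ex_\mu\insquare{f_S(c,\succ)}$ reduces to a ratio of conditional probabilities that $c'$ versus $c$ occupies the higher slot. The Log-Lipschitz property of $\cdf_\eta$ from \cref{def:desired_properties_of_eta2} will then provide a per-event multiplicative bound, and the factor $m$ in the final bound arises from aggregating these per-event bounds across positions, which is the combinatorial source of the $m^{-1}$ scaling in the conclusion.

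The principal obstacle I foresee is lifting per-event multiplicative closeness into a multiplicative ratio bound on the expectations themselves, because different position patterns can contribute very differently to $\Ex_\mu\insquare{f_S(c,\succ)}$. I plan to address this by (i) establishing the per-event ratio bound uniformly in $\inbrace{j,k}$ via Log-Lipschitzness, and (ii) invoking the elementary fact that a convex combination of nonnegative quantities, each sandwiched by $(1\pm O(\delta_0))$-multiples of corresponding positive base quantities, is itself sandwiched by the same multiples of the combined base. The nonnegativity needed for step (ii) is supplied by domination sensitivity in \cref{def:score}.
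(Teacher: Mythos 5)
Your high-level strategy (prove the contrapositive and split into the trivial case $\omega_c\geq\omega_{c'}$ and the substantive near-equal case) matches the paper's: the paper also argues via the contrapositive and establishes the two-sided statement $\frac{\omega_{d'}}{\omega_d}\in 1\pm m^{-1}\Theta(1-\beta)\implies \frac{\Ex_\mu[f_S(d',\succ)]}{\Ex_\mu[f_S(d,\succ)]}\in 1\pm\Theta(1-\beta)$. Your Case~1 is correct via \cref{claim:order_preserv_utility}.

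There is, however, a genuine gap in Case~2. You claim that Log-Lipschitzness gives a per-event multiplicative bound on the conditional probability that $c'$ versus $c$ occupies the better slot of the pair $\{j,k\}$. This is false. Conditioned on the other $m-2$ utility realizations and on the pair $\{j,k\}$, that conditional probability is a ratio built from \emph{interval-inclusion} probabilities, i.e., \emph{differences} of CDF values such as $\cdf_\eta(v/\omega_c)-\cdf_\eta(u/\omega_c)$. Log-Lipschitzness controls single-point ratios $\cdf_\eta(x)/\cdf_\eta(y)$ for multiplicatively close $x,y$; it does not control ratios of such differences, which can be unbounded. Concretely, for uniform $\eta$ on $[0,1]$, whenever an induced boundary $w_{(\ell)}$ falls strictly between $\omega_c$ and $\omega_{c'}$ one gets $\Pr[\eta_c\omega_c\in I_\ell]=0$ while $\Pr[\eta_{c'}\omega_{c'}\in I_\ell]>0$, so the per-event conditional probability is $1$ rather than $\frac{1}{2}\pm O(\delta_0)$: the bias is maximal, not small. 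Such straddling realizations occur with aggregate probability on the order of $m\delta_0$ --- which is exactly the size of the error you need to control --- so the uniform per-event bound in your step~(i) does not hold, and the sandwich argument in step~(ii) has nothing correct to sandwich. The paper sidesteps this by expanding $\Ex_\mu[f_S(d,\succ)]$ via the position marginals $\Pr_\mu[\pos_\succ(d)=\ell]$, \emph{coupling} $\eta_d=\eta_{d'}$, and bounding each of the $\Theta(m)$ single-point CDF-ratio factors $\cdf_\eta(\omega_{d'}\eta/\omega_i)/\cdf_\eta(\omega_d\eta/\omega_i)$ via Log-Lipschitzness. Note also that the $m^{-1}$ in the conclusion arises there from \emph{multiplying} $\Theta(m)$ per-factor bounds of magnitude $1\pm m^{-1}\Theta(1-\beta)$, not from additively aggregating per-event bounds across positions as you describe; your stated source of the $m$-scaling is incorrect, and making your route work would require both correcting this accounting and separately bounding the contribution of the straddling events.
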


            \paragraph{Part 1 ($(\beta,\gamma)$ order preservation with $\sigma(f)>0$):}
                Suppose $\sigma(f)>0$.
                The following is the main lemma in this part.
                \begin{lemma}\label{lem:part_1_order}
                    Let $F = \sum_{v\in V} f(\cdot ,\succ_v)$ be a multiwinner score function. %
            		Let $(\mu,\wh{\mu})$ be a utility-based generative model defined in \cref{ex:utility_generative,def:utility_bias}. 
                    Suppose candidates $c,c'\in C$ are in the same group ($G_1$ or $G_2$) and set $S\subseteq C\setminus\inbrace{c,c'}$.
                    If there exists a $\rho>0$ such that $\omega_{c'}(1-\rho)\geq \omega_c$ and $\Ex_\mu\insquare{f_S(c',\succ)},\Ex_\mu\insquare{f_S(c,\succ)}>0$, then there is a constant $\eps>0$ such that
                    \[
                        \frac{\Ex_{\wh{\mu}}\insquare{f_S(c',\nsucc)}}{\Ex_{\wh{\mu}}\insquare{f_S(c,\nsucc)}} \geq \frac{1+\eps\rho\sigma(f)}{1-\eps\rho\sigma(f)}.
                    \]
                \end{lemma}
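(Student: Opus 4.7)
The plan is to condition on the position configuration of $C \setminus \{c,c'\}$ in the biased preference list and then carry out a pairwise swap argument between $c$ and $c'$.

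\textbf{Step 1 (Conditioning and decomposition).} For each draw $\nsucc \sim \wh{\mu}$, let $\omega$ encode the positions of all candidates in $C \setminus \{c,c'\}$ in $\nsucc$. Given $\omega$, the two positions left for $\{c,c'\}$ are $\ell_1(\omega) < \ell_2(\omega)$. Write $\tau_1^\omega \coloneqq \tau_{\ell_1(\omega),\,\pos_\nsucc(S)\mid \omega}(f)$, $\tau_2^\omega \coloneqq \tau_{\ell_2(\omega),\,\pos_\nsucc(S)\mid \omega}(f)$, and let $p_\omega \coloneqq \Pr[\pos_\nsucc(c') = \ell_1(\omega) \mid \omega]$. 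By domination sensitivity (\cref{def:score}), $\tau_1^\omega \geq \tau_2^\omega \geq 0$. I can then express
\[
\Ex_{\wh{\mu}}\bigl[f_S(c', \nsucc)\bigr] = \Ex_\omega\bigl[p_\omega \tau_1^\omega + (1-p_\omega)\tau_2^\omega\bigr], \quad
\Ex_{\wh{\mu}}\bigl[f_S(c, \nsucc)\bigr] = \Ex_\omega\bigl[(1-p_\omega)\tau_1^\omega + p_\omega \tau_2^\omega\bigr].
\]

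\textbf{Step 2 (Probability of the better position).} I would show $p_\omega \geq \tfrac{1}{2} + \Omega(\rho)$ pointwise in $\omega$. Since $c, c'$ lie in the same group, $\wh{w}_{v,c}$ and $\wh{w}_{v,c'}$ equal $\eta_c \omega_c$ and $\eta_{c'}\omega_{c'}$ up to a common scaling factor; hence, conditional on $\omega$, $c'$ takes the better position $\ell_1(\omega)$ iff $\eta_{c'}\omega_{c'} > \eta_c\omega_c$, subject to both lying in an $\omega$-determined interval. Applying the order-preservation property of $\eta$ (\cref{def:desired_properties_of_eta2}) together with the hypothesis $\omega_{c'}(1-\rho) \geq \omega_c$ (equivalently $\omega_{c'}/\omega_c \geq 1 + \Omega(\rho)$) yields the desired bound.

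\textbf{Step 3 (Translating to a ratio bound via $\sigma(f)$).} Setting $\eps_\omega \coloneqq p_\omega - \tfrac{1}{2}$ and $\delta_\omega \coloneqq (\tau_1^\omega - \tau_2^\omega)/(\tau_1^\omega + \tau_2^\omega) \in [0,1]$, the conditional ratio equals $(1 + 2\eps_\omega \delta_\omega)/(1 - 2\eps_\omega \delta_\omega)$. I would then use $\sigma(f) > 0$, which encodes strict convexity of $j \mapsto \tau_{j,\cdot}(f)$, to derive a uniform lower bound $\delta_\omega \geq \Omega(\sigma(f))$ on configurations $\omega$ with $\tau_1^\omega > 0$ (degenerate configurations with $\tau_1^\omega = \tau_2^\omega = 0$ contribute zero to both numerator and denominator, and the hypothesis $\Ex_\mu[f_S(c,\succ)] > 0$ ensures positive mass on the non-degenerate configurations). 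Combined with $\eps_\omega = \Omega(\rho)$ this gives a conditional ratio at least $(1 + \eps\rho\sigma(f))/(1 - \eps\rho\sigma(f))$ for a constant $\eps > 0$.

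\textbf{Step 4 (Lifting to the unconditional ratio).} The function $(A,B) \mapsto A/B$ on pairs of non-negative reals is mediant-stable: if $A_\omega/B_\omega \geq (1+x)/(1-x)$ pointwise and $B_\omega \geq 0$, then $\Ex_\omega[A_\omega]/\Ex_\omega[B_\omega] \geq (1+x)/(1-x)$. Applying this to the decomposition in Step~1 completes the proof.

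\textbf{Main obstacle.} The hard part is Step~3: $\sigma(f)$ only directly constrains second differences on consecutive triples $(\tau_\ell, \tau_{\ell+1}, \tau_{\ell+2})$, whereas the gap $\ell_2(\omega) - \ell_1(\omega)$ can be arbitrarily large and $\pos_\nsucc(S)$ varies with $\omega$, so extracting $\delta_\omega \gtrsim \sigma(f)$ uniformly requires carefully chaining convexity bounds across many triples while tracking the normalization. Handling the $\tfrac{0}{0} = 1$ convention at boundary configurations (e.g., when both $\tau_1^\omega$ and $\tau_2^\omega$ vanish) also demands a separate argument isolating those configurations from the averaging in Step~4.
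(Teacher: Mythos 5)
Your high-level strategy — condition on the placement of the remaining candidates, show $c'$ wins the better slot with probability $\tfrac12+\Omega(\rho)$ via the order-preservation of $\eta$, and propagate a pointwise ratio bound through a mediant argument — is sound and parallels the paper's. However, you correctly identify that Step~3 is the crux, and as written the proposal leaves it open. Two things are worth saying.

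First, the gap is fillable, and more cleanly than you fear. Let $d_j \coloneqq \tau_{j,T}(f)-\tau_{j+1,T}(f)$ (for the relevant $T = \pos_\nsucc(S)\mid\omega$). The condition $\sigma_j \geq \sigma(f)$ rewrites as $\tfrac{d_j-d_{j+1}}{d_j+d_{j+1}}\geq\sigma(f)$, i.e. $d_{j+1}\leq d_j\cdot\tfrac{1-\sigma(f)}{1+\sigma(f)}$, so the increments decay geometrically. Summing the tail (using $\tau_{m+1,T}=0$) gives $\tau_{j,T}\leq d_j\cdot\tfrac{1+\sigma(f)}{2\sigma(f)}$ and $\tau_{j+1,T}\leq d_j\cdot\tfrac{1-\sigma(f)}{2\sigma(f)}$, whence $\tau_{j,T}+\tau_{j+1,T}\leq d_j/\sigma(f)$ and thus $\delta_{j,j+1}=d_j/(\tau_{j,T}+\tau_{j+1,T})\geq\sigma(f)$. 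For non-consecutive $\ell_1<\ell_2$, since $\tfrac{a-b}{a+b}$ is decreasing in $b$ and $\tau_{\ell_2,T}\leq\tau_{\ell_1+1,T}$, one gets $\delta_\omega\geq\delta_{\ell_1,\ell_1+1}\geq\sigma(f)$. So the chaining you worried about collapses to a single geometric-sum bound; no per-triple bookkeeping is needed, and the degenerate $0/0$ configurations contribute zero to both numerator and denominator as you note.

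Second, the paper's route differs in precisely the place where you hit trouble: rather than conditioning on the exact two free slots $(\ell_1,\ell_2)$ (which can be non-adjacent), it conditions on the one-sided cumulative events $\evE_\ell = \{w_{v,c'}\geq w_{(\ell)}\}$ and performs what is in effect an Abel summation. This expresses the ratio of expectations as a ratio of sums of the form $\sum_\ell\Pr[\evE_\ell](\tau_\ell-\tau_{\ell+2})\bigl(1\pm 2q\,\sigma_\ell(f)\bigr)$, so the consecutive second-difference ratios $\sigma_\ell(f)$ appear directly and the final bound follows by the same mediant inequality you use in Step~4, with $\sigma(f)=\min_\ell\sigma_\ell(f)$. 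Your decomposition is a genuinely different (and more literal) slicing; the tradeoff is that it pushes the convexity work into a lemma like the one above instead of having the algebra hand it to you. Since you flagged but did not execute Step~3, the proposal as submitted is incomplete; with the argument above spliced in, it becomes a valid alternative proof.
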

                $(\beta,\gamma)$ order preservation follows from \cref{lem:ratio_of_utils,lem:part_1_order}.
                Concretely, the proof is as follows.
        
                \smallskip
        
                \noindent\textit{Proof of $(\beta,\gamma)$ order preservation assuming \cref{lem:ratio_of_utils,lem:part_1_order}.} 
                    From the LHS in \cref{eq:precondition_order} and \cref{lem:ratio_of_utils}, it follows that $\omega_{c'} \inparen{1-m^{-1}\cdot \Theta\inparen{1-\beta}} > \omega_c$.
                    Hence, \cref{lem:part_1_order} applicable with $\rho=m^{-1}\cdot \Theta\inparen{1-\beta}$.
                    From \cref{lem:part_1_order}, it follows that the RHS of \cref{eq:precondition_order} holds with $\gamma = 1-\Theta(m^{-1}\cdot \Theta\inparen{1-\beta}\cdot \sigma(f))$

                In the remainder of this section, we prove \cref{lem:part_1_order}.
                
                \begin{proof}[Proof of \cref{lem:part_1_order}]
                    Fix any voter $v$ and consider $\succ_v\coloneqq \succ$ where $\succ\ \sim \mu$.
                    Define $w_{(i)}$ as the $i$-th largest order statistic among the utilities of candidates in $C\setminus\inbrace{c,c'}$ for each $i\in [m-1]$.
                    In other words, $w_{(i)}$ is the $i$-th largest value in the set 
                    \[
                        \cW\coloneqq \inbrace{ w_{v,d} \mid d\in C\setminus\inbrace{c,c'}}.
                    \]
                    By this definition, we have the inequalities
                    \[
                        w_{(1)} \geq w_{(2)} \geq \dots \geq w_{(m)}.
                    \]
                    Define the interval $I_\ell\coloneqq \left[w_{(\ell)}, \infty\right)$ for each $\ell\in [m-2]$.
                    Define $\evE_{\ell}$, for each $\ell\in [m-2]$, as the event that 
                    \[
                        w_{v,c'} \in I_\ell. %
                        \tag{Event $\evE_{\ell}$}
                    \]
                    Define $\evF$ as the event that 
                    \[
                        w_{v,c'}>w_{v,c}. \tag{Event $\evF$}
                    \] 
                    Note that $\evF$ is equivalent to the event $\pos_{\succ}(c') < \pos_{\succ}(c)$.
                    For each $j\in [m]$, let $i_{\succ,j}$ be the $j$-th candidate in $\succ$.
                    Finally, for each $\ell\in [m]$, define the random variable 
                    \[
                        \tau_S(\ell) \coloneqq f_S(i_{\succ,j},\succ).
                    \]
                    The randomness in $\tau_S(\ell)$ due to the randomness in $\pos(S)$.
                    Conditioned on any value of $\pos(S)$, due to domination sensitivity (\cref{def:score}), it holds that 
                    \[
                        \forall 1\leq \ell < k \leq m,\quad \tau_S(\ell) \geq \tau_S(k).
                        \yesnum\label{eq:inequality3}
                    \]
                    Using this notation, we can express $\Exp_{\mu}\left[f_S(c,\succ) \right]$ and $\Exp_{\mu}\left[f_S(c',\succ) \right]$ as follows 
                    \begin{align*}
                        &\frac{\Exp_{\mu}\left[f_S(c',\succ) \right]}{\Exp_{\mu}\left[f_S(c,\succ) \right]}\\
                        = &  \frac{
                                \Ex_{\cW}\insquare{\sum_{\ell\in [m-2]} \Pr\insquare{\evE_\ell}
                                \cdot \inparen{  \Pr\insquare{ \evF \mid \evE_\ell}
                                            \cdot \inparen{\tau_{S}(\ell)- \tau_{S}(\ell+1)}
                                            + \Pr\insquare{\lnot \evF \mid \evE_\ell}
                                            \cdot \inparen{\tau_{S}(\ell+1)- \tau_{S}(\ell+2)}
                                    }   
                                }  
                            }{
                                \Ex_{\cW}\insquare{\sum_{\ell\in [m-2]} \Pr\insquare{\evE_\ell}
                                \cdot \inparen{  \Pr\insquare{ \evF \mid \evE_\ell}
                                            \cdot \inparen{\tau_{S}(\ell+1)- \tau_{S}(\ell+2)}
                                            + \Pr\insquare{\lnot \evF \mid \evE_\ell}
                                            \cdot \inparen{\tau_{S}(\ell)- \tau_{S}(\ell+1)}
                                    }   
                                } 
                            }\\  
                        = & \frac{
                                \Ex_{\cW}\insquare{\sum_{\ell\in [m-2]} \Pr\insquare{\evE_\ell} \cdot 
                                \inparen{\tau_{S}(\ell)- \tau_{S}(\ell+2)}
                                \cdot \inparen{  
                                            1+\eps\rho \cdot {  \frac{\tau_{\ell, S}(f)+\tau_{\ell+2,S}(f)-2\tau_{\ell+1,S}(f)}{\tau_{\ell, S}(f) - \tau_{\ell+2,S}(f)}}
                                    }   
                                }  
                            }{
                                \Ex_{\cW}\insquare{\sum_{\ell\in [m-2]} \Pr\insquare{\evE_\ell} \cdot 
                                \inparen{\tau_{S}(\ell)- \tau_{S}(\ell+2)}
                                \cdot \inparen{  
                                            1-\eps\rho \cdot {  \frac{\tau_{\ell, S}(f)+\tau_{\ell+2,S}(f)-2\tau_{\ell+1,S}(f)}{\tau_{\ell, S}(f) - \tau_{\ell+2,S}(f)}}
                                    }   
                                }  
                            }\yesnum\label{eq:expression_complicated}\\ 
                        = &   \min_{T\subseteq C\colon \abs{T}=\abs{S}}\min_{\ell\in [m]}
                            \frac{
                                1+\eps\rho \cdot {  \frac{\tau_{\ell, S}(f)+\tau_{\ell+2,S}(f)-2\tau_{\ell+1,S}(f)}{\tau_{\ell, S}(f) - \tau_{\ell+2,S}(f)}}
                            }{
                                1-\eps\rho \cdot {  \frac{\tau_{\ell, S}(f)+\tau_{\ell+2,S}(f)-2\tau_{\ell+1,S}(f)}{\tau_{\ell, S}(f) - \tau_{\ell+2,S}(f)}}
                            }\tag{Using that $\tau_{S}(\ell)- \tau_{S}(\ell+2) \geq 0$ for all $\ell$}\\ 
                        \geq &   \frac{1+\eps\rho\sigma(f)}{1-\eps\rho\sigma(f)}. \tag{Using \cref{def:sigma_f} and that $\eps,\rho,\sigma(f)\geq 0$}
                    \end{align*}
                \end{proof}

            \paragraph{Part 2 ($(\beta,\gamma)$ order preservation with $\sigma(f)=0$):}
                Suppose $\sigma(f)=0$ and $\min_{S\subseteq C}\tau_{m-1,S}>0$.
                The following is the main lemma in this part.
                \begin{lemma}\label{lem:part_2_order}
                    Let $F = \sum_{v\in V} f(\cdot ,\succ_v)$ be a multiwinner score function. %
            		Let $(\mu,\wh{\mu})$ be a utility-based generative model defined in \cref{ex:utility_generative,def:utility_bias}. 
                    Suppose candidates $c,c'\in C$ are in the same group ($G_1$ or $G_2$) and set $S\subseteq C\setminus\inbrace{c,c'}$.
                    If there exists a $\rho>0$ such that $\omega_{c'}(1-\rho)\geq \omega_c$ and $\Ex_\mu\insquare{f_S(c',\succ)},\Ex_\mu\insquare{f_S(c,\succ)}>0$, then %
                    \[
                        \frac{\Ex_{\wh{\mu}}\insquare{f_S(c',\nsucc)}}{\Ex_{\wh{\mu}}\insquare{f_S(c,\nsucc)}} \geq 1+\rho\eps \cdot m^{-1}\cdot \frac{\tau_{\min}}{\tau}\cdot \Omega(1).
                    \]
                \end{lemma}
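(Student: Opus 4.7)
The plan is to mirror the argument of \cref{lem:part_1_order} but carefully track the contribution from the ``boundary'' position $\ell=m-1$, which is dropped by Part~1's minimum-over-$\ell$ step yet remains nonzero whenever $\tau_{\min}>0$ and $\tau_{m,\emptyset}=0$. First I would reuse the conditioning of \cref{lem:part_1_order}: condition on the biased utilities $\wh\cW=\inbrace{\wh w_{v,d}\colon d\in C\setminus\inbrace{c,c'}}$, let $w_{(1)}\geq\cdots\geq w_{(m-2)}$ be their order statistics, let $\evE_\ell$ locate $c'$ relative to $\wh\cW$, and let $\evF$ be the event $w_{v,c'}>w_{v,c}$. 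Because $c,c'$ share the same bias factor, $\evF$ reduces to the comparison $\eta_{v,c'}>\eta_{v,c}\cdot\omega_c/\omega_{c'}$ on the independent $\eta$ variables, so the order-preservation property of $\eta$ (\cref{def:desired_properties_of_eta2}) yields $\Pr[\evF\mid\evE_\ell]-\Pr[\lnot\evF\mid\evE_\ell]\geq\Omega(\rho)$ uniformly, since $\eta_{v,c}$ remains independent of the restriction that $\evE_\ell$ imposes on $w_{v,c'}$.

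Next, following the algebraic manipulations leading to \cref{eq:expression_complicated}, I would obtain an analogous formula for the biased marginals with coefficients $A_\ell\coloneqq(\tau_{\ell,S}+\tau_{\ell+2,S}-2\tau_{\ell+1,S})/(\tau_{\ell,S}-\tau_{\ell+2,S})\geq 0$, where non-negativity follows from submodularity of $f$. When $\sigma(f)=0$, the minimum-over-$\ell$ step from Part~1 is vacuous, so instead I would decompose the sum term by term and isolate the $\ell=m-1$ contribution. Under the hypotheses $\tau_{m,\emptyset}=0$ together with monotonicity and submodularity, $\tau_{m,S}=0$ for every $S$; combined with $\tau_{m+1,S}=0$ by convention, this gives $A_{m-1}=1$ and boundary weight $\tau_{m-1,S}\geq\tau_{\min}$. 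Combined with non-negativity of the remaining $A_\ell$ (which inflates numerator and denominator symmetrically but never the wrong way), the numerator of the ratio exceeds the denominator by at least $\Omega(\rho)\cdot\Ex_{\wh\cW}[\Pr[\evE_{m-1}]]\cdot\tau_{\min}$.

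It then remains to establish $\Ex_{\wh\cW}[\Pr[\evE_{m-1}]]=\Omega(1/m)$ and to normalize by the denominator. For the former, since the events partition the conditional sample space with $\sum_\ell\Pr[\evE_\ell]=1$, I would use the log-Lipschitz property of $\cdf_\eta$ (\cref{def:desired_properties_of_eta2}) to bound the ratios $\Pr[\evE_\ell]/\Pr[\evE_{\ell+1}]$ uniformly by a constant, which forces $\Pr[\evE_{m-1}]\geq\Omega(1/m)$. Finally, the denominator is upper bounded by $\Ex_{\wh\mu}[f_S(c,\nsucc)]\leq\tau_1(f)=\tau$, delivering the advertised multiplicative improvement
\[
\frac{\Ex_{\wh\mu}[f_S(c',\nsucc)]}{\Ex_{\wh\mu}[f_S(c,\nsucc)]}\geq 1+\rho\eps\cdot m^{-1}\cdot\frac{\tau_{\min}}{\tau}\cdot\Omega(1)
\]
for some positive constant $\eps$ depending only on $\eta$.

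The main obstacle will be the uniform $\Omega(1/m)$ lower bound on $\Ex_{\wh\cW}[\Pr[\evE_{m-1}]]$ across all configurations of the other candidates' $\omega$-values: when some $\omega_d$ are very large, $c'$ rarely falls near the bottom of its local ranking, so the log-Lipschitz hypothesis on $\cdf_\eta$ is the essential technical lever that prevents the probability mass from concentrating entirely on small $\ell$. A secondary subtlety is checking that the conditional order-preservation bound $\Pr[\evF\mid\evE_\ell]-\Pr[\lnot\evF\mid\evE_\ell]\geq\Omega(\rho)$ survives restricting $w_{v,c'}$ to a specific interval $I_\ell$; here the independence of $\eta_{v,c}$ from that restriction keeps the one-sided order-preservation guarantee of \cref{def:desired_properties_of_eta2} directly applicable.
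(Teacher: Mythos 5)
You correctly identify the boundary observation that drives the lemma — from $\tau_{m,\emptyset}=0$, monotonicity and submodularity give $\tau_{m,S}=0$ for all $S$, so with $\tau_{m+1,S}=0$ by convention the coefficient at $\ell=m-1$ equals $1$, and $\tau_{m-1,S}\geq\tau_{\min}$ supplies a nonvanishing boundary contribution even when $\sigma(f)=0$. That part matches the paper.

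The gap is in where the $1/m$ factor comes from, and it is a real one. You treat the events $\evE_\ell$ as a partition (``$\sum_\ell\Pr[\evE_\ell]=1$'') and are therefore forced to prove $\Ex_{\cW}[\Pr[\evE_{m-1}]]\geq\Omega(1/m)$, which you propose to obtain from log-Lipschitzness of $\cdf_\eta$. That lower bound is false in general. Take $\omega_{c'}$ much larger than every other $\omega_d$: then (conditioned on $\cW$, with $\eta$ uniform on $[0,1]$) the probability that $c'$ is ranked last in $C\setminus\{c\}$ integrates to roughly $\tfrac{1}{\omega_{c'}(m-1)}$, which is $\ll 1/m$ when $\omega_{c'}\gg 1$. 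Moreover, even granting a constant bound $C$ on $\Pr[\evE_\ell]/\Pr[\evE_{\ell+1}]$, the conclusion from $\sum_\ell\Pr[\evE_\ell]=1$ would be $\Pr[\evE_{m-1}]\geq\tfrac{C-1}{C^{m-1}-1}$, which is exponentially small in $m$ unless $C=1+O(1/m)$ — and log-Lipschitzness gives no such $m$-dependent control. So the argument as stated does not close.

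The paper avoids this entirely: in the setup inherited from \cref{lem:part_1_order}, the intervals are $I_\ell=[w_{(\ell)},\infty)$, so the events $\evE_\ell$ are \emph{nested}, not disjoint — $\Pr[\evE_\ell]$ is the cumulative probability that $c'$ is in the top $\ell$ of its local ranking, and the Abel-summation form of the expected marginal score is exactly what makes these cumulative probabilities the right coefficients. In particular $\Pr[\evE_{m-1}]=1$ always, with no estimate needed. The factor $m^{-1}$ instead emerges by upper-bounding the full sum $\sum_{\ell\in[m-2]}\Pr[\evE_\ell]\inparen{\tau_S(\ell)-\tau_S(\ell+2)}$ by $(m-2)\tau$ in both numerator and denominator and observing that the isolated boundary term $(1\pm\eps\rho)\tau_{\min}$ then contributes a relative correction of order $\tfrac{\tau_{\min}}{(m-2)\tau}$. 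Rewriting your argument with the nested events would make it match the paper and remove the need for the problematic $\Omega(1/m)$ claim.

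A secondary caveat: your appeal to the order-preservation property to get $\Pr[\evF\mid\evE_\ell]-\Pr[\lnot\evF\mid\evE_\ell]\geq\Omega(\rho)$ should be stated more carefully, since the hypothesis in \cref{def:desired_properties_of_eta2} conditions jointly on $X,Y\geq t$, whereas $\evE_\ell$ restricts only $w_{v,c'}$. You gesture at this with the independence remark, but the conditional event is a one-sided restriction and the proof should verify the monotone-coupling argument that the one-sided conditioning can only help; the paper implicitly uses this same bound in deriving \cref{eq:expression_complicated}.
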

                $(\beta,\gamma)$ order preservation follows from \cref{lem:ratio_of_utils,lem:part_2_order}.
                Concretely, the proof is as follows.

                \smallskip 
                
                \noindent\textit{Proof of $(\beta,\gamma)$ order preservation assuming \cref{lem:ratio_of_utils,lem:part_1_order}.} 
                    From the LHS in \cref{eq:precondition_order} and \cref{lem:ratio_of_utils}, it follows that $\omega_{c'} \inparen{1-m^{-1}\cdot \Theta\inparen{1-\beta}} > \omega_c$.
                    Hence, \cref{lem:part_2_order} applicable with $\rho=m^{-1}\cdot \Theta\inparen{1-\beta}$.
                    From \cref{lem:part_2_order}, it follows that the RHS of \cref{eq:precondition_order} holds with $\gamma = 1+ m^{-2}\cdot \frac{\tau_{\min}}{\tau}\cdot \Theta(1-\beta)$.
                
                In the remainder of this section, we prove \cref{lem:part_2_order}.
        
                \begin{proof}[Proof of \cref{lem:part_2_order}]
                    We borrow notation from \cref{lem:part_1_order}.
                    In addition, for each $\ell\in [m-1]$, define
                    \[
                        \sigma_\ell(f) \coloneqq \frac{\tau_{\ell, S}(f)+\tau_{\ell+2,S}(f)-2\tau_{\ell+1,S}(f)}{\tau_{\ell, S}(f) - \tau_{\ell+2,S}(f)}.
                        \yesnum\label{def:sigma_ell}
                    \]
                    Note that $\sigma(f)=\min_{\ell\in [m-1]} \sigma_\ell(f)$.
                    Since $\sigma(f)=0$, it follows that for each $\ell\in [m-1]$,
                    \[
                        \sigma_\ell(f)\geq 0.
                        \yesnum\label{eq:nonnegative_sigma}
                    \]
                    Moreover, since $\tau_{m-1,S}\geq \tau_{\min}>0$ (for all $S\subseteq C\setminus i_{m-1}$) and $\tau_{m,S}=\tau_{m+1,S}=0$  (for all $S\subseteq C$) it follows that 
                    \[
                        \sigma_{m-1}(f)=1.\yesnum\label{eq:one_sigma}
                    \]
                    Now we are ready to lower bound $\frac{\Exp_{\mu}\left[f_S(c',\succ) \right]}{\Exp_{\mu}\left[f_S(c,\succ) \right]}$:
                    \cref{eq:expression_complicated} shows that 
                    \begin{align*}
                        &\frac{\Exp_{\mu}\left[f_S(c',\succ) \right]}{\Exp_{\mu}\left[f_S(c,\succ) \right]}\\
                        &\quad \geq \ \ \frac{
                                \Ex_{\cW}\insquare{\sum_{\ell\in [m-1]} \Pr\insquare{\evE_\ell} \cdot 
                                \inparen{\tau_{S}(\ell)- \tau_{S}(\ell+2)}
                                \cdot \inparen{  
                                            1+\eps\rho \cdot {  \frac{\tau_{\ell, S}(f)+\tau_{\ell+2,S}(f)-2\tau_{\ell+1,S}(f)}{\tau_{\ell, S}(f) - \tau_{\ell+2,S}(f)}}
                                    }   
                                }  
                            }{
                                \Ex_{\cW}\insquare{\sum_{\ell\in [m-1]} \Pr\insquare{\evE_\ell} \cdot 
                                \inparen{\tau_{S}(\ell)- \tau_{S}(\ell+2)}
                                \cdot \inparen{  
                                            1-\eps\rho \cdot {  \frac{\tau_{\ell, S}(f)+\tau_{\ell+2,S}(f)-2\tau_{\ell+1,S}(f)}{\tau_{\ell, S}(f) - \tau_{\ell+2,S}(f)}}
                                    }   
                                }  
                            }\\
                            &\quad \geq \ \ \frac{
                                \Ex_{\cW}\insquare{\sum_{\ell\in [m-1]} \Pr\insquare{\evE_\ell} \cdot 
                                \inparen{\tau_{S}(\ell)- \tau_{S}(\ell+2)}
                                \cdot \inparen{  
                                            1+\eps\rho \cdot \sigma_\ell(f)
                                    }   
                                }  
                            }{
                                \Ex_{\cW}\insquare{\sum_{\ell\in [m-1]} \Pr\insquare{\evE_\ell} \cdot 
                                \inparen{\tau_{S}(\ell)- \tau_{S}(\ell+2)}
                                \cdot \inparen{  
                                            1-\eps\rho \cdot \sigma_\ell(f)
                                    }   
                                }  
                            }\tag{Using \cref{def:sigma_ell}}\\
                            &\quad \geq \ \ \frac{
                                \Ex_{\cW}\insquare{\sum_{\ell\in [m-2]} \Pr\insquare{\evE_\ell} \cdot 
                                \inparen{\tau_{S}(\ell)- \tau_{S}(\ell+2)} 
                                + \Pr\insquare{\evE_{m-1}}(1+\eps\rho)\cdot \tau_{\min}
                                }  
                            }{
                                \Ex_{\cW}\insquare{\sum_{\ell\in [m-2]} \Pr\insquare{\evE_\ell} \cdot 
                                \inparen{\tau_{S}(\ell)- \tau_{S}(\ell+2)} 
                                    + \Pr\insquare{\evE_{m-1}}(1-\eps\rho)\cdot \tau_{\min}
                                }  
                            }\tag{Using \cref{eq:nonnegative_sigma,eq:one_sigma}}\\
                            &\quad \geq \ \ \frac{
                                \Ex_{\cW}\insquare{(m-2)\tau
                                + \Pr\insquare{\evE_{m-1}}(1+\eps\rho)\cdot \tau_{\min}
                                }  
                            }{
                                \Ex_{\cW}\insquare{(m-2)\tau
                                    + \Pr\insquare{\evE_{m-1}}(1-\eps\rho)\cdot \tau_{\min}
                                }  
                            }\tag{Using that $\Pr[\evE_\ell],{\tau_{S}(\ell)- \tau_{S}(\ell+2)} \geq 0$ for all $\ell\in [m-1]$}\\
                         \end{align*}
                         \begin{align*}
                            &= \ \ \frac{
                                {(m-2)\tau
                                + (1+\eps\rho)\cdot \tau_{\min}
                                }  
                            }{
                                {(m-2)\tau
                                    + (1-\eps\rho)\cdot \tau_{\min}
                                }  
                            }\tag{Using that $\Pr\insquare{\evE_{m-1}}=1$}\\
                            &= \geq  \ \ 1+\Omega\inparen{\eps\rho m^{-1}\cdot \frac{\tau_{\min}}{\tau}}.
                    \end{align*}
                    
                \end{proof}

            \noindent
            Now we complete the proof of \cref{lem:ratio_of_utils}.
            \begin{proof}[Proof of \cref{lem:ratio_of_utils}]
                Let $\lambda\coloneqq 1-\beta$.
                As $\beta \geq 1-m^{-1/2}$, $\lambda\leq m^{1/2}$.
                We will prove (a strengthening of) the contra-positive of the above statement: 
                for any candidates $d,d'\in C$ in the same group ($G_1$ or $G_2$), the following
                \begin{align*}
                     \frac{\omega_{d'}}{\omega_d} \in 1\pm m^{-1}\cdot \Theta(\lambda)
                     \implies \frac{ \Ex_\mu\insquare{f_S(d',\succ)}}{ \Ex_\mu\insquare{f_S(d,\succ)}}\in 1\pm \Theta(\lambda).
                     \yesnum\label{eq:to_prove_lem}
                \end{align*}
                From the Log-Lipschitzness in \cref{def:desired_properties_of_eta2}, we have the following result:
                for any $x,y>0$
                \begin{align*}
                    \frac{x}{y}\in 1\pm {\lambda} \implies \frac{\cdf_\eta(x)}{\cdf_\eta(y)}\in 1\pm O_\pi(\lambda),\yesnum\label{eq:using_Log_lipschitz}
                \end{align*}
                where $\pi$ is a constant in \cref{def:desired_properties_of_eta2}.

                Towards proving \cref{eq:to_prove_lem}, suppose (without loss of generality) that $\omega_{d'}>\omega_d$.
                We can express $\Ex_\mu\insquare{f(d', \succ)}$ and $\Ex_\mu\insquare{f(d, \succ)}$ as follows
                \begin{align*}
                    \Ex_\mu\insquare{f(d', \succ)} = \sum_{\ell=1}^m \Pr_{\mu}\insquare{\pos_\succ(d')=\ell} \tau_{\ell,S}(f)
                    \quad\text{and}\quad
                    \Ex_\mu\insquare{f(d, \succ)} &= \sum_{\ell=1}^m \Pr_{\mu}\insquare{\pos_\succ(d)=\ell} \tau_{\ell,S}(f).
                    \yesnum\label{eq:subs_into}
                \end{align*}
                To prove \cref{eq:to_prove_lem}, it suffices to show that $\Pr_{\mu}\insquare{\pos_\succ(d')=\ell}$ and $\Pr_{\mu}\insquare{\pos_\succ(d)=\ell}$ are multiplicatively close for all $\ell\in [m]$.
                From \cref{eq:using_Log_lipschitz}, for any $i\neq d,d'$, we have the following 
                \[
                    \forall x\geq 0,\quad \frac{\cdf_\eta\inparen{\frac{\omega_{d'}\cdot x}{\omega_i}}}{\cdf_\eta\inparen{\frac{\omega_{d}\cdot x}{\omega_i}}} 
                    \in 1 \pm m^{-1}\cdot \Theta_\pi(\lambda). 
                    \tag{Using that $\frac{\omega_{d'}}{\omega_i}\cdot \frac{\omega_i}{\omega_{d}} \in 1\pm m^{-1}\Theta(\lambda)$}
                    \customlabel{eq:lb_diff}{\theequation}
                \]
                Moreover, we also have 
                \begin{align*}
                    \forall x>0,\quad 
                    \cdf_\eta\inparen{\frac{\omega_{d'}\cdot x}{\omega_{d} \cdot x}} \in 1\pm m^{-1}\cdot \Theta_\pi(\lambda).
                    \yesnum\label{eq:lb_same}
                \end{align*}
                We can show this as follows:
                \begin{align*}
                    \frac{\Pr_{\mu}\insquare{\pos_\succ(d)=\ell}}{\Pr_{\mu}\insquare{\pos_\succ(d')=\ell}}
                    &= \frac{
                        \Ex_{\eta_{d'}}\insquare{
                            \sum_{S\subseteq C\setminus\inbrace{d'}\colon \abs{S}=\ell-1} \prod_{i\in S}\inparen{1-\cdf_\eta\inparen{\frac{\omega_{d'}\eta_{d'}}{\omega_i}}}
                            \prod_{i\not\in S}\cdf_\eta\inparen{\frac{\omega_{d'}\eta_{d'}}{\omega_i}}
                            }
                        }{
                        \Ex_{\eta_{d}}\insquare{
                            \sum_{S\subseteq C\setminus\inbrace{d}\colon \abs{S}=\ell-1} \prod_{i\in S}\inparen{1-\cdf_\eta\inparen{\frac{\omega_{d}\eta_{d}}{\omega_i}}}
                            \prod_{i\not\in S}\cdf_\eta\inparen{\frac{\omega_{d}\eta_{d}}{\omega_i}}
                            }
                        }.
                        \yesnum\label{eq:ratio_bnd_2}
                \end{align*}
                Toward bounding the RHS of the above equality,
                fix any value $\eta_{d'}=\eta_{d}=\eta>0$.
                We have the following 
                \begin{align*}
                    &\frac{
                            {
                            \sum_{S\subseteq C\setminus\inbrace{d'}\colon \abs{S}=\ell-1} \prod_{i\in S}\inparen{1-\cdf_\eta\inparen{\frac{\omega_{d'}}{\omega_i}}}
                            \prod_{i\not\in S}\cdf_\eta\inparen{\frac{\omega_{d'}}{\omega_i}}
                            }
                        }{
                            {
                            \sum_{S\subseteq C\setminus\inbrace{d}\colon \abs{S}=\ell-1} \prod_{i\in S}\inparen{1-\cdf_\eta\inparen{\frac{\omega_{d}}{\omega_i}}}
                            \prod_{i\not\in S}\cdf_\eta\inparen{\frac{\omega_{d}}{\omega_i}}
                            }
                        }\\ 
                    &\qquad \geq\ \  \frac{\cdf_\eta\inparen{\frac{\omega_{d'}}{\omega_d}}}{\cdf_\eta\inparen{\frac{\omega_d}{\omega_{d'}}}}
                    \cdot 
                    \frac{
                            {
                            \sum_{S\subseteq C\setminus\inbrace{d',d}\colon \abs{S}=\ell-1} \prod_{i\in S}\inparen{1-\cdf_\eta\inparen{\frac{\omega_{d'}}{\omega_i}}}
                            \prod_{i\not\in S}\cdf_\eta\inparen{\frac{\omega_{d'}}{\omega_i}}
                            }
                        }{
                            {
                            \sum_{S\subseteq C\setminus\inbrace{d',d}\colon \abs{S}=\ell-1} \prod_{i\in S}\inparen{1-\cdf_\eta\inparen{\frac{\omega_{d}}{\omega_i}}}
                            \prod_{i\not\in S}\cdf_\eta\inparen{\frac{\omega_{d}}{\omega_i}}
                            }
                        }\\ 
                    &\qquad\Stackrel{\eqref{eq:lb_same}}{\geq}\ \  \inparen{1-m^{-1}\Theta(\lambda)}
                    \cdot 
                    \frac{
                            {
                            \sum_{S\subseteq C\setminus\inbrace{d',d}\colon \abs{S}=\ell-1} \prod_{i\in S}\inparen{1-\cdf_\eta\inparen{\frac{\omega_{d'}}{\omega_i}}}
                            \prod_{i\not\in S}\cdf_\eta\inparen{\frac{\omega_{d'}}{\omega_i}}
                            }
                        }{
                            {
                            \sum_{S\subseteq C\setminus\inbrace{d',d}\colon \abs{S}=\ell-1} \prod_{i\in S}\inparen{1-\cdf_\eta\inparen{\frac{\omega_{d}}{\omega_i}}}
                            \prod_{i\not\in S}\cdf_\eta\inparen{\frac{\omega_{d}}{\omega_i}}
                            }
                        }\\ 
                    &\qquad\Stackrel{\eqref{eq:lb_diff}}{\geq}\ \  \inparen{1-m^{-1}\Theta(\lambda)}
                        \cdot (1-m^{-1}\Theta(\lambda))^{m-\ell+1}\\
                    &\qquad \geq \ \  1-\Theta(\lambda). \tagnum{Using that $\ell\geq 0$ and $0\leq \lambda \leq 1$}\customlabel{eq:bound_on_ratio}{\theequation}
                \end{align*}
                Substituting Equation~\eqref{eq:bound_on_ratio} in \cref{eq:ratio_bnd_2} and using the fact that if $\frac{x_1}{y_1} ,\frac{x_2}{y_2} \geq r$, then $\frac{x_1+x_2}{y_1+y_2}\geq r$ (for any $x_1,x_2,y_1,y_2,r\geq 0$), it follows that 
                \begin{align*}
                    \frac{\Pr_{\mu}\insquare{\pos_\succ(d)=\ell}}{\Pr_{\mu}\insquare{\pos_\succ(d')=\ell}}
                        \geq 
                        1-\Theta(\lambda).
                        \yesnum\label{eq:lblkhdlkd}
                \end{align*}
                An analogous argument also shows that 
                \begin{align*}
                    \frac{\Pr_{\mu}\insquare{\pos_\succ(d)=\ell}}{\Pr_{\mu}\insquare{\pos_\succ(d')=\ell}}
                        \leq  
                        1+\Theta(\lambda).
                        \yesnum\label{eq:ublkhdlkd}
                \end{align*}
                The result follows by substituting \cref{eq:lblkhdlkd,eq:ublkhdlkd} into \cref{eq:subs_into}.
    \end{proof}

\subsubsection{Bounding $\alpha$ for the Utility-Based Model}

        In this section, we prove \cref{cor:algorithmic}--which lower bounds $\alpha$ for the utility-based latent and biased generative models.

        Recall that in the utility-based model (\cref{ex:utility_generative}), the variable $\eta$ is drawn from the uniform distribution on $[0,1]$.
        We will, in fact, prove a more general version of the above lemma that holds for any distribution of $\eta$ that satisfies certain properties (\cref{def:eta_for_alpha}).
        With some abuse of notation, we use $\eta$ to denote both the distribution and a value drawn from the distribution $\eta$ (independent of all other randomness).
        \begin{definition}[\textbf{Properties of the utility distribution $\eta$ from \cref{def:utility_bias}}]\label{def:eta_for_alpha}
            Let $\eta$ be the distribution on $\R_{\geq 0}$ from \cref{def:utility_bias} that parameterizes the generative model $\mu$.
            Let $\cdf_{\eta}\colon \R\to [0,1]$ be the cumulative distribution function of $\eta$.
            We define the following properties of $\eta$.
            \begin{itemize} %
                \item (Linear scaling) We say that the cumulative distribution function of $\eta$ scales linearly if there exist constants $\lambda_0,\pi\in (0,1)$ such that for all $\lambda \in (0,\lambda_0)$ and $\rho > 0$, $\cdf_\eta(\lambda \rho) \in \insquare{\pi\lambda \cdot \cdf_\eta(\rho), \pi^{-1}\lambda\cdot  \cdf_\eta(\rho)}$;
                \item (Exponential tail) We say that $\eta$ has an exponential tail if there exists a constant $\pi>0$ such that, for all $t\geq \pi^{-1}$, $\Pr_{X\sim \eta}\insquare{X\geq t}\leq \exp\inparen{-\pi t}$.
            \end{itemize}
        \end{definition}
        \noindent Several distributions satisfy the above properties including the uniform distribution on $[0,1]$, the exponential distribution, and the normal distribution truncated to $\R_{\geq 0}$.
        Roughly, the linear scaling property ensures that $\cdf_\eta(\lambda \rho)$ behaves as $\lambda\cdot \cdf_\eta(\rho)$ for $\lambda$ and $\rho$ close to 0.
        The exponential tail property ensures that with high probability the utilities of all candidates are bounded above.
        These are relevant as for two candidates $c\in G_1$ and $d\in G_2$ and any user $v\in V$, the probability that $\Pr\left[\wh{w}_{v,d} > \wh{w}_{v,c}\mid w_{v,c} \right] = \cdf_\eta\inparen{\theta\cdot \frac{w_{v,d}}{\omega_{c}}}$.
        The exponential tail ensures that $\frac{w_{v,d}}{\omega_{c}}$ is bounded and, hence, we can invoke the linear scaling property.
        The linear scaling property, itself, ensures that the probability that a disadvantaged candidate $d\in G_2$ is placed before an advantaged candidate $c\in G_1$ scales with $\theta$.

        For each $j\in [m]$, let $i_{\succ,j}$ be the $j$-th candidate in $\succ$.
        For each $j\in [m]$, define $\sigma_j(f)$ to be the marginal score of $i_{\succ,j}$ with respect to $C\setminus\inbrace{i_{\succ,j}}$, i.e., $\sigma_j(f)\coloneqq f_{C\setminus\inbrace{i_{\succ,j}}}(\inbrace{i_{\succ,j}},\succ)$ for all $\succ\ \in \prefs{C}$.

        For any $\eta$ which has the properties in \cref{def:eta_for_alpha}, we show the following bounds on $\alpha.$
        
        \begin{lemma}[\textbf{Bound on $\alpha$ for the utility-based generative model}]\label{lem:utility_alpha}
            Let $F = \sum_{v\in V} f(\cdot ,\succ_v)$ be a latent multiwinner score function such that $\sigma_{j}>0$ and  $\sigma_{j+1}=\sigma_{j+2}=\dots = 0$. 
            Let $(\mu,\wh{\mu})$ be generative models in \cref{ex:utility_generative,def:utility_bias} with any parameters $\omega\in \R_{\geq 0}^C$ and $\theta\in (0,1]$.
            Let $\mu$ be such that $M\cap G_2\neq \emptyset$.
            There exists a $\theta_0 > 0$ (which is a function of $\eta$ and $\omega$) such that for all $\theta \in (0,\theta_0)$ the following bounds holds for $\alpha$.
            \[
                \Theta\inparen{\theta^{\abs{G_1}-j}\times \alpha_0\times \frac{\sigma_j}{\tau} }  \leq \alpha \leq \Theta\inparen{\theta^{\abs{G_1}-j}}.
            \]
            Where $\alpha_0\coloneqq (\tau(f))^{-1}\cdot \min_{c\in M} \Ex_{\mu}\insquare{f_{C\setminus \inbrace{c}}(c,\succ)}$.
            If $\ell=m-1$, then the upper bound improves to $1$.
            If $M\cap G_2=\emptyset$, the lower bound improves to $\alpha_0$ (the best possible).
        \end{lemma}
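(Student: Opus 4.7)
The plan is to analyze $\alpha = \tau^{-1}\min_{c\in M}\Ex_{\wh{\mu}}\left[f_{C\setminus\{c\}}(c,\not\succ)\right]$ directly, exploiting that under the hypothesis $\sigma_{j+1}=\cdots=\sigma_m=0$ the marginal collapses to a single position-statistic. First I would observe (via \cref{def:score}) the pointwise identity $f_{C\setminus\{c\}}(c,\not\succ) = \sigma_{\pos_{\not\succ}(c)}(f)$, where $\sigma_\ell\equiv 0$ for $\ell>j$. This yields the decomposition
\[
\Ex_{\wh{\mu}}\left[f_{C\setminus\{c\}}(c,\not\succ)\right] = \sum_{\ell=1}^{j}\Pr_{\wh{\mu}}[\pos_{\not\succ}(c)=\ell]\cdot \sigma_\ell,
\]
so it suffices to estimate $p_\ell^c \coloneqq \Pr_{\wh{\mu}}[\pos_{\not\succ}(c)=\ell]$ for each $c\in M$ and $\ell\leq j$. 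Since $M\cap G_2\neq\emptyset$, the minimum over $c\in M$ is attained (up to constants) at some $c^\star\in M\cap G_2$; candidates $c\in M\cap G_1$ are unaffected by the bias and contribute through the latent-model quantity $\alpha_0$.

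For the upper bound on $\alpha$, the event $\{\pos_{\not\succ}(c^\star)=\ell\}$ forces at least $|G_1|-\ell+1$ advantaged candidates to have lower biased utility than $c^\star$. Conditioning on $\eta_{c^\star}$ and first truncating it to a bounded range using the exponential-tail property of $\eta$ (\cref{def:eta_for_alpha}), I would invoke the linear-scaling property of $\cdf_\eta$ to conclude that $\Pr[\hat{w}_{v,c^\star}>\hat{w}_{v,d}]=\Theta(\theta)$ for each $d\in G_1$. Independence of $\{\eta_d\}_{d\in C}$ then yields $p_\ell^{c^\star}\leq O(\theta^{|G_1|-\ell+1})$ up to combinatorial factors. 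Summing over $\ell\leq j$ and bounding $\sigma_\ell\leq \tau$ gives $\alpha\leq \Theta(\theta^{|G_1|-j})$ after absorbing benign constants into the $\Theta$; the claimed improvement to $1$ when $j=m-1$ holds because the exponent becomes non-positive, making the trivial bound $\alpha\leq 1$ tighter.

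For the lower bound, I would restrict attention to the single $\ell=j$ term and construct an explicit event $E$ on $\{\eta_d\}$ in which a fixed set of $j-1$ advantaged candidates beats $c^\star$, the remaining $|G_1|-j+1$ advantaged candidates are beaten by $c^\star$, and the candidates of $G_2\setminus\{c^\star\}$ are arranged so that $c^\star$ lands in position $j$. On $E$ the marginal equals $\sigma_j$, and by the same linear-scaling and tail estimates $\Pr[E]\geq \Omega(\theta^{|G_1|-j})$. Combining this with the latent-model lower bound $\Ex_\mu[f_{C\setminus\{c^\star\}}(c^\star,\succ)]\geq \alpha_0\tau$, which captures the contribution of $M$-candidates under $\mu$, yields $\alpha\geq \Omega\left(\alpha_0\cdot\theta^{|G_1|-j}\cdot\sigma_j/\tau\right)$. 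The degenerate case $M\cap G_2=\emptyset$ is handled separately: no candidate in $M$ is biased, so $\alpha=\alpha_0$ directly.

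The main technical obstacle is tracking the $\theta$-exponent uniformly across positions $\ell\leq j$: while the dominant contribution comes from $\ell=j$, ensuring that smaller $\ell$ do not perturb the exponent requires carefully coupling the mutual independence of $\{\eta_d\}$ with the assumption $\sigma_j>0$, and handling configurations in which disadvantaged candidates occupy high-ranked positions. A secondary subtlety is the truncation step preceding linear scaling: the linear-scaling property is valid only near zero, so one must first restrict $\eta_{c^\star}$ using the exponential tail, coupling the threshold $\theta_0$ in the lemma to the tail-decay rate of $\eta$ and to the intrinsic utilities $\omega$.
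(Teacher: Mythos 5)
Your plan is the correct one and is essentially the paper's: write $\Ex_{\wh{\mu}}[f_{C\setminus\{c\}}(c,\nsucc)]$ in terms of the position distribution of $c$, use the linear-scaling property of $\cdf_\eta$ to make the $\theta$-dependence of that distribution explicit, and exploit $\sigma_{j+1}=\dots=\sigma_m=0$ to truncate the sum at $\ell=j$. The upper bound as you sketch it works. The choice of a $c^\star\in M\cap G_2$ is also what the paper does (for $c\in M\cap G_1$, biased positions only improve, so those candidates contribute at least the latent quantity $\alpha_0\tau$).

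The gap is in your lower bound. You claim $\Pr[E]\geq \Omega(\theta^{\abs{G_1}-j})$ with the $\Omega$ hiding ``benign constants,'' but the event $E$ also requires all of $G_2\setminus\{c^\star\}$ to fall below $c^\star$ (and a fixed set of $j-1$ advantaged candidates to beat $c^\star$), and the probabilities of those constraints are \emph{not} universal constants: they depend on the intrinsic utilities $\omega$, and in particular are controlled by how likely $c^\star$ is to land in the top $j$ under the \emph{latent} model. That latent probability is exactly where $\alpha_0$ enters, and it cannot be folded in after the fact as your last sentence suggests. The paper instead proves, as the key intermediate step, a multiplicative \emph{comparison}
\[
\Pr_{\wh{\mu}}\bigl[\pos_{\nsucc}(c)\leq j\bigr]\;\geq\;(\pi\theta)^{\abs{G_1}-j}\cdot \Pr_{\mu}\bigl[\pos_{\succ}(c)\leq j\bigr],
\]
obtained by factoring the $\cdf_\eta$ product over $d\notin S$ into $G_1$ and $G_2$ terms and applying linear scaling only to the $G_1$ factors. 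It then closes the loop via $\Pr_{\mu}[\pos_\succ(c)\leq j]\geq\alpha_0$, which follows from $\alpha_0\tau\leq \Ex_\mu[f_{C\setminus\{c\}}(c,\succ)]=\sum_{\ell\leq j}\Pr_\mu[\pos_\succ(c)\leq\ell](\sigma_\ell-\sigma_{\ell+1})\leq \Pr_\mu[\pos_\succ(c)\leq j]\,\tau$. If you carry the latent cumulative probability through your estimate of $\Pr[E]$ rather than asserting it is $\Omega(1)$, you recover precisely this route; as written, the $\alpha_0\cdot\sigma_j/\tau$ factor in your final bound is not derived but grafted on.

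A secondary imprecision: you write $\Pr[\hat w_{v,c^\star}>\hat w_{v,d}]=\Theta(\theta)$ for $d\in G_1$. After conditioning on $\eta_{c^\star}$, linear scaling only gives an $O(\theta)$ upper bound uniformly; the matching $\Omega(\theta)$ requires $\eta_{c^\star}$ bounded away from $0$, which is exactly where your truncation step (using the exponential tail) must be invoked on the \emph{lower}-bound side as well, not just to control the upper tail. The paper sidesteps this by never conditioning pointwise in the lower bound, instead comparing the full $\wh{\mu}$-expectation to the $\mu$-expectation.
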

    \noindent Recall the closer $\alpha$ is to 1 the ``easier'' it is to distinguish candidates in $M$ and in $C\setminus M$.
    The above result shows that for any multiwinner score function $F$, $\alpha$ is characterized by the value $b$, which is the smallest position such that $f_{C\setminus c}(c,\succ)=0$ for any $c$ with $\pos_{\succ}(c) > b$.
    Substituting the values of $b$ for functions $F$ from \cref{sec:example} gives us the following bounds on $\alpha$ for any $\theta\in (0,\theta_0)$.
    \begin{itemize} %
        \item If $F$ is the $\ell_1$ CC rule, then $\Theta\inparen{\alpha_0\cdot \theta^{\abs{G_1}-1}} \leq \alpha \leq \Theta\inparen{\theta^{\abs{G_1}-1}}$; 
        \item If $F$ is the $b$-Bloc rule, then $\Theta\inparen{\alpha_0\cdot \theta^{\abs{G_1}-b}} \leq \alpha \leq \Theta\inparen{\theta^{\abs{G_1}-b}}$; and
        \item If $F$ is the Borda rule, then $\Theta\inparen{\alpha_0\cdot \theta} \leq \alpha \leq \Theta(1)$.
    \end{itemize}
    Substituting the lower bounds on $\alpha$ into \cref{thm:main_algorithmic}, we recover \cref{cor:algorithmic}.
    It remains to prove \cref{lem:utility_alpha}.

    Order entries of $\omega$, to get the following values $\omega_{(1)} \geq \omega_{(2)} \geq \dots \geq \omega_{(m)}.$
    Define ${\rho_0 \coloneqq \frac{\omega_{(1)}}{\omega_{(m)}}.}$
    Let $\pi,\lambda_0$ be the constants corresponding to $\rho_0$ in the linear scaling property (\cref{def:eta_for_alpha}).
    Define
    \[
        {\theta_0 \coloneqq \min\inbrace{\rho_0, \lambda_0, \pi}.}
    \]
    Suppose $\theta\in (0,\theta_0)$.
    Fix any candidate $c\in G_2\cap M$ and any voter $v\in V$.
    To prove \cref{lem:utility_alpha}, we need to prove that 
    \[
        \Theta\inparen{\theta^{m-j}\times \alpha_0\times \frac{\sigma_j}{\tau}} 
        \quad \leq \quad
        \frac{1}{\tau}\cdot \Ex_{\wh{\mu}}\insquare{f_{C\setminus \inbrace{c}}(c,\succ)}
        \quad \leq \quad 
        \Theta\inparen{\theta^{\abs{G_1}-j}}.
        \yesnum\label{eq:to_prove}
    \]
    We can express $\Ex_{\wh{\mu}}\insquare{f_{C\setminus \inbrace{c}}(c,\succ)}$ as follows 
    \begin{align*}
        \Ex_{\wh{\mu}}\insquare{f_{C\setminus \inbrace{c}}(c,\nsucc)}
            &= \sum_{\ell=1}^m \Pr_\mu\insquare{ \pos_{\nsucc}(c) = \ell }\sigma_\ell\\
            &= \sum_{\ell=1}^m \Pr_{\wh{\mu}}\insquare{ \pos_{\nsucc}(c) \leq \ell }\inparen{\sigma_\ell-\sigma_{\ell+1}}.
        \yesnum\label{eq:prob_first_ell}
    \end{align*}
    We will bound $\Pr_\mu\insquare{ \pos_{\nsucc}(i) = \ell }$.
    Let $\cdf_\eta(\cdot)$ be the cumulative distribution function of distribution $\eta$.
    For each $d\in C$, let $\eta_d$ be a draw from distribution $\eta$ such that  $w_{v,d}=\omega_d\cdot \eta_d$.
    Using the fact for all $i,j\in C$, $i$ appears before $j$ in $\succ$ if and only if $\omega_i \cdot\theta^{\mathds{I}[i\in G_2]}\cdot \eta_i > \omega_j\cdot \theta^{\mathds{I}[j\in G_2]}\cdot \eta_j$, we can upper bound $\Pr_\mu\insquare{ \pos_{\nsucc}(i) = \ell }$ as follows 
    \begin{align}
            \label{eq:product_to_ub}
        \Pr_{\wh{\mu}}\insquare{ \pos_{\nsucc}(c) \leq \ell }
            &= \Ex_{\eta_c}\insquare{
                    \sum_{S\subseteq C\setminus \inbrace{c}\colon \abs{S}\leq \ell} \ \ 
                        \prod_{d\in S}\inparen{1-\cdf_\eta\inparen{\frac{\omega_c\cdot \theta \cdot \eta_c}{w_d\cdot \theta^{\mathds{I}[d\in G_2]}}}}
                        \prod_{d\not\in S}{\cdf_\eta\inparen{\frac{\omega_c\cdot \theta \cdot \eta_c}{w_d\cdot \theta^{\mathds{I}[d\in G_2]}}}}
                }.
    \end{align}
    We can separate the second product as follows 
    \begin{align*}
        \prod_{d\not\in S}{\cdf_\eta\inparen{\frac{\omega_c\cdot \theta \cdot \eta_c}{w_d\cdot \theta^{\mathds{I}[d\in G_2]}}}}
        = 
            \prod_{d\in G_1\setminus S}{\cdf_\eta\inparen{\frac{\omega_c\cdot \theta \cdot \eta_c}{w_d}}}\cdot 
            \prod_{d\in G_2\setminus S}{\cdf_\eta\inparen{\frac{\omega_c\cdot \eta_c}{w_d}}}.
        \yesnum\label{eq:product_to_ub2}
    \end{align*}
    From the linear scaling property of $\eta$ (\cref{def:eta_for_alpha}) and the facts that $\theta\leq \theta_0\leq \lambda_0$ and $\frac{\omega_c \cdot \eta_c}{w_d} > 0$, we have the following bounds for each $d\in G_1$
    \begin{align*}
        \pi \theta \cdot \cdf_\eta\inparen{\frac{\omega_c \cdot \eta_c}{w_d}}
        \leq \cdf_\eta\inparen{\frac{\omega_c\cdot \theta \cdot \eta_c}{w_d}}
        \leq 
        \frac{\theta}{\pi}\cdot \cdf_\eta\inparen{\frac{\omega_c \cdot \eta_c}{w_d}}
        \leq 
        \frac{\theta}{\pi}.\tag{Using that $\cdf_\eta(x)\leq 1$ for all $x\in \R$}
    \end{align*}
    Substituting these bounds in Equation~\eqref{eq:product_to_ub2}, we get 
    \begin{align*}
        \prod_{d\not\in S}{\cdf_\eta\inparen{\frac{\omega_c\cdot \theta \cdot \eta_c}{w_d\cdot \theta^{\mathds{I}[d\in G_2]}}}}
        &\leq \inparen{\frac{\theta}{\pi}}^{\abs{G_1\setminus S}} \cdot \prod_{d\in G_2\setminus S}{\cdf_\eta\inparen{\frac{\omega_c\cdot \eta_c}{w_d}}}\\
        &\leq \inparen{\frac{\theta}{\pi}}^{\abs{G_1\setminus S}},\tag{Using that $\cdf_\eta(x)\leq 1$ for all $x\in \R$}\\
        \prod_{d\not\in S}{\cdf_\eta\inparen{\frac{\omega_c\cdot \theta \cdot \eta_c}{w_d\cdot \theta^{\mathds{I}[d\in G_2]}}}}
        &\geq \inparen{\pi \theta}^{\abs{G_1\setminus S}} \cdot \prod_{d\not\in S}{\cdf_\eta\inparen{\frac{\omega_c\cdot \theta \cdot \eta_c}{w_d\cdot \theta^{\mathds{I}[d\in G_2]}}}}.
    \end{align*}
    Substituting these in Equation~\eqref{eq:product_to_ub}, implies the following bounds
    \begin{align*}
        & \quad \Pr_{\wh{\mu}}\insquare{ \pos_{\nsucc}(c) \leq \ell } \\
            \leq & \quad \Ex_{\eta_c}\insquare{
                    \sum_{S\subseteq C\setminus \inbrace{c}\colon \abs{S}\leq \ell} \ \ 
                        \prod_{d\in S}\inparen{1-\cdf_\eta\inparen{\frac{\omega_c\cdot \theta \cdot \eta_c}{w_d\cdot \theta^{\mathds{I}[d\in G_2]}}}}
                        \cdot \inparen{\frac{\theta}{\pi}}^{\abs{G_1\setminus S}}
                }\\ 
            \leq & \quad \Ex_{\eta_c}\insquare{
                        \inparen{\frac{\theta}{\pi}}^{\abs{G_1}-\ell}
                }\tag{Using that $\abs{S}\leq \ell$, $\theta\leq \theta_0\leq \pi$, and $\cdf_\eta(x)\leq 1$ for all $x\in \R$}\\
            \leq & \quad \inparen{\frac{\theta}{\pi}}^{\abs{G_1}-\ell},
    \end{align*}
        and
    \begin{align*}
        & \quad \Pr_{\wh{\mu}}\insquare{ \pos_{\nsucc}(c) \leq \ell } \\
            \geq & \quad \small \Ex_{\eta_c}\insquare{
                    \sum_{S\subseteq C\setminus \inbrace{c}\colon \abs{S}\leq \ell} \ \ 
                        \inparen{\pi \theta}^{\abs{G_1\setminus S}}\cdot \ \
                        \prod_{d\in S}\inparen{1-\cdf_\eta\inparen{\frac{\omega_c\cdot \theta \cdot \eta_c}{w_d\cdot \theta^{\mathds{I}[d\in G_2]}}}}
                        \cdot \prod_{d\not\in S}{\cdf_\eta\inparen{\frac{\omega_c\cdot \theta \cdot \eta_c}{w_d\cdot \theta^{\mathds{I}[d\in G_2]}}}}
                }\\ 
                 & \quad \tag{Using that $\abs{S}\leq \ell$, $\theta\pi\leq \theta_0\leq 1$, and $\cdf_\eta(x)\leq 1$ for all $x\in \R$}\\
            \geq & \quad \inparen{\pi \theta}^{\abs{G_1}-\ell} \cdot \Pr_\mu\insquare{ \pos_{\succ}(c) \leq \ell }.
        \end{align*}
    Substituting this in \cref{eq:prob_first_ell}, we get 
    \begin{align*}
        \Ex_{\wh{\mu}}\insquare{f_{C\setminus \inbrace{c}}(c,\nsucc)}
            &= \sum_{\ell=1}^m \Pr_{\wh{\mu}}\insquare{ \pos_{\nsucc}(c)\leq \ell }\inparen{\sigma_\ell-\sigma_{\ell+1}}\\
            &\leq \sum_{\ell=1}^m \inparen{\frac{\theta}{\pi}}^{\abs{G_1}-\ell} \inparen{\sigma_\ell-\sigma_{\ell+1}}\\
            &= \sum_{\ell=1}^j \inparen{\frac{\theta}{\pi}}^{\abs{G_1}-\ell} \inparen{\sigma_\ell-\sigma_{\ell+1}}\tag{Using that $\sigma_{j+1}=\sigma_{j+2}=\dots=\sigma_{m}=0$} %
        \end{align*}
        \begin{align*}
            &\leq \tau\cdot \sum_{\ell=1}^j \inparen{\frac{\theta}{\pi}}^{\abs{G_1}-\ell} \tag{Using that $\sigma_{\ell}\leq \tau$ for all $\ell\in [m]$}\\
            &\leq \tau\cdot \inparen{\frac{\theta}{\pi}}^{\abs{G_1}-j}  \tag{Using that $0\leq \frac{\theta}{\pi}\leq 1$}.
    \end{align*}
    We also have the following lower bound 
    \begin{align*}
        \Ex_{\wh{\mu}}\insquare{f_{C\setminus \inbrace{c}}(c,\nsucc)}
            &= \sum_{\ell=1}^m \Pr_{\wh{\mu}}\insquare{ \pos_{\nsucc}(c) \leq \ell }\inparen{\sigma_\ell-\sigma_{\ell+1}}\\
            &\geq \sum_{\ell=1}^m \inparen{\pi \theta}^{\abs{G_1}-\ell} \cdot  \Pr_\mu\insquare{ \pos_{\succ}(c) \leq \ell }\cdot  \inparen{\sigma_\ell-\sigma_{\ell+1}}\\
            &= \sum_{\ell=1}^j \inparen{\pi \theta}^{\abs{G_1}-\ell} \cdot  \Pr_\mu\insquare{ \pos_{\succ}(c) \leq \ell }\cdot  \inparen{\sigma_\ell-\sigma_{\ell+1}}\tag{Using that $\sigma_{j+1}=\sigma_{j+2}=\dots=\sigma_{m}=0$}\\
            &\geq \inparen{\pi \theta}^{\abs{G_1}-j} \cdot  \Pr_\mu\insquare{ \pos_{\succ}(c) \leq j } \cdot \inparen{\sigma_j-\sigma_{j+1}}\tag{Using that $\sigma_{\ell}\geq \sigma_{\ell+1}$ and $\Pr_\mu\insquare{ \pos_{\succ}(c) \leq \ell }\geq 0$ for all $\ell\in [m]$}\\
            &\geq \inparen{\pi \theta}^{\abs{G_1}-j} \cdot \Pr_\mu\insquare{ \pos_{\succ}(c) \leq j } \cdot \sigma_j  \tag{Using that $\sigma_{j+1}=0$.}\\ 
            &\geq \inparen{\pi \theta}^{\abs{G_1}-j} \cdot \frac{\alpha_0}{\tau}.
            \tag{Using that $\alpha_0 = \sum_{\ell=1}^m \Pr_\mu\insquare{ \pos_{\succ}(c) \leq \ell }\cdot  \inparen{\sigma_\ell-\sigma_{\ell+1}} \leq \Pr_\mu\insquare{ \pos_{\succ}(c) \leq j }\cdot \tau$}
    \end{align*}
    This completes the proof of \cref{eq:to_prove} and, hence, \cref{lem:utility_alpha}.

\subsection{Extension to Swapping-Based Biased Generative Model}   \label{sec:proofof:lem:swap_order_preservation}\label{sec:swapping-based}
In this section, we introduced a ``swapping-based'' biased model.
Let $\mu$ be a generative model of latent preference lists $\succ$, e.g., \Cref{ex:utility_generative}.
We propose the following generative model for biased preferences which can be seen as a biased variant of the popular Mallows model \cite{mallows1957non}.
\begin{definition}[\bf{Swapping-based generative model of biased preference lists}]
	\label{def:swapping}
	Let $\phi \in [0,1]$ and $t\geq 1$ be a bias parameter and number of swaps respectively.
	For any $\succ$, let $A(\succ)\subseteq [m]\times [m]$ be the collection of all pairs $(i,j)$ such that there exist $c\in G_1$ and $c'\in G_2$ with $\pos_{\succ}(c) = i > j = \pos_{\succ}(c')$.
	Let $\succ_1$ be a preference list drawn from $\mu$.
	Given $\succ_i$ (for any $i\in [t]$), $\succ_{i+1}$ is generated as follows.
	\begin{enumerate} %
		\item Sample a pair $(i,j)\in A(\succ_i)$ with probability $\frac{\phi^{i-j}}{Z(\succ_i)}$, where, for any $\succ$, $Z(\succ)\geq 0$ is a normalization factor defined as $Z(\succ) = \sum_{(i', j')\in A(\succ)} \phi^{i'-j'}$; and
		\item Swap the candidates at positions $i$ and $j$ in $\succ_{i}$, and obtain $\succ_{i+1}$.
	\end{enumerate}
	Define $\nsucc\coloneqq \succ_{t+1}$.
	Let $\widehat{\mu}$ denote the generative model of $\nsucc$ that depends on $\succ$ and $\phi$.
\end{definition}

\noindent
Intuitively, we randomly improve the ranking of a candidate from the advantaged group and lower the ranking of a candidate from the disadvantaged group, where the probability is proportional to their ranking difference in $\succ$.
As $\phi$ comes closer to 1, the average distance between the positions of swapped candidates increases.

\subsubsection{Order-Preservation of the Swapping-Based Model}
\label{sec:swapping_order_preserve:app}

We first show that all multiwinner score functions satisfying \cref{def:score} are order preserving between $\mu$ and $\wh{\mu}$, where $\wh{\mu}$ arises from the swapping-based generative model of biased preference lists and $\mu$ satisfies the following condition for some parameter $\rho>0$
\[
    \min_{c\in C}\Ex_\mu\insquare{f_{C\setminus \inbrace{c}}(c,\succ{})}\geq \rho.
    \yesnum\label{eq:lbddd}
\]

\begin{lemma}[\textbf{Order-preserving properties of a swapping-based model}]\label{lem:swap_order_preservation}
    \sloppy
    Let $F = \sum_{v\in V} f(\cdot, \succ_v)$ be a latent multiwinner score function satisfying \Cref{def:score}. %
    Let $\mu$ be any generative model such that $F$ is order-preserving with respect to $\mu$ (\cref{def:bias}) and satisfies \cref{eq:lbddd}.
    For any numbers $t\geq 1$ and $\phi\in (0,t^{-1})$ and the generative model $\wh{\mu}$ in \cref{def:swapping} with parameters $\mu$, $\phi$, and $t=1$,
    $F$ is $\inparen{1-\lambda,1-\frac{\lambda}{2}}$ order preserving between $\mu$ and $\wh{\mu}$ where 
    \[
        \lambda\coloneqq \Ex_{\mu}\insquare{\frac{1}{Z(\succ)}}\cdot \Theta\inparen{\frac{t\phi}{1-\phi}\cdot \frac{\tau}{\rho}},
    \]
    and $Z(\succ)$ is the normalizing constant corresponding to preference list $\succ$, as defined in \cref{def:swapping}.
\end{lemma}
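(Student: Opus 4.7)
The plan is to reduce the $(1-\lambda, 1-\lambda/2)$ order preservation between $\mu$ and $\wh{\mu}$ to a perturbation bound: I will show that for same-group candidates $c,c'$ and any $S\subseteq C\setminus\{c,c'\}$,
\[
    \bigl|\Ex_{\wh{\mu}}[f_S(c,\nsucc)] - \Ex_\mu[f_S(c,\succ)]\bigr| \;\leq\; \tfrac{\lambda}{4}\,\Ex_\mu[f_S(c',\succ)].
\]
Given this, if $(1-\lambda)\Ex_\mu[f_S(c',\succ)] \geq \Ex_\mu[f_S(c,\succ)]$, the triangle inequality yields $(1-\lambda/2)\Ex_{\wh{\mu}}[f_S(c',\nsucc)] \geq \Ex_{\wh{\mu}}[f_S(c,\nsucc)]$, which is exactly the required order preservation.

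I would first telescope along the $t$ swap iterations. Writing $\succ_1 = \succ\sim\mu$ and letting $\succ_{i+1}$ be the outcome after the $i$-th swap,
\[
    \Ex_{\wh{\mu}}[f_S(c,\nsucc)] - \Ex_\mu[f_S(c,\succ)] \;=\; \sum_{i=1}^{t} \Ex\bigl[f_S(c,\succ_{i+1}) - f_S(c,\succ_{i})\bigr].
\]
A single swap only alters the positions of the two swapped candidates, so $f_S(c,\succ_{i+1}) - f_S(c,\succ_i)$ vanishes unless at least one swapped candidate lies in $\{c\}\cup S$. For a specific candidate $x$, the probability it participates in the sampled swap, conditional on $\succ_i$, is
\[
    \sum_{y:\,\{x,y\}\in A(\succ_i)} \frac{\phi^{|\pos(x)-\pos(y)|}}{Z(\succ_i)} \;\leq\; \frac{1}{Z(\succ_i)}\sum_{k\geq 1}\phi^{k} \;=\; \frac{\phi}{(1-\phi)\,Z(\succ_i)}
\]
by the geometric tail of the swap weights. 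Since monotonicity, submodularity, and the definition of $\tau = \tau_1(f)$ imply $f_S(c,\cdot)\in[0,\tau]$, a union bound over $\{c\}\cup S$ gives
\[
    \bigl|\Ex[f_S(c,\succ_{i+1}) - f_S(c,\succ_i)\mid \succ_i]\bigr| \;\leq\; \frac{\tau\,\phi}{(1-\phi)\,Z(\succ_i)}\cdot O(1).
\]

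Summing over the $t$ iterations and taking expectation over $\succ_1\sim\mu$, and using that each swap changes $Z(\succ_i)$ only by a constant multiplicative factor (since it perturbs $A(\succ_i)$ by moving one pair across the $G_1/G_2$ boundary), I get $\Ex[1/Z(\succ_i)] = \Theta(\Ex_\mu[1/Z(\succ)])$ for all $i\leq t$, and hence
\[
    \bigl|\Ex_{\wh{\mu}}[f_S(c,\nsucc)] - \Ex_\mu[f_S(c,\succ)]\bigr| \;\leq\; \Theta\!\left(\frac{t\phi}{1-\phi}\right)\cdot \tau \cdot \Ex_\mu[1/Z(\succ)].
\]
By monotonicity and submodularity, $\Ex_\mu[f_S(c',\succ)] \geq \Ex_\mu[f_{C\setminus\{c'\}}(c',\succ)] \geq \rho$ from the hypothesis~\eqref{eq:lbddd}, so the deviation is at most $(\lambda/4)\Ex_\mu[f_S(c',\succ)]$ for the $\lambda$ in the statement, completing the proof.

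The main obstacle will be controlling $\Ex[1/Z(\succ_i)]$ along the swap sequence, since $Z$ depends on the whole preference list; my plan is to argue that since $\phi<1/t$ keeps the expected number of swaps involving any given candidate small, the interface between $G_1$ and $G_2$ in $\succ_i$ is close to that of $\succ_1$ with high probability, making $Z(\succ_i)/Z(\succ_1)$ a constant in expectation. A secondary subtlety is that swaps can affect positions of $S$ (not just $c$), but the union-bound step above already absorbs this uniformly; since we only need the slack between $(1-\lambda)$ and $(1-\lambda/2)$, an absolute-value bound suffices and no directional case analysis by group membership is required.
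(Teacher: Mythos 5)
Your high‑level strategy — telescoping along the $t$ swap iterations, bounding the per‑step perturbation of $\Ex[f_S(c,\cdot)]$, normalizing by $\rho$, and then deriving $(1-\lambda,1-\lambda/2)$ order preservation by the triangle inequality — is a genuine alternative to the paper's route. The paper conditions on $\succ$, writes $\Ex_{\wh\mu}[f_S(c,\nsucc)\mid\succ]$ explicitly as a mixture over swaps of $c$ itself, and exploits the directionality of the bias (one‑sided bounds for $c\in G_1$ vs $c\in G_2$) to obtain a \emph{multiplicative} bound $\Ex_{\wh\mu}[f_S(c,\nsucc)]\in(1\pm\lambda)\Ex_\mu[f_S(c,\succ)]$; the $(\beta,\gamma)$ conclusion then falls out of a short algebraic manipulation. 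Your two‑sided additive bound and triangle‑inequality derivation are algebraically sound (I verified that $(1-\lambda/2)(1-\lambda/4)\ge 1-3\lambda/4$, so the slack you claim suffices), and the use of $\Ex_\mu[f_S(c',\succ)]\ge\rho$ via monotonicity/submodularity is correct.

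However, there is a quantitative gap at the union‑bound step that you cannot wave away as ``$O(1)$.'' You correctly bound the per‑candidate participation probability by $\frac{\phi}{(1-\phi)Z(\succ_i)}$, but you then sum over all candidates in $\{c\}\cup S$. Since the order‑preservation condition in \cref{def:bias} quantifies over \emph{every} $S\subseteq C\setminus\{c,c'\}$, $|S|$ can be as large as $m-2$, so the union bound yields a factor of $|S|+1=\Theta(m)$, not $O(1)$. Your per‑step estimate therefore becomes $\frac{m\tau\phi}{(1-\phi)Z(\succ_i)}$, and the resulting $\lambda$ is $m$ times larger than the one claimed in the lemma. This is precisely the factor the paper's proof avoids by considering only swaps that move $c$ itself: conditional on $\succ$, it writes $\Ex_{\wh\mu}[f_S(c,\nsucc)\mid\succ]$ as a geometric mixture over the (single) position $c$ can move to and bounds the excess over $f_S(c,\succ)$ by $\frac{\tau\phi}{(1-\phi)Z(\succ)}$ with no dependence on $|S|$. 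To salvage your approach without the $m$ loss, you would need to argue that swaps not involving $c$ do not change $f_S(c,\cdot)$ at all — which holds for rules such as SNTV and Borda where $f_S(c,\succ)$ depends only on $\pos_\succ(c)$, but is not automatic from \cref{def:score} (e.g.\ for $\ell_1$-CC the marginal depends on the best position in $S$). So either you restrict attention to swaps moving $c$ (reproducing the paper's reduction and dropping the union bound over $S$), or you accept a weaker $\lambda$ with an extra $m$ factor; as written, the claimed $O(1)$ is incorrect.

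A minor secondary note: your plan to control $\Ex[1/Z(\succ_i)]$ across iterations via ``$Z(\succ_i)/Z(\succ_1)$ is a constant in expectation'' is the right intuition but is also not proven. The paper handles the iteration by reapplying the single‑swap bound $t$ times (with the constraint $\phi<t^{-1}$), and absorbs the drift into the $\Theta(\cdot)$; you will need to supply the same control, which is not trivial since a single swap can in principle alter $Z$ by more than a constant factor if it removes a high‑weight pair. This is a smaller concern than the $O(1)$ issue above, but it should not be left as a hope.
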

Note that \cref{lem:swap_order_preservation} does not fix a specific generative model of latent preference lists $\mu$.
The bound on the parameter $\gamma$ depends on $\mu$ via the term $\Ex_{\mu}\insquare{\frac{1}{Z(\succ)}}$.
In general, we expect this term to be of the order $\Omega\inparen{m^{-1}\phi^{-1}}$.
To see why, note that for any preference list $\succ$ where there are at least $r$ candidates $c'\in G_2$ (for any $r\geq 1$) who are placed before $r^{-1}m$ candidates $c\in G_1$, then $Z(\succ)\geq \phi r$.
When $\Ex_{\mu}\insquare{\frac{1}{Z(\succ)}}=\Omega\inparen{m^{-1}\phi^{-1}}$, then \cref{lem:swap_order_preservation} implies that $F$ is $\inparen{1, 1-O\inparen{m^{-1}t}}$ order preserving between $\mu$ and $\wh{\mu}$ for any $\phi\in (0,t^{-1})$.

Additionally, we have the following example that shows that this order preservation between $\mu$ and $\wh{\mu}$ does not hold for all $\beta\in [0,1]$ when $\wh{\mu}$ the swapping-based bias generative model.
\begin{example}[\textbf{Order preservation does not hold for all $\beta\in [0,1]$}]\label{ex:counter_example_opII}
    Suppose $C = \inbrace{d_1,d_2,a_1}$, $G_1=\inbrace{a_1}$, and $G_2=\inbrace{d_1,d_2}$.
    Let $F\colon 2^C\to \R_{\geq 0}$ be the $2$-Bloc rule.
    Let $\succ_1,\succ_2,$ and $\succ_3$  be the following preference lists
    \begin{align*}
        \succ_1 \coloneqq (d_2\succ d_1\succ a_1),\quad 
        \succ_2 \coloneqq (d_1\succ a_1 \succ d_2),\quad 
        \succ_3 \coloneqq (d_2\succ a_1 \succ d_1).
    \end{align*}
    For some small $\delta>0$, let $\mu$ be a distribution such that 
    \begin{align*}
        \Pr_{\succ\sim\mu}[\succ=\succ_1] = 1-3\delta,\quad
        \Pr_{\succ\sim\mu}[\succ=\succ_2] = 2\delta,\quad
        \Pr_{\succ\sim\mu}[\succ=\succ_3] = \delta.
    \end{align*}
    In this case, on the one hand, the following holds
    \begin{align*}
        \frac{\Ex_\mu\insquare{f(d_1,\succ)}}{\Ex_\mu\insquare{f(d_2,\succ)}}= \frac{1-\delta}{1-2\delta} = \frac{1}{1-\eps}>1\quad \text{for } \eps\approx \delta.
    \end{align*}
    On the other hand, 
    \begin{align*}
        \frac{\Ex_{\wh{\mu}}\insquare{f(d_1,\not\succ)}}{\Ex_{\wh{\mu}}\insquare{f(d_2,\not\succ)}}
        = \frac{\Theta(\phi+\delta)}{1\pm\Theta(\phi+\delta)}
        = \Theta(\phi+\delta) < 1.
    \end{align*}
\end{example}

        Next, we prove \cref{lem:swap_order_preservation}.
        To show that $F$ \emph{$(\beta,\gamma)$ order preserving between $\mu$ and $\widehat{\mu}$}, we need to show that 
        for any two candidates $c,c'\in C$ belonging to the same group ($G_1$ or $G_2$) and any $S\subseteq C\setminus \{c,c'\}$,  the following holds 
        \[
            \beta \cdot \Exp_{\mu}\left[f_S(c',\succ)\right] \geq \Exp_{\mu}\left[f_S(c,\succ) \right] > 0 
            \implies 
            \gamma\cdot {\Exp_{\widehat{\mu}}\left[f_S(c', \nsucc)\right]}\geq   {\Exp_{\widehat{\mu}}\left[f_S(c, \nsucc)\right]}.
            \yesnum\label{eq:implication_swap}
        \]
        At a high level, we will prove the above implication by bounded $\Exp_{\widehat{\mu}}\left[f_S(c, \nsucc)\mid \succ\right]$ with a $1\pm O(\phi t)$ multiplicative factor of $f_S(c, \succ)$ (for any $\succ\ \in \prefs{C}$ and $c\in C$) and then taking an expectation with respect to $\succ\sim \mu$.
        Concretely, we prove the following lemma.
        \begin{lemma}\label{lem:swap_simple}
            Suppose $t\geq 1$, $\phi\in (0,t^{-1})$, and $\succ\ \in \prefs{C}$.
            Let $\lambda\coloneqq \Ex_{\mu}\insquare{\frac{1}{Z(\succ)}}\cdot O\inparen{\frac{t\tau\phi}{\rho(1-\phi)}}$.
            For any $c\in C$, and $S\subseteq C\setminus \inbrace{c}$,  it holds that 
                \[
                    \Ex_\mu\insquare{f_S(c, \succ)}\cdot (1 - \lambda)
                     \leq 
                    \Exp_{\widehat{\mu}}\left[f_S(c, \nsucc) \right] 
                     \leq 
                    \Ex_\mu\insquare{f_S(c, \succ)}\cdot (1 + \lambda).
                \]
        \end{lemma}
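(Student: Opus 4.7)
The proof hinges on the observation that, by the neutrality built into \cref{def:score} (i.e., $f(T,\succ) = f(\pos_\succ(T))$), the marginal $f_S(c,\succ)$ depends on $\succ$ only through the positions occupied by $S \cup \{c\}$. Consequently, starting from any fixed $\succ$, a single swap from $\succ_i$ to $\succ_{i+1}$ alters $f_S(c, \cdot)$ only when the sampled pair $(i', j') \in A(\succ_i)$ contains the position of some candidate in $S \cup \{c\}$. First I would show that for fixed $\succ_i$ and $d \in S \cup \{c\}$, the total weight in the sampling distribution $\phi^{i'-j'}/Z(\succ_i)$ of pairs $(i', j')$ involving $d$ is bounded by the geometric sum $\phi/((1-\phi) Z(\succ_i))$, since pairs involving $d$ contribute distinct terms $\phi^\ell$ for $\ell = 1, 2, \dots$. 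Summing over the at-most $|S \cup \{c\}|$ relevant candidates, the probability that swap $i+1$ changes $f_S(c, \cdot)$ is at most $|S \cup \{c\}| \phi / ((1-\phi) Z(\succ_i))$.

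Next I would union-bound over the $t$ swaps. Since $f_S(c, \cdot) \in [0,\tau]$ by monotonicity and the definition of $\tau$, each ``effective'' swap changes $f_S(c, \cdot)$ by at most $\tau$, giving
\[
    \bigl|\Ex_{\wh{\mu}}\!\left[f_S(c,\nsucc) \mid \succ\right] - f_S(c,\succ)\bigr|
    \ \leq \ \tau \cdot \frac{|S \cup \{c\}|\,\phi}{1-\phi} \cdot \sum_{i=0}^{t-1}\Ex\!\left[\tfrac{1}{Z(\succ_i)}\,\middle|\, \succ_0=\succ\right].
\]
Taking expectation over $\succ \sim \mu$ yields an additive bound of the form $\Ex_\mu[1/Z(\succ)] \cdot \Theta(t\tau\phi/(1-\phi))$, with the $|S \cup \{c\}|$ factor absorbed into $\Theta(\cdot)$. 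To convert this to a multiplicative bound, I would invoke submodularity of $f$ to get $f_S(c,\succ) \geq f_{C \setminus \{c\}}(c,\succ)$ pointwise, which implies $\Ex_\mu[f_S(c,\succ)] \geq \rho$ by \cref{eq:lbddd}. Dividing through by $\rho$ yields both inequalities in the statement with $\lambda = \Ex_\mu[1/Z(\succ)] \cdot \Theta(t\tau\phi/((1-\phi)\rho))$, matching the claim.

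The main technical obstacle will be the appearance of $\Ex[1/Z(\succ_i)]$ for $i \geq 1$, since the marginal law of $\succ_i$ is no longer $\mu$ once biased swaps have been applied. The hypothesis $\phi < 1/t$ is essential: in expectation at most one ``effective'' swap occurs within $t$ iterations, so the law of $\succ_i$ remains close to $\mu$. I would make this precise either by (i) a coupling that matches $\succ_i$ to $\succ_0$ on the complement of the rare ``any effective swap'' event, or (ii) an inductive argument showing each swap alters $\Ex[1/Z(\cdot)]$ by at most a multiplicative $1 + O(\phi)$ factor, so that $t$ iterations compound to $(1+O(\phi))^t = O(1)$ under $\phi t < 1$. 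Either route delivers $\sum_{i=0}^{t-1}\Ex[1/Z(\succ_i)] \leq O(t) \cdot \Ex_\mu[1/Z(\succ)]$, completing the derivation.
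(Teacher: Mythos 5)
Your approach differs from the paper's in a substantive way: you union-bound over every candidate $d \in S \cup \{c\}$ whose position might be touched by a swap, treating each ``effective'' swap as costing up to $\tau$, whereas the paper splits into cases $c \in G_1$ and $c \in G_2$ and tracks only $c$'s \emph{own} position change (writing the conditional expectation explicitly in terms of the probabilities that $c$ moves from position $j$ to position $\ell$, and bounding the resulting sum by $\tau\phi/((1-\phi)Z(\succ))$; for $c \in G_1$ the one-sided structure makes the lower bound trivial, and symmetrically for $c \in G_2$).

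The genuine gap in your proposal is the $|S \cup \{c\}|$ factor. You write that it is ``absorbed into $\Theta(\cdot)$,'' but this is incorrect: $S$ is an arbitrary subset of $C \setminus \{c\}$, and the very next application of this lemma (\cref{lem:swap_alpha}) instantiates it with $S = C \setminus \{c\}$, so $|S \cup \{c\}| = m$. Your bound is therefore $\lambda' = \Ex_\mu[1/Z(\succ)]\cdot\Theta\!\bigl(m\,t\tau\phi/(\rho(1-\phi))\bigr)$, which is weaker than the stated $\lambda$ by a factor of $m$ and does not prove the lemma. The paper avoids this factor precisely by not union-bounding over all of $S$: it treats $f_S(c,\nsucc)$ as if it were determined by $c$'s position alone, absorbing the effect of moves within $S$ into the casework (for modular rules like SNTV and Borda, $f_S(c,\cdot)$ genuinely depends only on $c$'s position, so no factor is lost; for non-modular rules the accounting is subtler but still does not introduce a linear-in-$|S|$ blowup). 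If you want to salvage your route, you would need to show that the aggregate effect of moves within $S$ is $O(\tau\phi/((1-\phi)Z))$ rather than $O(|S|\tau\phi/((1-\phi)Z))$, which requires using more structure of $f$ than just boundedness by $\tau$.

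Two smaller remarks. First, your use of submodularity to get $f_S(c,\succ) \geq f_{C \setminus \{c\}}(c,\succ)$ and hence $\Ex_\mu[f_S(c,\succ)] \geq \rho$ is exactly the move the paper makes, and it is correct. Second, you correctly flag the subtlety that $\Ex[1/Z(\succ_i)]$ for $i \geq 1$ is not taken with respect to $\mu$; the paper's proof iterates the $t=1$ argument without dwelling on this, so your attention to the point is welcome, but resolving it would still leave the $|S \cup \{c\}|$ issue above.
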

        \cref{lem:swap_order_preservation} follows from the above result by taking an expectation with respect to $\succ\sim \mu$.
        \begin{proof}[Proof of \cref{lem:swap_order_preservation} assuming \cref{lem:swap_simple}]
            Fix any two candidates $c,c'\in C$ belonging to the same group ($G_1$ or $G_2$) and any $S\subseteq C\setminus \{c,c'\}$.
            To prove \cref{lem:swap_order_preservation} it suffices to prove \cref{eq:implication_swap}.
            Suppose the following is true 
            \[
                \beta \cdot \Exp_{\mu}\left[f_S(c',\succ)\right] \geq \Exp_{\mu}\left[f_S(c,\succ) \right] > 0.\yesnum\label{eq:precondition_swap}
            \]
            (If this is not true, \cref{eq:implication_swap} vacuously holds.)
            Using \cref{lem:swap_simple}, for any $\succ\ \in \prefs{C}$, we have the following inequalities 

            \begin{align*}
                \Ex_\mu\insquare{f_S(c', \succ)} \cdot (1 - \lambda)
                \quad \leq \quad 
                \Exp_{\widehat{\mu}}\left[f_S(c', \nsucc) \right] 
                \quad &\leq \quad 
                \Ex_\mu\insquare{f_S(c', \succ)} \cdot (1 + \lambda),\yesnum\label{eq:swap_order_eq1}\\ 
                \Ex_\mu\insquare{f_S(c, \succ)}\cdot (1 - \lambda)
                \quad \leq \quad 
                \Exp_{\widehat{\mu}}\left[f_S(c, \nsucc) \right] 
                \quad &\leq \quad 
                \Ex_\mu\insquare{f_S(c, \succ)}\cdot (1 + \lambda).\yesnum\label{eq:swap_order_eq2}
            \end{align*}
            Now, we are ready to complete the proof 
            \begin{align*}
                {\Exp_{\widehat{\mu}}\left[f_S(c', \nsucc)\right]}\ \ 
                &\Stackrel{\eqref{eq:swap_order_eq1}}{\geq} \ \  
                    (1 - \lambda)\cdot \Ex_\mu\insquare{f_S(c', \succ)}\\
                &\Stackrel{\eqref{eq:precondition_swap}}{\geq} \ \   \frac{1 - \lambda}{\beta} \cdot \Ex_\mu\insquare{f_S(c, \succ)}\\
                &\Stackrel{\eqref{eq:swap_order_eq2}}{\geq} \ \   \frac{1 - \lambda}{(1+\lambda)\cdot\beta} \cdot \Ex_{\wh{\mu}}\insquare{f_S(c, \nsucc)}.
            \end{align*}
            \cref{lem:swap_order_preservation} follows by choosing $r\coloneqq \frac{4\lambda}{1+3\lambda}=\Ex_{\mu}\insquare{\frac{1}{Z(\succ)}}\cdot O\inparen{\frac{t\tau\phi}{\rho(1-\phi)}}$,
            \[
                \beta=1-r,
                \quad \text{and}\quad 
                \gamma = \frac{\beta(1+\lambda)}{1-\lambda} = 1-\frac{r}{2}.
            \]
        \end{proof}
        It remains to prove \cref{lem:swap_simple}.
        
        \paragraph{Notation.}
        Fix $c\in C$ and $S\subseteq C\setminus \inbrace{c}$.
        For each $j\in [m]$, let $i_{\succ,j}$ be the $j$-th candidate in $\succ$.
        For each $j\in [m]$, define $\tau_{j,S}(f)$ to be the marginal score of $i_{\succ,j}$ with respect to $S$, i.e., 
        $$\tau_{j, S}(f)\coloneqq f_S(\inbrace{i_{\succ,j}},\succ)$$ 
        which is independent of $\succ\ \in \prefs{C}$.
        Fix any draw $\succ$ from $\mu$.
        For each $1\leq \ell \leq m$, let $A(\ell)\subseteq [m]\setminus [\ell]$ be the set of indices among $[m]\setminus [\ell]$ where $G_1$ candidates appear in $\succ$.
        For each $1\leq \ell \leq m$, let $B(\ell)\subseteq  [\ell]$ be the set of indices among $ [\ell]$ where $G_2$ candidates appear in $\succ$.
        
        We divide the proof into two cases depending on the group of $c.$

        \paragraph{Case A ($c\in G_1$):}
            Suppose $c\in G_1$.
            Let 
            \[
                j\coloneqq \pos_\succ(c).
            \]
            We will first consider the case where there is only one swap, i.e., $t=1$, and later generalize to multiple swaps.
            We can express $\Exp_{\widehat{\mu}}\left[f_S(c, \nsucc)\mid \succ\right]$ as follows.
            \begin{align*}
                &\Exp_{\widehat{\mu}}\left[f_S(c, \nsucc)\mid \succ\right]\\
                    &\quad= 
                    \sum_{\ell\in [j]}\Pr\insquare{\text{$c$ is swapped from $j$ to $\ell$}}\cdot  \tau_{\ell,S} 
                    +\inparen{1-\sum_{\ell\in [j]}\Pr\insquare{\text{$c$ is swapped from $j$ to $\ell$}} } \tau_{j,S}
                    \tag{The positions of candidates in $G_1$ only reduces, and the definition of $\tau_{\ell,S}$}\\             
                    &\quad= 
                    \sum_{\ell\in B\inparen{j}}\Pr\insquare{\text{$c$ is swapped from $j$ to $\ell$}}\cdot  \tau_{\ell,S} 
                    +\inparen{1-\sum_{\ell\in B\inparen{j}}\Pr\insquare{\text{$c$ is swapped from $j$ to $\ell$}} } \tau_{j,S}
                    \tag{Candidates in $G_1$ only swap positions with candidates in $G_2$ in one swap}\\
                    &\quad= 
                    \frac{1}{Z(\succ)} \sum_{\ell\in B\inparen{j}} \phi^{j-\ell} \cdot \tau_{\ell,S} 
                    +\inparen{1-\frac{1}{Z(\succ)} \sum_{\ell\in B\inparen{j}} \phi^{j-\ell}} \tau_{j,S}.
                    \tagnum{By construction in the swapping-based bias model; \cref{def:swapping}}\customlabel{eq:swapping_expression}{\theequation}
            \end{align*}
            
            We can upper bound the above expression as follows 
            \begin{align*}
                &\Exp_{\widehat{\mu}}\left[f_S(c, \nsucc)\mid \succ\right]\\
                    &\quad = \frac{1}{Z(\succ)} \sum_{\ell\in B\inparen{j}} \phi^{j-\ell} \cdot \tau_{\ell,S} 
                    +\inparen{1-\frac{1}{Z(\succ)} \sum_{\ell\in B\inparen{j}} \phi^{j-\ell}} \tau_{j,S}\\
                    &\quad \leq 
                    \frac{1}{Z(\succ)} \inparen{\tau_{1,S} \cdot \phi^{j-1} + \tau_{2,S} \cdot \phi^{j-2} +
                    \dots + \tau_{j-1, S} \cdot \phi }
                    +\inparen{1-\frac{1}{Z(\succ)} \cdot \inparen{\phi+\phi^2+\dots+\phi^{j-1}} } \tau_{j,S}\\
                    &\quad \leq 
                    \frac{1}{Z(\succ)} \inparen{\tau_{1,S} \cdot \phi^{j-1} + \tau_{2,S} \cdot \phi^{j-2} +
                    \dots + \tau_{j-1, S} \cdot \phi }
                    +\tau_{j,S} \tag{Using that $\phi,Z(\succ)\geq 0$}\\
                    &\quad \leq 
                    \frac{\tau_{1,S}}{Z(\succ)} \cdot \inparen{\phi + \phi^2 + \dots + \phi^{j-1}}
                    + \tau_{j,S} \tag{Using that $\tau_{1,S}\geq \tau_{2,S}\geq \dots\geq \tau_m$}\\
                    &\quad \leq 
                    \frac{\tau}{Z(\succ)} \cdot \frac{\phi}{1-\phi}
                    + \tau_{j,S} \tag{Using that $\phi\geq 0$ and $\tau_{1,S}\leq \tau$}\\
                    &\quad = 
                    f_S(c,\succ)+\frac{\tau}{Z(\succ)} \cdot \frac{\phi}{1-\phi}.
            \end{align*}
            Taking the expectation over $\succ\ \sim \mu$, we get that 
            \begin{align*}
                \Exp_{\widehat{\mu}}\left[f_S(c, \nsucc) \right] \ \ 
                    &\Stackrel{}{\leq}\ \  \Ex_\mu\insquare{f_S(c,\succ)} +\Ex_\mu\insquare{\frac{1}{Z(\succ)}} \cdot \frac{\tau\phi}{1-\phi}\\
                    &\Stackrel{\eqref{eq:lbddd}}{\leq}\ \ \Ex_\mu\insquare{f_S(c,\nsucc)} \cdot \inparen{1+ \Ex_\mu\insquare{\frac{1}{Z(\succ)}}\cdot \frac{\tau\phi}{\rho(1-\phi)}}.
            \end{align*}
            Since $c\in G_1$, the position of candidates in $G_1$ only reduces, which in turn increases the score, it follows that  $\Exp_{\widehat{\mu}}\left[f_S(c, \nsucc)\mid \succ\right]\geq \tau_{j,S} = f_S(c,\nsucc).$
                   Taking the expectation over $\succ\ \sim \mu$, we get that 
            \begin{align*}
                \Exp_{\widehat{\mu}}\left[f_S(c, \nsucc)\right] 
                \geq \Ex_\mu\insquare{f_S(c,\nsucc)}.
            \end{align*}

        \paragraph{Case B ($c\in G_2$):}
            Suppose $c\in G_2$.
            Let 
            \[
                j\coloneqq \pos_\succ(c).
            \]
            We will first consider the case where there is only one swap, i.e., $t=1$, and later generalize to multiple swaps.
            We can express $\Exp_{\widehat{\mu}}\left[f_S(c, \nsucc)\mid \succ\right]$ as follows.
            \begin{align*}
                &\Exp_{\widehat{\mu}}\left[f_S(c, \nsucc)\mid \succ\right]\\ 
                    &\quad = 
                    \sum_{\ell\in [m]\setminus[j]}\Pr\insquare{\text{$c$ is swapped from $j$ to $\ell$}}\cdot  \tau_{\ell,S} 
                    +\inparen{1-\sum_{\ell\in [m]\setminus[j]}\Pr\insquare{\text{$c$ is swapped from $j$ to $\ell$}} } \tau_{j,S}
                    \tag{The positions of candidates $c\in G_2$ only increases, and the definition of $\tau_{\ell,S}$}\\
                    &\quad = 
                    \sum_{\ell\in A\inparen{j}}\Pr\insquare{\text{$c$ is swapped from $j$ to $\ell$}}\cdot  \tau_{\ell,S} 
                    +\inparen{1-\sum_{\ell\in A\inparen{j}}\Pr\insquare{\text{$c$ is swapped from $j$ to $\ell$}} } \tau_{j,S}
                    \tag{Candidates in $G_2$ only swap positions with candidates in $G_1$ in one swap}\\
                    &\quad = 
                    \frac{1}{Z(\succ)} \sum_{\ell\in A\inparen{j}} \phi^{\ell-j} \cdot \tau_{\ell,S} 
                    +\inparen{1-\frac{1}{Z(\succ)} \sum_{\ell\in A\inparen{j}} \phi^{\ell-j}} \tau_{j,S}.
                    \tagnum{By construction in the swapping-based bias model; \cref{def:swapping}} 
            \end{align*}
            We can lower bound the above expression as follows 
            \begin{align*}
                \Exp_{\widehat{\mu}}\left[f_S(c, \nsucc)\mid \succ\right] 
                    &= \frac{1}{Z(\succ)} \sum_{\ell\in A\inparen{j}} \phi^{\ell-j} \cdot \tau_{\ell,S} 
                    +\inparen{1-\frac{1}{Z(\succ)} \sum_{\ell\in A\inparen{j}} \phi^{\ell-j}} \tau_{j,S}\\
                    &\geq  \inparen{1-\frac{1}{Z(\succ)} \sum_{\ell\in A\inparen{j}} \phi^{\ell-j}} \tau_{j,S} \tag{Using that $\phi,\tau_{\ell,S},Z(\succ)\geq 0$ for all $\ell\in [m]$}\\
                    &\geq  \inparen{1-\frac{1}{Z(\succ)} \inparen{\phi + \phi^2 + \dots }} \tau_{j,S} 
                    \tag{Using that $\phi,\tau_{\ell,S},Z(\succ)\geq 0$ for all $\ell\in [m]$}\\
                    &=  \inparen{1-\frac{1}{Z(\succ)}\cdot \frac{\phi}{1-\phi}} \tau_{j,S}.
            \end{align*}
            Taking the expectation over $\succ\ \sim \mu$, we get that 
            \begin{align*}
                \Exp_{\widehat{\mu}}\left[f_S(c, \nsucc) \right] \ \ 
                    &\Stackrel{}{\geq}\ \  \Ex_\mu\insquare{\tau_{j,S}} - \frac{\phi}{1-\phi} \cdot \Ex_\mu\insquare{\frac{\tau_{j,S}}{Z(\succ)}}\\
                    &\Stackrel{}{\geq}\ \  \Ex_\mu\insquare{\tau_{j,S}} - \frac{\phi}{1-\phi} \cdot \Ex_\mu\insquare{\frac{\tau}{Z(\succ)}} \tag{Using that $\tau_{j,S}\leq \tau$}\\ 
                    &\Stackrel{}{\geq}\ \  \Ex_\mu\insquare{f_S(c,\succ)} - \frac{\phi\tau}{1-\phi} \cdot \Ex_\mu\insquare{\frac{1}{Z(\succ)}}\tag{Using the definition of $j$}\\
                    &\Stackrel{\eqref{eq:lbddd}}{\geq}\ \  
                    \Ex_\mu\insquare{f_S(c,\succ)}\inparen{1 - \frac{\phi\tau}{\rho(1-\phi)} \cdot \Ex_\mu\insquare{\frac{1}{Z(\succ)}}}.
            \end{align*}
            Since $c\in G_2$, the position of candidates in $G_2$ only increases, and increasing the position does not increase the score, it follows that $\Exp_{\widehat{\mu}}\left[f_S(c, \nsucc)\mid \succ\right]\leq \tau_{j,S} = f_S(c,\nsucc).$
            Taking the expectation over $\succ\ \sim \mu$, we get that 
            \begin{align*}
                \Exp_{\widehat{\mu}}\left[f_S(c, \nsucc)\right] 
                \leq 
                \Ex_\mu\insquare{f_S(c,\nsucc)}  
            \end{align*}

        \paragraph{Completing the proof.}
            Thus, across both cases, the following inequalities hold when $t=1$:
            \begin{align*}
                & \quad \inparen{1-\Ex_\mu\insquare{\frac{1}{Z(\succ)}}\cdot \frac{\tau\phi}{\rho(1-\phi)}} 
                \cdot \Ex_\mu\insquare{f_S(c,\nsucc)} \\
                \leq & \quad
                \Exp_{\widehat{\mu}}\left[f_S(c, \nsucc)\right] 
                \leq 
                \inparen{1+\Ex_\mu\insquare{\frac{1}{Z(\succ)}}\cdot\frac{\tau\phi}{\rho(1-\phi)}} 
                \cdot\Ex_\mu\insquare{f_S(c,\nsucc)}.
            \end{align*} 
            Consider a draw of $\nsucc$, say $\nsucc_1$, obtained after performing one swap on $\succ$.
            Replacing $\succ$ by $\nsucc_1$ in the above proof, we get bounds on utility with the swapping-based model with $t=2$ swaps.
            Repeating this argument $t$ times, when $\phi\in (0,t^{-1})$, we get the following bounds.
            \begin{align*}
                & \quad \inparen{1-\Ex_\mu\insquare{\frac{1}{Z(\succ)}} \cdot \frac{O(t\tau\phi)}{\rho(1-\phi)}} 
                \cdot \Ex_\mu\insquare{f_S(c,\nsucc)} \\
                \leq & \quad
                \Exp_{\widehat{\mu}}\left[f_S(c, \nsucc)\right] 
                \leq 
                \inparen{1+\Ex_\mu\insquare{\frac{1}{Z(\succ)}}\cdot \frac{O(t\tau\phi)}{\rho(1-\phi)}} 
                \cdot\Ex_\mu\insquare{f_S(c,\nsucc)}.
            \end{align*}

    \subsubsection{Bounding $\alpha$ for the Swapping-Based Model}
        \label{sec:proofof:lem:swap_alpha}
        In this section, we bound the parameter $\alpha$ for any multiwinner scoring function when the biased generative model is the swapping-based model (\cref{def:swapping}) and the latent generative model is $\mu$ is any generative model that satisfies the following condition (for some parameter $\rho>0$)
        \[
            \min_{c\in C}\Ex_\mu\insquare{f_{C\setminus \inbrace{c}}(c,\succ{})}\geq \rho.
            \yesnum\label{eq:condition_on_latent_generating_model}
        \]
        Concretely, we prove the following bound on $\alpha.$
        \begin{lemma}[\textbf{Bounds on $\alpha$ for the swapping-based model}]\label{lem:swap_alpha}
            Let $F = \sum_{v\in V} f(\cdot ,\succ_v)$ be a latent multiwinner score function. 
            Suppose $\mu$ be a generative model such that $F$ is order-preserving with respect to $\mu$ (\cref{def:bias}) and that satisfies \cref{eq:condition_on_latent_generating_model} for some $\rho>0$.
            For any numbers $\phi\in (0,t^{-1})$ and the generative model $\wh{\mu}$ in \cref{def:swapping} with parameters $\mu$ and $\phi$,
            $\alpha$ satisfies the following bound 
            \[
                \alpha\in 
                \alpha_0 \cdot \inparen{1\pm \Ex_{\mu}\insquare{\frac{1}{Z(\succ)}}\cdot \frac{O(t\phi)}{1-\phi} \cdot \frac{\tau}{\rho}}.
            \]
            Where $\alpha_0\coloneqq (\tau(f))^{-1}\cdot \min_{c\in M} \Ex_{\mu}\insquare{f_{C\setminus \inbrace{c}}(c,\not\succ)}$ and $Z(\succ)$ is the normalizing constant corresponding to preference list $\succ$, as defined in \cref{def:swapping}.
        \end{lemma}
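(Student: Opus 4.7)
The plan is to reduce the statement directly to \cref{lem:swap_simple}, which is already proved in the section and precisely controls how the biased expected marginal contribution of any candidate differs from the latent one under the swapping-based model. Recall that the smoothness parameter $\alpha$ is the smallest $\alpha\in[0,1]$ simultaneously dominating the two quantities
\[
A \coloneqq \tfrac{1}{\tau_1(f)}\min_{c\in M}\Ex_{\wh{\mu}}\insquare{f_{C\setminus\{c\}}(c,\nsucc)}
\quad\text{and}\quad
\alpha_0 = \tfrac{1}{\tau_1(f)}\min_{c\in M}\Ex_{\mu}\insquare{f_{C\setminus\{c\}}(c,\succ)}.
\]
So the optimal $\alpha$ is simply $\max\{A,\alpha_0\}$, and it suffices to prove the claimed two-sided bound for $A$ in terms of $\alpha_0$.

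The first step is to instantiate \cref{lem:swap_simple} with $S = C\setminus\{c\}$ for each $c\in M$. Since $\mu$ is assumed to satisfy \eqref{eq:condition_on_latent_generating_model} with parameter $\rho$, and since $\phi\in(0,t^{-1})$, the lemma yields, with $\lambda \coloneqq \Ex_\mu[\tfrac{1}{Z(\succ)}]\cdot O\bigl(\tfrac{t\tau\phi}{\rho(1-\phi)}\bigr)$, the per-candidate two-sided bound
\[
(1-\lambda)\cdot\Ex_{\mu}\insquare{f_{C\setminus\{c\}}(c,\succ)} \;\leq\; \Ex_{\wh{\mu}}\insquare{f_{C\setminus\{c\}}(c,\nsucc)} \;\leq\; (1+\lambda)\cdot\Ex_{\mu}\insquare{f_{C\setminus\{c\}}(c,\succ)}.
\]
Then I will take the minimum over $c\in M$ on both sides. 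The minimum does not commute exactly with the multiplicative envelope, but a standard one-line argument gives it: taking $c^\star\in\arg\min_{c\in M}\Ex_{\wh{\mu}}[f_{C\setminus\{c\}}(c,\nsucc)]$ on the upper side yields $A\leq(1+\lambda)\alpha_0$, and evaluating the lower bound at any $c$ achieving the min on the latent side yields $A\geq(1-\lambda)\alpha_0$. Dividing by $\tau_1(f)$ throughout, this shows $A\in\alpha_0\cdot(1\pm\lambda)$.

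The last step is to combine this with the definition of $\alpha$. Since $\alpha=\max\{A,\alpha_0\}$ and $A\in\alpha_0\cdot(1\pm\lambda)$, we automatically have $\alpha\geq\alpha_0$ and $\alpha\leq\alpha_0(1+\lambda)$, which is exactly the conclusion $\alpha\in\alpha_0\cdot(1\pm\lambda)$ with $\lambda$ of the desired order. I do not expect a genuine obstacle here: the entire argument is a direct consequence of the already-proved \cref{lem:swap_simple}, and the only care needed is to verify that (i) the hypothesis $\phi\in(0,t^{-1})$ of \cref{lem:swap_simple} is inherited and (ii) the min-vs-multiplicative-envelope manipulation is done symmetrically to recover a clean two-sided bound rather than a one-sided one.
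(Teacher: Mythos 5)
Your proof is correct and follows essentially the same route as the paper's: instantiate \cref{lem:swap_simple} with $S=C\setminus\{c\}$ for each $c\in M$ and then take the minimum over $M$. The paper's own proof is exactly this, stated in one line, so you have recovered it, and your explicit treatment of the min-versus-multiplicative-envelope step is a reasonable bit of extra care.

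One small point worth flagging: you read the ``$\geq$'' in \cref{def:smooth} at face value, leading to $\alpha=\max\{A,\alpha_0\}$ and hence $\alpha\ge\alpha_0$ automatically, which would make the lower half of the claimed two-sided bound $\alpha\ge\alpha_0(1-\lambda)$ vacuous. Looking at how $\alpha$ is actually \emph{used} in the proof of \cref{thm:main_algorithmic} (e.g., in \cref{eq:lb_on_bj}, where one needs $\Ex_{\wh{\mu}}[\hF_{C\setminus\{c\}}(c)]\ge\alpha\tau n$ for $c\in M$), $\alpha$ must lower-bound both quantities, so the inequalities in \cref{def:smooth} are really meant to be ``$\leq$'' and the relevant tight value is $\min\{A,\alpha_0\}$ (or, as the paper's analogous proof of \cref{lem:utility_alpha} does, one simply bounds the biased quantity $A$ directly). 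Under that reading both sides of your two-sided bound $A\in\alpha_0(1\pm\lambda)$ are genuinely needed, which you did derive; so your argument covers the correct interpretation as well, and the conclusion $\alpha\in\alpha_0(1\pm\lambda)$ holds regardless.
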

        \noindent 
        Similar to \cref{lem:swap_order_preservation}, the above lemma also does not fix a specific generative model of latent preference lists $\mu$.
        The parameter $\alpha$ depends on $\mu$ via the terms $\alpha_0$ and $\Ex_{\mu}\insquare{\frac{1}{Z(\succ)}}$.
        As discussed in \cref{sub:general_model}, in general, we expect $\Ex_{\mu}\insquare{\frac{1}{Z(\succ)}}$ to be of the order of $\Omega\inparen{m\phi}$.
        If this holds, then \cref{lem:swap_alpha}, implies that $\alpha\in \alpha_0\inparen{1 \pm O\inparen{m^{-1}t}}.$
        The proof of \cref{lem:swap_alpha} appears in Section~\ref{sec:proofof:lem:swap_alpha}.
        \begin{proof}[Proof of \cref{lem:swap_alpha}]
            \cref{lem:swap_alpha} follows as a corollary of \cref{lem:swap_simple}, which was proved in \cref{sec:proofof:lem:swap_order_preservation}.
            Consider any $c\in M$, selecting the set $S$ in \cref{lem:swap_simple} as $C\setminus \inbrace{c}$, we get that 
            \[
                \Ex_\mu\insquare{f_{C\setminus \inbrace{c}}(c, \succ)}\cdot (1 - \lambda)
                     \leq 
                    \Exp_{\widehat{\mu}}\left[f_{C\setminus \inbrace{c}}(c, \nsucc) \right] 
                     \leq 
                    \Ex_\mu\insquare{f_{C\setminus \inbrace{c}}(c, \succ)}\cdot (1 + \lambda).
            \]
            where $\lambda\coloneqq \Ex_{\mu}\insquare{\frac{1}{Z(\succ)}}\cdot O\inparen{\frac{t\phi}{1-\phi}\cdot \frac{\tau}{\rho}}$.
            The lemma follows by taking the minimum overall $c\in M$.
        \end{proof}

\section{Impossibility Results for \Cref{problem:representational_optimal}}
\label{sec:impossible}

In this section, we provide a necessary condition for \Cref{problem:representational_optimal}.
Roughly speaking, we show that if $n$ is not sufficiently large, representational constraints cannot recover an (approximate) optimal solution.
In combination with our algorithmic result, we observe that under the utility-based model, the number of voters necessary for SNTV or $\ell_1$-CC to find a close-to-optimal committee can be much larger than for Borda (\Cref{remark:robustness}). 

Again, we let $F = \sum_{v\in V} f(\cdot ,\succ_v)$ be a latent multiwinner score function voting, and let $(\mu,\widehat{\mu})$ be a generative model defined in \Cref{def:bias,def:latent}.
We propose the following definition.

\begin{definition}[\bf{Contribution of $G_2\cap S^\star$}]
    \label{def:contribution}
    Let $0\leq \ell\leq k$ be an integer.
    Define 
    $$
    r_{\ell} \coloneqq \Ex_{\mu}\left[\max_{S\subseteq C\setminus (G_2\cap S^\star): |S|=k, |S\cap G_2| = \ell} \frac{F(S)}{\OPT}\right]$$
    to be the expected maximum ratio between the scores of an optimal committee $S$ in $C\setminus (G_2\cap S^\star)$ with $|S\cap G_2| = \ell$ and $S^\star$.
\end{definition}

\noindent
$1-r_{\ell}$ measures the contribution of candidates in $G_2\cap S^\star$ to $F$ for all subsets $S$ with $|S\cap G_2| = \ell$.
If $r_{\ell}$ is close to 1, we can safely ignore candidates in $G_2\cap S^\star$ and can still find a near-optimal solution from the remaining candidates, even with constraint $\calK(\ell)$.
Otherwise, if $r_{\ell}$ is not close to 1, we may need the candidates $G_2\cap S^\star$ for a high-score committee.
Then if all candidates in $G_2$ are indistinguishable under $\widehat{\mu}$, it is likely that $\widehat{S}_{\ell}$ loses a lot.

\begin{remark}[\bf{Discussion on the scale of $r_{\ell}$}]
\label{remark:rl}

Usually, $r_{\ell}$ becomes smaller as the difference $|\ell - |G_2\cap S^\star||$ becomes larger.
For instance, for SNTV under \Cref{ex:utility_generative} with $\mu$ being the uniform distribution on interval $[0,1]$, suppose $\omega_c = 1$ for all $c\in S^\star$ and $\omega_c = 0.5$ for all $c\in C\setminus S^\star$.
We can verify that $\Ex_{\mu}[F(c)]\geq 2\Ex_{\mu}[F(c')]$ for any $c\in S^\star$ and $c'\in C\setminus S^\star$, which implies that $$r_{\ell} \leq 1-\frac{|\ell - |G_2\cap S^\star||}{2k}$$ when $n,m$ are sufficiently large.
This observation matches the intuition that a representational constraint $\calK(\ell)$ with $\ell \approx |G_2\cap S^\star|$ may be better for debiasing, also observed in~\cite{celis2020interventions,mehrotra2022selection}.

Another consequence of this example is that we find $F(\widehat{S}) \ll \OPT$ is possible.
If the bias parameter $\theta < 0.5$, we know that $\widehat{S}$ is likely to contain only candidates in $G_1$, which results in $$F(\widehat{S}) \leq r_0\cdot \OPT \leq (1-\frac{|G_2\cap S^\star|}{2k})\cdot \OPT,$$ i.e., $\widehat{S}$ is far from optimal.
\end{remark}

\smallskip 
\noindent
For each $j\in [m]$, let $i_{\succ,j}$ be the $j$-th candidate in $\succ$.
For each $j\in [m]$, define $\tau_j(f)$ to be the score  of $i_{\succ,j}$, i.e., $\tau_j(f)\coloneqq f(\inbrace{i_{\succ,j}},\succ)$ for all $\succ\in \prefs{C}$.
Let $\succ\ \in \prefs{C}$ be a fixed preference list and define $\tau_{\min}(f) \coloneqq \min\inbrace{\tau_j(f)\colon j\in [m],\ \tau_j(f)>0}$ to be the smallest non-zero candidate value of $f$.
Our main impossibility result is summarized as follows.

\begin{theorem}[\bf{Impossibility result for \Cref{problem:representational_optimal}}]
    \label{thm:main_impossibility}
    Let $F\colon 2^C\to \R_{\geq 0}$ be a multiwinner score function.
    Let $\mu, \widehat{\mu}$ be generative models of latent/biased preference lists such that every $\succ_v$/$\nsucc_v$ is i.i.d. drawn from $\mu$/$\widehat{\mu}$. 
    Let $$\zeta \coloneqq \max_{c\in G_2} \Ex_{\widehat{\mu}}\left[f(c, \nsucc)\right].$$
    If $\zeta < \tau_{\min}(f)$, $k = o(|G_2|)$, and $n = o\inparen{\frac{\tau_{\min}(f)}{m \zeta}}$, with probability at least 0.9, $$F(\widehat{S}_{\ell}) \leq (r_{\ell}+o(1))\cdot \OPT$$ holds for every $0\leq \ell\leq k$.
\end{theorem}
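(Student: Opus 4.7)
The plan is to carve out a high-probability ``indistinguishability'' event on which $\hF$ carries no signal about the candidates in $G_2$, so that the algorithm's $G_2$-selection becomes essentially decoupled from $G_2\cap S^\star$.

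First, I would let $\evE$ be the event that $f(\inbrace{c},\nsucc_v)=0$ for every voter $v\in V$ and every $c\in G_2$ simultaneously. Because the singleton score $f(\inbrace{c},\nsucc_v)$ takes values in $\inbrace{\tau_1(f),\dots,\tau_m(f)}$ and any nonzero value is at least $\tau_{\min}(f)$, Markov's inequality yields $\Pr[f(\inbrace{c},\nsucc_v)>0]\leq \zeta/\tau_{\min}(f)$ for each pair $(v,c)$, and a union bound over the $\leq nm$ such pairs gives $\Pr[\evE^c]\leq nm\zeta/\tau_{\min}(f)=o(1)$ by the assumed growth rate of $n$. Conditioned on $\evE$, monotone submodularity together with $f_v(\emptyset)=0$ force $0\leq f_v(T\cup\inbrace{c})-f_v(T)\leq f_v(\inbrace{c})-f_v(\emptyset)=0$ for every $T\subseteq C$, $c\in G_2$, and $v\in V$; iterating over $G_2$ yields $\hF(S)=\hF(S\setminus G_2)$ for every committee $S$. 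Hence on $\evE$ the biased objective is blind to the $G_2$-part of the committee, and the set of maximizers of $\hF$ under the $\ell$-representational constraint consists of the top-$(k-\ell)$ candidates of $G_1$ by $\hF$ together with an \emph{arbitrary} $\ell$-subset of $G_2$.

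Next, I would exploit $k=o(|G_2|)$ to route $\wh{S}_\ell\cap G_2$ away from $G_2\cap S^\star$. Since $|G_2\cap S^\star|\leq k$, we have $|G_2\setminus S^\star|\geq (1-o(1))|G_2|\gg \ell$, so among the maximizers of $\hF$ in $\calK(\ell)$ there is one whose $G_2$-part lies entirely in $G_2\setminus S^\star$; taking this maximizer as $\wh{S}_\ell$ gives $\wh{S}_\ell\subseteq C\setminus(G_2\cap S^\star)$ on $\evE$. Consequently $F(\wh{S}_\ell)/\OPT\leq Y_\ell$, where $Y_\ell\coloneqq \max\inbrace{F(S)/\OPT:\,S\subseteq C\setminus(G_2\cap S^\star),\,|S|=k,\,|S\cap G_2|=\ell}$ lies in $[0,1]$ and satisfies $\Ex_\mu[Y_\ell]=r_\ell$ by definition.

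Finally, I would show $Y_\ell\leq r_\ell+o(1)$ with probability $1-o(1)$. Since $F(S)=\sum_v f(S,\succ_v)$ and $\OPT=F(S^\star)$ are each sums of $n$ iid bounded random variables, a Hoeffding and union-bound argument in the spirit of \cref{lem:concentration} shows that $F(S)/n$ and $\OPT/n$ simultaneously lie within $o(1)$ additive error of their expectations, over all relevant $S$, with probability $1-o(1)$; maximizing over $S$ gives $Y_\ell\leq \max_S\Ex[f(S,\succ)]/\Ex[f(S^\star,\succ)]+o(1)$, and the same concentration fact, integrated, shows $r_\ell$ equals the right-hand side up to $o(1)$, so $Y_\ell\leq r_\ell+o(1)$. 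A final union bound over $\evE$, this concentration event, and the $k+1$ values of $\ell$ yields the claimed $0.9$-probability conclusion. The main obstacle will be this last step: turning the in-expectation identity $\Ex[Y_\ell]=r_\ell$ into a \emph{high-probability} upper bound requires $n$ to be large enough for Hoeffding to control $F(S)/\OPT$ with $o(1)$ relative error (which is granted by $\tau_{\min}/(m\zeta)$ being large), and demands that the $o(1)$ tolerance absorb the residual gap between $\max_S\Ex[F(S)]/\Ex[\OPT]$ and $\Ex[\max_S F(S)/\OPT]$.
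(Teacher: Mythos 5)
Your proof follows essentially the same route as the paper's: define the event $\evE$ on which the biased singleton scores of all $c\in G_2$ vanish, use the assumed bound on $n$ to show $\Pr[\evE]\geq 1-o(1)$, observe that on $\evE$ the biased objective is blind to the $G_2$-part of the committee so that $\wh{S}_\ell$ can be taken to avoid $G_2\cap S^\star$ (using $k=o(|G_2|)$), bound $F(\wh{S}_\ell)/\OPT$ by the random variable whose expectation is $r_\ell$, and conclude. You do improve on the write-up in two places: you replace the paper's product bound $(1-\zeta/\tau_{\min})^{mn}$ (which silently assumes independence of the per-candidate events within a fixed voter) with a clean union bound, and you make explicit the monotone-submodularity argument that promotes ``$\hF(c)=0$ for each $c\in G_2$'' to ``$\hF(S)=\hF(S\setminus G_2)$ for every $S$,'' which the paper leaves implicit.

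There are two points worth flagging. First, the handling of ties: the paper argues that since candidates in $G_2$ are indistinguishable under $\hF$, a \emph{random} tiebreak selects an $\ell$-subset of $G_2$ avoiding $G_2\cap S^\star$ with probability $1-o(1)$; you instead assert that one may \emph{choose} $\wh{S}_\ell$ to be a particular maximizer whose $G_2$-part avoids $G_2\cap S^\star$. For an impossibility result the claim should hold for the set the algorithm actually returns, not a favorably chosen representative, so the paper's random-tiebreak phrasing is the intended reading and yours is a slight mismatch, even though it leads to the same $o(1)$ conclusion. Second, your closing step --- trying to promote $\Ex_\mu[Y_\ell]=r_\ell$ to ``$Y_\ell\leq r_\ell+o(1)$ with probability $1-o(1)$'' via Hoeffding --- is not justified under the theorem's hypotheses: $n=o(\tau_{\min}/(m\zeta))$ is an \emph{upper} bound on $n$ and gives no lower bound, so one cannot assume Hoeffding controls $F(S)/\OPT$ uniformly to $o(1)$. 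You correctly identify this as the obstacle; note, however, that the paper's own final sentence (``we conclude the theorem by the definition of $r_\ell$'') elides exactly the same step, so this is a gap shared with the source rather than one you introduced.
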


\begin{proof}

By the definition of $s(f)$ and $\zeta$, we know that for every $c\in G_2$,
\[
\Pr_{\widehat{\mu}}[f(c,\nsucc) > 0] \leq \frac{\zeta}{s(f)}.
\]
Consequently, we have that
\[
\Pr_{\widehat{\mu}}[\widehat{F}(c) = 0, \forall c\in G_2] \geq \inparen{1-\frac{\zeta}{s(f)}}^{m n}.
\]
Then since $n=o\inparen{\frac{s(f)}{m \zeta}}$, we have
\[
\Pr_{\widehat{\mu}}[\widehat{F}(c) = 0, \forall c\in G_2] \geq 1-o(1).
\]
Conditioned on the event that $\widehat{F}(c) = 0$ holds for all $c\in G_2$, $\widehat{S}_{\ell}$ exactly selects $\ell$ candidates from $G_2$ since $G_2$ does not contribute to $\widehat{F}$. 
Moreover, we can not distinguish candidates in $G_2$ and can only randomly select $\ell$ candidates from $G_2$ in $\widehat{S}_{\ell}$, which results in $o(1)$ probability to select any candidate from $G_2\cap S^\star$ since $|G_2\cap S^\star|\leq k = o(|G_2|)$.
Overall, with probability at least $1-o(1)$, we have that 
\[
F(\widehat{S}_{\ell}) \leq \max_{S\subseteq C\setminus (G_2\cap S^\star): |S|=k, |S\cap G_2| = \ell} F(S).
\]
Thus, we conclude the theorem by the definition of $r_{\ell}$.

\end{proof}

\noindent
Observe that the above theorem considers a more general generative model $(\mu, \widehat{\mu})$ that does not require order preserving properties in \Cref{def:bias,def:latent}.
Roughly, \Cref{thm:main_impossibility} indicates that the required number of voters is $n = \Omega(\frac{\tau_{\min}(f)}{m \zeta})$ which is non-trivial when $\frac{\tau_{\min}(f)}{\zeta} \gg m$.
Note that $\tau_{\min}(f) \leq \tau_1(f)$ always holds but usually has at most a $\poly(m)$ gap.
For instance, $\tau_{\min}(f) = \tau_1(f)$ for SNTV and Bloc, and $\tau_{\min}(f) = \frac{\tau_1(f)}{m-1}$ for Borda and $\ell_1$-CC.
Also, note that $\zeta$ is comparable to $\min_{c\in M} \Ex_{\wh{\mu}}\insquare{f_{C\setminus \{c\}}(c,\not\succ)}$, specifically, $\zeta \leq \min_{c\in M} \Ex_{\wh{\mu}}\insquare{f_{C\setminus \{c\}}(c,\not\succ)}$ for SNTV and $\ell_1$-CC.
Thus, the scale of $\frac{\tau_{\min}(f)}{\zeta}$ and $\frac{1}{\alpha}$ in \Cref{def:smooth} could be the same order, e.g., $\frac{\tau_{\min}(f)}{\zeta} \approx \Omega(\frac{1}{\alpha\cdot \poly(m)})$, which leads to a required number of voters $n = \Omega(\frac{1}{\alpha\cdot \poly(m)})$.
Combining with the bound on $\alpha$ in Section~\ref{sec:applications_of_result}, this required voter number is interesting; see the following comparison.

\begin{remark}[\bf{Comparison of robustness between different rules}]
\label{remark:robustness}
Specifically, under the utility-based generative model, we have that $n = \Omega(\frac{\theta^{-m+1}}{\poly(m)})$ for SNTV and $\ell_1$-CC and $n = \Omega(\frac{\theta^{-1}}{\poly(m)})$ for Borda to achieve a near-optimal solution $\widehat{S}_{\ell}$ with a representational constraint, say $F(\widehat{S}_{\ell}) \approx \OPT$.
Combining with \Cref{cor:algorithmic}, we know that the required voting number of \Cref{problem:representational_optimal} for SNTV and $\ell_1$-CC is at least $\Omega(\theta^{-m+1})$, which is much larger than the sufficient voting number $O(\theta^{-2} \cdot \poly(m))$ for Borda.
Thus, we may conclude that Borda is ``more robust'' than SNTV  and $\ell_1$-CC under the utility-based model.
\end{remark}

\newcommand{\hw}{\ensuremath{\wh{w}}} 

\section{Case Study: Utility-Based Generative Model of Latent and Biased Preferences}\label{sec:case_study}

    In this section, we present a self-contained discussion of our main result (\cref{thm:main_algorithmic}) within the utility-based model of preferences (\cref{ex:utility_generative,def:utility_bias}).
    In the utility-based model, each candidate $c$ has an \textit{intrinsic} utility $\omega_c\geq 0$.
    The true or \textit{latent} utility of candidate $c$ for voter $v$ is %
    \[  
        w_{v,c} =  \eta_{v,c} \cdot \omega_c, \yesnum
    \]
    where $\eta_{v,c}$ is a random variable drawn uniformly at random from $[0,1]$ and independent of $\eta_{v',c'}$ for any $v'\neq v$ and $c'\neq c$.
    These latent utilities, in turn, define the latent preference list of voters: for each voter $v$, $\succ_v$ is the list of candidates $c$ arranged in decreasing order of $w_{v,c}$. %
    The voters, however, do not observe their latent utilities for the candidates.
    Instead, they observe a (possibly) biased version of these latent utilities.
    These observed utilities are modeled by a bias parameter $\theta\in [0,1]$:
    for each candidate $c$ and voter $v$, the observed utility of $c$ for $v$ is
    \[
        \hw_{v,c} \coloneqq \begin{cases}
            w_{v,c} & \text{if } c\in G_1,\\
            \theta\cdot w_{v,c} & \text{if } c\in G_2.
        \end{cases}
        \yesnum\label{eq:obs_utility}
    \]

    \paragraph{Greedy is optimal without bias.} %
    Recall that our goal is to study the latent quality of solutions produced by algorithms that satisfy representational constraints in the presence of bias.
    For simplicity, suppose that the exact expected observed, biased marginal contributions $\Ex_{\wh{\mu}}\insquare{f_S(c,\nsucc)}$ are known for each $c\in C$ and $S\subseteq C$ (we relax this assumption later in this section).
    Consider the greedy algorithm that, in each iteration $t\in [k]$, selects the candidate $c$ that maximizes the expected marginal contribution to the set $S_t$ selected so far (i.e., $\arg\max_{c\in C\setminus S_t} \Ex_{\wh{\mu}}\insquare{f_{S_t}(c,\succ)}$).
    Without bias (i.e., $\theta=1$), it can be shown that this algorithm outputs the set $S_k$ that has the optimal latent utility among sets of size $k$ (i.e., $\Ex_{{\mu}}\insquare{f(S_k, \succ)} = \max_{S\subseteq C\colon \abs{S}=k} \Ex_{{\mu}}\insquare{f(S, \succ)}$).
    Intuitively, this is because of the following invariant: for any $c,c'\in C$
    \[
        \text{if $\omega_c > \omega_{c'}$,}\quad 
        \text{then,}\quad 
        \text{for any } S\subseteq C\setminus\inbrace{c,c'},\quad 
        \Ex_{{\mu}}\insquare{f_S(c,\succ)} > \Ex_{{\mu}}\insquare{f_S(c',\succ)}.
        \yesnum\label{eq:invariant:new_section}
    \]
    This invariant itself holds because of the facts that: (1) if $\omega_c > \omega_{c'}$ then with probability strictly larger than $\frac{1}{2}$, $c$ appears before $c'$ in $\succ_v$ (for any $v$) and (2) $f$ is domination sensitive (\cref{def:score}).

    \paragraph{Constrained-greedy is optimal with bias.} %
        However, when there is bias (i.e., $\theta<1$), the above invariant does not hold and the greedy algorithm may achieve a low latent utility.
        Nevertheless, because the latent utilities of all candidates in the same group are uniformly reduced (\cref{eq:obs_utility}), a group-wise version of the invariant holds:
            for any two candidates $c,c'$ in the \textit{same} group ($G_1$ or $G_2$) %
            for any $c,c'\in C$
        \[
            \text{if $\omega_c > \omega_{c'}$,}\quad 
            \text{then,}\quad 
            \text{for any } S\subseteq C\setminus\inbrace{c,c'},\quad 
            \Ex_{\wh{\mu}}\insquare{f_S(c,\nsucc)} > \Ex_{\wh{\mu}}\insquare{f_S(c',\nsucc)}.
            \yesnum\label{eq:invariant:groupwise}
        \]
        This property enables us to design an algorithm that, given $k$ and the expected biased marginal contributions utilities of all candidates (i.e., $\Ex_{\wh{\mu}}\insquare{f_S(c,\nsucc)}$ for all $c\in C$ and $C\subseteq S$) outputs a size-$k$ subset satisfying representational constraints that has optimal latent utility.
        Let $S^\star$ be any size-$k$ subset maximizing the expected latent utility.
        Let $\ell=\sabs{S^\star\cap G_2}.$
        It can be shown that the algorithm that first greedily selects $\ell$ candidates from $G_2$ and, then, greedily selects $k-\ell$ candidates from $G_1$ achieves optimal expected latent utility (\cref{algo:main_algorithm}).
        This is because (1) for any $S$, the top $t$ candidates in $G_1$ (respectively $G_2$) by expected observed scores $\Ex_{\wh{\mu}}\insquare{f_S(c,\nsucc)}$ is the same as the top $t$ candidates in $G_1$ (respectively $G_2$) by expected latent scores $\Ex_{{\mu}}\insquare{f_S(c,\succ)}$ and (2) $S^\star$ consists of the top $k-\ell$ (respectively $\ell$) candidates from $G_1$ (respectively $G_2$) by expected latent scores.
        Where the first fact is implied by \cref{eq:invariant:new_section,eq:invariant:groupwise} and the second fact holds because of \cref{eq:invariant:new_section}. %
        
    \sloppy
    Thus, if one has access to the expected marginal contributions of the candidates $\inbrace{\Ex_{\wh{\mu}}\insquare{f_S(c,\nsucc)}\colon c\in C\setminus S}$ (for all $S$) and $\ell=\sabs{S^\star\cap G_2}$, then one can execute the aforementioned algorithm, which outputs a size-$k$ subset satisfying the representational constraints specified by $\ell$ and maximizing the expected latent utility. %
    As mentioned before in \cref{sec:algorithmic}, $\ell$ can be estimated under natural assumptions on $\omega$. %
    Meanwhile, it remains to discuss how one has access to accurate \textit{expected} observed marginal contributions. 
    

    \paragraph{Effect of $n$ on the effectiveness of representational constraints.}
        In \cref{problem:representational_optimal}, the input is $n$ samples of observed preference lists $\inbrace{\nsucc_v}_v$ generated as described above.
        For any $S\subseteq C$ and $c\in C\setminus S$, a natural estimate of $\Ex_{\wh{\mu}}\insquare{f_S(c,\nsucc)}$ is the sample mean $\frac{1}{n}\sum_{v\in V} f_S(c,\nsucc_v)$.
        One can prove an additive concentration bound for the resulting estimate (e.g., see \cref{lem:concentration} for a similar result). For any $\delta>0$, it holds that:
        \[
            \Pr\insquare{
                \forall_{c\in C},\ \ \forall_{S\subseteq C\setminus \inbrace{c}\colon \abs{S}=k-1}
                \abs{\frac{1}{n}\sum_{v\in V} f_S(c,\nsucc_v) - \Ex_{\wh{\mu}}\insquare{f_S(c,\nsucc)}}
                \geq 
                \errm{}
            }
            \leq \delta.\yesnum\label{eq:conc:explanation}
        \]
        Note that, for the above bound to be meaningful, the expected observed utilities $\Ex_{\wh{\mu}}\insquare{f_S(c,\nsucc)}$ must be large compared to the error term $\errm{}$.
        We parameterize the scale of expected observed utilities  using a parameter $\alpha^\circ$:\footnote{We use the superscript in $\alpha^\circ$ to differentiate it from $\alpha$ in \cref{def:smooth}.}
        \[
            \alpha^\circ(\omega,\theta,f) = \frac{1}{\tau_1(f)} \min_{c\in C} \Ex_{\wh{\mu}}\insquare{f_{C\setminus \{c\}}(c,\not\succ)},
        \]
        where $\tau_1(f)$ is a normalizing constant, defined as the maximum possible expected score $\Exp_{\mu}\left[f(c,\succ)\right]$ of a candidate. 
        Instead of taking the minimum over each candidate $c\in C$, one can show that it suffices to take a minimum over only certain candidates. We do so in \cref{def:smooth} and \cref{thm:main_algorithmic}.

        For the utility-based model, a lower bound on $\alpha^\circ$ is sufficient to ensure that, when $n$ is large enough, representational constraints achieve near-optimal latent utility.
        \begin{theorem}[\textbf{Informal version of \cref{thm:main_algorithmic} specialized to then utility-based model}]
            \label{thm:main_algorithmic:appendix}
            Let $F\colon 2^C\to \R_{\geq 0}$ be a multiwinner score function and $\mu, \widehat{\mu}$ be generative models corresponding to the utility-based model specified by $\omega$ and $\theta$. %
        For any $0<\eps,\delta<1$,
        if
        \[
            n \geq 
                \frac{
                    \poly(m)\cdot \log\inparen{1/\delta}
                }{
                    \poly\inparen{\eps\cdot \alpha^\circ(\omega,\theta,f)}
                },
        \]
        there is an algorithm that given $\ell{=}\abs{S^\star{\cap}G_2}$ and observed preferences $\inbrace{\nsucc_v}_{v\in V}$, outputs a size-$k$ subset $S{\in} \calK(\ell)$ such that 
        \[ \Pr_{\mu,\wh{\mu}}\insquare{F(S) \geq \inparen{1-\eps} \cdot  \OPT} \geq 1-\delta.\]
        \end{theorem}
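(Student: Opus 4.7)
The plan is to derive \cref{thm:main_algorithmic:appendix} as a direct corollary of the general result \cref{thm:main_algorithmic}, by showing that the utility-based model (\cref{ex:utility_generative,def:utility_bias}) admits smoothness parameters $(\alpha,\beta,\gamma)$ for which the required number of voters reduces to the stated $\poly(m)\log(1/\delta)/\poly(\eps\cdot\alpha^{\circ})$ bound. The algorithm I would use is exactly \cref{algo:main_algorithm}, instantiated with the representational parameter $\ell=\abs{S^\star\cap G_2}$ and the biased value oracle estimated from the $n$ i.i.d.\ samples $\inbrace{\nsucc_v}_{v\in V}$ via the sample mean, whose validity follows from the concentration bound in \cref{eq:conc:explanation} (equivalently \cref{lem:concentration}). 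The proof then reduces to certifying the three conditions in \cref{def:smooth} for the utility-based model with sufficiently good parameters.

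First, I would invoke \cref{lem:utility_order_preserve} to conclude that every multiwinner score function $F$ is order-preserving with respect to the utility-based latent model $\mu$; this relies on the fact that $\omega_c\le\omega_{c'}$ implies $\pos_{\succ}(c')<\pos_{\succ}(c)$ with probability at least $1/2$, combined with domination sensitivity (\cref{def:score}). Second, I would invoke \cref{lem:utility_order_preserve2} to certify that $F$ is $(1-m^{-\Theta(1)},\,1-m^{-\Theta(1)})$ order-preserving between $\mu$ and $\wh{\mu}$; this is where the log-Lipschitzness of the utility distribution $\eta$ and the uniform multiplicative scaling by $\theta$ on $G_2$ enter. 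Third, I would apply \cref{lem:utility_alpha} to obtain $\alpha\ge \alpha^{\circ}\cdot \theta^{\Theta(\abs{G_1}-j)}\cdot m^{-\Theta(1)}$, so that $1/\alpha$ is bounded by $\poly(1/\alpha^{\circ})\cdot\poly(m)$. Plugging these three quantities into \cref{thm:main_algorithmic} with target error $\eps'=\eps/2$ yields a set $S\in\calK(\ell)$ with $F(S)\ge(\beta-\eps')\cdot\OPT$ with probability at least $1-\delta$, whenever $n\ge 16k\cdot(\alpha\min\inbrace{\eps',1-\gamma})^{-2}\log(m/\delta)$, and the right-hand side is exactly $\poly(m)\log(1/\delta)/\poly(\eps\cdot\alpha^{\circ})$.

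The main obstacle is bookkeeping rather than conceptual: since \cref{lem:utility_order_preserve2} only gives $\beta=1-m^{-\Theta(1)}$ (and not $\beta=1$), the raw guarantee of \cref{thm:main_algorithmic} is $F(S)\ge(\beta-\eps')\cdot\OPT$ rather than $(1-\eps')\cdot\OPT$. To convert to the cleaner $(1-\eps)\cdot\OPT$ guarantee stated in \cref{thm:main_algorithmic:appendix}, I would set $\eps'=\eps/2$ and absorb the additive $m^{-\Theta(1)}$ loss from $\beta$ into the remaining $\eps/2$ slack, which is legitimate once $m$ exceeds a threshold depending on $\eps$ (and can be folded into the $\poly(m)$ factor in the sample complexity). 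The other subtlety is that the definition of $\alpha$ takes a minimum over $M$ rather than over $C$, but since $M\subseteq C$ we have $\alpha\ge\alpha^{\circ}$ after normalization by $\tau_1(f)$, so using $\alpha^{\circ}$ in the final bound only weakens it and is therefore safe.
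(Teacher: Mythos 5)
Your overall plan — derive \cref{thm:main_algorithmic:appendix} from \cref{thm:main_algorithmic} by verifying the three conditions of \cref{def:smooth} for the utility-based model via \cref{lem:utility_order_preserve}, \cref{lem:utility_order_preserve2}, and \cref{lem:utility_alpha}, then run \cref{algo:main_algorithm} — matches what the paper intends, and the references to the right intermediate lemmas are correct. There are two places where the proposal does not hold up as written.

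\textbf{The $\beta$-to-$(1-\eps)$ conversion has a real gap.} You treat \cref{lem:utility_order_preserve2} as giving a \emph{fixed} $\beta = 1 - m^{-\Theta(1)}$, and then propose to absorb the $m^{-\Theta(1)}$ loss into the $\eps/2$ slack, admitting this is ``legitimate once $m$ exceeds a threshold depending on $\eps$.'' But \cref{thm:main_algorithmic:appendix} is stated for all $m,\eps$, and when $\eps < 2(1-\beta)$ the bound $\beta - \eps' < 1 - \eps$ and \emph{no amount of extra samples fixes that} — the approximation factor is what it is. Your remark that the failure ``can be folded into the $\poly(m)$ factor in the sample complexity'' is not right, because raising $n$ does not raise $\beta - \eps'$. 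The fix is to use the full strength of \cref{lem:utility_order_preserve2}: it gives a one-parameter family $(\beta,\gamma)=(1-\lambda,\, 1-\Omega(\lambda \cdot m^{-\Theta(1)}))$ for every $\lambda \in [0, m^{-1/2}]$, with $\lambda$ \emph{free to choose}. Set $\lambda = \min\{\eps/2,\, m^{-1/2}\}$ and $\eps'=\eps/2$; then $\beta - \eps' \geq 1-\lambda-\eps/2 \geq 1-\eps$, and $1-\gamma = \Omega(\lambda\cdot m^{-\Theta(1)}) = \Omega(\eps\cdot m^{-\Theta(1)})$ so $\min\{\eps',\,1-\gamma\}=\Omega(\eps \cdot m^{-\Theta(1)})$ and the required $n$ remains $\poly(m)\cdot\log(m/\delta)/\poly(\eps\alpha)$. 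This is the step your write-up needs.

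\textbf{The $\alpha^\circ$-versus-$\alpha$ comparison is glossed.} Your argument ``$M\subseteq C$ hence $\alpha \geq \alpha^\circ$'' addresses only the biased minimum in \cref{def:smooth}; that definition in fact requires $\alpha$ to lower-bound \emph{both} $\frac{1}{\tau_1(f)}\min_{c\in M}\Ex_{\wh\mu}[f_{C\setminus\{c\}}(c,\nsucc)]$ and $\frac{1}{\tau_1(f)}\min_{c\in M}\Ex_{\mu}[f_{C\setminus\{c\}}(c,\succ)]$, whereas $\alpha^\circ$ is stated only in terms of the biased model. You still need to check $\min_{c\in C}\Ex_{\wh\mu}[\cdot] \leq \min_{c\in M}\Ex_{\mu}[\cdot]$. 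In the utility-based model this does hold, but the argument requires noticing that (i) for $c\in G_2$ the bias only lowers the expected marginal, and (ii) by order preservation with respect to $\mu$ (\cref{lem:utility_order_preserve}) any $c\notin M$ has latent expected marginal at most that of the weakest member of $M$, so either $M$ contains a $G_2$ candidate whose biased marginal is below the latent minimum over $M$, or some $c\in G_2\setminus M$ already witnesses a smaller biased marginal. It is one extra sentence, but it is not implied by $M\subseteq C$ alone, and should be spelled out.
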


    \paragraph{Generalizations.}
        To derive \cref{thm:main_algorithmic:appendix}, we used two main properties: (1) in the absence of bias, there is a greedy algorithm that achieves the optimal latent utility, and (2) that the order of candidates in any one group ($G_1$ or $G_2$) by their expected latent utilities is the same as their order by expected observed utilities (\cref{eq:invariant:new_section,eq:invariant:groupwise}).
        Neither of these conditions may be true beyond the utility-based model.

        \smallskip
        \noindent\textit{Order-preservation with respect to $\mu$ ensures that the greedy algorithm is optimal.} In more general models, the first condition in order-preservation with respect to $\mu$ (\cref{def:latent}) ensures that the greedy algorithm achieves the optimal latent utility in the absence of bias.
        The second condition in order-preservation with respect to $\mu$ (\cref{def:latent}) enables us to show that, even the constrained version of the greedy algorithm (which first selects $k-\ell$ candidates from $G_1$ and then selects $\ell$ candidates from $G_2$) achieves the optimal latent utility.

        \smallskip
        \noindent\textit{Order-preservation between $\mu$ and $\wh{\mu}$ bounds distance between orderings of candidates by observed and latent preferences.}
            Beyond the utility-based model, we may not be able to guarantee that the order of candidates in each group ($G_1$ and $G_2$) by their expected latent utilities is the same as their order by expected observed utilities.
            Indeed, this is not true in the swapping-based model, where there is constant $\beta$ such that if two candidates $c,c'$ in the same group have 
            latent scores $\Ex_{{\mu}}\insquare{f_S(c,\succ)}$ and $\Ex_{{\mu}}\insquare{f_S(c',\succ)}$ within a multiplicative factor $\beta$ (i.e., $\beta \Ex_{{\mu}}\insquare{f_S(c',\succ)}\leq \Ex_{{\mu}}\insquare{f_S(c,\succ)}\leq \frac{1}{\beta} \Ex_{{\mu}}\insquare{f_S(c',\succ)}$), then it is possible that (see \cref{ex:counter_example_opII}) 
            \[
                \Ex_{{\mu}}\insquare{f_S(c,\succ)} > \Ex_{{\mu}}\insquare{f_S(c',\succ)},
                \quad \text{but},\quad
                \Ex_{\wh{\mu}}\insquare{f_S(c,\nsucc)} < \Ex_{\wh{\mu}}\insquare{f_S(c',\nsucc)}.
            \]
            Order-preservation between $\mu$ and $\wh{\mu}$ (\cref{def:bias}) bounds the ``distance'' between the orderings of candidates in each group ($G_1$ and $G_2$) by their expected latent utilities and their order by expected observed utilities: it requires the relative order of two candidates $c,c'$ in the same group to be the same if their latent utilities are at least a factor of $\beta$ apart, i.e., if either $ \Ex_{{\mu}}\insquare{f_S(c',\succ)}\leq \beta\cdot \Ex_{{\mu}}\insquare{f_S(c,\succ)}$ or $\beta \cdot \Ex_{{\mu}}\insquare{f_S(c,\succ)}\geq \Ex_{{\mu}}\insquare{f_S(c',\succ)}$. 
        
\section{Tool to Study Smoothness and Effectiveness of Representational Constraints with Novel Bias Models}
    \label{sec:simulations_synthetic}
    
    In this section, we illustrate how one could use the code as a tool for  preliminary studies of the effectiveness of representational constraints with respect to novel bias models and multiwinner score functions, for which theoretical bounds may not be readily available.\footnote{The code for simulations is available at:
    \url{https://github.com/AnayMehrotra/Selection-with-Multiple-Rankings-with-Bias}}

    \subsection{Implementation Details}\label{sec:implementation}
        
        The code takes as input oracles that (1) evaluate a multiwinner score function $F$ and (2) sample from generative models $(\mu,\wh{\mu})$.
        These oracles are specified by (python) functions.
        The code also takes $m$, $n$, and $k$ as input.
        First, for the specified $m$ and $k$, it outputs estimates $(\wt{\alpha},\wt{\beta},\wt{\gamma})\in [0,1]$ along with corresponding confidence intervals, which follow from a concentration inequality (\cref{lem:concentration}).
        This allows for theoretical estimates of the capabilities of representational constraints using our main result \cref{thm:main_algorithmic}.
        (Note that $\alpha,\beta,\gamma$ are independent of $n$ and $\ell$). 
        {Concretely, we numerically estimate $(\alpha,\beta,\gamma)$ as follows:
            since  the empirical averages are concentrated around the expectations $\mathbb{E}_\mu[f_S(c,\succ)]$ and $\mathbb{E}_{\hat{\mu}}[f_S(c,\not\succ)]$, we use this to compute $\alpha$ from its definition (\cref{def:smooth}).
            As $\beta$ and $\gamma$, given any $0\leq \beta\leq 1$, we  compute $0\leq \gamma\leq 1$ that satisfies the condition in \cref{def:bias} via binary search.}
        Second, given values of $n$, $m$, and $k$, the code estimates the fraction of the optimal score recovered by representational constraints for the given $F$ with respect to the given $(\mu,\wh{\mu})$.

    \subsection{Illustration of the Code: Models of Latent Preferences by  \cite{DBLP:conf/atal/SzufaFSST20} and the Swapping-Based Bias Model}\label{sec:illustration}
        
        In this section, to illustrate the use cases of our code, we analyze the effectiveness of the representational constraints with (1) the family of generative models proposed by \cite{DBLP:conf/atal/SzufaFSST20} and (2) using the swapping-based bias model (\cref{def:swapping}).

        Note that for any pair of generative models $(\mu,\wh{\mu})$ one can study the effectiveness of representational constraints with a multiwinner score function $F$ using our theoretical results, by computing the smoothness parameters for \cref{def:smooth}.
        Concretely, for the swapping-based model, one can use the bounds in \cref{lem:swap_alpha,lem:swap_order_preservation}.
        The usefulness of the code is in doing preliminary analysis on novel bias models, for which theoretical bounds may not be available.

        \paragraph{Setup.} 
        We vary the generative model $\mu$ across a subset of generative models provided by \cite{DBLP:conf/atal/SzufaFSST20}; namely, we consider the Single-Peaked model by Conitzer, the Mallows model, the Polya-Eggenberger Urn model, and the Impartial culture model.
        We fix $\wh{\mu}$ to be the generative model corresponding to the swapping-based bias model (\cref{def:swapping}).
        We fix the number of voters to be $m=50$, the size of the committee to be $k=10$, and vary the number of voters $n\in \inbrace{25, 50, 100}$.
        We fix the groups $G_1$ and $G_2$ to have equal size (i.e., $\abs{G_1}=\abs{G_2}$).
        We select the partitioning into groups uniformly at random.
        The fraction of possible executed swaps is specified by a parameter $\lambda\in [0,1]$.
         
        In each simulation, we fix a generative model $\mu$, a value of $n\in \inbrace{25,50,100}$, and a voting rule $F$ from SNTV and Borda.
        For fixed $\mu$, $n$, and $F$, we vary $\lambda$ over $[0,1]$.
        Given a $\mu$, $n$, $F$, and $\lambda$, we draw latent preferences $\inbrace{\succ_v\mid v\in V}$ i.i.d. from $\mu$.
        For each $v\in V$, we compute the maximum number of swaps $t_{\max}(v)$ that can be performed on the preference list $\succ_v$ before all candidates in $G_1$ are placed before all candidates in $G_2$.
        Concretely, $t_{\max}(v)$ is the Kendall-Tau distance between preference lists $\succ_v$ and $\succ^\star_v$, where $\succ^\star_v$ is the unique preference list that (1) ranks all candidates in $G_1$ before any candidate in $G_2$ and (2) satisfies for any $c,c'$ in the same group ($G_1$ or $G_2$) $c\succ_v c'$ if and only if $c\succ^\star_v c'$.

        In the swapping-based model, we arbitrarily fix $\phi=0.5$ for illustration.
        $\wh{\mu}$ is the generative model defined by the swapping-based biased model specified by $\phi=0.5$
        and which, for each $v\in V$, performs $t$ swaps, where 
        \[
            t=\lambda\cdot t_{\max}(v).
        \] 
        Recall that $\phi$ controls the average difference in the positions of swapped candidates.
        When $\phi$ is close to 0, with high probability, all swapped candidates are ``neighbors'' in the preference lists.
        At the other extreme, when $\phi$ is close to 1, candidates who are ``far'' in the preference lists are also swapped.
        
    \begin{figure}[ht!]
        \centering
        \subfigure[$n=25$, $m=50$ and $\mu$ is the Single-Peaked generative model by Conitzer]{
            \includegraphics[width=0.3\linewidth, trim={0cm 0cm 0cm 0cm},clip]{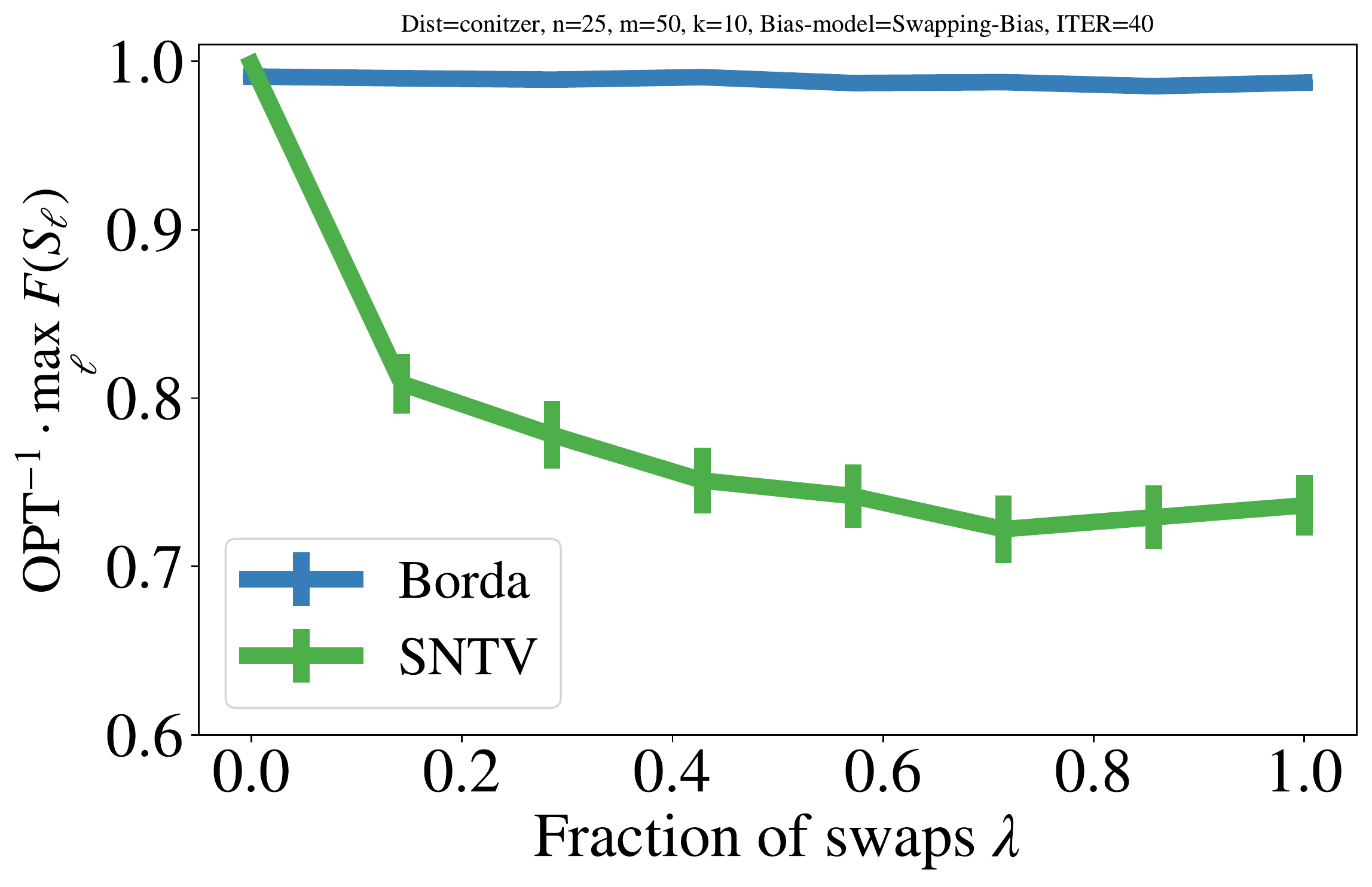}
        }
        \subfigure[$n=50$, $m=50$ and $\mu$ is the Single-Peaked generative model by Conitzer]{
            \includegraphics[width=0.3\linewidth, trim={0cm 0cm 0cm 0cm},clip]{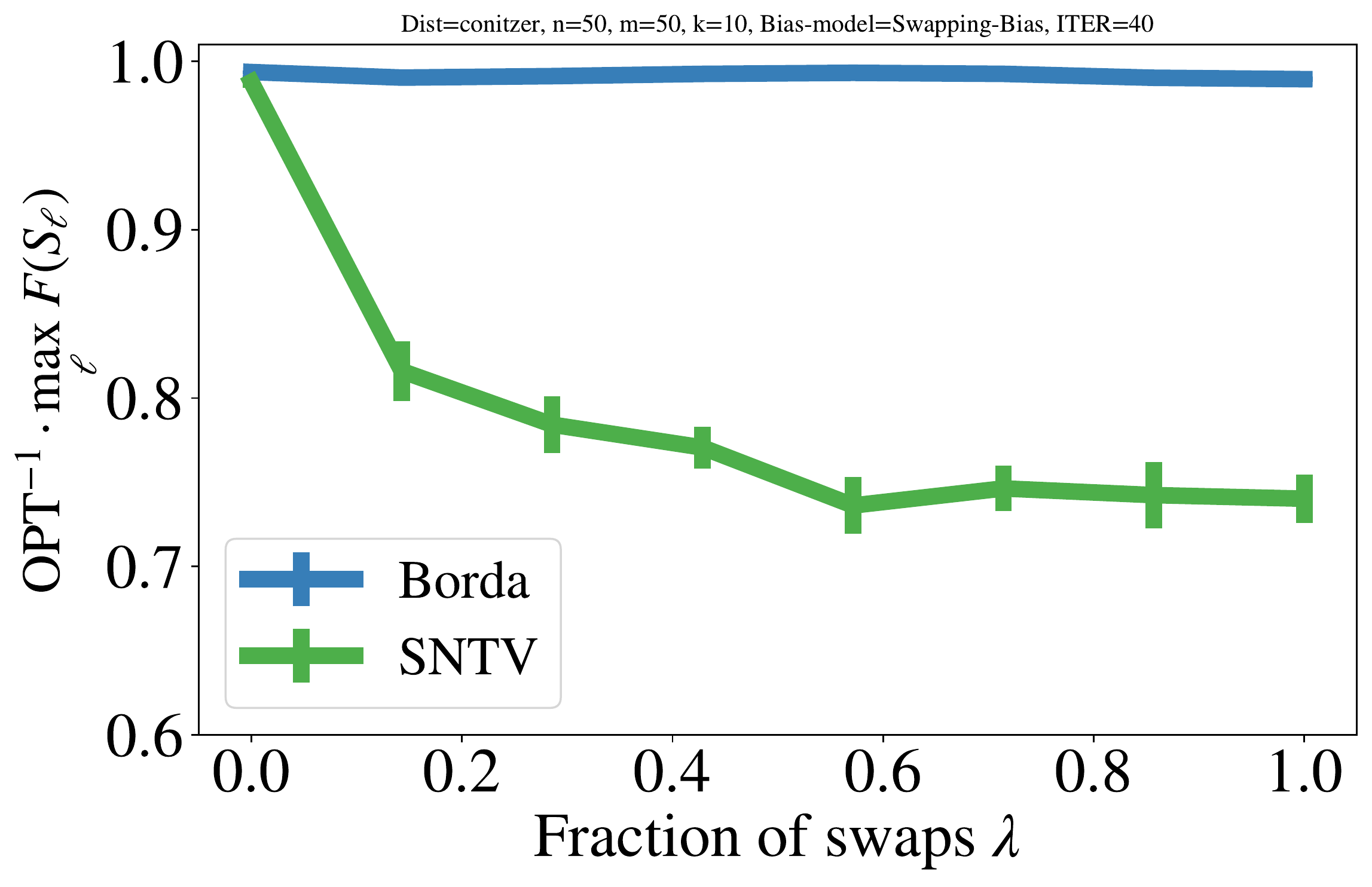}
        } %
        \subfigure[$n=100$, $m=50$ and $\mu$ is the Single-Peaked generative model by Conitzer]{
            \includegraphics[width=0.3\linewidth, trim={0cm 0cm 0cm 0cm},clip]{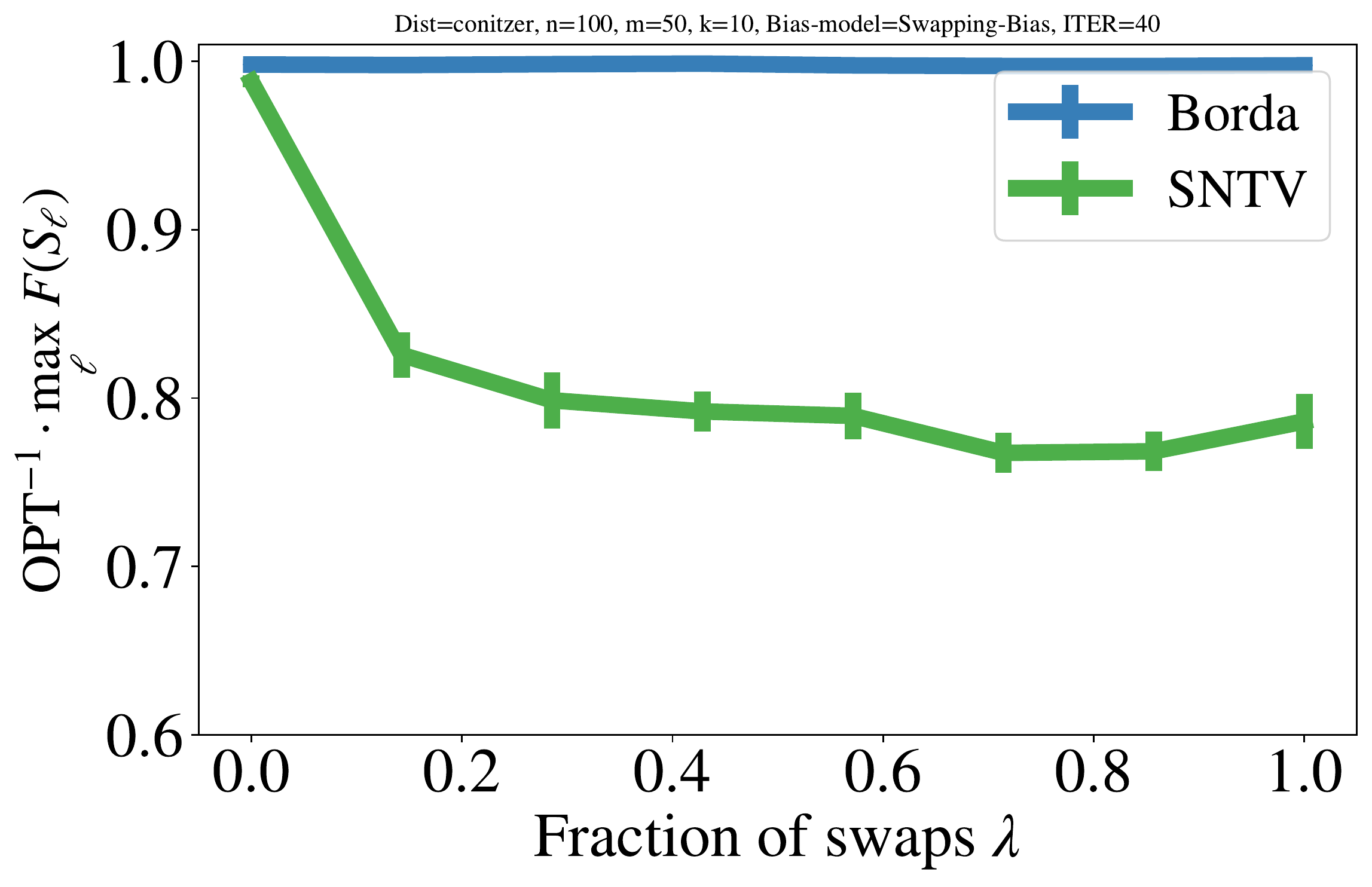}
        }\par 
        \subfigure[$n=25$, $m=50$ and $\mu$ is the Mallows generative model]{
            \includegraphics[width=0.3\linewidth, trim={0cm 0cm 0cm 0cm},clip]{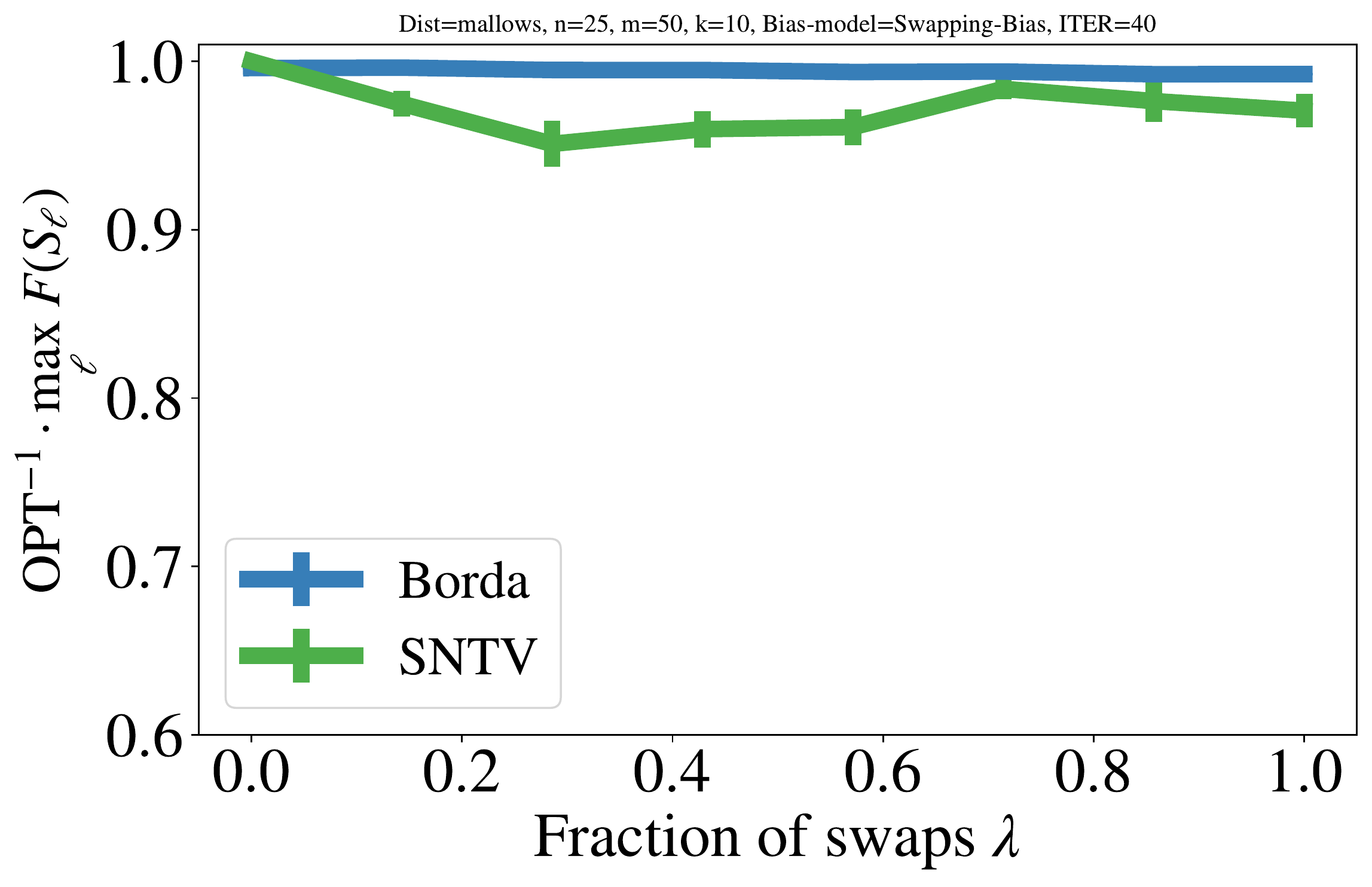}
        }
        \subfigure[$n=50$, $m=50$ and $\mu$ is the Mallows generative model]{
            \includegraphics[width=0.3\linewidth, trim={0cm 0cm 0cm 0cm},clip]{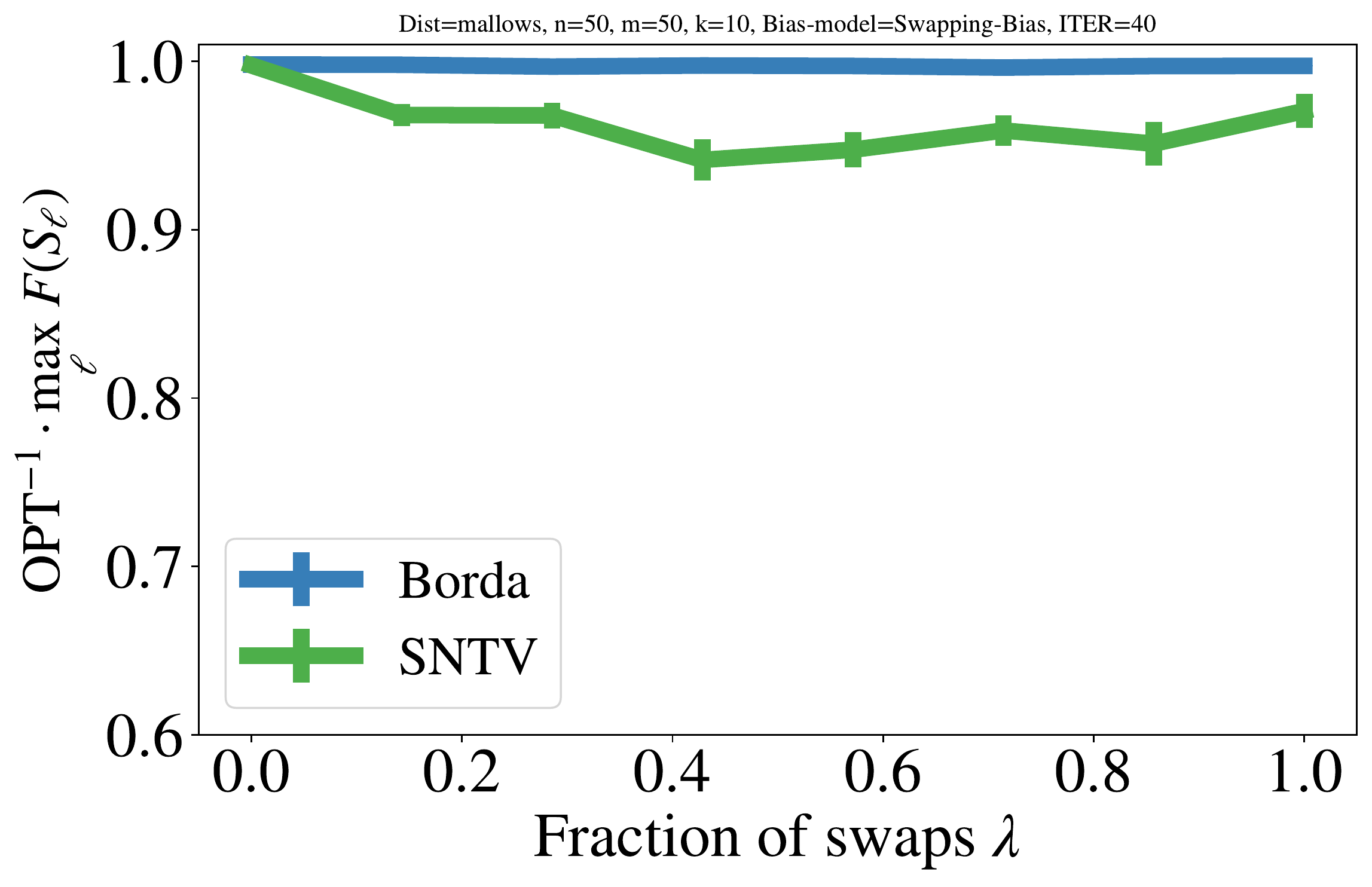}
        }
        \subfigure[$n=100$, $m=50$ and $\mu$ is the Mallows generative model]{
            \includegraphics[width=0.3\linewidth, trim={0cm 0cm 0cm 0cm},clip]{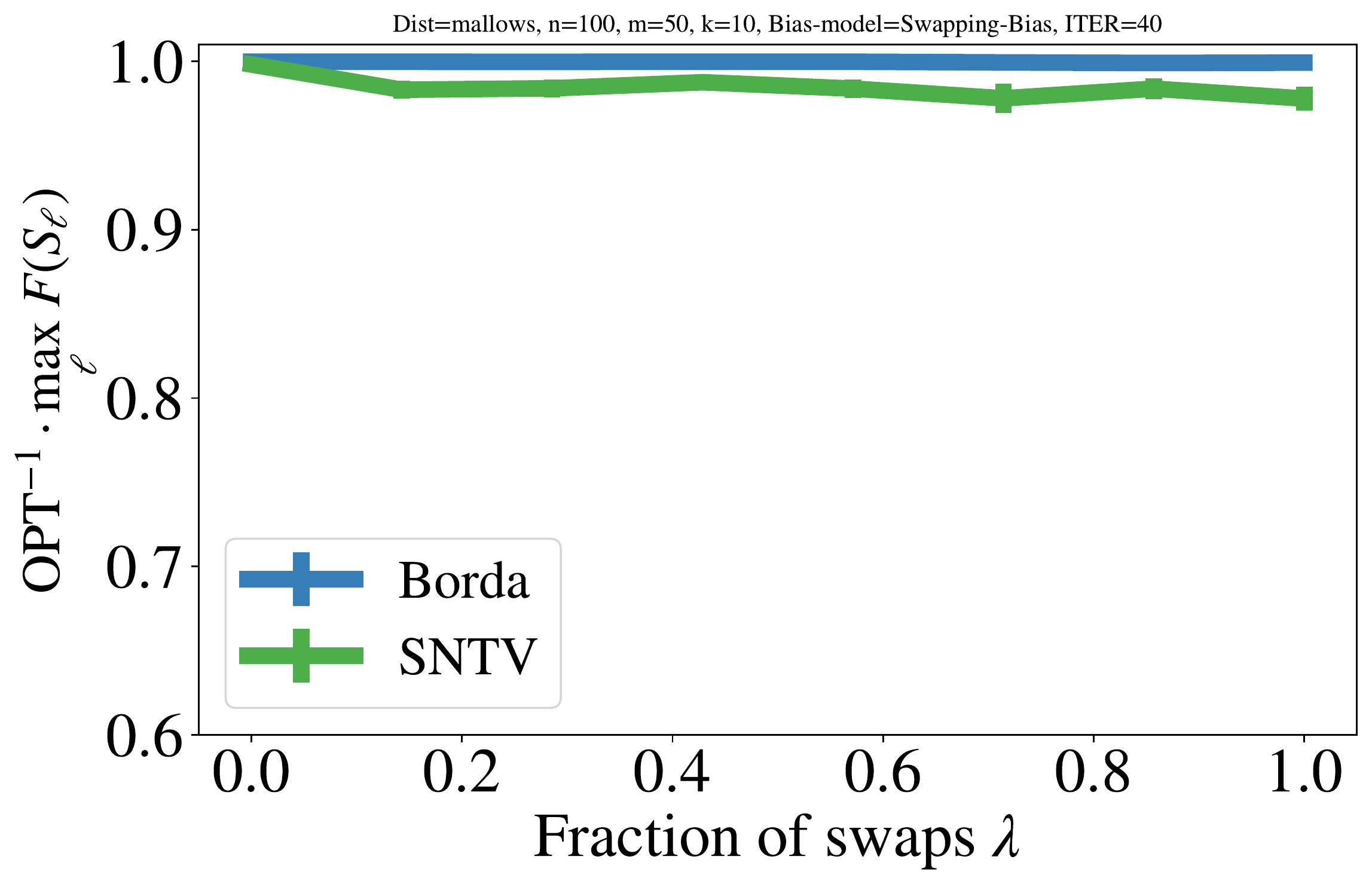}
        }\par 
        \subfigure[$n=25$, $m=50$ and $\mu$ is the Polya-Eggenberger Urn generative model]{
            \includegraphics[width=0.3\linewidth, trim={0cm 0cm 0cm 0cm},clip]{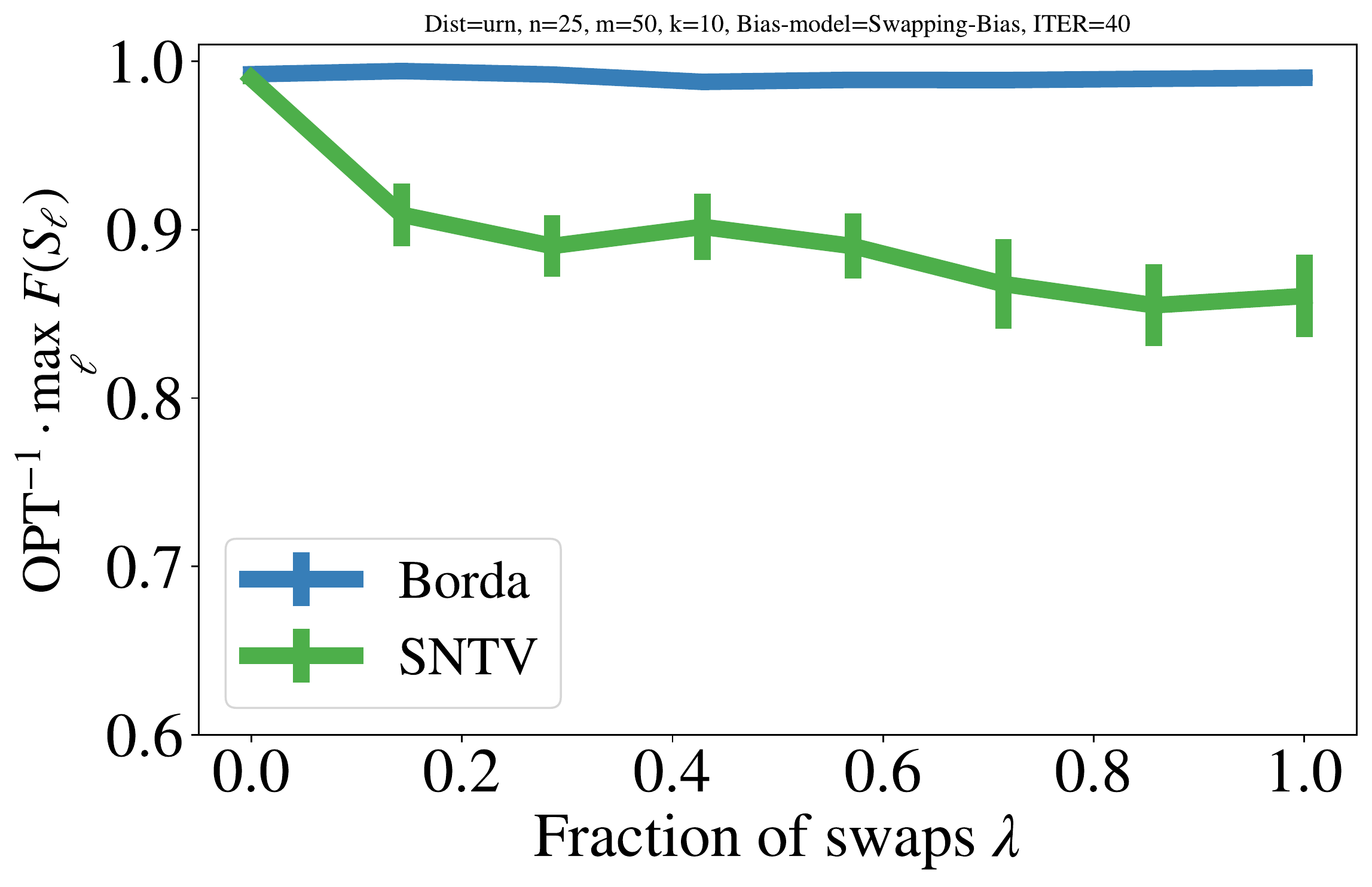}
        }
        \subfigure[$n=50$, $m=50$ and $\mu$ is the Polya-Eggenberger Urn generative model]{
            \includegraphics[width=0.3\linewidth, trim={0cm 0cm 0cm 0cm},clip]{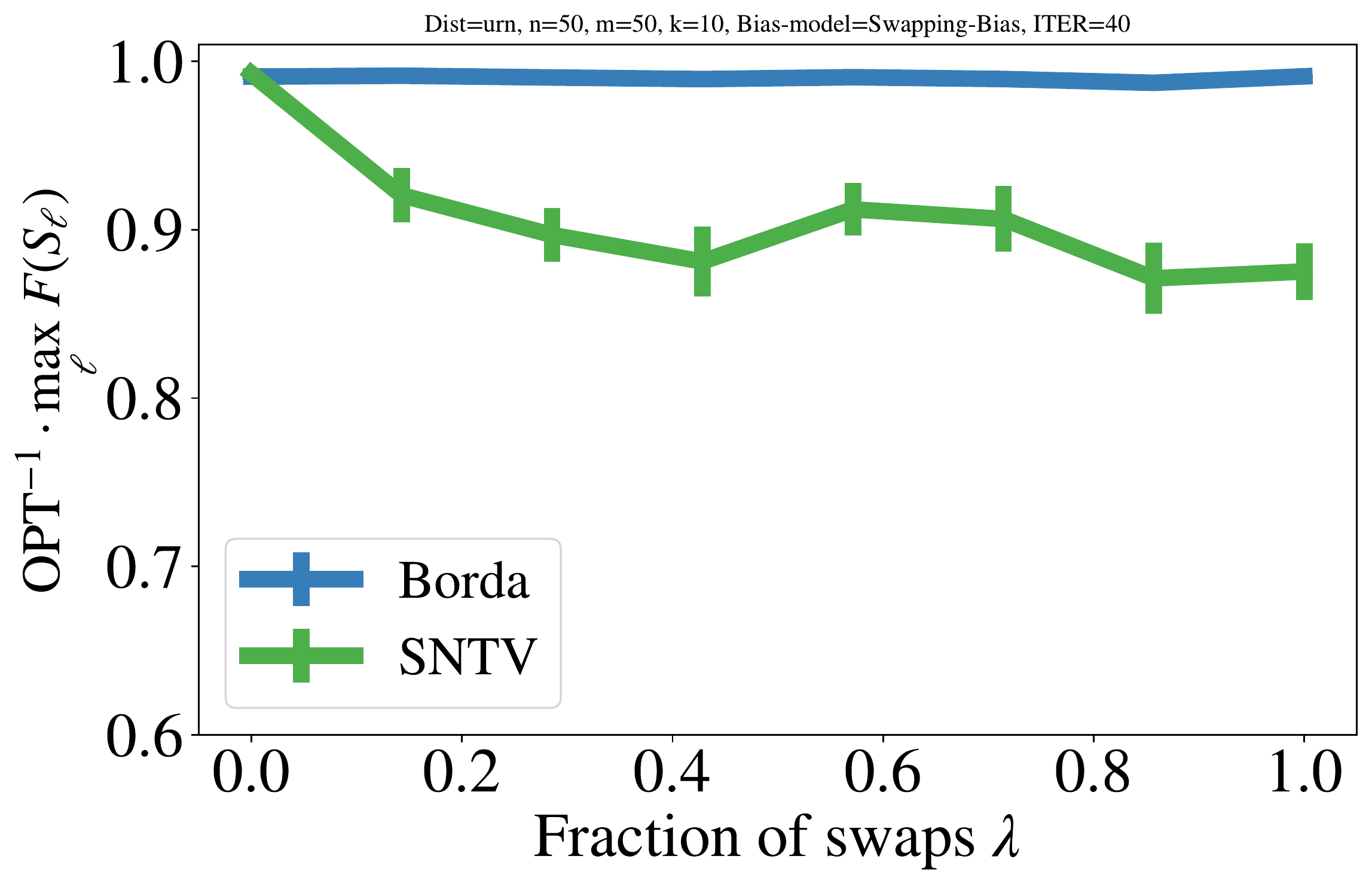}
        }
        \subfigure[$n=100$, $m=50$ and $\mu$ is the Polya-Eggenberger Urn generative model]{
            \includegraphics[width=0.3\linewidth, trim={0cm 0cm 0cm 0cm},clip]{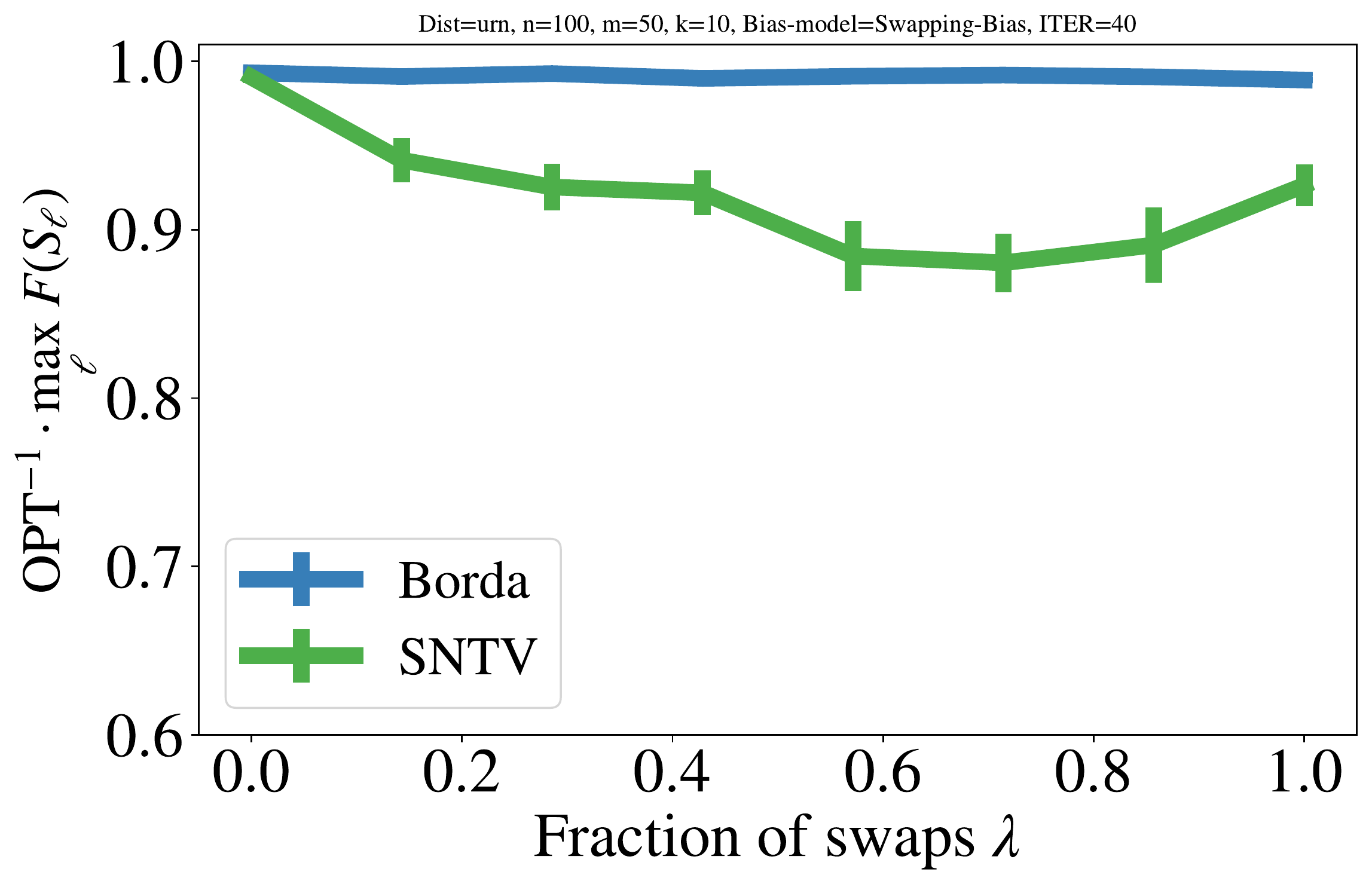}
        }\par 
        \subfigure[$n=25$, $m=50$ and $\mu$ is the Mallows generative model]{
            \includegraphics[width=0.3\linewidth, trim={0cm 0cm 0cm 0cm},clip]{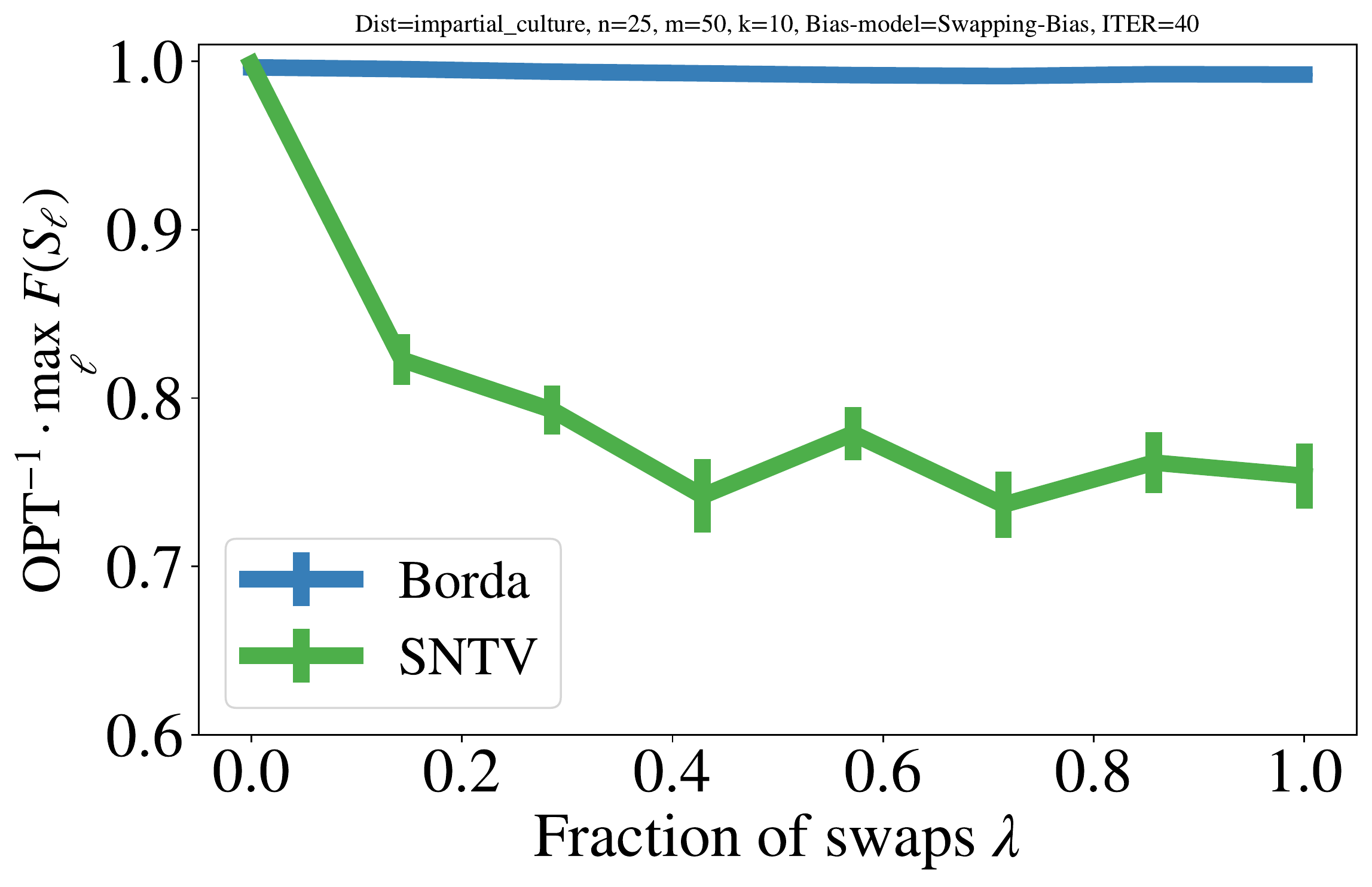}
        }
        \subfigure[$n=50$, $m=50$ and $\mu$ is the Impartial culture generative model]{
            \includegraphics[width=0.3\linewidth, trim={0cm 0cm 0cm 0cm},clip]{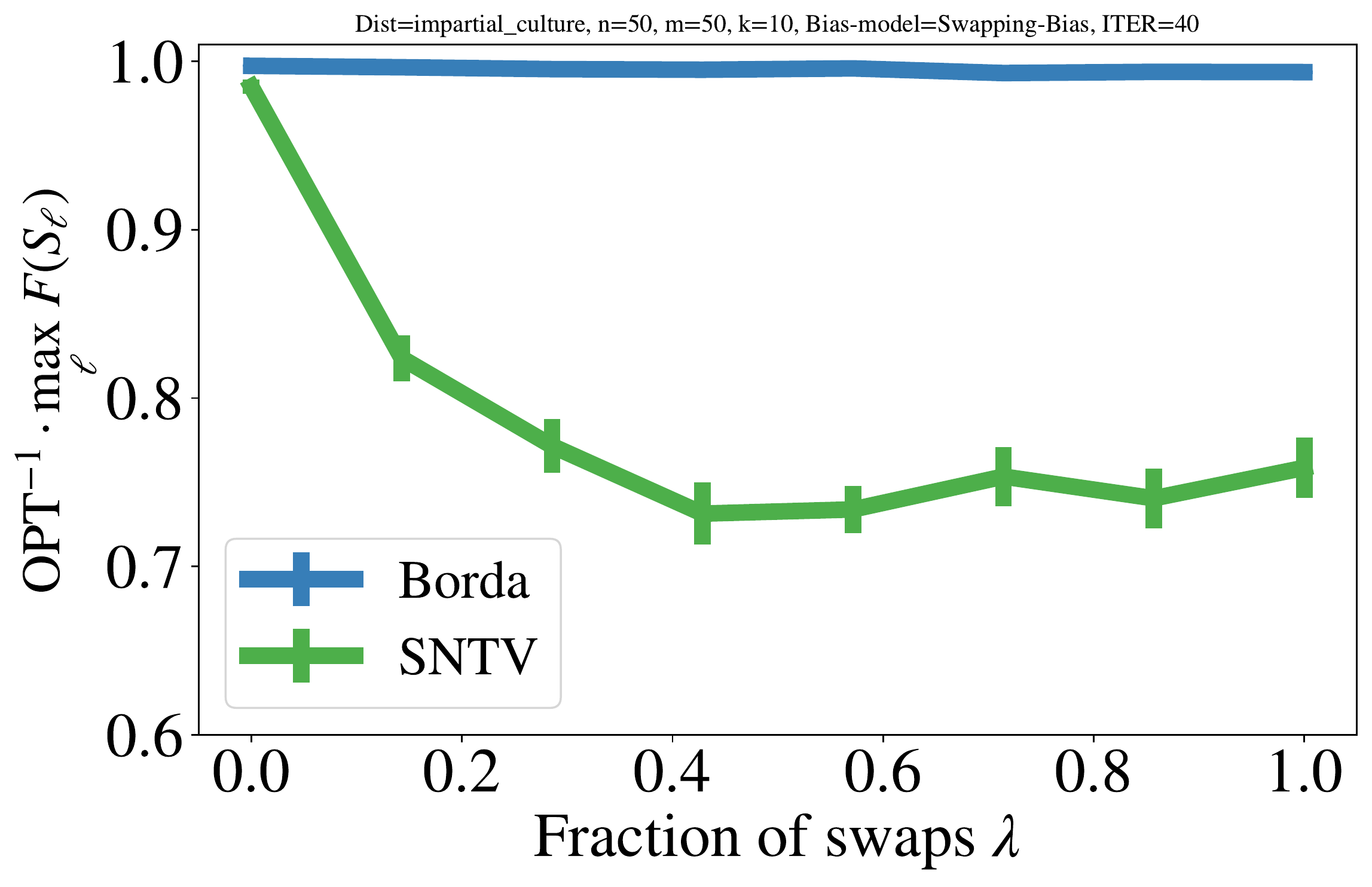}
        }
        \subfigure[$n=100$, $m=50$ and $\mu$ is the Impartial culture generative model]{
            \includegraphics[width=0.3\linewidth, trim={0cm 0cm 0cm 0cm},clip]{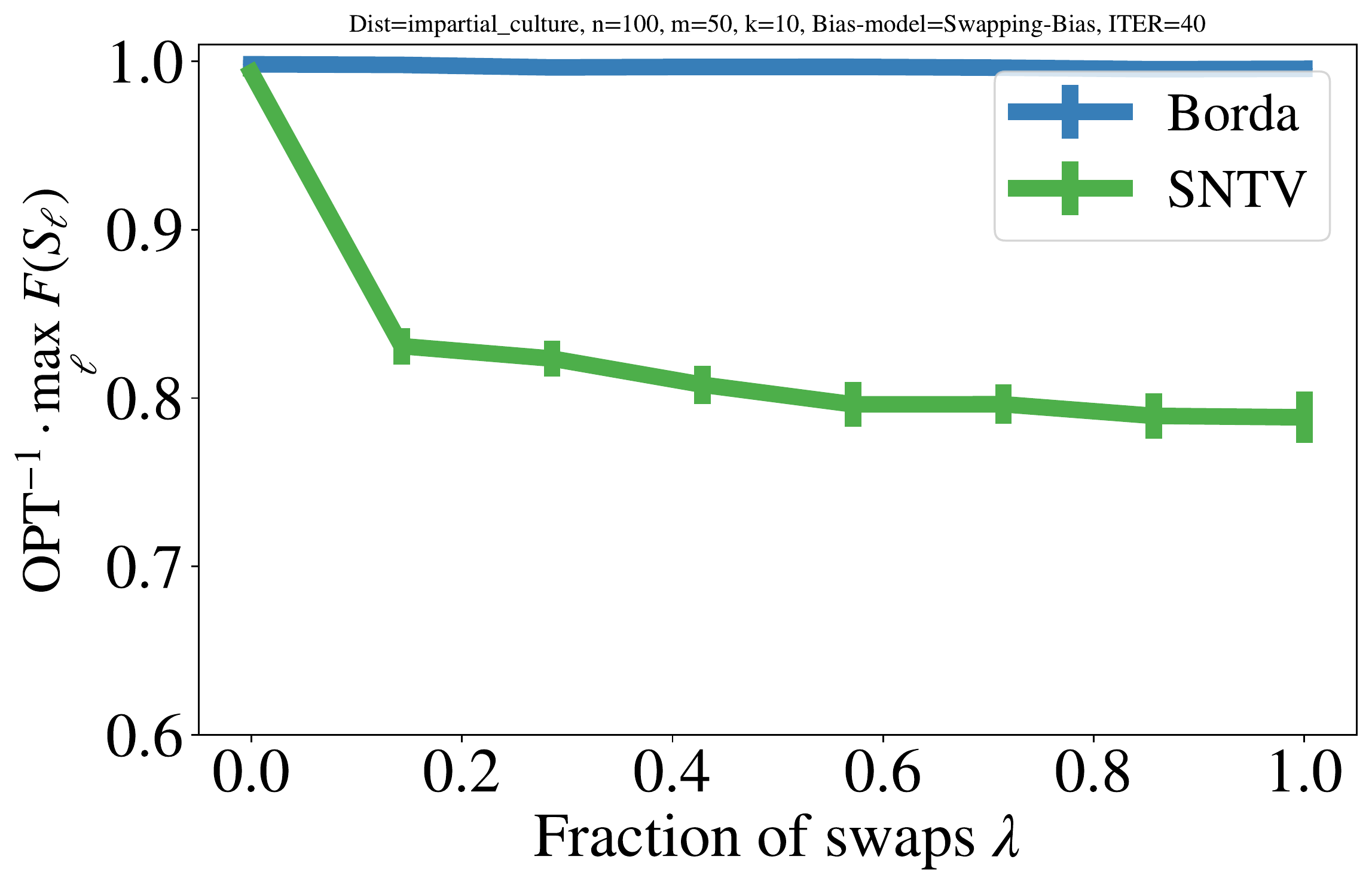}
        }\par  
        \caption{
            \textit{Simulations results with different families of generative models $\mu$:}
            The plots show the fraction of the score recovered by representational constraints with different preference aggregation functions $F$ and generative models $\mu$, under the swapping-based bias model with $\phi=0.5$ (\cref{def:swapping}).
            In all of these simulations, the number of candidates is $m=50$ and the size of the output committee is $k=10$.
            The number of candidates $n$ and the generative model $\mu$ vary and are specified with the sub-figures.
            The $y$-axis shows the fraction of the optimal score recovered by representational constraints.
            The $x$-axis shows the number of swaps $t$ allowed in the swapping-based model.
        }
        \label{fig:synthetic}
    \end{figure}
       
    \paragraph{Observations.} 
        The results appear in \cref{fig:synthetic}.
        We observe that across all choices of $\mu$, $n$, and the fraction of swaps $\lambda$: representational constraints recover a higher fraction of the optimal score
        with the Borda rule compared to the SNTV rule: for Borda, the fraction recovered is $>0.99$ across all simulations, whereas, for SNTV, it can be as low as $0.75$
        Further, for the SNTV rule, the fraction of the optimal score recovered by representational constraints increases with $n$ (for a given $\lambda$). %
        For the Borda rule, since the fraction of the optimal score recovered by representational constraints is already larger than $0.99$ differences across $n$ are small.

        These observations align with our theoretical results:
        From \cref{cor:algorithmic}, we expect representational constraints to have a higher effectiveness with the Borda rule compared to the SNTV rule.
        Similarly, from \cref{thm:main_algorithmic,thm:main_impossibility}, we expect the effectiveness of the representational constraints to increase with $n$.

\section{Conclusion and Future Work}
\label{sec:conclusion}

In this paper, we have investigated the effectiveness of representational constraints in the presence of bias, in a variant of the subset selection problem with rankings of items as input. 
To convert  multiple rankings into scores for subsets, we leverage ideas from multiwinner voting.
Extending a line of research on (re-)designing algorithms when the inputs might suffer from biases  \cite{kleinberg2018selection}, we demonstrate that representational constraints continue to have the power to improve the latent quality of the solution in the  setting of multiple  inputs and submodular (instead of modular) score functions. 
Our work brings out differences in the effectiveness of representational constraints depending on the (sub)modular score function used and can be used to guide the choice of multiwinner score functions for subset selection;
we further provide a tool to enable the latter. 
To carry out this analysis, we  develop a notion of smoothness of a submodular function that might be of further interest.

In addition to the already covered submodular multiwinner score functions, it would be interesting to extend our work to sequential rules such as STV, greedy CC, and greedy Monroe \cite{faliszewski2017multiwinner,DBLP:series/sbis/LacknerS23}.
Designing interventions for debiasing outcomes of these rules seems challenging 
as  it is not clear how representational constraints can be implemented here \cite{DBLP:conf/ijcai/CelisHV18}. 
As already discussed in the introduction, instead of assuming that voters rank the items,  each voter could also provide a numerical (utility) score for each candidate \cite{DBLP:journals/jacm/KleinbergPR04,DBLP:journals/ai/SkowronFL16}.
The class of submodular functions (\Cref{def:score}) we study contains some functions relevant in this setting: if all voters assign the same utility to their $i$th most preferred candidate for every $i$.
On a more general note, studying whether representational constraints continue to be effective for further classes of (non-separable) submodular functions is an interesting direction but is likely to require a new approach, as our smoothness definition and bias model assume that the function is separable.

{Finally, we remark that we study only one aspect of real-world selection problems and other aspects must be carefully considered to avoid possible negative impacts of intervention constraints.
Indeed, intervention constraints can help if they are used appropriately but may potentially also have negative consequences, e.g., imposing intervention constraints for one disadvantaged group may harm a different not-considered disadvantaged group.}

\section*{Acknowledgements}
    {This project is supported in part by NSF Awards (CCF-2112665 and IIS-2045951).
    NB is supported by the DFG project ComSoc-MPMS (NI 369/22).}

\printbibliography

\appendix
\addtocontents{toc}{\protect\setcounter{tocdepth}{2}}

\newpage

\section{Examples of Multiwinner Score Functions}
    \label{sec:example}
        
        We list two well-known families multiwinner score functions that are specific cases of \Cref{def:score}: the committee scoring functions and approval-based functions.
        
        \paragraph{Committee scoring functions.}
        A committee scoring function awards a score to each committee as
        \[
        \cs(S) = \sum_{v\in V} g(\pos_{\succ_{v}}(S))
        \]
        for some function $g: [m]^k \rightarrow \R_{\geq 0}$.
        The followings are some examples of committee scoring rules.
        
        \begin{example}[\bf{Examples of committee scoring functions}] 
        	Let $s\in \R_{\geq 0}^m$ be a vector. We have the following examples of committee scoring functions.
        	\begin{itemize}[itemsep=0pt,leftmargin=18pt]
        		\item If $s_1 = 1$ and $s_i = 0$ for $i\geq 2$, and $g(i_1,\ldots,i_k) = \sum_{l\in [k]}s_{i_l}$, we call $\score$ the~SNTV~rule.
        		\item If $s_i = m-i$ for $i\in [m]$, and $g(i_1,\ldots,i_k) = \sum_{l\in [k]}s_{i_l}$, we call $\score$ the Borda rule.
        		\item If $s_i = m-i$ for $i\in [m]$, and $g(i_1,\ldots,i_k) = \max_{l\in [k]}s_{i_l}$, we call $\score$ the $\ell_1$-CC rule. 
        	\end{itemize}
        \end{example}
         
        \paragraph{Approval-based functions.}
        Suppose each voter $v\in V$ approves a subset $A_v$ of $m'$ candidates (that are the first $m'\in [m]$ candidates in $\succ_v$; note that $m'$ is the same for all voters).
        An approval-based function awards a score to each committee as
        \[
        \app(S) = \sum\nolimits_{v\in V} g(|S\cap A_v|),
        \]
        where $g: [k]\rightarrow \R_{\geq 0}$ is a non-decreasing concave function.
        The above definition captures the class of OWA-rules, which are parameterized by an OWA-vector $\lambda=(\lambda_1,\dots, \lambda_k)$ with $\lambda_1\geq \dots \geq \lambda_k \geq 0$ and correspond to $g(i)=\sum_{j=1}^i \lambda_j$.
        Among others, the class of OWA-functions contains the popular PAV ($\lambda=(1,\frac{1}{2},\dots, \frac{1}{k})$) and $\ell_{\rm min}$-CC ($\lambda=(1,0,\dots, 0)$) rules. 
        We note that our main result \cref{thm:main_algorithmic} can also be extended to approval-based rules where the size of the approval set can be different for different voters.
\end{document}